\documentclass[%
 reprint,
 superscriptaddress,
 amsmath,amssymb,
prb,
]{revtex4-2}
\usepackage{comment}
\usepackage{amsfonts}
\usepackage{float}
\usepackage{mathtools}
\usepackage{graphicx}
\usepackage{physics}
\usepackage{color}
\usepackage{xcolor}
\usepackage{amsthm}
\usepackage{mathdots}
\usepackage{tikz}
\usepackage{tikz-cd}
\usepackage{relsize}
\usepackage[colorlinks=true,linktoc=page,citecolor=red,linkcolor=blue]{hyperref}
\usepackage{caption}
\usepackage{subcaption}
\usepackage{dutchcal}
\usepackage{algpseudocode,algorithm}
\usepackage{CJK}
\usepackage{extarrows}
\usepackage{tikzit}

\tikzstyle{pointoperator}=[fill=blue, draw=none, shape=circle, minimum size=.1 cm, inner sep=0 pt]
\tikzstyle{pointred}=[fill=red, draw=none, shape=circle, minimum size=.1 cm, inner sep=0 pt]
\tikzstyle{pointblack}=[fill=black, draw=none, shape=circle, minimum size=.1 cm, inner sep=0 pt]

\tikzstyle{dashedline}=[dashed, -, thick]
\tikzstyle{blueline}=[-, draw=blue, thick]
\tikzstyle{redline}=[-, draw=red, thick]
\tikzstyle{greenline}=[-, draw=green, thick]
\tikzstyle{arrowline}=[<-, thick]
\tikzstyle{bluearrowline}=[<-, draw=blue, thick]
\tikzstyle{normalline}=[-, thick]
\tikzstyle{greenfillline}=[-, thick, fill=green, fill opacity=.5]
\tikzstyle{grayfillline}=[-, fill opacity=.5, fill={rgb,255: red,128; green,128; blue,128}, thick]
\tikzstyle{redfillline}=[-, fill opacity=.5, fill=red, thick]
\tikzstyle{whitefillline}=[-, fill opacity=.5, fill=white, thick]
\tikzstyle{bluefillline}=[-, thick, fill=blue, fill opacity=.5]
\tikzstyle{bluedashedline}=[-, draw=blue, thick, dashed]
\tikzstyle{reddashedline}=[-, draw=red, thick, dashed]
\tikzstyle{grayfilldashedline}=[dashed, -, fill opacity=.5, fill={rgb,255: red,128; green,128; blue,128}, thick]

\usetikzlibrary{positioning}

\newcommand{\beq}{\begin{equation}\begin{aligned}}
\newcommand{\eeq}{\end{aligned}\end{equation}}

\usepackage{tikz-cd}
\usetikzlibrary{decorations.markings}
\usetikzlibrary{decorations.pathmorphing}
\tikzset{snake it/.style={decorate, decoration=snake}}
\usetikzlibrary{calc}
\usepackage{tikz-3dplot} 
\tdplotsetmaincoords{70}{120}  
\tdplotsetrotatedcoords{+100}{0}{0} 
\pgfdeclarelayer{bg}
\pgfsetlayers{bg,main}

\newtheorem{theorem}{Theorem}
\newtheorem{lemma}[theorem]{Lemma}

\def\red#1{{\color{black} #1}}

\newcommand*\circled[1]{\tikz[baseline=(char.base)]{
  \node[shape=circle,draw,inner sep=1pt] (char) {#1};}}

\usepackage[normalem]{ulem} 
\usepackage{soul}
\usepackage{titlesec}
\setcounter{secnumdepth}{4}
\titleformat{\paragraph}[block]{\normalfont\bfseries\filcenter}{\theparagraph}{1em}{}
\begin{document}

\begin{CJK*}{UTF8}{}
\title{
Higher-order topological phases protected by noninvertible and subsystem symmetries
}
\author{Aswin Parayil Mana}
\affiliation{C. N. Yang Institute for Theoretical Physics, State University of New York at Stony Brook, New York 11794-3840, USA}
\affiliation{Department of Physics and Astronomy, State University of New York at Stony Brook, New York 11794-3840, USA}

\author{Yabo Li (\CJKfamily{gbsn}李雅博)}
\affiliation{C. N. Yang Institute for Theoretical Physics, State University of New York at Stony Brook, New York 11794-3840, USA}
\affiliation{Department of Physics and Astronomy, State University of New York at Stony Brook, New York 11794-3840, USA}
\affiliation{Center for Quantum Phenomena, Department of Physics, New York University, 726 Broadway, New York, New York 10003, USA}

\author{Hiroki Sukeno (\CJKfamily{min}助野裕紀)}
\affiliation{C. N. Yang Institute for Theoretical Physics, State University of New York at Stony Brook, New York 11794-3840, USA}
\affiliation{Department of Physics and Astronomy, State University of New York at Stony Brook, New York 11794-3840, USA}

\author{Tzu-Chieh Wei (\CJKfamily{bsmi}魏子傑)}
\affiliation{C. N. Yang Institute for Theoretical Physics, State University of New York at Stony Brook, New York 11794-3840, USA}
\affiliation{Department of Physics and Astronomy, State University of New York at Stony Brook, New York 11794-3840, USA}

\date{\today}

\begin{abstract}
Higher-order topological phases with invertible symmetries have been extensively studied in recent years, revealing gapless modes localized on boundaries of higher codimension. In this work, we extend the framework of higher-order symmetry-protected topological (SPT) phases to include noninvertible symmetries. We construct a concrete model of a second-order SPT phase in $2+1$ dimensions that hosts symmetry-protected corner modes protected by a noninvertible symmetry. This construction is then generalized to a $d^{th}$-order SPT phase in $d+1$ dimensions, featuring similarly protected corner modes. Additionally, we demonstrate a second-order SPT phase in $3+1$ dimensions exhibiting hinge modes protected by a noninvertible symmetry.

\end{abstract}

\maketitle
\end{CJK*}

\tableofcontents

\section{Introduction}
In recent years, higher-order topological phases have garnered significant attention in condensed matter physics, offering new insights into the interplay between symmetry, topology, and dimensionality ~\cite{Benalcazar:2017udd, Benalcazar:2017dhp, Schindler:2017etn, Calugaru:2018xux, Song:2017uhz, Langbehn:2017rwy, Franca:2018bug, Benalcazar:2021tay,kunst2018lattice,ezawa2018magnetic,ezawa2018higher,geier2018second,khalaf2018higher,van2018higher,trifunovic2019higher,benalcazar2019quantization,ren2020engineering,yang2020type, Khalaf:2019iev,Manna:2022xmn}. Unlike conventional topological phases in $d$ dimensions, which host ($d-1$)-dimensional boundary states, a $k^{th}$-order topological phase exhibits boundary states localized on ($d-k$)-dimensional surfaces. This hierarchical framework generalizes the notion of phases, with conventional topological phases corresponding to first-order phases. Initially explored in the context of topological insulators~\cite{Benalcazar:2017udd, Benalcazar:2017dhp, Schindler:2017etn, Song:2017uhz, Langbehn:2017rwy}, higher-order phases have since been extended to symmetry-protected topological (SPT) phases~\cite{You:2018srv, Rasmussen:2018yjy} and subsystem symmetry-protected topological (SSPT) phases~\cite{You:2024syf}, where distinct phases are characterized by lower-dimensional boundary modes. The presence of higher-order modes in topological phases with crystalline and internal symmetry was studied in non-interacting fermionic models in~\cite{Benalcazar:2017udd, Benalcazar:2021tay, Langbehn:2017rwy, Song:2017uhz, Dwivedi:2018soz, Manna:2022xmn}. 
A variety of bosonic or interacting fermionic higher-order topological phases with gapless corner or hinge modes also appeared in~\cite{Tiwari:2019bzf, You:2018bmf, You:2023kev, You:2018srv, Rasmussen:2018yjy, You:2024syf}.  

A simplest example of higher-order SPT was constructed by~\cite{You:2018srv}, where the $2+1$D cluster state with $\mathbb{Z}_2\times\mathbb{Z}_2$ 0-form symmetry is considered. 
When placed on a rectangular region with open boundary conditions, there are gapless corner modes protected by the crystalline symmetries such as $C_4$ rotation (rotation by $\pi/2$ around the $z$ axis) or reflection (around the $x$ or $y$ axis). 
Since the gapless modes are on the corner (codimension $2$ boundary), this is a second-order SPT. 
In $d+1$ dimensions, one could expect to obtain similar phases with gapless modes on codimension $k$ surfaces ($k^{th}$ order SPTs). With this terminology, first-order SPTs are the ordinary SPTs that were considered in~\cite{Chen:2011pg,guwen2009tensor,pollmann2010entanglement,haldane1983nonlinear,chen2012symmetry,gu2014symmetry,kapustin2014symmetry}.

Parallel to these developments, noninvertible symmetries have become a vibrant area of research, bridging high-energy physics, condensed matter physics, and mathematics. These symmetries, which include operations like the Kramers-Wannier (KW) duality, generalize the concept of symmetry beyond group-theoretic frameworks. Gapped phases of matter with these symmetries have been explored in~\cite{Thorngren:2019iar,Cordova:2023bja,Inamura:2021wuo, Bhardwaj:2023idu, bhardwaj2024hasse, Inamura:2021szw, Cao:2024qjj, Li:2023ani, Chatterjee:2024ych, Seiberg:2024gek, Seifnashri:2025fgd, Seifnashri:2025fgd,Seo:2024its}. Recent work has extended the notion of SPT phases to systems with noninvertible symmetries on lattice models, revealing new classes of topological phases~\cite{fechisin2023non,Seifnashri:2024dsd,li2024non,inamura20241+,meng2024non,Choi:2024rjm, Jia:2024bng, pace2025spt, li2024domain, li2024noninvertible, jia2024weak, jia2024quantum, cao2025duality, Lu:2025rwd, Aksoy:2025rmg}. For example, \cite{Seifnashri:2024dsd} investigated the lattice realization of the Rep($D_8$) SPT phases in $1+1$ dimensions, where the symmetry category includes $\mathbb{Z}_2\times\mathbb{Z}_2$ symmetry and the KW duality, denoted by $\mathrm{\mathbf{D}}$. The
$\mathbb{Z}_2\times\mathbb{Z}_2$ cluster state, invariant under $\mathrm{\mathbf{D}}$, was shown to split into distinct noninvertible SPT phases, highlighting the rich structure of symmetry-protected topology in the presence of noninvertible symmetries. This framework has since been generalized to Rep($G$) for class 2-nilpotent groups $G$~\cite{li2024non}, to systems with spatially modulated symmetries \cite{Kim:2025ttl} and to systems with fusion category symmetries~\cite{inamura20241+,meng2024non}, opening new avenues for exploring topological phases.

A key tool in these studies has been the Kennedy-Tasaki (KT) transformation, a non-local mapping that connects symmetry breaking phases to SPT phases. Originally introduced to establish the Haldane phase as a non-trivial SPT~\cite{kennedy1992hidden,kennedy1992hidden2}, the KT transformation has been generalized to systems with arbitrary integer spins by Oshikawa~\cite{oshikawa1992hidden} and applied to SPT phases, where it maps distinct SPTs to distinct symmetry breaking phases. This approach has proven instrumental in classifying and understanding SPT phases, particularly in the context of noninvertible symmetries~\cite{Seifnashri:2024dsd,cao2025duality}. More recently, the KT transformation has been extended to subsystem symmetry-protected topological phases (SSPTs), enabling the mapping of SSPTs to spontaneous subsystem symmetry breaking (SSSB) phases in $2+1$ dimensions and higher~\cite{mana2024kennedy}.

Subsystem symmetries, which act on lower-dimensional subspaces of a system, have been a cornerstone in the study of SSPTs. First introduced for $2+1$D and $3+1$D in~\cite{You:2018oai}, SSPTs have since been systematically classified for linear~\cite{Devakul:2018fhz}, planar~\cite{Devakul:2019duj}, and fractal~\cite{Devakul:2019noj,Devakul:2018lmi} subsystem symmetries. The combination of subsystem symmetries with noninvertible symmetries, although studied in some recent works \cite{Cao:2023doz,Ebisu:2024lie,mana2024kennedy,Maity:2025fdb}, remains largely unexplored, presenting an exciting frontier for research.

In this manuscript, we investigate noninvertible higher-order SSPTs, focusing on phases protected by the interplay of subsystem symmetries and noninvertible symmetries. We begin by examining a cluster state that exhibits both subsystem symmetries and KW duality symmetry, a noninvertible symmetry. Using the KT transformation, we show the cluster phase is split into distinct equivalence classes of SSPTs protected by noninvertible symmetries. On the SSSB side, multiple ways of preserving the unbroken symmetry give rise to distinct SSPTs, some of which are characterized by higher-order corner or hinge modes. These modes become apparent at interfaces between distinct SSPTs, providing a robust signature of noninvertible higher-order \red{SPTs}.

We illustrate our framework with explicit examples, including a noninvertible second-order SSPT in $2+1$D, where two such phases are distinguished by corner modes, and a second-order SSPT in $3+1$D, distinguished by hinge modes. We further generalize the former construction to $d^{th}$-order SSPTs in $d+1$ dimensions, demonstrating the universality of our approach. Our results highlight the rich structure of noninvertible higher-order SSPTs and their potential for realizing novel topological phenomena in higher dimensions.

The structure of this paper is organized as follows. In Section~\ref{sec:SSPTs}, we review symmetry-protected topological phases (SPTs), focusing on SSPTs protected by $\mathbb{Z}_2$ or $\mathbb{Z}_2\times\mathbb{Z}_2$ linear subsystem symmetries in $2+1$ dimensions and higher, as well as noninvertible SPT phases in $1+1$D following~\cite{Seifnashri:2024dsd}.
Our main results on noninvertible higher-order SSPTs are presented in Section~\ref{sec:NSSPT}, where we explore noninvertible second-order SSPTs in $2+1$D (with corner modes), and generalize the (corner-mode) construction to $d^{th}$-order SSPTs in $d+1$ dimensions. In Section~\ref{sec:hinge}, we construct a model in 3D that hosts planar subsystem symmetry-protected topological phases in $3+1$ dimensions, and then explore the noninvertible second-order SSPTs.  In Section~\ref{sec:Conclusion}, we provide concluding remarks and discuss potential future directions. The appendices contain supplementary material: Appendix~\ref{sec:orderparameter} presents a lemma on the consistent choice of order parameters for symmetry breaking; Appendix~\ref{sec:anomaly} analyzes anomalies involving subsystem and 0-form symmetries; Appendix~\ref{sec:otherNSSPT} describes additional noninvertible SSPTs not covered in the main text; and Appendix~\ref{sec:Interfaceanalysis}, \ref{sec:interfaceanalysisotherspt} and~\ref{sec:Interfaceanalysis3D} provide detailed analyses of interface modes between distinct noninvertible SSPTs in $2+1$D and $3+1$D, respectively.

\section{Review of symmetry-protected topological phases: subsystem and noninvertible symmetries.}
\label{sec:SSPTs}
In this section, we review symmetry-protected topological phases protected by subsystem symmetry and noninvertible symmetry. First, we discuss SSPTs with symmetry groups $\mathbb{Z}_2$ and $\mathbb{Z}_2\times\mathbb{Z}_2$ in $2+1$D and then in $d+1$D. Then we discuss noninvertible SPTs in $1+1$D with Rep($D_8$) symmetry.
\subsection{Linear subsystem symmetry-protected topological phases in \texorpdfstring{$2+1$}{Lg}D}
In this section, we review subsystem symmetry-protected topological phases (SSPTs) in $2+1$D. We restrict our discussion to $\mathbb{Z}_2$ and $\mathbb{Z}_2\times\mathbb{Z}_2$ subsystem symmetry. SSPTs were first introduced by~\cite{You:2018oai} in $2+1$D and $3+1$D (we will also write $2+1$D or $3+1$D as 2D or 3D respectively). Later, a classification of linear SSPTs in $2+1$D was provided in~\cite{Devakul:2018fhz}. For studies on planar subsystem symmetries, see~\cite{Devakul:2019duj}, and on fractal subsystem symmetries, see~\cite{Devakul:2019noj,Devakul:2018lmi}. 

There are two different notions of SSPTs in $2+1$D: 1) weak SSPTs and 2) strong SSPTs. Weak SSPTs can be thought of as stacks of one-dimensional ($1+1$D) SPTs. Strong SSPTs are intrinsically $2+1$D phases. Strong equivalence of SSPTs are defined with respect to linearly symmetric local unitary evolution (LSLU) (see~\cite{Devakul:2018fhz} for a definition of strong SSPTs). 

According to Ref.~\cite{Devakul:2018fhz}, strong SSPTs protected by linear subsystem symmetries with onsite symmetry group $G_s$ in $2+1$D are classified by
\begin{align}
    \mathcal{C}[G_s]\equiv \mathcal{H}^2(G_s^2,U(1))/\mathcal{H}^2(G_s,U(1))^3,
\end{align}
where $\mathcal{H}^2(G,U(1))$ denotes the second group cohomology of $G$. We will use this formula in the following discussion.
\subsubsection{\texorpdfstring{$\mathbb{Z}_2$}{Lg} subsystem symmetry}
Let us consider the onsite symmetry group to be $G_s=\mathbb{Z}_2$. According to the classification, $\mathcal{C}[G_s]=\mathbb{Z}_2$. Hence, there are two different strong equivalence classes of $\mathbb{Z}_2$ SSPTs. One equivalence class is the trivial class, represented by the product state $\ket{+}^{\otimes_i s_i}$, where $\otimes_i s_i$ denotes the product over all sites. The other equivalence class is non-trivial and gives rise to a nontrivial $\mathbb{Z}_2$ SSPT phase. We write down a Hamiltonian for this phase at the fixed point. 

To do this, we consider a square lattice with $L$ sites in the horizontal and vertical direction. Qubits are placed on the vertices (sites) of the lattice. We denote the vertices by a pair of integers $(i,j)$ where $i,j=1,...,L$. The Hamiltonian for $\mathbb{Z}_2$ SSPT is given by
\begin{align} 
&\mathrm{H}_{\text{2D-SSPT}}^{\mathbb{Z}_2} = -\sum_{i,j}\begin{array}{ccc}
 & Z & Z \\
Z& X_{i,j} & Z \\
Z& Z &
\end{array}\,,  \label{eq:Z2SSPTHam}\\
&\, \, =-\sum_{i,j}X_{i,j}Z_{i+1,j}Z_{i,j+1}Z_{i+1,j+1}Z_{i,j-1}Z_{i-1,j}Z_{i-1,j-1}. \nonumber 
\end{align}
This Hamiltonian has horizontal, vertical, and diagonal linear subsystem symmetries. The symmetry operators are 
\begin{subequations}
\begin{align}
\eta^x_j &\equiv \prod_{i = 1 }^L X_{i,j} \quad (j \in \{1, ...,L\}), \\
\eta^y_i &\equiv \prod_{j = 1 }^L X_{i,j} \quad (i \in \{1, ...,L\}), \\
\eta^\text{diag}_k &\equiv \prod_{\ell = 1 }^L X_{ \ell,[\ell+k]_L} \quad (k \in \{1, ...,L\})\, .
\end{align}
\label{eq:2dsubsystemwithdiag}
\end{subequations}
The above symmetries satisfy a constraint $\prod_{j=1}^L\eta^x_j = \prod_{i=1}^L\eta^y_i = \prod_{k=1}^L\eta^\text{diag}_k$. We have in total $3L-2$ independent line symmetry generators \red{for odd $L$ and $3L-3$ independent line symmetry generators for even $L$. In any case, we emphasize that there is a macroscopic number (linearly depending on the system length $L$) of symmetry generators with $\mathbb{Z}_2$ subsystem symmetry.} In addition to this, the Hamiltonian is also symmetric  under the exchange
\begin{align}
    X\leftrightarrow \begin{array}{ccc}
 & Z & Z \\
Z&   & Z \\
Z& Z &
\end{array}\, .
\end{align}
This transformation can be implemented by the  operator $\mathbf{D}_\text{DPIM}^{(2)}\equiv \mathbf{T}_{1,1}^{-1}\mathbf{D}_\text{DPIM}$, where $\mathbf{T}_{1,1}^{-1}$ is diagonal translation by one site and $\mathbf{D}_\text{DPIM}$ is defined as~\cite{mana2024kennedy}
\begin{align}
\mathbf{D}_\text{DPIM}
\equiv 
\mathbf{P}_\text{DPIM}
\tilde{\mathbf{D}}_y^{(2)}
\mathbf{H}^{\otimes(2)}
\tilde{\mathbf{D}}_x^{(2)}
\mathbf{H}^{\otimes(2)}
\tilde{\mathbf{D}}_\text{diag}^{(2)}
\mathbf{P}_\text{DPIM} \,.
\label{eq:DDPIM}
\end{align}
Here $\mathbf{H}^{\otimes(2)}$ is the simultaneous Hadamard transformation on all the qubits and
\begin{subequations}
\begin{align}
&\tilde{\mathbf{D}}_x^{(2)}\equiv\prod_{j=1}^{L}\left(\left(\prod_{i=1}^{L-1}e^{i\frac{\pi}{4}X_{i,j}}e^{i\frac{\pi}{4}Z_{i,j}Z_{i+1j}}\right)\,e^{i\frac{\pi}{4}X_{L,j}}\right), \\
&\tilde{\mathbf{D}}_y^{(2)}\equiv\prod_{i=1}^{L}\left(\left(\prod_{j=1}^{L-1}e^{i\frac{\pi}{4}X_{i,j}}e^{i\frac{\pi}{4}Z_{i,j}Z_{i,j+1}}\right)e^{i\frac{\pi}{4}X_{i,L}}\right), \\
&\tilde{\mathbf{D}}_\text{diag}^{(2)}\equiv\prod_{k=1}^L\left(\prod_{\ell=1}^{L-1}e^{i\frac{\pi}{4}X_{\ell,[\ell+k]_L}}e^{i\frac{\pi}{4}Z_{\ell,[\ell+k]_L}Z_{\ell+1,[\ell+k+1]_L}}\right) \nonumber \\
&\qquad\qquad\qquad \times e^{i\frac{\pi}{4}X_{L,[L+k]_L}} \, ,\\
&\mathbf{P}_\text{DPIM}
\equiv \prod_{j = 1}^L \frac{1+\eta^x_j}{2}
\prod_{i = 1}^L \frac{1+\eta^y_i}{2}
\prod_{k = 1}^L \frac{1+\eta^\text{diag}_k}{2} \, .
\end{align}
\label{eq:Ddpimopdef}
\end{subequations}
We note that an explicit operator representation of KW was studied in 1+1 dimensions by~\cite{Seiberg:2023cdc,Chen:2023qst,Ho:2019hyv}.
$\mathbf{D}_\text{DPIM}^{(2)}$ satisfies the following algebra
\begin{subequations}
    \begin{align}
    &\left(\mathbf{D}_\text{DPIM}^{(2)}\right)^2\propto \mathbf{P}_\text{DPIM},\\
    &\mathbf{D}_\text{DPIM}^{(2)}\eta_i^y=\eta_i^y\mathbf{D}_\text{DPIM}^{(2)}=\mathbf{D}_\text{DPIM}^{(2)},\\
    &\mathbf{D}_\text{DPIM}^{(2)}\eta_j^x=\eta_j^x\mathbf{D}_\text{DPIM}^{(2)}=\mathbf{D}_\text{DPIM}^{(2)},\\
    &\mathbf{D}_\text{DPIM}^{(2)}\eta_k^{\text{diag}}=\eta_k^{\text{diag}}\mathbf{D}_\text{DPIM}^{(2)}=\mathbf{D}_\text{DPIM}^{(2)}.
\end{align}
\end{subequations}
We emphasize that $\mathbf{D}_\text{DPIM}^{(2)}$ is a noninvertible symmetry of the $\mathbb{Z}_2$ SSPT Hamiltonian~\eqref{eq:Z2SSPTHam}.  
\subsubsection{\texorpdfstring{$\mathbb{Z}_2\times\mathbb{Z}_2$}{Lg} subsystem symmetry}
Now let us consider the case where the onsite symmetry group $G_s=\mathbb{Z}_2\times\mathbb{Z}_2$. Using the classification result, $\mathcal{C}[\mathbb{Z}_2\times\mathbb{Z}_2]=\mathbb{Z}_2\times\mathbb{Z}_2\times\mathbb{Z}_2$. Hence, there are eight inequivalent SSPT phases generated by three SSPTs with $\mathbb{Z}_2\times\mathbb{Z}_2$ symmetry. To describe the generators, let us consider two square lattices that are dual to each other. We color them red and blue. The three generators are 1) $\mathbb{Z}_2$ SSPT on red sublattice, 2) $\mathbb{Z}_2$ SSPT on blue sublattice, and  3) phase of cluster state that entangles red and blue sublattices. We described $\mathbb{Z}_2$ SSPT before. Now we describe the cluster phase Hamiltonian at the fixed point. 

Let us denote the vertices and plaquettes of the red and blue square lattices by $v_r$, $v_b$, and $p_r$, $p_b$, respectively. The Hamiltonian of the cluster state is 
\begin{align}
    \bm \mathrm{H}_{\text{2D-cluster}}&=-\sum_{v_r}X_{v_r}\prod_{v_b\in \partial p_b}Z_{v_b}-\sum_{v_b}X_{v_b}\prod_{v_r\in \partial p_r}Z_{v_r} \, .
\label{eq:2dcluster}
\end{align}
Compared to the $\mathbb{Z}_2$ SSPT, the cluster state Hamiltonian does not have a diagonal linear subsystem symmetry. 

Let us consider a square lattice with $L_x$ vertices in the $x$-direction and $L_y$ vertices in the $y$-direction. We label the vertices of the red sublattice by integer coordinates $(i,j)$ and that of the blue sublattice by half-integer coordinates $(i+\frac{1}{2},j+\frac{1}{2})$ where $i=1,...,L_x$ and $j=1,...,L_y$. Subsystem symmetries are generated by
\begin{align}
    \eta^x_{r,j}&=\prod_{i}X_{i,j}\, ,\quad \eta^y_{r,i}=\prod_{j}X_{i,j}\, ,\nonumber\\
    \eta^x_{b,j}&=\prod_{i}X_{i+\frac{1}{2},j+\frac{1}{2}}\, ,\quad \eta^y_{b,i}=\prod_{j}X_{i+\frac{1}{2},j+\frac{1}{2}}\, .
    \label{eq:subsystemsymmetries2Dclstr}
\end{align}
In addition, there is another symmetry for the Hamiltonian~\eqref{eq:2dcluster} obtained by swapping
\begin{align}
    X_{v_r}\leftrightarrow \begin{array}{cc}
        Z_{v_b} & Z_{v_b}\\
        Z_{v_b} & Z_{v_b}
    \end{array},\qquad X_{v_b}\leftrightarrow \begin{array}{cc}
        Z_{v_r} & Z_{v_r}\\
        Z_{v_r} & Z_{v_r}
    \end{array}\, .
\end{align}
This transformation is a Kramers-Wannier duality that gauges the subsystem symmetries.
An operator representation of this symmetry up to a half lattice translation is given in~\cite{mana2024kennedy}. Here we define an operator with half lattice translation included:
\begin{align}
    \mathbf{D}^{(2)}=\mathbf{T}^{-1}_{\frac{1}{2},\frac{1}{2}}\mathbf{D}^{(2)}_r\mathbf{D}^{(2)}_b\, .
\end{align}
The operator $\mathbf{D}^{(2)}_{r(b)}$ on the red (blue) sublattice is defined as 
\begin{align}
    \mathbf{D}^{(2)}_{r(b)}\equiv\mathbf{P}^{(2)}_{r(b)} \Tilde{\mathbf{D}}_{x;r(b)}^{(2)}\mathbf{H}^{\otimes(2)}_{r(b)}\Tilde{\mathbf{D}}_{y;r(b)}^{(2)} \mathbf{P}^{(2)}_{r(b)}\, ,
\end{align}
where $\mathbf{H}^{\otimes(2)}_{r(b)}$ denotes the product of Hadamard operators on red (and repectively, blue) lattices and
\begin{subequations}
\begin{align}
\tilde{\mathbf{D}}_{x;r}^{(2)}&\equiv\prod_{j=1}^{L_y}\left(\left(\prod_{i=1}^{L_x-1}e^{i\frac{\pi}{4}X_{i,j}}e^{i\frac{\pi}{4}Z_{i,j}Z_{i+1j}}\right)\,e^{i\frac{\pi}{4}X_{L_x,j}}\right), \label{Dx-def}\\
\tilde{\mathbf{D}}_{y;r}^{(2)}&\equiv\prod_{i=1}^{L_x}\left(\left(\prod_{j=1}^{L_y-1}e^{i\frac{\pi}{4}X_{i,j}}e^{i\frac{\pi}{4}Z_{i,j}Z_{i,j+1}}\right)e^{i\frac{\pi}{4}X_{i,L_y}}\right),\label{Dy-def}\\
\mathbf{P}^{(2)}_{r}&\equiv\prod_{j=1}^{L_y}\frac{(1+\eta^x_{r,j})}{2}\prod_{i=1}^{L_x}\frac{(1+\eta^y_{r,i})}{2}\, ,
\end{align}
\label{eq:Dtilde2ddef}
\end{subequations}
with similar definitions for $\tilde{\mathbf{D}}_{x;b}^{(2)}$, $\tilde{\mathbf{D}}_{y;b}^{(2)}$ and $\mathbf{P}^{(2)}_{b}$.
\label{Dr(b)def}

The symmetry operator $\mathbf{D}^{(2)}$ is noninvertible and satisfies the fusion rules
\begin{align}
\begin{split}
    &(\mathbf{D}^{(2)})^2\propto \mathbf{P}^{(2)}_{r}\mathbf{P}^{(2)}_{b}\, ,\\
    &\eta^{x}_{r,j}\mathbf{D}^{(2)}=\mathbf{D}^{(2)}\eta^{x}_{r,j}=\mathbf{D}^{(2)}\, ,\\
    &\eta^{y}_{r,i}\mathbf{D}^{(2)}=\mathbf{D}^{(2)}\eta^{y}_{r,i}=\mathbf{D}^{(2)}\, ,\\
    &\eta^x_{b,j+\frac{1}{2}}\mathbf{D}^{(2)}=\mathbf{D}^{(2)}\eta^x_{b,j+\frac{1}{2}}=\mathbf{D}^{(2)}\, ,\\
    &\eta^y_{b,i+\frac{1}{2}}\mathbf{D}^{(2)}=\mathbf{D}^{(2)}\eta^y_{b,i+\frac{1}{2}}=\mathbf{D}^{(2)}\,.
    \end{split}
\end{align}

\subsection{Linear subsystem symmetry-protected topological phases in higher dimensions}
In this section, we will give some examples of higher-dimensional subsystem symmetry-protected topological phases (see~\cite{mana2024kennedy} for more details). Again, we will restrict our discussion to $\mathbb{Z}_2$ or $\mathbb{Z}_2\times\mathbb{Z}_2$ onsite symmetry groups.
\subsubsection{\texorpdfstring{$\mathbb{Z}_2$}{Lg} symmetry}
We give an example of a nontrivial $\mathbb{Z}_2$ symmetry-protected topological phase that is a generalization of~\eqref{eq:Z2SSPTHam}. Let us consider a hypercubic lattice in $d$ spatial dimensions. We denote the coordinate axis by $x_i$ for $i=1,...,d$. We denote the vertices and cube centers of the hypercubic lattice by $v$ and $c$, respectively. We take the lattice spacing to be of unit length and the number of vertices in each $x_i$ direction to be $L$. Hence, the vertices are at coordinates $(i_1,...,i_d)$ where $i_k=1,...,L$. The Hamiltonian for the SSPT is 
\begin{align}
    \mathrm{H}^{\mathbb{Z}_2}_{\text{dD-SSPT}}=-\sum_{v}X_v\prod_{\substack{v'\in\partial c\, ,\\
    c=v+(\frac{1}{2},...,\frac{1}{2})}}Z_{v'}\prod_{\substack{v'\in\partial c\, ,\\
    c=v-(\frac{1}{2},...,\frac{1}{2})}}Z_{v'}\, .
    \label{eq:Z2dDSSPT}
\end{align}
See Figure~6(b) of \cite{mana2024kennedy} for an illustration for $d=3$. 
This Hamiltonian has a rigid linear subsystem symmetry along all the $x_i$ directions and a diagonal line pointing in the $(1,...,1)$ direction, whose explicit formula is a straightforward generalization of~\eqref{eq:2dsubsystemwithdiag}. In addition to that, it also possesses a symmetry that exchanges 
\begin{align}
    X_v\leftrightarrow \prod_{\substack{v'\in\partial c\, ,\\
    c=v+(\frac{1}{2},...,\frac{1}{2})}}Z_{v'}\prod_{\substack{v'\in\partial c\, ,\\
    c=v-(\frac{1}{2},...,\frac{1}{2})}}Z_{v'}\, .
    \label{eq:ddimXtoZtransfZ2}
\end{align}
An explicit operator representation for this symmetry is 
\begin{align}
\begin{split}
\mathbf{D}_\text{DHCIM}^{(d)} 
&\equiv \mathbf{T}_{(1,1,...,1)}^{-1}\mathbf{P}_\text{DHCIM}^{(d)}
\tilde{\mathbf{D}}_{x_d}^{(d)}
\mathbf{H}^{\otimes(d)}
\tilde{\mathbf{D}}_{x_{d-1}}^{(d)}
\mathbf{H}^{\otimes(d)}\times\hdots\\
&\qquad\qquad\times\tilde{\mathbf{D}}_{x_1}^{(d)}
\mathbf{H}^{\otimes(d)}
\tilde{\mathbf{D}}_\text{diag}^{(d)}
\mathbf{P}_\text{DHCIM}^{(d)}, 
\end{split}
\end{align}
where the subscript DHCIM denotes double hypercube Ising model and $\tilde{\mathbf{D}}_{x_i}^{(d)}$, $\tilde{\mathbf{D}}_\text{diag}^{(d)}$, $\mathbf{P}_\text{DHCIM}^{(d)}$, $\mathbf{H}^{\otimes(d)}$ are  straightforward generalizations of the equations~\eqref{eq:Ddpimopdef}. $\mathbf{T}_{(1,1,...,1)}^{-1}$ is a diagonal lattice translation included to obtain the transformation \eqref{eq:ddimXtoZtransfZ2}.
The operator $\mathbf{D}_\text{DHCIM}^{(d)} $ is a noninvertible symmetry and satisfies
\begin{align}
    \left(\mathbf{D}_\text{DHCIM}^{(d)} \right)^2\propto \mathbf{P}_\text{DHCIM}^{(d)}\, .
\end{align}
\subsubsection{\texorpdfstring{$\mathbb{Z}_2\times\mathbb{Z}_2$}{Lg} symmetry}
Here, we give an example of $\mathbb{Z}_2\times\mathbb{Z}_2$ SSPT in higher dimensions. Let us consider two hypercubic lattices dual to each other in $d$ spatial dimensions. We color the lattices red and blue. As before, we denote the coordinate axis by $x_i$ for $i=1,..,d$. We take the lattice spacing to be of unit length and the number of vertices in $x_i$ direction to be $L_{x_i}$. We denote the vertices and cube centers of red and blue sublattices by $v_r$, $c_r$, and $v_b$, $c_b$, respectively. Since the red and blue sublattices are dual to each other, $v_r\equiv c_b$ and $v_b\equiv c_r$. Vertices of the red sublattice are at integer coordinates $(i_1,...,i_d)$ for $i_k=1,...,L_{x_k}$, and those of the blue sublattice are at half-integer coordinates. The Hamiltonian for the SSPT is 
\begin{widetext}
\begin{align}
\begin{split}
    \bm \mathrm{H}_{\text{dD-cluster}}&=-\sum_{v_r}X_{v_r}\prod_{v_b\in \partial c_b}Z_{v_b}-\sum_{v_b}X_{v_b}\prod_{v_r\in \partial c_r}Z_{v_r} \, .
\end{split}
\label{eq:dDcluster}
\end{align}
\end{widetext}
See Figure~6(a) of \cite{mana2024kennedy} for an illustration for $d=3$.
This Hamiltonian has rigid, linear subsystem symmetries
along $x_i$ direction on both red and blue sublattices. We do not provide an explicit expression here as it is a straightforward generalization of \eqref{eq:subsystemsymmetries2Dclstr}. In addition to this, it also possesses a symmetry that exchanges
\begin{align}
    X_{v_r}\leftrightarrow \prod_{v_b\in \partial c_b}Z_{v_b}\, ,\quad X_{v_b}\leftrightarrow \prod_{v_r\in \partial c_r}Z_{v_r} \, .
\end{align}
This is the Kramers-Wannier duality in $d$ dimensions obtained by gauging all the subsystem symmetries. We give an explicit operator representation of this symmetry 
\begin{align}
    \mathbf{D}^{(d)}=\mathbf{T}^{-1}_{(\frac{1}{2},...,\frac{1}{2})}\mathbf{D}_r^{(d)}\mathbf{D}_b^{(d)}
\end{align}
where the operator $\mathbf{D}_{r(b)}^{(d)}$ is defined as
\begin{align}
    \mathbf{D}_{r(b)}^{(d)}\equiv& \mathbf{P}_{r(b)}^{(d)}\tilde{\mathbf{D}}_{x_1;r(b)}^{(d)}\mathbf{H}_{r(b)}^{(d)}\tilde{\mathbf{D}}_{x_2;r(b)}^{(d)}\mathbf{H}_{r(b)}^{(d)}\times\nonumber\\
    &\qquad\qquad \cdots \mathbf{H}_{r(b)}^{(d)}\tilde{\mathbf{D}}_{x_d;r(b)}^{(d)}\mathbf{P}_{r(b)}^{(d)}\, .
\end{align}
The operator $\mathbf{H}_{r(b)}^{(d)}$ denotes the simultaneous action of Hadamard on all vertices. $\tilde{\mathbf{D}}_{x_k;r(b)}^{(d)}$ and $\mathbf{P}_{r(b)}^{(d)}$ are straightforward generalizations of \eqref{eq:Dtilde2ddef}.

\subsection{Noninvertible symmetry-protected topological phases in \texorpdfstring{$1+1$}{Lg}D}

In this subsection, we give a review of noninvertible symmetry-protected topological phases in $1+1$D following~\cite{Seifnashri:2024dsd}. 
Consider $\mathbb{Z}_2\times\mathbb{Z}_2$ SPT in $1+1$ D on a one-dimensional ring with $2L$ sites with periodic boundary conditions. We denote the position of the sites by the subscript $i$, for $i=1,...,2M$. Then $2L+1\equiv 1$. The Hamiltonian for the cluster state is
\begin{align}
    \mathrm{H}_{\text{1D-cluster}}=-\sum_{i=1}^{2L}Z_{i-1}X_iZ_{i+1}\, .
    \label{eq:1D-cluster}
\end{align}
This Hamiltonian~\eqref{eq:1D-cluster} has the following invertible symmetries
\begin{align}
    \eta^e=\prod_{j:\text{even}}X_j\, ,\quad \eta^o=\prod_{j:\text{odd}}X_j\, .
    \label{eq:symmetries1dcluster}
\end{align}
In addition to this, the Hamiltonian is symmetric under the Kramers-Wannier (KW) duality $X_i\leftrightarrow Z_{i-1}Z_{i+1}$, which is a noninvertible symmetry. An explicit operator representation  for the KW duality is 
\begin{align}
    \mathbf{D}=\mathbf{T}^{-1}\mathbf{D}_{e}\mathbf{D}_o  ,
\end{align}
where
\begin{subequations}
\begin{align}
    &\mathbf{D}_{e}\equiv\Big(\prod_{k=1}^{L-1} e^{i\frac{\pi}{4}X_{2k}}e^{i\frac{\pi}{4}Z_{2k}Z_{2k+2}}\Big) e^{i\frac{\pi}{4}X_{2L}}\frac{(1+\eta_{e})}{2},\\
    &\mathbf{D}_{o}\equiv\Big(\prod_{k=1}^{L-1} e^{i\frac{\pi}{4}X_{2k-1}}e^{i\frac{\pi}{4}Z_{2k-1}Z_{2k+1}}\Big) e^{i\frac{\pi}{4}X_{2L-1}}\frac{(1+\eta_{o})}{2}.
\end{align}
\end{subequations}
Since the cluster Hamiltonian is symmetric under $\mathbf{D}$, we can find the equivalence classes of SPTs protected by $\mathbf{D}$ inside the cluster phase. This problem can be tackled by mapping the SPT to SSB using the Kennedy-Tasaki (KT) transformation and finding the various symmetry breaking phases. 
The operator $\mathbf{KT}$ is defined as
\begin{align}
    \mathbf{KT}=\hat{\rm V}\mathbf{D}\hat{\rm V}\, ,
\end{align}
where $\hat{\rm V}=\prod_{i=1}^{2L}CZ_{i,i+1}$ is the cluster entangler. 
$\mathbf{KT}$ has the following action:
\begin{align}
    X_i&\xleftrightarrow{\mathbf{KT}} \hat{X}_i\, ,\nonumber\\
    Z_{i-1}X_iZ_{i+1}&\xleftrightarrow{\mathbf{KT}} \hat{Z}_{i-1}\hat{Z}_{i+1}\, .
\end{align}
Hence, the cluster state Hamiltonian~\eqref{eq:1D-cluster} is mapped to two copies of the Ising model,
\begin{align}
    \mathrm{H}_{\text{Ising}^2}=-\sum_{i:\text{even}}\hat{Z}_{i}\hat{Z}_{i+1}-\sum_{i:\text{odd}}\hat{Z}_{i}\hat{Z}_{i+1}\, .
    \label{eq:Ising^2}
\end{align}
Here, we used a hat on Pauli operators on the SSB side to distinguish them from Pauli operators on the SPT side. 
The order parameters for this SSB Hamiltonian~\eqref{eq:Ising^2} can be taken to be $Z_1$ and $Z_2$. 
The Hamiltonian possesses the following symmetries (obtained by applying $\mathbf{KT}$ on~\eqref{eq:symmetries1dcluster}):
\begin{align}
    \hat{\eta}_{e}=\prod_{j:\text{even}}\hat{X}_j\, ,\quad \hat{\eta}_o=\prod_{j:\text{odd}}\hat{X}_j\, .
\end{align}
These symmetries are spontaneously broken and lead to a ground state degeneracy of 4. There is an additional symmetry:  $\hat{\rm V}=\prod_{i=1}^{2L}CZ_{i,i+1}$ obtained by applying $\mathbf{KT}$ on $\mathbf{D}$. This can be seen from the identity~\cite{Seifnashri:2024dsd}
\begin{align}
   \mathbf{P} \left(\hat{\rm V}\mathbf{D}\hat{\rm V}\right)\mathbf{D}\mathbf{P}\propto \mathbf{P}\hat{\rm V}\left(\hat{\rm V}\mathbf{D}\hat{\rm V}\right)\mathbf{P}\, ,
   \label{eq:pvdvdp=pdvp}
\end{align}
where $\mathbf{P}=\frac{(1+\eta_e)}{2}\frac{(1+\eta_o)}{2}$ is the projection onto the $\mathbb{Z}_2\times\mathbb{Z}_2$ symmetric sector. This is the statement that the action of $\mathbf{D}$ on the symmetric sector is mapped to the action of $\hat{\rm V}$ on the symmetric sector after applying $\mathbf{KT}$. To argue this, we note that at the level of operators, on the symmetric sector
\begin{align}
    (X_i\xrightarrow{\mathbf{D}}Z_{i-1}Z_{i+1})\xrightarrow{\mathbf{KT}}(\hat{X}_i\xrightarrow{\hat{\rm V}}\hat{Z}_{i-1}\hat{X}_{i}\hat{Z}_{i+1})\,.
\end{align}
We note that if two invertible operators $O$ and $\tilde{O}$ act in the same way on all linear  operators, i.e. $OQO^{-1}=\tilde{O}Q\tilde{O}^{-1}$ for all linear operators $Q$, then  $O=c\tilde{O}$ for some constant $c$. In the symmetric subspace, the operators $\mathbf{P} \hat{\rm V}\mathbf{D}\hat{\rm V}\mathbf{D}\mathbf{P}$ and $\mathbf{P}\mathbf{D}\hat{\rm V}\mathbf{P}$ are invertible. On the symmetric subspace, a general linear operator can be taken to be a symmetric operator. Since they act in the same way on all the symmetric operators, they should be proportional. Any states orthogonal to the states in the symmetric subspace are annihilated by these two operators. So they should be proportional in the whole Hilbert space. 
Since the projection can be absorbed into $\mathbf{D}$, we have
\begin{align}
   \left(\hat{\rm V}\mathbf{D}\hat{\rm V}\right)\mathbf{D}\propto \hat{\rm V}\left(\hat{\rm V}\mathbf{D}\hat{\rm V}\right)\,.
   \label{eq:vdvd=dv}
\end{align}
The symmetry $\hat{\rm V}$ is still preserved in all symmetry-broken ground states of the particular Hamiltonian~\eqref{eq:Ising^2}.

Since we applied $\mathbf{KT}$ to the cluster phase, the $\mathbb{Z}_2\times \mathbb{Z}_2$ subsystem symmetries remain broken. Under this condition, could there be other phases? 
To answer this question, we need to examine the possible additional symmetry that is not spontaneously broken. We have seen one above: $\hat{\rm V}$, but  there could other possibilities, such as $\hat{\rm V}\hat{\eta}_o$ and $\hat{\rm V}\hat{\eta}_e$. However, the diagonal combination $\hat{\rm V}\hat{\eta}_e\hat{\eta}_o$ is an anomalous symmetry (this is the boundary symmetry of the CZX model~\cite{Chen:2011bcp}) and cannot be preserved. 
\subsubsection{\texorpdfstring{$\hat{\mathbb{Z}}_2^{\rm V}$}{Lg} preserving phase}
It can be seen that $\hat{\rm V}$ is the preserved symmetry for \eqref{eq:Ising^2} by explicitly checking the four broken ground states. In Appendix~\ref{sec:orderparameter}, we prove a Lemma (\ref{lemma:consistentorderparameters}) that allows us to check whether a symmetry is spontaneously broken or not by identifying order parameters. Now, we check whether all the conditions of Lemma~\ref{lemma:consistentorderparameters} are satisfied. The order parameters $Z_1$ and $Z_2$ commute with themselves, with the symmetry $\hat{\rm V}$, and with the Hamiltonian \eqref{eq:Ising^2}. The symmetry generators $\hat{\eta}_o$ and $\hat{\eta}_e$ satisfy: 1) $\{\hat{\eta}_o,Z_1
\}=0$, 2) $\{\hat{\eta}_e,Z_2
\}=0$, 3) $[\hat{\eta}_o,Z_2]=0$, and 4) $[\hat{\eta}_e,Z_1]=0$. We also know that the Hamiltonian \eqref{eq:Ising^2} has $2^2=4$ ground states. Therefore, all the conditions in Lemma~\ref{lemma:consistentorderparameters} are satisfied and $\hat{\rm V}$ is a preserved symmetry. Hence, for this case, the SSB Hamiltonian is \eqref{eq:Ising^2} and the corresponding SPT Hamiltonian after applying $\mathbf{KT}$ is \eqref{eq:1D-cluster}. 
\subsubsection{\texorpdfstring{$\text{diag}(\hat{\mathbb{Z}}_2^{\rm V}\times\hat{\mathbb{Z}}_2^e)$}{Lg} preserving phase}
Here we assume $L$ is a multiple of four.
The Hamiltonian for the SSB phase is~\cite{Seifnashri:2024dsd}
\begin{align}
    \hat{\mathrm{H}}_{\text{odd}}=\sum_{i=1}^{L/2}\hat{Z}_{2i-1}\hat{Z}_{2i+1}-\sum_{i=1}^{L/2}\hat{Y}_{2i}\hat{Y}_{2i+2}(1+\hat{Z}_{2i-1}\hat{Z}_{2i+3})\, .
\end{align}
We note that the Hamiltonian still has all the symmetries $\hat{\eta}_e$, $\hat{\eta}_o$ and $\hat{\rm V}$. However, the symmetries $\hat{\eta}_e$ and $\hat{\eta}_o$ are broken while the diagonal combination $\hat{\rm V}\hat{\eta}_e$ is preserved on the ground states.  The order parameters for this phase can be taken to be $\hat{Z}_1$ and $\hat{Y}_2(1-\hat{Z}_1\hat{Z}_3)$ that satisfy the conditions in Lemma~\ref{lemma:consistentorderparameters}, allowing us to conclude that $\hat{\rm V}\hat{\eta}_e$ is an unbroken symmetry. The original SPT Hamiltonian that gives rise to this SSB Hamiltonian is 
\begin{align}
    &\mathrm{H}_{\text{odd}}=\sum_{i=1}^{L/2}Z_{2i-1}X_{2i}Z_{2i+1}-\sum_{i=1}^{L/2}Y_{2i}X_{2i+1}Y_{2i+2}\nonumber\\
    &\qquad\qquad+\sum_{i=1}^{L/2}Z_{2i-1}Z_{2i}X_{2i+1}Z_{2i+2}Z_{2i+3}\, .
    \label{eq:oddHam}
\end{align}
This Hamiltonian has a unique ground state
\begin{align}
    \ket{\text{odd}}= \prod_{i=1}^{L/2}CZ_{2i-1,2i+1}\prod_{j=1}^LCZ_{j,j+1}\ket{-}^{\otimes_i s_i}\, ,
\end{align}
where $\otimes_i s_i$ denotes the tensor product over all the sites.
\subsubsection{\texorpdfstring{$\text{diag}(\hat{\mathbb{Z}}_2^{\rm V}\times\hat{\mathbb{Z}}_2^o)$}{Lg} preserving phase}
This is obtained by exchanging $o\longleftrightarrow e$.

\vspace{1cm}
The different phases can be distinguished from the analysis of the interface modes between them~\cite{Seifnashri:2024dsd}. We will use this technique in Appendix~\ref{sec:Interfaceanalysis},~\ref{sec:interfaceanalysisotherspt}, and \ref{sec:Interfaceanalysis3D} to distinguish between various phases that we obtain in higher dimensions.
\section{Noninvertible higher-order subsystem symmetry-protected topological phases: Corner modes}\label{sec:NSSPT}
We have reviewed SPT phases protected by subsystem symmetries in two and higher dimensions and a one-dimensional SPT phase protected by a noninvertible symmetry in the last section. We now move on to our main results in this paper. 
We show that there exist higher-order SPT phases protected by subsystem and noninvertible symmetries. 
In this section, we focus on linear subsystem symmetries.

\subsection{Noninvertible second-order SSPT phases from \texorpdfstring{$\mathbb{Z}_2\times\mathbb{Z}_2$}{Lg} cluster phase in \texorpdfstring{$2+1$}{Lg}D}
\label{sec:noninvertibleZ2xZ2}
To construct the noninvertible higher-order subsystem symmetry-protected topological phase, first we note that the $\mathbb{Z}_2\times\mathbb{Z}_2$ SSPT~\eqref{eq:2dcluster} is also invariant under $\mathbf{D}^{(2)}$. This noninvertible symmetry can further break the cluster phase into distinct phases protected by $\mathbf{D}^{(2)}$. 
Our analysis of interface modes in Appendix~\ref{sec:Interfaceanalysis} will show that this is indeed the case. We examine two types of interfaces on a torus: a line interface where two cylindrical regions are separated by two lines, and a rectangular interface where a rectangular region and it's complement are separated by a rectangle. Our results show that the line interface fails to reveal the presence of protected interface modes between the two phases. In contrast, the rectangular interface clearly distinguishes the two phases through the appearance of corner modes, which are protected by the noninvertible symmetry.
Hence, we obtain a second-order noninvertible SSPT.

We use the Kennedy-Tasaki (KT) transformation to study higher-order noninvertible SSPTs. KT map a symmetry-protected topological phase to a symmetry breaking phase. For the particular case we are going to analyze, we use the KT transformation provided in~\cite{mana2024kennedy}
\begin{align}
    \mathbf{KT}^{(2)}\equiv \hat{\rm V}^{(2)}\mathbf{D}^{(2)}\hat{\rm V}^{(2)}\, ,
\end{align}
where $\hat{\rm V}^{(2)}$ is the cluster entangler between red and blue sublattices. Explicitly,
\begin{align}
    \hat{\rm V}^{(2)}\equiv \prod_{v_b}\prod_{v_r\in\partial (p_r=v_b)}CZ_{v_r,v_b}\, .
\end{align}
$\mathbf{KT}^{(2)}$ acts in the following way:
\begin{align}
    &X_{v_r}\xleftrightarrow{\mathbf{KT}^{(2)}}\hat{X}_{v_r}\, ,\quad X_{v_b}\xleftrightarrow{\mathbf{KT}^{(2)}}\hat{X}_{v_b}\, ,\nonumber\\
    &\quad\begin{array}{ccc}
        Z_{v_b} & & Z_{v_b} \\
                & X_{v_r} & \\
        Z_{v_b} & & Z_{v_b}
    \end{array}\xleftrightarrow{\mathbf{KT}^{(2)}}
  \begin{array}{ccc}
        \hat{Z}_{v_b} & & \hat{Z}_{v_b} \\
        & & \\
        \hat{Z}_{v_b} &  & \hat{Z}_{v_b}
\end{array}  
    \, ,
    \nonumber\\
    &\quad\begin{array}{ccc}
        Z_{v_r} & & Z_{v_r} \\
                & X_{v_b} & \\
        Z_{v_r} & & Z_{v_r}
    \end{array}\xleftrightarrow{\mathbf{KT}^{(2)}}
    \begin{array}{ccc}
        \hat{Z}_{v_r} & & \hat{Z}_{v_r} \\
        & & \\
        \hat{Z}_{v_r} &  & \hat{Z}_{v_r}
    \end{array}
    \, .
\end{align}
Here, we have used a hat symbol $\hat{}$ on Pauli operators on the dual side to distinguish them from the Pauli operators on the SSPT side.
Hence, under $\mathbf{KT}^{(2)}$, we map the cluster state Hamiltonian~\eqref{eq:2dcluster} to spontaneous subsystem symmetry breaking (SSSB) Hamiltonian  
\begin{align}
    \mathrm{H}_{\text{SSSB}}^{(2)}=-\sum_{v_r}\begin{array}{cc}
        \hat{Z}_{v_r} & \hat{Z}_{v_r} \\
        \hat{Z}_{v_r} & \hat{Z}_{v_r}
    \end{array}-\sum_{v_b}\begin{array}{cc}
        \hat{Z}_{v_b} & \hat{Z}_{v_b} \\
        \hat{Z}_{v_b} & \hat{Z}_{v_b}
    \end{array}\, .
    \label{eq:2DSSSBtwocopiesPI}
\end{align}
We note that $\mathbf{KT}^{(2)}$ maps the subsystem symmetry line operators to dual subsystem symmetry line operators. The action of $\mathbf
D^{(2)}$ on the subsystem symmetric sector is mapped under $\mathbf{KT}^{(2)}$ to the action of $\hat{\rm V}^{(2)}$ on the symmetric sector. Following the same arguments around \eqref{eq:pvdvdp=pdvp} and \eqref{eq:vdvd=dv}, we conclude that 
\begin{align}
    \mathbf{KT}^{(2)}\mathbf{D}^{(2)}\propto \hat{\rm V}^{(2)}\mathbf{KT}^{(2)}\,.
\end{align} 
Hence, the symmetries of this dual SSSB Hamiltonian are
\begin{subequations}
    \begin{align}
    &\hat{\eta}^{x}_{r,j}=\prod_{i=1}^{L_x}\hat{X}_{i,j}\, ,\quad \hat{\eta}^{y}_{r,i}=\prod_{j=1}^{L_y}\hat{X}_{i,j}\,,\\
 &\hat{\eta}^{x}_{b,j}=\prod_{i=1}^{L_x}\hat{X}_{i+\frac{1}{2},j+\frac{1}{2}},\quad \hat{\eta}^{y}_{b,i}=\prod_{j=1}^{L_y}\hat{X}_{i+\frac{1}{2},j+\frac{1}{2}}\,,\\
&  \hat{\mathrm{V}}^{(2)}=\prod_{v_r}\prod_{v^r\in\partial p^r}\mathrm{CZ}_{v^r,p^r}\, .
\end{align}
\label{eq:symmetriesofSSSB}
\end{subequations}
These are dual $\hat{\mathbb{Z}}_2\times\hat{\mathbb{Z}}_2$ subsystem symmetries along with the $\hat{\mathbb{Z}}_2^{\rm V}$ 0-form symmetry. We note that only the $\hat{\mathbb{Z}}_2\times\hat{\mathbb{Z}}_2$ subsystem symmetry is broken while the $\hat{\mathbb{Z}}_2^{\rm V}$ 0-form symmetry is preserved. 
We also note that the global part of the subsystem symmetries that we define as 
\begin{subequations}
    \begin{align}
    \hat{\eta}_r=\prod_{j=1}^{L_y}\hat{\eta}^x_{r,j}=\prod_{i=1}^{L_x}\hat{\eta}^y_{r,i}\, ,\\
    \hat{\eta}_b=\prod_{j=1}^{L_y}\hat{\eta}^x_{b,j}=\prod_{i=1}^{L_x}\hat{\eta}^y_{b,i}\, ,
\end{align} 
\end{subequations}
plays an important role in the following discussion. 

To analyze other possible symmetry-protected topological phases under $\mathbf{D}^{(2)}$ in the cluster phase, it is necessary and sufficient to analyze the various possible symmetry-breaking patterns. Since we are restricted to the cluster phase as far as $\mathbb{Z}_2\times\mathbb{Z}_2$ subsystem symmetries are considered, on the symmetry breaking side, \textit{all} the subsystem symmetries have to be broken. Hence, we are led to analyze the various possible $0$-form symmetries that are preserved. 

It turns out that one cannot preserve all diagonal combinations of subsystem symmetries with $\hat{\rm V}^{(2)}$. Many diagonal combinations are anomalous. Detailed analysis of such anomalous symmetries are in Appendix~\ref{sec:anomaly}. 
Here we consider preserving diagonal combinations of $\hat{\rm V}^{(2)}$ with the global part of subsystem symmetries that are not anomalous.   
\subsubsection{\texorpdfstring{$\hat{\mathrm{V}}^{(2)}$}{Lg} is preserved}
In this case, on the symmetry breaking side, we obtain the Hamiltonian~\eqref{eq:2DSSSBtwocopiesPI}. 
The order parameters for this phase are
$\{ \hat{Z}_{i,1}, \hat{Z}_{i+\frac{1}{2}\frac{3}{2}}, \hat{Z}_{1,j}, \hat{Z}_{\frac{3}{2},j+\frac{1}{2}} \}_{i=1,...,L_x; j=2,...,L_y}$.
There are in total $2(L_x+L_y-1)$ independent order parameters. The order parameters commute with $\rm \hat{V}^{(2)}$ and therefore $\rm \hat{V}^{(2)}$ is unbroken according to Lemma~\ref{lemma:consistentorderparameters}. The original SSPT that gives rise to this Hamiltonian is in~\eqref{eq:2dcluster}.
\subsubsection{\texorpdfstring{$\hat{\mathrm{V}}^{(2)}\hat{\eta}_r$}{Lg} is preserved}
For this case, we assume $L_x$ and $L_y$ are even. First, we provide an SSSB Hamiltonian whose ground state(s) preserves the symmetry $\hat{\mathrm{V}}^{(2)}\hat{\eta}_r$: 
\begin{widetext}
\begin{align}
    \hat{\mathrm{H}}_{\text{blue}}=\sum_{v_b}\begin{array}{ccc}
         \hat{Z}_{v_b} &  &\hat{Z}_{v_b}\\
         & & \\
         \boxed{\hat{Z}_{v_b}} &  &\hat{Z}_{v_b}
    \end{array}-\sum_{v_r}\begin{array}{ccc}
         \hat{Y}_{v_r} &  &\hat{Y}_{v_r}\\
         & & \\
         \boxed{\hat{Y}_{v_r}} &  &\hat{Y}_{v_r}
    \end{array}-\sum_{v_r}\begin{array}{ccccc}
        \hat{Z}_{v_b} & & & & \hat{Z}_{v_b} \\
         & \hat{Y}_{v_r} & &  \hat{Y}_{v_r} & \\
         & & & & \\
         & \boxed{\hat{Y}_{v_r}} & &  \hat{Y}_{v_r} & \\
         \hat{Z}_{v_b} & & & & \hat{Z}_{v_b} \\
    \end{array}\, , 
    \label{eq:Hblue}
\end{align}
\end{widetext}
where we have boxed certain Pauli operators to indicate which vertex is summed. We note that despite the Hamiltonian commuting with $\hat{\rm V}^{(2)}$, the ground states break it. To verify that instead $\hat{\mathrm{V}}^{(2)}\hat{\eta}_r$ is the unbroken symmetry, we invoke Lemma~\ref{lemma:consistentorderparameters}. The order parameters for this phase are of the form 
$\hat{Z}_{v_b}$for $v_b$ of the form $(i+\frac{1}{2}, \frac{3}{2})$ and $( \frac{3}{2}, j+\frac{1}{2})$
and $\hat{Y}_{v_r}\left(1-\begin{array}{ccc}
    \hat{Z}_{v_b} & &  \hat{Z}_{v_b} \\
                  & & \\
     \hat{Z}_{v_b}& &  \hat{Z}_{v_b}            
\end{array}\right)$ for $v_r$ of the form $(i,1)$ and $(1,j)$ for $i=1,...,L_x$ and $j=1,...,L_y$. Then the ground state configuration is specified by the values of $\hat{Z}_{i+\frac{1}{2},\frac{3}{2}}=\pm 1$, $\hat{Z}_{\frac{3}{2},j+\frac{1}{2}}=\pm 1$,  $\hat{Y}_{i,1}=\pm 1$ and $\hat{Y}_{1,j}=\pm 1$ for $i=1,...,L_x$ and $j=1,...,L_y$. In total, there are $2^{2(L_x+Ly-1)}$ such possibilities, and hence there are that many ground states for the SSSB Hamiltonian~\eqref{eq:Hblue}. The order parameters commute with $ \hat{\mathrm{V}}^{(2)}\hat{\eta}_r$ and hence we apply Lemma~\ref{lemma:consistentorderparameters} in Appendix~\ref{sec:orderparameter} to conclude that $ \hat{\mathrm{V}}^{(2)}\hat{\eta}_r$ is unbroken. 

The SSPT Hamiltonian that gives rise to this particular SSSB symmetry breaking pattern can be found by applying $\mathbf{KT}^{(2)}$:
\begin{widetext}
    \begin{align}
    \mathrm{H}_{\text{blue}}=\sum_{v_r}\begin{array}{ccc}
         Z_{v_b} &  &Z_{v_b}\\
         & \boxed{X_{v_r}} & \\
         Z_{v_b} &  & Z_{v_b}
    \end{array}-\sum_{v_b}\begin{array}{ccc}
         Y_{v_r} &  &Y_{v_r}\\
         & \boxed{X_{v_b}} & \\
         Y_{v_r} &  & Y_{v_r}
    \end{array}-\sum_{v_b}\begin{array}{ccccc}
        Z_{v_b} & & & & Z_{v_b} \\
         & Z_{v_r} & &  Z_{v_r} & \\
         & & \boxed{X_{v_b}} & & \\
         & Z_{v_r} & &  Z_{v_r} & \\
         Z_{v_b} & & & & Z_{v_b} \\
    \end{array}\, .
    \label{eq:blueHam}
\end{align}
In the above equation, the boxed vertices are summed over. The above Hamiltonian describes a higher-order noninvertible subsystem symmetry-protected topological phase. As far as the $\mathbb{Z}_2\times\mathbb{Z}_2$ subsystem symmetries are concerned, this Hamiltonian belongs to the cluster phase \eqref{eq:2dcluster}.  However, this Hamiltonian is in a different phase from \eqref{eq:2dcluster} when $\mathbf{D}^{(2)}$ is also included in the set of symmetries. This can be seen from an edge mode analysis at the interface between the two phases described by \eqref{eq:blueHam} and \eqref{eq:2dcluster} (see Appendix~\ref{sec:Interface2DclusterHblue} for the analysis). It turns out that the two phases are distinct by corner modes that appear at the corners of the interface between the two phases (see Figure~\ref{fig:rectangularinterface}). The corner modes are robust to any symmetric local perturbations to the interface Hamiltonian (see Appendix~\ref{sec:stability2dclusterblue} for an argument).
\begin{figure*}
    \centering
    \includegraphics[scale=1]{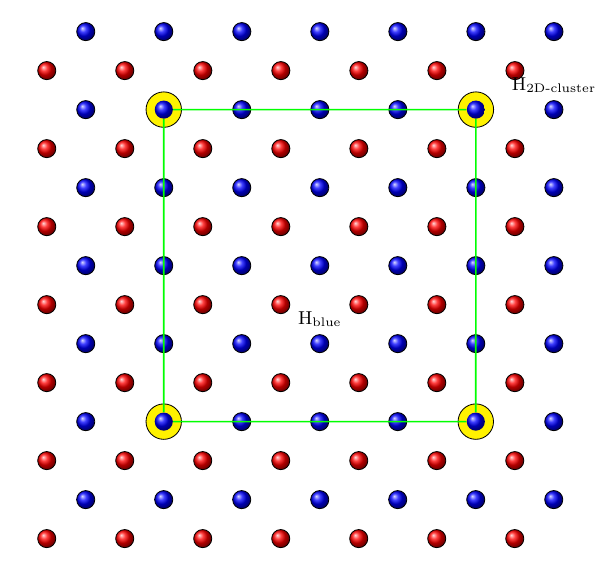}
    \caption{Rectangular interface region with the interface line colored green. The four yellow disks at the four corners indicate the corner modes obtained by putting $\rm H_{\text{blue}}$ inside the rectangular region and $\rm H_{\text{2D-cluster}}$ outside it.}
    \label{fig:rectangularinterface}
\end{figure*}
This is an indication of the fact that the two phases are distinct as a higher-order noninvertible symmetry-protected topological phase.  (We refer the readers to~\cite{You:2018srv} for discussions on higher-order bosonic and fermionic SPTs, \cite{Rasmussen:2018yjy} for a classification of higher-order bosonic SPTs, and~\cite{You:2024syf} for higher-order subsystem symmetric SPTs, all for invertible symmetries.)

The Hamiltonian~\eqref{eq:blueHam} has a unique ground state
\begin{align}
    \ket{\text{blue}}=\prod_{ v_b} CZ_{v_b,v_b+(1,1)}CZ_{v_b,v_b+(-1,1)}\prod_{v_r}\prod_{v_b\in\partial (p_b=v_r)}CZ_{v_r,v_b}\ket{+}^{\otimes \Delta_{v_b}}\ket{-}^{ \otimes\Delta_{v_r}}\, ,
    \label{eq:blusstate}
\end{align}
where $\Delta_{ v_b}$ and $\Delta_{v_r}$ denote the set of blue and red vertices respectively.
\end{widetext}
The state $\ket{\text{blue}}$ is related to the $2+1$D cluster state $\ket{\text{2D-cluster}}$ (ground state of \eqref{eq:2dcluster}) by a finite-depth circuit
\begin{align}
    \prod_{ v_r}Z_{v_r}\prod_{ v_b} CZ_{v_b,v_b+(1,1)}CZ_{v_b,v_b+(-1,1)}\, .
\end{align}
\subsubsection{\texorpdfstring{$\hat{\mathrm{V}}^{(2)}\hat{\eta}_b$}{Lg} is preserved}
This case is similar to the previous case where $\hat{\rm V}^{(2)}\hat{\eta}_r$ was preserved. On the symmetry breaking side, we get a Hamiltonian $\hat{\rm H}_{\text{red}}$ obtained by $v_b\leftrightarrow v_r$ in \eqref{eq:Hblue}. After applying the KT transformation, we get a Hamiltonian $\rm H_{\text{red}}$ and its unique ground state $\ket{\text{red}}$ obtained again by $v_b\leftrightarrow v_r$ and $p_b\leftrightarrow p_r$ in \eqref{eq:blueHam} and \eqref{eq:blusstate}. The Hamiltonian $\mathrm{H}_{\text{red}}$ is in a different phase from both $\mathrm{H}_{\text{2D-cluster}}$ and $\mathrm{H}_{\text{blue}}$. Any pair of them can be distinguished by the corner modes that appear at the interface Hamiltonian between the two phases. We refer the reader to Appendix~\ref{sec:Interface2DclusterHblue},\ref{sec:InterfaceHblueHred} for a detailed analysis.
\vspace{1cm}\\ 
\par We provide an analysis of anomalous symmetries in Appendix~\ref{sec:anomaly}. The general conclusion is that the product of the form $\mathcal{U}\hat{\rm V}$ where $\mathcal{U}$ contains a product of the adjacent subsystem symmetries from the different sublattices and $\mathcal{U}\neq \hat{\eta}_r\hat{\eta}_b$ are anomalous.  In Appendix~\ref{sec:nonanomalous}, we argue that $\hat{\rm V}^{(2)}\hat{\eta}_r\hat{\eta}_b$ is not anomalous in two spatial dimensions and give an example Hamiltonian with such symmetry and a unique short-range entangled ground state in two spatial dimensions.
 
We analyze some other non-anomalous diagonal combinations of preserved symmetries in Appendix~\ref{sec:otherNSSPT}. To be more precise, in Appendix~\ref{sec:otherNSSPTinclusterstate}, we analyze the cases of preserving the following diagonal combinations: 1) $\hat{\mathrm{V}}^{(2)}\hat{\eta}^x_{r,k}$ for $k\in\{1,...,L_y\}$, 2) $\hat{\rm V}^{(2)}\hat{\eta}^x_{r,k}\hat{\eta}^y_{ r,m}$ for $k\in\{1,...,L_y\}$ and $m\in\{1,...,L_x\}$, and 3) $\hat{\rm V}^{(2)}\hat{\eta}^x_{r,k}\hat{\eta}^y_{ b,m}$ for $k\in\{1,...,L_y\}$ and $m\in\{1,...,L_x\}$. The dual SSPTs for these cases are in different phases protected by $\mathbf{D}^{(2)}$. They differ from the cluster phase~\eqref{eq:2dcluster} by interface modes around the location where the subsystem symmetry lines (appearing in the preserved symmetry) intersect the interface lines. We analyze the interface mode between $\mathrm{H}_{\text{2D-cluster}}$ and the SPT dual of $\hat{\mathrm{V}}^{(2)}\hat{\eta}^x_{r,k}$ phase in Appendix~\ref{sec:interface2dclusterxkblue}. In addition to that, we also analyze the various noninvertible phases in another $\mathbb{Z}_2\times\mathbb{Z}_2$ SSPT that is invariant under $\mathbf{D}^{(2)}$ in Appendix~\ref{sec:Z2SSPTstackedtocluster}. We find that two such phases when separated by a straight-line interface differ by edge modes at the interface lines (see Appendix~\ref{sec:interfaceanalysisotherspt} for detailed analysis). Hence, they are in different phases in the usual sense (first order SPT) protected by $\mathbf{D}^{(2)}$.  
\subsection{Noninvertible \texorpdfstring{$d^{\text{th}}$}{Lg} order SSPT phases from \texorpdfstring{$\mathbb{Z}_2\times\mathbb{Z}_2$}{Lg} cluster phase in \texorpdfstring{$d+1$}{Lg}D}
In this section, we generalize the construction of the previous higher-order SSPTs protected by noninvertible symmetry to general dimensions $d\geq 3$. We start with $\mathbb{Z}_2\times\mathbb{Z}_2$ cluster state in general dimensions. This is also invariant under $\mathbf{D}^{(d)}$. Hence, we can further break the cluster phase into distinct phases protected by $\mathbf{D}^{(d)}$. Two of the resulting distinct phases differ from the cluster phase by corner modes. Hence, they are $d^{\text{th}}$ order noninvertible SSPT phases. To explore this possibility, we use the same strategy as before.

We use the Kennedy-Tasaki (KT) transformation to map SSPT to the SSSB phase. We use the following KT transformation provided in~\cite{mana2024kennedy}
\begin{align}
    \mathbf{KT}^{(d)}\equiv \hat{\rm V}^{(d)}\mathbf{D}^{(d)}\hat{\rm V}^{(d)}
\end{align}
where $\hat{\rm V}^{(d)}$ is the cluster entangler between red and blue sublattices. Explicitly,
\begin{align}
    \hat{\rm V}^{(d)}\equiv \prod_{v_b}\prod_{v_r\in\partial(p_r=v_b)}CZ_{v_r,v_b}\,.
\end{align}
$\mathbf{KT}^{(d)}$ acts in the following way:
\begin{subequations}
  \begin{align}
    X_{v_r}\xrightarrow{\mathbf{KT}^{(d)}} \hat{X}_{v_r}\, ,&\quad X_{v_b}\xrightarrow{\mathbf{KT}^{(d)}} \hat{X}_{v_b}\, ,\\
    X_{v_r}\prod_{v_b\in\partial (c_b=v_r)}Z_{v_b}&\xrightarrow{\mathbf{KT}^{(d)}}\prod_{v_b\in\partial (c_b=v_r)}\hat{Z}_{v_b}\, ,\\
    X_{v_b}\prod_{v_r\in\partial (c_r=v_b)}Z_{v_r}&\xrightarrow{\mathbf{KT}^{(d)}} \prod_{v_r\in\partial (c_r=v_b)}\hat{Z}_{v_r}\, .
\end{align}  
\end{subequations}
Hence, under $\mathbf{KT}^{(d)}$, we map the cluster state Hamiltonian~\eqref{eq:dDcluster} to spontaneous subsystem symmetry breaking (SSSB) Hamiltonian
\begin{align}
    \mathrm{H}_{\text{SSSB}}^{(d)}=-\sum_{c_r}\prod_{v_r\in \partial c_r}\hat{Z}_{v_r}-\sum_{c_b}\prod_{v_b\in\partial c_b}\hat{Z}_{v_b}\, .
    \label{eq:HdSSSB}
\end{align}
The dual SSSB Hamiltonian is again $\mathbb{Z}_2\times\mathbb{Z}_2$ subsystem symmetric. Under $\mathbf{KT}^{(d)}$, the noninvertible symmetry $\mathbf{D}^{(d)}$ is mapped to $\hat{\rm V}^{(d)}$ on the SSSB side. The Hamiltonian~\eqref{eq:HdSSSB} is symmetric under $\hat{\rm V}^{(d)}$. Now we define the global part of the subsystem symmetries
\begin{subequations}
   \begin{align}
    &\hat{\eta}_r=\prod_{v_r}\hat{X}_{v_r}\, ,\\
    &\hat{\eta}_b=\prod_{v_b}\hat{X}_{v_b}\, .
\end{align} 
\end{subequations}

We repeat the analysis of possible symmetry breaking patterns to find possible symmetry-protected topological phases. As far as the subsystem symmetry $\mathbb{Z}_2\times\mathbb{Z}_2$ is concerned, the cluster phase is mapped to $\mathbb{Z}_2\times\mathbb{Z}_2$ SSSB phase. However, there are various possible choices for the preserved symmetry.
\subsubsection{\texorpdfstring{$\hat{\rm V}^{(d)}$}{Lg} is preserved}
For this case, on the symmetry breaking side, we obtain the Hamiltonian~\eqref{eq:HdSSSB}. The original SSPT Hamiltonian that gives rise to this Hamiltonian is \eqref{eq:dDcluster}.
\subsubsection{\texorpdfstring{$\hat{\rm V}^{(d)}\hat{\eta}_r$}{} is preserved}
We assume all the $L_{x_i}$ are even. The SSSB Hamiltonian that preserves $\hat{\rm V}^{(d)}\hat{\eta}_r$ is 
\begin{widetext}
    \begin{align}
    \hat{\rm H}_{\text{blue}}^{(d)}=\sum_{c_b}\prod_{v_b\in\partial c_b}\hat{Z}_{v_b}-\sum_{c_r}\prod_{v_r\in \partial c_r}\hat{Y}_{v_r}\left(1+\prod_{v_b\in \partial(c_b=v_r)}\hat{Z}_{v_b}\right)\, .
    \label{eq:Hblued}
\end{align}
The SSPT Hamiltonian that gives rise to this SSSB Hamiltonian is found by applying $\mathbf{KT}^{(d)}$
\begin{align}
    \mathrm{H}_{\text{blue}}^{(d)}=\sum_{v_r}X_{v_r}\prod_{v_b\in\partial(c_b=v_r)}Z_{v_b}-\sum_{v_b}X_{v_b}\prod_{v_r\in\partial(c_r=v_b)}Y_{v_r}-\sum_{v_b}X_{v_b}\prod_{v_r\in\partial(c_r=v_b)}Z_{v_r}\left(\prod_{v_b\in\partial(c_b=v_r)}Z_{v_b}\right)\, .
    \label{eq:Hdblue}
\end{align}
\end{widetext}
See Figures~\ref{fig:3D-SSSB} and \ref{fig:3D-NI-SSPT} for illustrations of the terms in $\hat{\rm H}^{(3)}_{\text{blue}}$ and $\rm H^{(3)}_{\text{blue}}$.
This Hamiltonian describes an SSPT different from the cluster state protected by $\mathbf{D}^{(d)}$. The distinction arises from the corner modes that would appear when two of them are separated by a hypersurface interface region (see Figure~\ref{fig:3D-cornermode} for an illustration of corner modes in 3D). Although we do not explicitly give a detailed analysis of the interface modes in this manuscript, it is straightforward to generalize the analysis in Appendix~\ref{sec:Interfaceanalysis}. Hence, this is a $d^\text{th}$-order SSPT protected by $\mathbf{D}^{(d)}$, where the order $k$ indicates that the interface modes appear on $d-k$ cells on a hypercubic interface.
\begin{figure*}[]

    \begin{subfigure}[b]{2\columnwidth}
         \centering
   
    \includegraphics[width=0.5\linewidth]{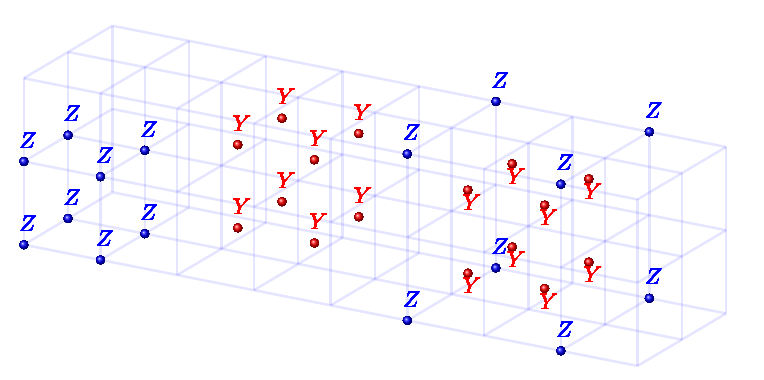}
    \caption{An illustration of the terms in the 3D Hamiltonian $\hat{\rm H}^{(3)}_{\text{blue}}$.}
     \label{fig:3D-SSSB}
    \end{subfigure}
    \begin{subfigure}[b]{2\columnwidth}
         \centering
          \includegraphics[width=0.5\linewidth]{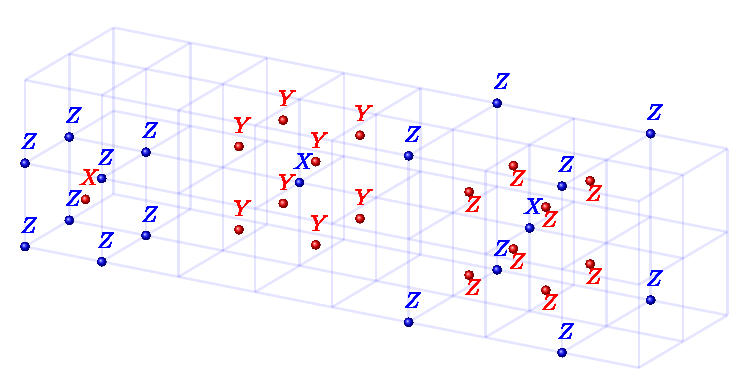}
         \caption{An illustration of the terms in the 3D Hamiltonian $\rm H^{(3)}_{\text{blue}}$.}
        \label{fig:3D-NI-SSPT}
    \end{subfigure}
    \caption{Illustrations of terms in the SSSB and SSPT Hamiltonian in 3D}
\end{figure*}
\begin{figure*}
    \centering
    \includegraphics[scale=1]{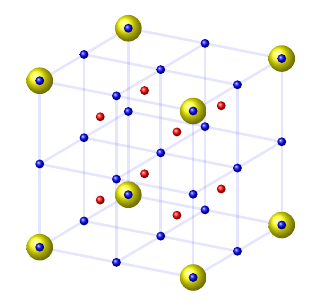}
    \caption{An illustration of corner modes that distinguish $\rm H^{(3)}_{\text{blue}}$ and $\rm H_{\text{3D-cluster}}$}
    \label{fig:3D-cornermode}
\end{figure*}
\subsubsection{\texorpdfstring{$\hat{\rm V}^{(d)}\hat{\eta}_b$}{Lg} is preserved}
This case is similar to the previous case where $\hat{\rm V}^{(d)}\hat{\eta}_r$ was preserved. On the symmetry breaking side, we get a Hamiltonian $\hat{\rm H}_{\text{red}}^{(d)}$ obtained by $v_b\leftrightarrow v_r$ in \eqref{eq:Hblued}. 
After applying KT, we get a Hamiltonian $\mathrm{H}_{\text{red}}^{(d)}$ that is an SSPT distinct from \eqref{eq:dDcluster} and \eqref{eq:Hdblue} by corner modes.
\vspace{1cm}\\
In Appendix~\ref{sec:nonanomalous}, we argue that $\hat{\rm V}^{(d)}\hat{\eta}_r\hat{\eta}_b$ is not anomalous in higher spatial dimensions. We leave the analysis of finding the phase for which $\hat{\rm V}^{(d)}\hat{\eta}_r\hat{\eta}_b$ is the unbroken symmetry to future directions. 

\red{We emphasize that no additional crystalline symmetries are required to protect the boundary zero modes in our construction. In conventional higher-order topological phases protected by invertible symmetries, crystalline symmetries often play an essential role in preventing the pairwise annihilation of corner zero modes by pinning them to specific spatial locations. In contrast, for the interface between the two phases that differ as a higher-order SSPT, we show that any symmetric perturbation supported within a fattened region around the interface necessarily preserves at least a two-fold degeneracy, regardless of whether the perturbation respects any crystalline symmetry (see Appendix~\ref{sec:stability2dclusterblue} for details). Thus, the boundary zero modes cannot be fully lifted by any symmetry-preserving perturbation along the interface. Furthermore, the KT-dual phases of the two higher-order SSPTs exhibit distinct symmetry-breaking patterns and lie in different SSB phases, implying that the original higher-order SSPTs belong to different phases without invoking any crystalline symmetry in the argument.} 
\subsection{Possibility of noninvertible higher-order SSPT phases from $\mathbb{Z}_2$ cluster phase}
 In this section, we analyze the possibility of noninvertible higher-order SSPT phases within the cluster phase with $\mathbb{Z}_2$ subsystem symmetry. First, we analyze the $2+1$D case and then generalize it to $d+1$D. 

\subsubsection{$2+1$D}
We note that $\mathbb{Z}_2$ SSPT \eqref{eq:Z2SSPTHam} is invariant under $\mathbf{D}_\text{DPIM}^{(2)}$. Hence, we could ask the following question: whether the cluster phase can be split to phases protected by the noninvertible symmetry $\mathbf{D}_\text{DPIM}^{(2)}$. We can again use KT transformation to map the SSPT to SSSB and study the various possible symmetry breaking patterns. KT transformation is given in~\cite{mana2024kennedy}
\begin{align}
    \mathbf{KT}^{(2)}_{\mathbb{Z}_2}\equiv \hat{\rm V}^{(2)}_{\mathbb{Z}_2}\mathbf{D}^{(2)}_{\text{DPIM}}\hat{\rm V}^{(2)}_{\mathbb{Z}_2}\, ,
\end{align}
where $\hat{\rm V}^{(2)}_{\mathbb{Z}_2}$ is the cluster entangler
\begin{align}
   \hat{\rm V}^{(2)}_{\mathbb{Z}_2} \equiv \prod_{v}CZ_{v,v+(1,0)}CZ_{v,v+(0,1)}CZ_{v,v+(1,1)}\, .
\end{align}
$\mathbf{KT}^{(2)}_{\mathbb{Z}_2}$ acts in the following way:
\begin{subequations}
   \begin{align}
    X_v&\xleftrightarrow{\mathbf{KT}^{(2)}_{\mathbb{Z}_2}}\hat{X}_v\, ,\\
    \begin{array}{ccc}
        & Z & Z  \\
        Z & X & Z  \\
        Z & Z &
    \end{array}&\xleftrightarrow{\mathbf{KT}^{(2)}_{\mathbb{Z}_2}}\begin{array}{ccc}
        & \hat{Z} & \hat{Z}  \\
        \hat{Z} &  & \hat{Z}  \\
        \hat{Z} & \hat{Z} &
    \end{array}\, .
\end{align} 
\end{subequations}
Under $\mathbf{KT}^{(2)}_{\mathbb{Z}_2}$, we map the cluster state Hamiltonian~\eqref{eq:Z2SSPTHam} to spontaneous subsystem symmetry breaking (SSSB) Hamiltonian  
\begin{align}
    \mathrm{H}^{(2)}_{\mathbb{Z}_2-\text{SSSB}}=-\sum_v\,  \begin{array}{ccc}
        & \hat{Z} & \hat{Z}  \\
        \hat{Z} &  & \hat{Z}  \\
        \hat{Z} & \hat{Z} &
    \end{array}\,,
    \label{eq:Z2SSSB}
\end{align}
where the hat indicates the Pauli operators are on the dual SSSB side. The symmetries of the dual Hamiltonian are
\begin{align}
&\hat{\eta}^x_j \equiv \prod_{i = 1 }^L \hat{X}_{i,j}\, , \quad 
\hat{\eta}^y_i \equiv \prod_{j = 1 }^L \hat{X}_{i,j} \, , 
\hat{\eta}^\text{diag}_k \equiv \prod_{\ell = 1 }^L \hat{X}_{ \ell,[\ell+k]_L} \, , \nonumber\\ 
&\hat{\rm V}^{(2)}_{\mathbb{Z}_2}=\prod_{v}CZ_{v,v+(1,0)}CZ_{v,v+(0,1)}CZ_{v,v+(1,1)}\, .
\end{align} 
The $\mathbb{Z}_2$ subsystem symmetries are spontaneously broken in the SSSB phase. When $L$ is even, the subsystem lines do not span the broken symmetries. For even system size, we note the following symmetries of the Hamiltonian \eqref{eq:Z2SSSB}
\begin{align}
    &\hat{\eta}_{(o,o)}=\prod_{\substack{(i,j)\\
i,j \text{ odd }}}\hat{X}_{i,j}\,\quad \hat{\eta}_{(e,o)}=\prod_{\substack{(i,j)\\
i \text{ even },j \text{ odd }}}\hat{X}_{i,j}\,,\nonumber\\
&\hat{\eta}_{(o,e)}=\prod_{\substack{(i,j)\\
i\text{ odd },j\text{ even } }}\hat{X}_{i,j}\,,\quad \hat{\eta}_{(e,e)}=\prod_{\substack{(i,j)\\
i,j \text{ even }}}\hat{X}_{i,j}\,.
\label{eq:2Detaoddoddsymmetry}
\end{align}
$\hat{\eta}_{(e,o)}$,$\hat{\eta}_{(o,e)}$, and $\hat{\eta}_{(e,e)}$ can be obtained from $\hat{\eta}_{(o,o)}$ by multiplying with subsystem lines. When $L=4k$, these symmetries can be obtained by taking the product of subsystem lines. \red{Hence, the symmetries in \eqref{eq:2Detaoddoddsymmetry} and subsystem lines are not enough to span the broken symmetries}. However, when $L=4k+2$, these symmetries are not obtained by taking the product of subsystem lines, and hence one of them can be taken as an independent symmetry. Together with subsystem lines they span the broken symmetries for $L=4k+2$. 

\red{From here on we restrict our discussion to even $L$}. We analyze the unbroken symmetries.  $\hat{\rm V}^{(2)}_{\mathbb{Z}_2}$ is a possible anomaly-free unbroken symmetry. It turns out that diagonal combinations that involve  $\hat{\rm V}^{(2)}_{\mathbb{Z}_2}$ and any adjacent parallel subsystem symmetry lines are anomalous (we refer the reader to Appendix~\ref{sec:anomaly} for an analysis of anomalous symmetries). However, $\hat{\rm V}^{(2)}_{\mathbb{Z}_2} \hat{\eta}_{o,o}$ is a possible anomaly free unbroken symmetry. Similarly, $\hat{\rm V}^{(2)}_{\mathbb{Z}_2} \hat{\eta}_{e,e}$, $\hat{\rm V}^{(2)}_{\mathbb{Z}_2} \hat{\eta}_{o,e}$, and $\hat{\rm V}^{(2)}_{\mathbb{Z}_2} \hat{\eta}_{e,o}$ are anomaly free and is a possible unbroken symmetry. We analyze the possibilities below.

{\paragraph{\texorpdfstring{$\hat{\rm V}^{(2)}_{\mathbb{Z}_2}$}{Lg} preserved}
On the symmetry breaking side, we obtain the Hamiltonian \eqref{eq:Z2SSSB}. The order parameters for this phase are $\{\hat{Z}_{i,1},\hat{Z}_{1,j},\hat{Z}_{k,k},\hat{Z}_{i+\frac{1}{2},\frac{3}{2}},\hat{Z}_{\frac{3}{2},j+\frac{1}{2}},\hat{Z}_{k+\frac{1}{2},k+\frac{1}{2}}\}\rvert_{\substack{i=1,...,L;j=2,...,L;\\k=2,...,L}}$. The original SSPT that gives rise to this Hamiltonian is \eqref{eq:Z2SSPTHam}.
\paragraph{\texorpdfstring{$\hat{\rm V}^{(2)}_{\mathbb{Z}_2}\hat{\eta}_{o,o}$}{Lg} preserved}
On the symmetry breaking side, we have the Hamiltonian
\begin{align}
  \hat{\rm H}_{\text{odd}}^{(2)}&=\sum_{v=(\text{odd},\text{odd})}\begin{array}{ccc}
       & \hat{Z} & \hat{Z}\\
       \hat{Z} &\hat{I}_v & \hat{Z} \\
       \hat{Z} & \hat{Z} &
  \end{array}-\sum_{v=(\text{even},\text{even})}\begin{array}{ccc}
       & \hat{Z} & \hat{Y}\\
       \hat{Z} &\hat{I}_v & \hat{Z} \\
       \hat{Y} & \hat{Z} &
  \end{array}\nonumber\\
  &\qquad-\sum_{v=(\text{odd},\text{even})}\begin{array}{ccc}
       & \hat{Y} & \hat{Z}\\
       \hat{Z} &\hat{I}_v & \hat{Z} \\
       \hat{Z} & \hat{Y} &
  \end{array}-\sum_{v=(\text{even},\text{odd})}\begin{array}{ccc}
       & \hat{Z} & \hat{Z}\\
       \hat{Y} &\hat{I}_v & \hat{Y} \\
       \hat{Z} & \hat{Z} &
  \end{array}\nonumber\\
 & -\sum_{v=(\text{even},\text{even})}\begin{array}{ccccc}
    & & & \hat{Z}& \hat{Z}\\
    &   &  & \hat{Y}&\hat{Z}\\
       & &\hat{I}_v & & \\
      \hat{Z} &\hat{Y} &  & &\\
       \hat{Z} & \hat{Z} & & &
  \end{array}-\sum_{v=(\text{odd},\text{even})}\begin{array}{ccc}
      & \hat{Z} & \hat{Z} \\
       \hat{Z} & \hat{Y} & \\
       &\hat{I}_v &\\
       & \hat{Y} &\hat{Z}\\
       \hat{Z} &\hat{Z}&
  \end{array}\nonumber\\
 & -\sum_{v=(\text{even},\text{odd})}\begin{array}{ccccc}
      & \hat{Z} & & &\hat{Z} \\
       \hat{Z}&\hat{Y} &\hat{I}_v &\hat{Y} &\hat{Z}\\
       \hat{Z} & & & \hat{Z} &
  \end{array}\,.
  \label{eq:H2odd}
\end{align}
The order parameters for this phase are of two types: 1) $\hat{Z}_{v}$ for $v$ of the form $(i,1)$,$(1,j)$, $(k,k)$, and $v\neq (\text{odd},\text{odd})$,i.e., both coordinates are not odd integers 2) $\hat{Y}_v\left(1-\begin{array}{ccc}
    & \hat{Z} & \hat{Z} \\
    \hat{Z} & & \hat{Z}\\
    \hat{Z} & \hat{Z}
\end{array}\right)$ for $v$ of the form $(i,1)$,$(1,j)$, $(k,k)$, and $v= (\text{odd},\text{odd})$, i.e., both coordinates are odd integers. Groundstates are in the SSSB phase and the ground state degeneracy is $2^{3L-2}$.

The SSPT Hamiltonian can be obtained by applying the $\mathbf{KT}^{(2)}_{\mathbb{Z}_2}$ transformation on the SSSB Hamiltonian \eqref{eq:H2odd}
\begin{align}
    \mathrm{H}_{\text{odd}}^{(2)}&=\sum_{v=(\text{odd},\text{odd})}\begin{array}{ccc}
       & \hat{Z} & \hat{Z}\\
       \hat{Z} &\hat{X}_v & \hat{Z} \\
       \hat{Z} & \hat{Z} &
  \end{array}-\sum_{v=(\text{even},\text{odd})}\begin{array}{ccc}
       & \hat{Z} & \hat{Y}\\
       \hat{Z} &\hat{X}_v & \hat{Z} \\
       \hat{Y} & \hat{Z} &
  \end{array}\nonumber\\
  &\qquad-\sum_{v=(\text{odd},\text{even})}\begin{array}{ccc}
       & \hat{Y} & \hat{Z}\\
       \hat{Z} &\hat{X}_v & \hat{Z} \\
       \hat{Z} & \hat{Y} &
  \end{array}-\sum_{v=(\text{even},\text{even})}\begin{array}{ccc}
       & \hat{Z} & \hat{Z}\\
       \hat{Y} &\hat{X}_v & \hat{Y} \\
       \hat{Z} & \hat{Z} &
  \end{array}\nonumber\\
 & +\sum_{v=(\text{even},\text{odd})}\begin{array}{ccccc}
      & & & \hat{Z} & \hat{Z}\\
      & & & \hat{Z} & \hat{Z}\\
       & &\hat{X}_v & & \\
       \hat{Z} & \hat{Z} & & &\\
        \hat{Z} & \hat{Z} & & &
  \end{array}+\sum_{v=(\text{odd},\text{even})}\begin{array}{ccc}
       & \hat{Z} &\hat{Z} \\
       \hat{Z}&\hat{Z} &\\
       & \hat{X} &\\
       & \hat{Z} &\hat{Z}\\
       \hat{Z} &\hat{Z}
  \end{array}\nonumber\\
  &+\sum_{v=(\text{even},\text{even})}\begin{array}{ccccc}
       & \hat{Z} &  &  &\hat{Z} \\
      \hat{Z} & \hat{Z} & \hat{X}_v & \hat{Z} & \hat{Z}\\
       \hat{Z} &  &  & \hat{Z} &
  \end{array}\,.
\end{align}
We consider an interface between $\mathrm{H}_{\text{2D-SSPT}}^{\mathbb{Z}_2}$ and $\mathrm{H}_{\text{odd}}^{(2)}$. We find that there are interface modes along the interface between the two Hamiltonians. Hence, the two Hamiltonian differ as a first-order SPT protected by the noninvertible symmetry. See Appendix-\ref{sec:H2oddandH2SSPT} for details.\\
\vspace{0.5cm}
\paragraph{\texorpdfstring{$\hat{\rm V}^{(2)}_{\mathbb{Z}_2}\hat{\eta}_{e,e}$}{Lg}, \texorpdfstring{$\hat{\rm V}^{(2)}_{\mathbb{Z}_2}\hat{\eta}_{o,e}$}{Lg} or \texorpdfstring{$\hat{\rm V}^{(2)}_{\mathbb{Z}_2}\hat{\eta}_{e,o}$}{Lg} are preserved}
These cases are similar to the case where $\hat{\rm V}^{(2)}_{\mathbb{Z}_2}\hat{\eta}_{o,o}$ is preserved.

\par $\mathrm{V}_{\mathbb{Z}_2}^{(2)} \hat{\eta}$ where $\hat{\eta}=\prod_{j=1}^L\hat{\eta}^x_j$ is the global part of the subsystem symmetry is anomaly-free and could realize a noninvertible higher-order SSPT distinct from the cluster state~\eqref{eq:Z2SSPTHam} in $2+1$D. See Appendix~\ref{sec:nonanomalousZ2} for the construction of a short-range entangled state symmetric under $\mathrm{V}_{\mathbb{Z}_2}^{(2)} \hat{\eta}$. We leave the construction of the corresponding noninvertible SSPT phase and its interface with cluster state to future exploration.
\subsubsection{$d+1$D}
There is a $\mathbb{Z}_2$ SSPT \eqref{eq:Z2dDSSPT} that is invariant under $\mathbf{D}_\text{DPIM}^{(d)}$. We look for phases in this cluster phase protected by $\mathbf{D}_\text{DPIM}^{(d)}$.  We use the KT transformation
\begin{align}
    \mathbf{KT}^{(d)}_{\mathbb{Z}_2}\equiv \hat{\rm V}^{(d)}_{\mathbb{Z}_2}\mathbf{D}^{(d)}_{\text{DPIM}}\hat{\rm V}^{(d)}_{\mathbb{Z}_2}\, ,
\end{align}
where $\hat{\rm V}^{(d)}_{\mathbb{Z}_2}$ is the cluster entangler
\begin{align}
   \hat{\rm V}^{(d)}_{\mathbb{Z}_2} \equiv \prod_{v}\prod_{\substack{(i_1,i_2,...,i_d)\\
   i_1,i_2,...,i_d=0,1}}CZ_{v,v+(i_1,...,i_d)}\, .
\end{align}
$\mathbf{KT}^{(d)}_{\mathbb{Z}_2}$ acts in the following way:
\begin{widetext}
\begin{subequations}
    \begin{align}
    X_v&\xleftrightarrow{\mathbf{KT}^{(d)}_{\mathbb{Z}_2}}\hat{X}_v\, ,\\
    X_v\prod_{\substack{v'\in\partial c\, ,\\
    c=v+(\frac{1}{2},...,\frac{1}{2})}}Z_{v'}\prod_{\substack{v'\in\partial c\, ,\\
    c=v-(\frac{1}{2},...,\frac{1}{2})}}Z_{v'}&\xleftrightarrow{\mathbf{KT}^{(d)}_{\mathbb{Z}_2}}\prod_{\substack{v'\in\partial c\, ,\\
    c=v+(\frac{1}{2},...,\frac{1}{2})}}\hat{Z}_{v'}\prod_{\substack{v'\in\partial c\, ,\\
    c=v-(\frac{1}{2},...,\frac{1}{2})}}\hat{Z}_{v'}\, .
\end{align}
\end{subequations}
\end{widetext}
$\mathbf{KT}^{(d)}_{\mathbb{Z}_2}$ maps cluster Hamiltonian \eqref{eq:Z2dDSSPT} to spontaneous subsystem symmetry breaking (SSSB) Hamiltonian. The dual Hamiltonian has the symmetry $\hat{\rm V}^{(d)}_{\mathbb{Z}_2}$ (via the KT transformation on $\mathbf{D}_\text{DPIM}^{(d)}$) in addition to the subsystem symmetry. 
Similar to the $2+1$D case, $\hat{\rm V}^{(d)}_{\mathbb{Z}_2}$ is a possible anomaly-free unbroken symmetry. Again, when the system size is even, we could define symmetries that are generalizations of \eqref{eq:2Detaoddoddsymmetry}. Then diagonal combinations of $\hat{\rm V}^{(d)}_{\mathbb{Z}_2}$ with such symmetries would be a possible anomaly-free unbroken symmetry. Then it is possible to analyze the interface modes between SSPT dual to $\hat{\rm V}^{(d)}_{\mathbb{Z}_2}$ preserved phase and diagonal combinations of $\hat{\rm V}^{(d)}_{\mathbb{Z}_2}$ with such symmetries. They would exhibit interface modes at the conventional boundary and would differ as a first-order SSPT protected by noninvertible symmetry. It would be an interesting future direction to analyze the $\hat{\rm V}^{(d)}_{\mathbb{Z}_2}\hat{\eta}$ preserved phase, where $\hat{\eta}$ is the global part of the subsystem symmetry, and its dual SSPT.

\section{Noninvertible higher-order subsystem symmetry-protected topological phases: Hinge modes}
\label{sec:hinge}
In this section, we demonstrate an example of second-order SPT with subsystem and noninvertible symmetries in 3D. 
The subsystem symmetries we consider \red{also include the} planar \red{symmetries}. 
We first construct a 3D SPT (cluster state) with subsystem symmetries and noninvertible symmetry (i.e., the Kramers-Wannier symmetry). We ask whether the cluster phase splits into multiple phases protected by the noninvertible symmetry. We study some of these possibilities and find that they differ by hinge modes protected by the noninvertible symmetry on an interface between the cluster and potential candidate phases.
\subsection{Planar subsystem symmetry-protected topological phases in $3+1$D}
In this section, we give a\red{n example}   construction of planar subsystem symmetry-protected topological phases in $3+1$D. 
See Ref.~\cite{Devakul:2019duj} for the classification of planar subsystem symmetric phases in $3+1$D. We restrict our discussion to a cluster state with $\mathbb{Z}_2\times\mathbb{Z}_2$ planar subsystem symmetry defined on a bipartite lattice colored red and blue. We label the vertices of the red sublattice with integer coordinates and the vertices of the blue sublattice with half-integer coordinates. The red sublattice forms a face-centered cubic (FCC) lattice spanned by primitive vectors
 \begin{align}
     \Vec{a}_1=(1,0,1)\, ,\quad \Vec{a}_2=(1,1,0)\, ,\quad \Vec{a}_3=(0,1,1)\, .
 \end{align}
 The blue sublattice also forms an FCC spanned by the same primitive vectors above but shifted from the red sublattice by $(\frac{1}{2},\frac{1}{2},\frac{1}{2})$ translation. See Figure~\ref{fig:FCC} for an illustration. We take the number of vertices in $x$,$y$, and $z$ directions along a straight line to be $L_x$, $L_y$, and $L_z$. We also assume periodic boundary conditions along the three directions.
 \begin{figure*}
     \centering
     \includegraphics[scale=1]{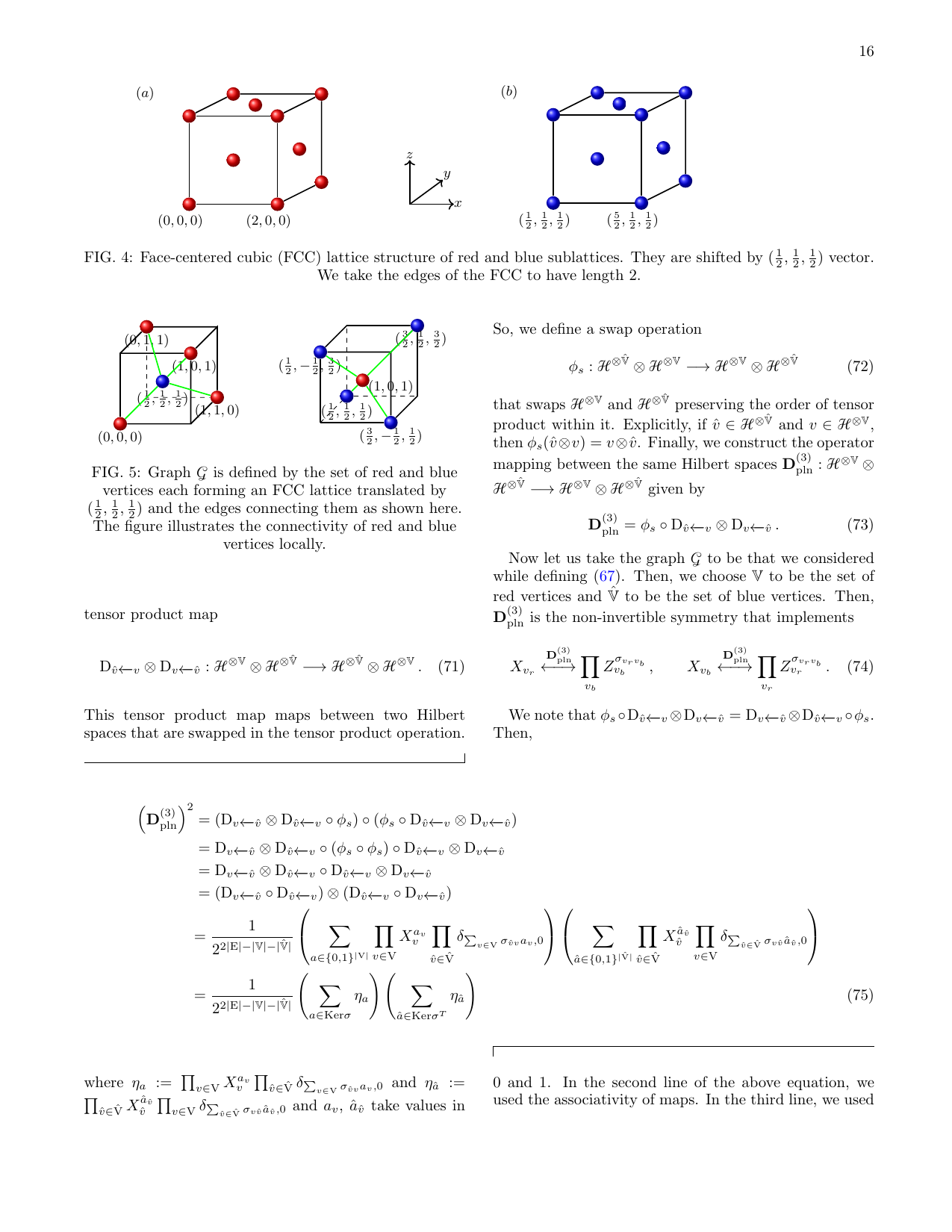}
     \caption{Face-centered cubic (FCC) lattice structure of red and blue sublattices. They are shifted by $(\frac{1}{2},\frac{1}{2},\frac{1}{2})$ vector. We take the edges of the FCC to have length 2.}
     \label{fig:FCC}
 \end{figure*}
 We denote the vertices of the red sublattice by $v_r$ and those of the blue sublattice by $v_b$. To define a cluster state, we define a graph $\mathcal{G}$ by connecting a blue vertex to its neighboring red vertices as shown in Figure~\ref{fig:graphG}.
 \begin{widetext}
 \begin{figure}
     \centering
     \includegraphics[scale=1]{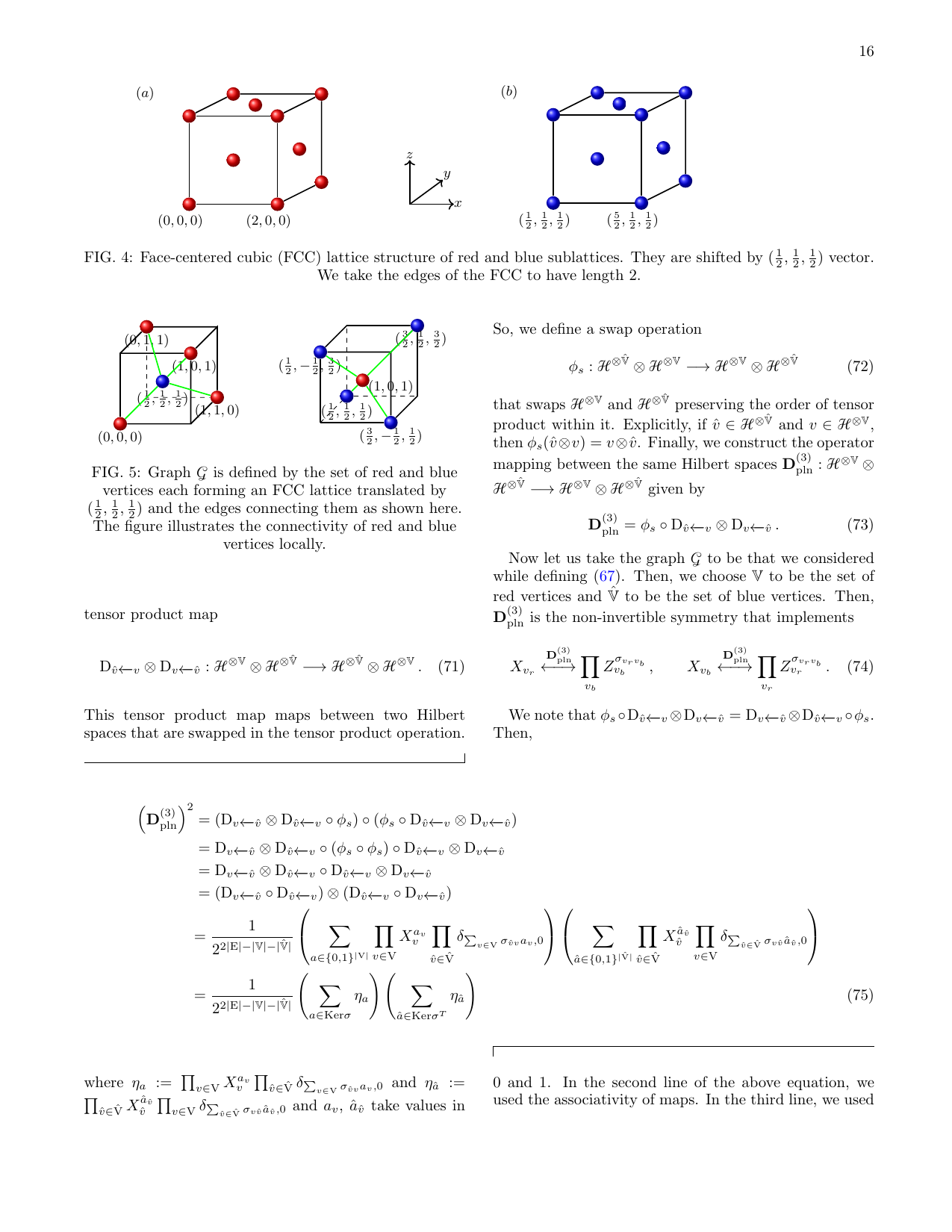}
     \caption{Graph $\mathcal{G}$ is defined by the set of red and blue vertices each forming an FCC lattice translated by $(\frac{1}{2},\frac{1}{2},\frac{1}{2})$ and the edges connecting them as shown here. The figure illustrates the connectivity of red and blue vertices locally.}
     \label{fig:graphG}
 \end{figure}
 \end{widetext}
 Let $\sigma$ be an adjacency matrix for this graph. Note that the matrix elements $\sigma_{v_1v_2}$ take values in $1$ or $0$ depending on whether the vertices $v_1$ and $v_2$ share an edge or not. 

 The cluster Hamiltonian based on the graph $\mathcal{G}$ defined above is 
 \begin{align}
     \mathrm{H}_{3\text{D-cluster}}^{\mathcal{G}}=-\sum_{v_r}X_{v_r}\prod_{v_b}Z_{v_b}^{\sigma_{v_rv_b}}-\sum_{v_b}X_{v_b}\prod_{v_r}Z_{v_r}^{\sigma_{v_rv_b}}\, .
     \label{eq:3DclusterG}
 \end{align}
 Let us denote the $x-y$, $y-z$ and $x-z$ lattice planes at fixed $z$, $x$ and $y$ by $P^z_{xy}$, $P^x_{yz}$ and $P^y_{xz}$, respectively. For example, $P^z_{xy}$ denotes a plane with fixed $z$ coordinate and varying $x$ and $y$ coordinates on the lattice. Note that  for fixed integer superscript $z$, $x$, or $y$ coordinate,  the plane passes through the red sublattice, and that for fixed half-integer coordinates, the plane passes through the blue sublattice.  The planar subsystem symmetries of \eqref{eq:3DclusterG} are
     \begin{align}\label{eq:planarsymmetriesred}
     \begin{split}
     \mathcal{P}^{z,r}_{xy}&=\prod_{v_r\in P^z_{xy}}X_{v_r}\,, \quad \mathcal{P}^{x,r}_{yz}=\prod_{v_r\in P^x_{yz}}X_{v_r}\,,\\
     &\qquad \mathcal{P}^{y,r}_{xz}=\prod_{v_r\in P^y_{xz}}X_{v_r}\, ,
     \end{split}
 \end{align}
 and 
 \begin{align}\label{eq:planarsymmetriesblue}
     \begin{split}
     \mathcal{P}^{z,b}_{xy}&=\prod_{v_b\in P^z_{xy}}X_{v_b}\,, \quad \mathcal{P}^{x,b}_{yz}=\prod_{v_b\in P^x_{yz}}X_{v_b}\,,\\
     &\qquad \mathcal{P}^{y,b}_{xz}=\prod_{v_b\in P^y_{xz}}X_{v_b}\, .
     \end{split}
 \end{align}
 Apart from these, the Hamiltonian is also symmetric under $X_{v_r}\leftrightarrow \prod_{v_b}Z_{v_b}^{\sigma_{v_rv_b}}$ and $X_{v_b}\leftrightarrow \prod_{v_r}Z_{v_r}^{\sigma_{v_rv_b}}$. 
 Such a symmetry transformation would be generated by a sequential circuit with projectors, just as in gauging linear subsystem symmetries. 
 Here, instead, we give a definition of this operator, following the construction given in~\cite{gorantla2024tensor} using the ZX calculus.

 Let us consider a bipartite graph $\mathcal{G}$ made of sets of vertices $\mathbb{V}$  and $\hat{\mathbb{V}}$. The connectivity of edges between $\mathbb{V}$ and $\hat{\mathbb{V}}$ is given by the adjacency matrix $\sigma$. We denote the vertices in $\mathbb{V}$ and $\hat{\mathbb{V}}$ by $v$ and $\hat{v}$. We place qubits on both $\mathbb{V}$ and $\hat{\mathbb{V}}$. We define the operators in terms of a ZX-diagram,
 \begin{subequations}
      \begin{align}
     \mathrm{D}_{\hat{v}\xleftarrow{}v}&:=\raisebox{-20pt}{\begin{tikzpicture}
        \draw[fill=green] (0,0) circle (.8ex) node (B1) {}; 
        \draw[fill=green] (0,1.5) circle (.8ex) node (B2) {};
        \draw[fill=red] (2,0) circle (.8ex) node (B3) {}; 
        \draw[fill=red] (2,1.5) circle (.8ex) node (B4) {};
        \draw[fill=yellow] (2.4,1.4) rectangle (2.6,1.6) node (B5) {};
        \draw[fill=yellow] (2.4,-0.1) rectangle (2.6,0.1) node (B6) {};
        \draw (1,0.75) node  (A) {$\sigma$};
        \draw[thick] (-1,0)--(B1)--(A)--(B3)--(2.4,0);
        \draw[thick] (2.6,0)--(3.5,0);
        \draw[thick] (-1,1.5)--(B2)--(A)--(B4)--(2.4,1.5);
        \draw[thick] (2.6,1.5)--(3.5,1.5);
        \draw (0,0.8) node () {$\vdots$};
        \draw (2,0.8) node () {$\vdots$};
     \end{tikzpicture}}\\
     \mathrm{D}_{v\xleftarrow{}\hat{v}}&:=\left(\mathrm{D}_{\hat{v}\xleftarrow{}v}\right)^{\dagger}\, .
 \end{align}
 \end{subequations}
The hermitian conjugate can be thought of as reflecting the ZX-diagram \red{about the vertical line in between the red and green spiders}. We note that the above definition~\cite{gorantla2024tensor} involves the ZX-calculus
and encourage the reader to refer to~\cite{gorantla2024tensor} for more details on this operator representation. We will not do any explicit calculations using the ZX-calculus,
but will directly use the results in~\cite{gorantla2024tensor}.

Let us denote the Hilbert space of the qubits on $\mathbb{V}$ by $\mathcal{H}^{\otimes \mathbb{V}}$ and that on $\hat{\mathbb{V}}$ by $\mathcal{H}^{\otimes \hat{\mathbb{V}}}$. Then, $\mathrm{D}_{\hat{v}\xleftarrow{}v}:\mathcal{H}^{\otimes \mathbb{V}}\longrightarrow \mathcal{H}^{\otimes \hat{\mathbb{V}}}$ and $\mathrm{D}_{v\xleftarrow{}\hat{v}}:\mathcal{H}^{\otimes \hat{\mathbb{V}}}\longrightarrow \mathcal{H}^{\otimes \mathbb{V}}$. We construct the tensor product map 
 \begin{align}
\mathrm{D}_{\hat{v}\xleftarrow{}v}\otimes \mathrm{D}_{v\xleftarrow{}\hat{v}}:\mathcal{H}^{\otimes \mathbb{V}}\otimes \mathcal{H}^{\otimes \hat{\mathbb{V}}}\longrightarrow \mathcal{H}^{\otimes \hat{\mathbb{V}}}\otimes \mathcal{H}^{\otimes \mathbb{V}}\, .
\end{align}
This tensor product map maps between two Hilbert spaces that are swapped in the tensor product operation. So, we define a swap operation 
\begin{align}
    \phi_s:\mathcal{H}^{\otimes \hat{\mathbb{V}}}\otimes \mathcal{H}^{\otimes \mathbb{V}}\longrightarrow \mathcal{H}^{\otimes \mathbb{V}}\otimes \mathcal{H}^{\otimes \hat{\mathbb{V}}} 
\end{align}
that swaps $\mathcal{H}^{\otimes \mathbb{V}}$ and $\mathcal{H}^{\otimes \hat{\mathbb{V}}}$ preserving the order of tensor product within it. Explicitly, if $\hat{v}\in\mathcal{H}^{\otimes \hat{\mathbb{V}}}$ and $v\in \mathcal{H}^{\otimes \mathbb{V}}$, then $\phi_s(\hat{v}\otimes v)=v\otimes\hat{v}$. Finally, we construct the operator mapping between the same Hilbert spaces
$\mathrm{\mathbf{D}}^{(3)}_{\rm pln}:\mathcal{H}^{\otimes \mathbb{V}}\otimes \mathcal{H}^{\otimes \hat{\mathbb{V}}}\longrightarrow \mathcal{H}^{\otimes \mathbb{V}}\otimes \mathcal{H}^{\otimes \hat{\mathbb{V}}}$ given by
 \begin{align}
\mathrm{\mathbf{D}}^{(3)}_{\rm pln}=\phi_s\circ \mathrm{D}_{\hat{v}\xleftarrow{}v}\otimes \mathrm{D}_{v\xleftarrow{}\hat{v}} \, .
 \end{align}

 Now let us take the graph $\mathcal{G}$ to be that we considered while defining \eqref{eq:3DclusterG}. Then, we choose $\mathbb{V}$ to be the set of red vertices and $\hat{\mathbb{V}}$ to be the set of blue vertices. Then, $\mathrm{\mathbf{D}}^{(3)}_{\rm pln}$ is the noninvertible symmetry that implements
 \begin{align}
    X_{v_r}\xleftrightarrow{\mathrm{\mathbf{D}}^{(3)}_{\rm pln}}\prod_{v_b}Z_{v_b}^{\sigma_{v_rv_b}}\, ,\qquad  X_{v_b}\xleftrightarrow{\mathrm{\mathbf{D}}^{(3)}_{\rm pln}}\prod_{v_r}Z_{v_r}^{\sigma_{v_rv_b}}\,. 
 \end{align}

 We note that $\phi_s\circ \mathrm{D}_{\hat{v}\xleftarrow{}v}\otimes \mathrm{D}_{v\xleftarrow{}\hat{v}}=\mathrm{D}_{v\xleftarrow{}\hat{v}}\otimes \mathrm{D}_{\hat{v}\xleftarrow{}v}\circ\phi_s$. Then,
 \begin{widetext}
     \begin{align}
     \left(\mathrm{\mathbf{D}}^{(3)}_{\rm pln}\right)^2&=\left(\mathrm{D}_{v\xleftarrow{}\hat{v}}\otimes \mathrm{D}_{\hat{v}\xleftarrow{}v}\circ\phi_s\right)\circ \left(\phi_s\circ \mathrm{D}_{\hat{v}\xleftarrow{}v}\otimes \mathrm{D}_{v\xleftarrow{}\hat{v}}\right)\nonumber\\
     &=\mathrm{D}_{v\xleftarrow{}\hat{v}}\otimes \mathrm{D}_{\hat{v}\xleftarrow{}v}\circ\left(\phi_s\circ \phi_s\right)\circ \mathrm{D}_{\hat{v}\xleftarrow{}v}\otimes \mathrm{D}_{v\xleftarrow{}\hat{v}}\nonumber\\
     &=\mathrm{D}_{v\xleftarrow{}\hat{v}}\otimes \mathrm{D}_{\hat{v}\xleftarrow{}v}\circ \mathrm{D}_{\hat{v}\xleftarrow{}v}\otimes \mathrm{D}_{v\xleftarrow{}\hat{v}}\nonumber\\
     &=\left(\mathrm{D}_{v\xleftarrow{}\hat{v}}\circ\mathrm{D}_{\hat{v}\xleftarrow{}v}\right)\otimes \left(\mathrm{D}_{\hat{v}\xleftarrow{}v}\circ \mathrm{D}_{v\xleftarrow{}\hat{v}}\right)\nonumber\\
     &\red{=\frac{1}{2^{2|\rm E|-|\mathbb{V}|-|\hat{\mathbb{V}}|}}\left(\sum_{a\in\{0,1\}^{|\rm V|}}\prod_{ v\in \rm V}X_{v}^{a_v}\prod_{\hat{v}\in\hat{\rm V}}\delta_{\sum_{v\in\rm V}\sigma_{\hat{v}v}a_v,0}\right)\left(\sum_{\hat{a}\in\{0,1\}^{|\rm \hat{V}|}}\prod_{ \hat{v}\in \rm \hat{V}}X_{\hat{v}}^{\hat{a}_{\hat{v}}}\prod_{v\in\rm V}\delta_{\sum_{\hat{v}\in\hat{\rm V}}\sigma_{v\hat{v}}\hat{a}_{\hat{v}},0}\right)}\nonumber\\
     &\red{=\frac{1}{2^{2|\rm E|-|\mathbb{V}|-|\hat{\mathbb{V}}|}}\left(\sum_{a\in\text{Ker}\sigma}\eta_a\right)\left(\sum_{\hat{a}\in\text{Ker}\sigma^T}\eta_{\hat{a}}\right)}
     \label{eq:D^3plansquared}
 \end{align}
  \end{widetext}
  \red{where $\eta_a:=\prod_{ v\in \rm V}X_{v}^{a_v}\prod_{\hat{v}\in\hat{\rm V}}\delta_{\sum_{v\in\rm V}\sigma_{\hat{v}v}a_v,0}$ and $\eta_{\hat{a}}:=\prod_{ \hat{v}\in \rm \hat{V}}X_{\hat{v}}^{\hat{a}_{\hat{v}}}\prod_{v\in\rm V}\delta_{\sum_{\hat{v}\in\hat{\rm V}}\sigma_{v\hat{v}}\hat{a}_{\hat{v}},0}$ and $a_v$, $\hat{a}_{\hat{v}}$ take values in 0 and 1.}
 In the second line of the above equation, we used the associativity of maps. In the third line, we used the fact that $\phi_s\circ\phi_s=I$ is the identity map. In the fourth line, we used the distributive property of the tensor product under composition of linear maps. In the fourth line we used the result $\mathrm{D}_{v\xleftarrow{}\hat{v}}\circ\mathrm{D}_{\hat{v}\xleftarrow{}v}:= \mathrm{C}$ where $\mathrm{C}$ is the condensation operator given in~\cite{gorantla2024tensor}. \red{The symbols $\eta_a$ and $\eta_{\hat{a}}$ for $a\in\text{Ker}\,\sigma$ and $\hat{a}\in\text{Ker}\,\sigma^T$ are invertible symmetries of \eqref{eq:3DclusterG} that include symmetries beyond the planar subsystem symmetries defined in \ref{eq:planarsymmetriesred} and \eqref{eq:planarsymmetriesblue}. To make the set of symmetries closed under multiplication, we  need to include all the invertible symmetries $\eta_{a}$ and $\eta_{\hat{a}}$ along with $\mathrm{\mathbf{D}}^{(3)}_{\rm pln}$ in the discussion below. }

 \red{We note that one may impose all invertible symmetries $\eta_a$ and $\eta_{\hat{a}}$ to define the SSPT phase containing the Hamiltonian~\eqref{eq:3DclusterG}. This corresponds to a smaller region in the space of allowed symmetric deformations that preserve the spectral gap. In the following discussion, we restrict our attention to this SPT phase. However, one could also restrict to planar subsystem symmetries along with $\mathrm{\mathbf{D}}^{(3)}_{\rm pln}$ for defining the SPT phases.}
\subsection{Noninvertible second-order SSPT phases from \texorpdfstring{$\mathbb{Z}_2\times\mathbb{Z}_2$}{Lg} cluster phase in \texorpdfstring{$3+1$}{Lg}D}
With the model constructed in the previous subsection, we are ready to look at a different type of higher-order noninvertible SSPT in $3+1$ dimensions, namely, that with hinge modes. This is referred to as second-order as the protected modes between two models are restricted to one dimension, and the spatial dimension of the system is three, and hence $3-1=2$, i.e., second-order. 

Since the cluster state~\eqref{eq:3DclusterG} is invariant under the noninvertible symmetry $\mathrm{\mathbf{D}}^{(3)}_{\rm pln}$, the \red{SSPT phase protected by the invertible symmetries} can be further broken into phases protected by the noninvertible symmetry. To find these new phases, we use the Kennedy-Tasaki transformation, which allows us to map SSPT phases to SSSB phases. We define it as
 \begin{align}
     \mathbf{KT}^{(3)}_{\rm pln}\equiv \hat{\rm U}^{(3)}\mathrm{\mathbf{D}}^{(3)}_{\rm pln}\hat{\rm U}^{(3)} \, ,
 \end{align}
 where $\hat{\rm U}^{(3)}$ is the cluster entangler between red and blue sublattices. Explicitly,
 \begin{align}
     \hat{\rm U}^{(3)}\equiv \prod_{v_b}\prod_{v_r}CZ^{\sigma_{v_rv_b}}_{v_b,v_r} \, .
 \end{align}
 $\mathbf{KT}^{(3)}_{\rm pln}$ acts as follows
 \begin{subequations}
  \begin{align}
    X_{v_r}\xrightarrow{\mathbf{KT}^{(3)}_{\rm pln}} \hat{X}_{v_r}\, ,&\quad X_{v_b}\xrightarrow{\mathbf{KT}^{(3)}_{\rm pln}} \hat{X}_{v_b}\, ,\\
    X_{v_r}\prod_{v_b}Z^{\sigma_{v_rv_b}}_{v_b}&\xleftrightarrow{\mathbf{KT}^{(3)}_{\rm pln}}\prod_{v_b}\hat{Z}^{\sigma_{v_rv_b}}_{v_b}\, ,\\
    X_{v_b}\prod_{v_r}Z^{\sigma_{v_bv_r}}_{v_r}&\xleftrightarrow{\mathbf{KT}^{(3)}_{\rm pln}} \prod_{v_r}\hat{Z}^{\sigma_{v_bv_r}}_{v_r}\, .
\end{align}  
\end{subequations}
$\mathbf{KT}^{(3)}_{\rm pln}$ thus maps from
the cluster state Hamiltonian to the spontaneous subsystem symmetry breaking (SSSB) Hamiltonian, which is two copies of the tetrahedral Ising model on the red and blue sublattices
\begin{align}
    \hat{\mathrm{H}}^{(3)}_{\text{Tet-I}}=-\sum_{v_b}\prod_{v_r}\hat{Z}^{\sigma_{v_bv_r}}_{v_r}-\sum_{v_r}\prod_{v_b}\hat{Z}^{\sigma_{v_rv_b}}_{v_b}\, .
    \label{eq:tetrahedral-Ising}
\end{align}
This dual SSSB Hamiltonian still has the \red{symmetries $\eta_a$ and $\eta_{\hat{a}}$ that we discussed in \eqref{eq:D^3plansquared}}. The noninvertible symmetry $\mathbf{D}^{(3)}_{\rm pln}$ is mapped to $\hat{\rm U}$, which is a symmetry of the Hamiltonian~\eqref{eq:tetrahedral-Ising}. We consider the \textit{global} part of the subsystem symmetry 
\begin{subequations}
  \begin{align}
    \hat{\mathcal{P}}_r=\prod_{v_r}\hat{X}_{v_r}\,, \\
    \hat{\mathcal{P}}_b=\prod_{v_b}\hat{X}_{v_b}\, .
\end{align}  
\end{subequations}
We could analyze the possible symmetry breaking patterns to find the possible symmetry-protected topological phases. On the SSSB side, all the \red{invertible} symmetries are broken. We look at \red{some} possible choices for preserved symmetry. 
\subsubsection{\texorpdfstring{$\hat{\rm U}^{(3)}$}{Lg} is preserved}
On the SSSB side, we have the Hamiltonian~\eqref{eq:tetrahedral-Ising}. The original SSPT Hamiltonian that gives rise to this Hamiltonian is \eqref{eq:3DclusterG}. 
\subsubsection{\texorpdfstring{$\hat{\rm U}^{(3)}\hat{\mathcal{P}}_r$}{Lg} is preserved}
The SSSB Hamiltonian that preserve $\hat{\rm U}^{(3)}\hat{\mathcal{P}}_r$ is 
\begin{align}
    \hat{\rm H}^{(3)\mathcal{G}}_{\text{blue}}=\sum_{v_r}\prod_{v_b}\hat{Z}_{v_b}^{\sigma_{v_rv_b}}-\sum_{v_b}\prod_{v_r}\left[\hat{Y}_{v_r}^{\sigma_{v_bv_r}}\left(1+\prod_{v_b'}\hat{Z}_{v_b'}^{\sigma_{v_rv_b'}}\right)\right]\,.
\end{align}
Here, like the cases \eqref{eq:Hblue} and \eqref{eq:Hblued} in the previous section, the first term is kept positive so that the order parameter for this phase is non-zero. The SSPT Hamiltonian that gives rise to this SSSB Hamiltonian is found by applying $\mathbf{KT}^{(3)}_{\rm pln}$
\begin{widetext}
    \begin{align}
    \mathrm{H}_{\text{blue}}^{(3)\mathcal{G}}=\sum_{v_r}X_{v_r}\prod_{v_b}Z_{v_b}^{\sigma_{v_rv_b}}-\sum_{v_b}X_{v_b}\prod_{v_r}Y_{v_r}^{\sigma_{v_bv_r}}-\sum_{v_b}X_{v_b}\prod_{v_r}Z_{v_r}^{\sigma_{v_bv_r}}\left(\prod_{v_b'}Z_{v_b'}^{\sigma_{v_rv_b'}}\right)\, .
\end{align}
\end{widetext}
See Figure~\ref{fig:HamiltonianG} for illustrations of terms in $\hat{\rm H}^{(3)\mathcal{G}}_{\text{blue}}$ and $\rm H_{\text{blue}}^{(3)\mathcal{G}}$. 
\begin{figure}
    \centering
    \includegraphics[scale=1]{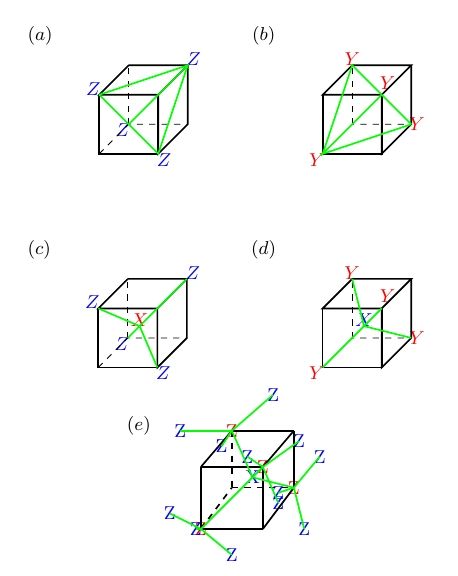}
    \caption{$(a)$,$(b)$ are the terms that appear in the Hamiltonian $\hat{\mathrm{H}}_{\text{blue}}^{(3)\mathcal{G}}$. They are tetrahedral interactions with opposite orientation. $(c)$,$(d)$ and $(e)$ are the terms appearing in the Hamiltonian $\mathrm{H}_{\text{blue}}^{(3)\mathcal{G}}$.}
    \label{fig:HamiltonianG}
\end{figure}
The Hamiltonian $\rm H_{\text{blue}}^{(3)\mathcal{G}}$ describe a distinct SSPT from cluster state protected by $\mathbf{D}^{(3)}_{\rm pln}$. The two SSPTs differ by hinge modes when considering the cubic interface between them (see Appendix~\ref{sec:Interfaceanalysis3D} for a detailed analysis). It can also be argued that the hinge modes are robust to symmetric local perturbations near the hinge (see Appendix~\ref{sec:stabilityhingemode} for a detailed analysis). Hence, this is a second-order SSPT protected by $\mathbf{D}^{(3)}_{\rm pln}$.
\subsubsection{\texorpdfstring{$\hat{\rm U}^{(3)}\hat{\mathcal{P}}_b$}{Lg} is preserved}
This case is similar to the above case when $\hat{\rm U}^{(3)}\hat{\mathcal{P}}_b$ is preserved. We need to replace all the analyses with $v_b\leftrightarrow v_r$. We could define $\hat{\rm H}^{(3)\mathcal{G}}_{\text{red}}$ and $\rm H_{\text{red}}^{(3)\mathcal{G}}$. The Hamiltonian $\rm H_{\text{red}}^{(3)\mathcal{G}}$ describe a distinct SSPT from the cluster state protected by $\mathbf{D}^{(3)}_{\rm pln}$, and the two SSPTs differ by hinge modes while considering cubic interface between them.  

\section{Conclusion}\label{sec:Conclusion}
In this work, we have presented explicit examples of noninvertible higher-order subsystem symmetry-protected topological phases (SSPTs). In particular, we constructed a noninvertible second-order SSPT in 
$2+1$ dimensions characterized by protected corner modes, as well as a second-order SSPT in $3+1$ dimensions featuring protected hinge modes. Furthermore, we extended our construction in 
$2+1$D to arbitrary dimensions, realizing noninvertible $d 
^{th}$-order SSPTs with protected corner modes.

We also analyzed anomalies analogous to type-III anomalies involving 0-form and subsystem symmetries, focusing on both $\mathbb{Z}_2$ and $\mathbb{Z}_2\times\mathbb{Z}_2$ symmetry groups (see Appendix~\ref{sec:anomaly}). These considerations allowed us to systematically exclude various patterns of spontaneous subsystem symmetry breaking (SSSB) that are incompatible with anomaly constraints. We presented additional examples of SSSB phases that preserve non-anomalous symmetries, along with their associated SSPT phases (see Appendix~\ref{sec:otherNSSPT}).

While we have provided a number of concrete examples, many open questions remain. For instance, it would be interesting to construct noninvertible higher-order SSPTs in $3+1$D that incorporate both planar and linear subsystem symmetries. A promising direction would also be to generalize our construction of SSPTs with hinge modes to arbitrary dimensions, potentially leading to novel noninvertible higher-order SSPTs.

More broadly, it would be valuable to develop a general framework for the systematic construction of noninvertible higher-order SSPTs. Our examples suggest a correlation between the dimensionality of the subsystem symmetry and the nature of the boundary modes: corner modes arise in the presence of linear subsystem symmetries, while hinge modes appear in the case of planar subsystem symmetries. This observation points to a deeper relationship between subsystem symmetry dimensionality and higher-order topological features, which warrants further investigation.

We discussed noninvertible higher-order $\mathbb{Z}_2\times\mathbb{Z}_2$ SSPTs distinguished by corner modes whose dual SSSB Hamiltonian preserves either $\hat{\rm V}^{(d)}$, $\hat{\rm V}^{(d)}\hat{\eta}_{r}$ or $\hat{\rm V}^{(d)}\hat{\eta}_{b}$. It would be interesting to study the SSPTs that correspond to the non-anomalous $\hat{\rm V}^{(d)}\hat{\eta}_{b}\hat{\eta}_r$ symmetry. An analogous analysis would also be worthwhile to explore in the context of $3+1$D model with planar subsystem symmetry that we study in our manuscript. Similarly, it would be interesting to analyze the noninvertible higher-order SSPT in the $\mathbb{Z}_2$ subsystem symmetric cluster state coming from preserving $\hat{\rm V}^{(d)}_{\mathbb{Z}_2}\hat{\eta}$ ($\hat{\eta}$ is the global part of the susbsytem symmetry) on the SSSB side. 

Finally, while our focus has been on noninvertible higher-order topological phases protected by subsystem symmetries, it would be intriguing to explore analogous phases protected by global symmetries in higher dimensions. \red{In particular, it would be interesting to construct examples of higher-order SPT phases protected jointly by noninvertible symmetries and global symmetries, where the latter may include higher-form symmetries in higher-dimensional settings, or to show that such phases cannot exist in the absence of subsystem symmetries.}

\medskip\noindent \textit{Notes Added}. During the final stage of preparing this manuscript, we noticed a similar and independent work on noninvertible SSPTs appeared on the arXiv~\cite{Furukawa:2025flp}.

\acknowledgments
The authors would like to thank an anonymous reviewer of our prior work~\cite{mana2024kennedy} in Physical Review B for pointing out a question on SPT phases protected by noninvertible symmetry in (2 + 1)D. 
APM thanks Mikhail Litvinov, Yunqin Zheng, and Sahand Seifnashri for useful discussions. 
This work was mainly supported by the U.S. National Science Foundation under Award No. PHY 2310614. T.-C. W. also acknowledges the support from the Center for Distributed Quantum Processing at Stony Brook University. Y. L. also acknowledges the support by the U.S. National Science Foundation under Grant No. NSF DMR-2316598.

\bibliography{Ref}
\appendix
\section{A consistent choice of order parameters for symmetry breaking}\label{sec:orderparameter}
In this Appendix, we state a lemma that captures a sufficient condition for a consistent choice of order parameters for a symmetry-breaking Hamiltonian. We will illustrate the gist of the lemma with an example later.
\begin{lemma}\label{lemma:consistentorderparameters}
    Suppose $\{\hat
{O}_i\}$ are a set of order parameters with non-zero eigenvalues restricted to the ground space for a symmetry breaking Pauli Hamiltonian $\rm H$, with symmetry group $G$, defined on a lattice with broken symmetry generators $\{\hat{g}_i\}$ for $i\in S$ where $S$ is a finite indexing set. Consider $\hat{\rm V}$ such that  $[\rm H, \hat{\rm V}]=0$ and $\hat{\rm V}^2=\rm I$ satisfying
\begin{itemize}
    \item $[\hat{O}_i,\mathrm{H}]=0$ $\forall i\in S$\, ,
    \item $[\hat{O}_i,\hat{\rm V}]=0$ $\forall i\in S$\, ,
    \item $[\hat{O}_i,\hat{O}_j]=0$ $\forall i,j\in S$\, ,
    \item $\exists$ linearly independent broken symmetry generators $\{\Tilde{g}_i\}$ such that $[\Tilde{g}_i,\hat{O}_j]=0$ when $i\neq j$ and $\{\Tilde{g}_i,\hat{O}_i\}=0$\, ,
    \item $\rm H$ has $2^{|S|}$ ground states.
\end{itemize}
Then $\hat{\rm V}$ is an unbroken symmetry.  
\end{lemma}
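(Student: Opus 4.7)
The plan is to exploit the structure of the order parameters to show that the $2^{|S|}$ ground states of $\mathrm{H}$ are uniquely labeled by the joint eigenvalue tuples of the commuting order parameters $\{\hat{O}_i\}_{i\in S}$, and then argue that $\hat{\mathrm{V}}$ cannot mix distinct symmetry-broken ground states because it commutes with every $\hat{O}_i$.

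First, since $[\hat{O}_i,\hat{O}_j]=0$ and $[\hat{O}_i,\mathrm{H}]=0$, I would simultaneously diagonalize $\mathrm{H}$ and all $\hat{O}_i$ on the ground subspace $\mathcal{G}$; because each $\hat{O}_i$ is (up to restriction) a Pauli-like operator with non-zero eigenvalues on $\mathcal{G}$, its eigenvalues on $\mathcal{G}$ lie in $\{\pm 1\}$, so $\mathcal{G}$ decomposes as a direct sum of joint eigenspaces labeled by sign tuples $\vec{s}=(s_1,\dots,s_{|S|})\in\{\pm 1\}^{|S|}$. Next, using the broken generators $\{\tilde g_i\}$ with the (anti)commutation pattern $\{\tilde g_i,\hat{O}_i\}=0$ and $[\tilde g_i,\hat{O}_j]=0$ for $j\neq i$, I would note that $\tilde g_i$ maps the $\vec{s}$-eigenspace bijectively to the $\vec{s}^{(i)}$-eigenspace (with $s_i$ flipped) and preserves $\mathcal{G}$ because $[\tilde g_i,\mathrm{H}]=0$. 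Therefore all $2^{|S|}$ sign sectors are present and have the same dimension; combined with $\dim\mathcal{G}=2^{|S|}$, each joint eigenspace is one-dimensional, and the symmetry-broken ground states $\{\ket{\vec s}\}$ are uniquely pinned down by their $\hat{O}_i$-eigenvalues.

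Now I would turn to $\hat{\mathrm{V}}$. Because $[\hat{\mathrm{V}},\hat{O}_i]=0$ for all $i$ and $[\hat{\mathrm{V}},\mathrm{H}]=0$, $\hat{\mathrm{V}}$ preserves each one-dimensional joint eigenspace inside $\mathcal{G}$, i.e.\ $\hat{\mathrm{V}}\ket{\vec s}=\lambda_{\vec s}\ket{\vec s}$ for some phase $\lambda_{\vec s}$. The relation $\hat{\mathrm{V}}^2=\mathrm{I}$ then forces $\lambda_{\vec s}\in\{\pm1\}$. In particular, $\hat{\mathrm{V}}$ does not permute distinct symmetry-broken ground states: it acts diagonally in the order-parameter basis of $\mathcal{G}$. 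By the standard criterion for spontaneous symmetry breaking (a symmetry is broken iff it maps some symmetry-broken ground state to a different one), this means $\hat{\mathrm{V}}$ is unbroken.

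The only delicate step is the counting argument showing that each sign sector $\vec s$ is one-dimensional; this is where the hypothesis that the $\tilde g_i$ are linearly independent (as sign-flip operators acting on the label $\vec s$) is essential, together with the exact ground-state count $2^{|S|}$. Once these two pieces align, the conclusion that $\hat{\mathrm V}$ must act as a diagonal $\pm 1$ on the labeled ground states is immediate; the rest of the argument is purely bookkeeping of commutation relations. I do not anticipate any obstacle in the commutation-checking steps — the subtle point is simply ruling out the possibility of additional accidental ground-state degeneracy within a fixed sign sector, which is precisely what the hypothesis $\dim\mathcal G=2^{|S|}$ excludes.
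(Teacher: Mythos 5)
Your proposal is correct and follows essentially the same route as the paper's proof: simultaneously diagonalize the commuting order parameters on the ground space, use the anticommuting broken generators $\tilde g_i$ to generate all $2^{|S|}$ distinct eigenvalue tuples so that each joint eigenspace is one-dimensional, and then conclude from $[\hat{\mathrm V},\hat O_i]=0$ that $\hat{\mathrm V}$ acts diagonally (hence does not permute the symmetry-broken ground states). If anything you are slightly more careful than the paper at the last step, allowing $\hat{\mathrm V}$ to take different signs $\lambda_{\vec s}=\pm1$ on different ground states rather than asserting it is a global scalar; this does not affect the conclusion that $\hat{\mathrm V}$ is unbroken.
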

\begin{proof}
    Since we have $[\hat{O}_i,\mathrm{H}]=0$ $\forall i\in S$, the ground state subspace of the Hamiltonian $\rm H$ is preserved under the action of order parameters $\hat{O}_i$. Hence, we can think of $\hat{O}_i$ as $2^{|S|}\times 2^{|S|}$ dimensional matrices acting on the ground state subspace. Since $[\hat{O}_i,\hat{O}_j]=0$ $\forall i,j\in S$, they can be simultaneously diagonalized. Consider a simultaneous eigenvector $\ket{\Psi}$ of $\{\hat{O}_i\}$. Note that any arbitrary product of $\Tilde{g}_i$ acting on $\ket{\Psi}$ is also a simultaneous eigenvector of $\{\hat{O}_i\}$. In total, there are $2^{|S|}$ such products.  Since $[\Tilde{g}_i,\hat{O}_j]=0$ and $\{\Tilde{g}_i,\hat{O}_i\}=0$, the set of eigenvalues $\{\hat{o}_i\}$ for each of these states is distinct. Otherwise, the values $\{\hat{o}_i\}$ characterize the distinct simultaneous eigenvectors. Hence, the simultaneous eigenvectors should span the ground state subspace as there are $2^{|S|}$ ground states. Since $[\hat{O}_i,\hat{\rm V}]=0$ $\forall i\in S$, $\hat{\rm V}$ acting on the simultaneous eigenvectors does not change the values $\{\hat{o}_i\}$. Hence $\hat{\rm V}$ must be a scalar on the ground state subspace. Since $\hat{\rm V}^2=\rm I$, the scalar must be $\pm 1$. If it is 1, then $\hat{\rm V}$ is an unbroken symmetry. Otherwise, $-\hat{\rm V}$ is an unbroken symmetry. However, the minus sign is inconsequential.
\end{proof}
Now, let us illustrate the lemma with an example. Let us consider the symmetry-breaking Hamiltonian~\eqref{eq:oddHam}. The order parameters $\hat{Z}_1$ and $\hat{Y}_2(1-\hat{Z}_1\hat{Z}_3)$ commute with the Hamiltonian~\eqref{eq:oddHam},  commute with $\hat{\rm V}\eta_e$, commute with each other, and satisfy the following: $[\eta_e,\hat{Z}_1]=0$, $\{\eta_0,\hat{Z}_1\}=0$, $[\eta_o,\hat{Y}_2(1-\hat{Z}_1\hat{Z}_3)]=0$, $\{\eta_e,\hat{Y}_2(1-\hat{Z}_1\hat{Z}_3)\}=0$. We also note that $[\hat{\rm V}\eta_e,\rm H]=0$.  By our lemma, it says that $\hat{\rm V}\eta_e$ is the unbroken symmetry, which is indeed the case. We note that although $\hat{\rm V}$ commutes with $\rm H$ and is independent from $\eta_e$ and $\eta_o$, $\hat{\rm V}$ does not commute with $\hat{Y}_2(1-\hat{Z}_1\hat{Z}_3)$ and is not the preserved symmetry for this phase.
\section{Anomaly involving subsystem symmetries and 0-form symmetry}\label{sec:anomaly}
In this section, we discuss the anomalous symmetries, i.e., symmetries that cannot be realized on a unique gapped ground state. In the context of this manuscript, these symmetries can not be an unbroken symmetry of the symmetry breaking Hamiltonian. We will be discussing type-III anomaly between subsystem symmetries and 0-form symmetries.

We present three different methods to diagnose anomalous symmetries.  The first is the defect Hamiltonian method, where we compute the commutation relations of its symmetry operators and use the projective commutation as the signal of the anomaly. The second method is to use defect fusion from Ref.~\cite{Seifnashri:2023dpa}. We modify the original approach of a semi-infinite symmetry operator to a finite segment for computing the cohomology value to identify the anomaly. The third one is a direct generalization of the Else-Nayak method~\cite{else2014classifying}.

\subsection{Defect Hamiltonian method}
We diagnose type III anomalies with the symmetry generated by the CZ operator together with subsystem symmetries. 
We illustrate the defect Hamiltonian method both with the $\mathbb{Z}_2\times\mathbb{Z}_2$ subsystem symmetry and with the $\mathbb{Z}_2$ subsystem symmetry. 

\subsubsection{\texorpdfstring{$\mathbb{Z}_2\times\mathbb{Z}_2$}{Lg} subsystem symmetry}
To diagnose type III anomaly, let us consider three symmetry generators. We take 1) a subsystem symmetry on the blue sublattice, 2) a subsystem symmetry on the red sublattice, and  3) the symmetry $\hat{\rm V}$ given in \eqref{eq:symmetriesofSSSB}. First, we consider a Hamiltonian that is symmetric under all three symmetries. Then, we consider a subsystem-symmetry defect Hamiltonian for one of the subsystem symmetries. This defect Hamiltonian has modified symmetries that obey a projective algebra, indicating that the ground state can not be unique for the defect Hamiltonian. It turns out that if the defect Hamiltonian can not have a unique ground state, so does the Hamiltonian without any defect~\cite{Yao:2020xcm,Cheng:2022sgb} (see~\cite{Seifnashri:2023dpa} for an argument in $1+1$D). Since anomaly does not depend on the particular choice of the Hamiltonian and only depends on the symmetries that we consider, exhibiting the projective algebra for one Hamiltonian would be sufficient. The Hamiltonian we consider, which is symmetric under three symmetries, is the transverse-field cluster model
\begin{widetext}
\begin{align}
\begin{split}
    \bm \mathrm{H}_{\text{2D-TFCM}}&=-\sum_{v_r}\hat{X}_{v_r}\prod_{v_b\in \partial p_b}\hat{Z}_{v_b}-\sum_{v_b}\hat{X}_{v_b}\prod_{v_r\in \partial p_r}\hat{Z}_{v_r}-\sum_{v_r}\hat{X}_{v_r}-\sum_{v_b}\hat{X}_{v_b} \, .
\end{split}
\label{eq:2dTFcluster}
\end{align}
Here and below, the identifications $v_r =p_b$ etc. in cluster terms are understood and omitted.
We put it on a torus with $L_x$ and $L_y$ number of vertices on each sublattices in the $x$ and $y$ direction respectively. Now we consider a defect Hamiltonian
for the subsystem symmetry $\hat{\eta}^x_{r,k}$
\begin{align}
\begin{split}
     &\mathrm{H}_{\text{2D-TFCM},\hat{\eta}_{r,k}^x}=-\sum_{v_r}\hat{X}_{v_r}\prod_{v_b\in \partial p_b}\hat{Z}_{v_b}-\sum_{v_b}'\hat{X}_{ v_b}\prod_{v_r\in \partial p_r}\hat{Z}_{v_r}+\begin{array}{ccc}
        \hat{Z}_{l,k+1}  &  &\hat{Z}_{l+1,k+1} \\
                   & \hat{X}_{l+\frac{1}{2},k+\frac{1}{2}} & \\
        \hat{Z}_{l,k}    & & \hat{Z}_{l+1,k}
     \end{array}\\
    &\hspace{4cm}+\begin{array}{ccc}
        \hat{Z}_{l,k}  &  &\hat{Z}_{l+1,k} \\
         & \hat{X}_{l+\frac{1}{2},k-\frac{1}{2}} & \\
        \hat{Z}_{l,k-1} & & \hat{Z}_{l+1,k-1}
     \end{array} -\sum_{v_r}\hat{X}_{v_r}-\sum_{v_b}\hat{X}_{v_b}\, ,
     \label{eq:2dclstrtwistedH}
\end{split}
\end{align}
\end{widetext}
where the prime in the second term in the sum indicates that the third and fourth terms in the sum (with flipped signs) are subtracted. 
The symmetries of the Hamiltonian \eqref{eq:2dclstrtwistedH} are 
\begin{align}
    \begin{split}
        &(\hat{\eta}^{x}_{r,j})_{r,k}=\prod_{i=1}^{L_x}\hat{X}_{i,j},\quad (\hat{\eta}^{y}_{r,i})_{r,k}=\prod_{j=1}^{L_y}\hat{X}_{i,j}, \\
    &(\hat{\eta}^{x}_{b,j})_{r,k}=\prod_{i=1}^{L_x}\hat{X}_{i+\frac{1}{2},j+\frac{1}{2}},\quad (\hat{\eta}^{y}_{b,i})_{r,k}=\prod_{j=1}^{L_y}\hat{X}_{i+\frac{1}{2},j+\frac{1}{2}}, \\
    &\hat{\rm V}_{r,k}^{(2)}=\hat{Z}_{l+\frac{1}{2},k+\frac{1}{2}}\hat{Z}_{l+\frac{1}{2},k-\frac{1}{2}}\prod_{v_r}\prod_{v_r\in\partial p_r}\mathrm{CZ}_{v_r,p_r}\, ,
    \end{split}
    \label{eq:modifiedsymmetries}
\end{align}
where the subscript $r,k$ means that these are the symmetries of the Hamiltonian \eqref{eq:2dclstrtwistedH}, which possesses defects associated with $\hat{\eta}^x_{r,k}$. We note that these symmetries of the defect Hamiltonian do not depend on the choice of Hamiltonian~\eqref{eq:2dTFcluster}. The following symmetries of this defect Hamiltonian obey a projective algebra:
\begin{align}
    &(\hat{\eta}_{b,k}^x)_{r,k}\hat{\rm V}_{r,k}^{(2)}=-\hat{\rm V}_{r,k}^{(2)}(\hat{\eta}_{b,k}^x)_{r,k}\, ,\nonumber\\
    &  (\hat{\eta}_{b,k-1}^x)_{r,k}\hat{\rm V}_{r,k}^{(2)}=-\hat{\rm V}_{r,k}^{(2)}(\hat{\eta}_{b,k-1}^x)_{r,k}\, .
\end{align}
This indicates that two parallel and adjacent subsystem symmetries of different sublattices have a type III anomaly with $\hat{\rm V}^{(2)}$. Hence, for example, it is not possible to realize a symmetric gapped phase with symmetry of the form $\hat{\rm V}^{(2)}\hat{\eta}^x_{r,k}\hat{\eta}^x_{b,k}$. Applying this method again, we can see that any symmetries of the form $\hat{\rm V}^{(2)}\prod_{j=j_0}^{j_1}(\hat{\eta}^x_{r,j}\hat{\eta}^x_{b,j})$ is anomalous for $j_0,j_1\in\{1,...,L_y\}$ and the pair $(j_0,j_1)\neq(1,L_y)$. Similarly, $\hat{\rm V}^{(2)}\prod_{i=i_0}^{i_1}(\hat{\eta}^y_{r,i}\hat{\eta}^y_{b,i})$ is also anomalous for $i_0,i_1\in\{1,...,L_x\}$ and the pair $(i_0,i_1)\neq(1,L_x)$. 

Now we sketch a method to find the modified symmetries of the defect Hamiltonian without a particular choice of original Hamiltonian. Here we take an infinite 2D lattice for the argument. Later, we can put it on a finite lattice. The defect Hamiltonian for the line symmetry $\hat{\eta}^x_{r,k}$ is 
\begin{align}
    \mathrm{H}_{\hat{\eta}^x_{r,k}}=\hat{\mathcal{U}}_{\eta^x_{r,k}}^{\leq l}\mathrm{H}\hat{\mathcal{U}}_{\hat{\eta}^x_{r,k}}^{\leq l}
\end{align}
where $\hat{\mathcal{U}}_{\eta^x_{r,k}}^{\leq l}=\prod_{i'\leq l}\hat{X}_{i',j}$. 
Then it is straightforward to see that if $\rm \hat{V}^{(2)}$ and the subsystem symmetries are symmetries of $\rm H$, then  \eqref{eq:modifiedsymmetries} are the symmetries of $\mathrm{H}_{\hat{\eta}^x_{r,k}}$.
\subsubsection{\texorpdfstring{$\mathbb{Z}_2$}{Lg} subsystem symmetry}
Now let us consider a Hamiltonian that is symmetric under $\mathbb{Z}_2$ subsystem symmetry as well as $\hat{\rm V}^{(2)}_{\mathbb{Z}_2}$,
\begin{align}
    \mathrm{H}^{\mathbb{Z}_2}_{\text{TFSSPT}}= -\sum_{i,j}\begin{array}{ccc}
 & \hat{Z} & \hat{Z} \\
\hat{Z}& \hat{X} & \hat{Z} \\
\hat{Z}& \hat{Z} &
\end{array}-\sum_{i,j}\hat{X}\, .
\label{eq:2dZ_2TFSSPT}
\end{align}
Now, let us twist the Hamiltonian (defect Hamiltonian) \eqref{eq:2dZ_2TFSSPT} with a linear symmetry $\hat{\eta}^x_k$
\begin{widetext}
\begin{align}
    \mathrm{H}^{\mathbb{Z}_2}_{\text{TFSSPT},\hat{\eta}^x_k}&=-\sum_{i,j}'\begin{array}{ccc}
 & \hat{Z} & \hat{Z} \\
\hat{Z}& \hat{X} & \hat{Z} \\
\hat{Z}& \hat{Z} &
\end{array}+\begin{array}{ccc}
 & \hat{Z} & \hat{Z} \\
\hat{Z}& \hat{X}_{L,k+1} & \hat{Z} \\
\hat{Z}& \hat{Z} &
\end{array}+\begin{array}{ccc}
 & \hat{Z} & \hat{Z} \\
\hat{Z}& \hat{X}_{L,k} & \hat{Z} \\
\hat{Z}& \hat{Z} &
\end{array}+\begin{array}{ccc}
 & \hat{Z} & \hat{Z} \\
\hat{Z}& \hat{X}_{L-1,k} & \hat{Z} \\
\hat{Z}& \hat{Z} &
\end{array}+\begin{array}{ccc}
 & \hat{Z} & \hat{Z} \\
\hat{Z}& \hat{X}_{L-1,k-1} & \hat{Z} \\
\hat{Z}& \hat{Z} &
\end{array}-\sum_{i,j}\hat{X}\, ,
\label{eq:Z_2TFSSPT,k}
\end{align}
where the prime in the first term in the sum indicates that  
the particular four terms (with opposite signs) that follow the sum are removed. The symmetries of the twisted Hamiltonian are 
\begin{align}
\begin{split}
     &(\hat{\eta}^x_j)_k=\prod_{i = 1 }^L \hat{X}_{i,j}\, ,\quad (\hat{\eta}^y_i)_k=\prod_{j = 1 }^L \hat{X}_{i,j}\, ,\\
    & (\hat{\eta}^{\text{diag}}_l)_k=\prod_{m = 1 }^L \hat{X}_{ m,[m+l]_L}\,,\qquad  i,j,l,k\in\{1,...,L\}\\
    &  (\hat{\rm V}^{(2)}_{\mathbb{Z}_2})_{k}=\prod_{ v}CZ_{v,v+(1,0)}CZ_{v,v+(1,1)}CZ_{v,v+(0,1)}\\
    &\hspace{2cm}\times \hat{Z}_{L,k+1}\hat{Z}_{L-1,k-1}\hat{Z}_{L,k}\hat{Z}_{L-1,k}\, .
\end{split}
\end{align}
\end{widetext}
The subscript $k$ in the above symmetries indicates that the above given symmetries are the symmetries of \eqref{eq:Z_2TFSSPT,k}. $(\hat{\rm V}^{(2)}_{\mathbb{Z}_2})_k$ satisfies a projective algebra with the following symmetries:
\begin{align}
\begin{split}
     &(\hat{\rm V}^{(2)}_{\mathbb{Z}_2})_k(\hat{\eta}^x_{k+1})_{k}=-(\hat{\eta}^x_{k+1})_{k}(\hat{\rm V}^{(2)}_{\mathbb{Z}_2})_k\, ,\\
     &(\hat{\rm V}^{(2)}_{\mathbb{Z}_2})_k(\hat{\eta}^x_{k-1})_{k}=-(\hat{\eta}^x_{k-1})_{k}(\hat{\rm V}^{(2)}_{\mathbb{Z}_2})_k\, .
\end{split}
\end{align}
The above equations indicate that there is a type III anomaly between two parallel adjacent subsystem symmetry lines and $\hat{\rm V}^{(2)}_{\mathbb{Z}_2}$. (Note that the twist is at $k$, so  $k+1$ or $k-1$ is adjacent to it.) Hence, we cannot have a unique gapped ground state with the product of the three symmetries preserved. The argument is independent of the choice of Hamiltonian, as we argued in the previous case.

We note that the product of $\hat{\rm V}^{(2)}_{\mathbb{Z}_2}$ with any individual horizontal, vertical, or diagonal line symmetry is also anomalous. This anomaly is the $1+1$D CZX anomaly discussed in the literature~\cite{Chen:2011bcp}. Moreover, $\hat{\rm V}^{(2)}_{\mathbb{Z}_2}\prod_{j=1}^L\hat{\eta}^x_j$ is anomaly-free as we find $[\prod_{j\neq k}(\hat{\eta}^x_j)_k,(\hat{\rm V}^{(2)}_{\mathbb{Z}_2})_k]$=0, in the presence of  $\hat{\eta}^x_k$ defect.
\subsection{Defect fusion method}
Here, we establish the anomaly that we discussed before by the defect Hamiltonian method using a different method. We will compute the anomaly by fusing defects of various symmetries. This method works in $1+1$D and is developed in~\cite{Seifnashri:2023dpa}. Although we are working in $2+1$D, certain anomalies are $1+1$D anomalies and can be captured by this method. Now we briefly review the method described in~\cite{Seifnashri:2023dpa}.

Let $G$ be the symmetry group. If $g\in G$, then let $U_g$ be the unitary representation of the symmetry group element $g$. If we start with a Hamiltonian $\rm H$, then we can create a defect Hamiltonian $\mathrm{H}_g$ by applying a truncated unitary $U_g$ on the Hamiltonian $\rm H$. Let us assume we have a one-dimensional chain of sites labeled by integers. Then the defects are located on the links. A $g$-defect at a link between the vertices $j$ and $j+1$ can be denoted by $\mathrm{H}_g^{(j,j+1)}$. Now, if we have two defects on adjacent links and we want to fuse them, it can be fused by a unitary operator.
\begin{align}
    \lambda^j(g,h)\mathrm{H}_{g,h}^{(j-1,j);(j,j+1)}\lambda^j(g,h)^{-1}=  \mathrm{H}_{gh}^{(j,j+1)}.
\end{align}
Now, if we have three defects, we can fuse them in two different ways. They should be equivalent up to an overall phase factor. The phase factor would give the information about the anomaly. Explicitly, the phase factor is the $F$ symbol defined in the equation below:
\begin{align}
    &\lambda^j(g_1,g_2g_3)\lambda^{j-1}(g_1,1)\lambda^{j}(g_2,g_3)\nonumber\\
    &\qquad=F^j(g_1,g_2,g_3)\lambda^j(g_1g_2,g_3)\lambda^{j-1}(g_1,g_2)\,.
    \label{eq:lambdaassociativity}
\end{align}
According to~\cite{Seifnashri:2023dpa}, the anomaly of the symmetry group is captured by 
\begin{align}
\omega^j(g_1,g_2,g_3)\equiv \frac{F^j(g_1,g_2,g_3)}{F^j(g_1,g_2,1)}\, .
    \label{eq:cocycle}
\end{align}
It can be checked that $\omega^j(g_1,g_2,g_3)$ satisfies the cocycle condition and describes an anomaly if it is not equivalent to a coboundary.
\begin{figure*}[]
    \centering
    \includegraphics[scale=1]{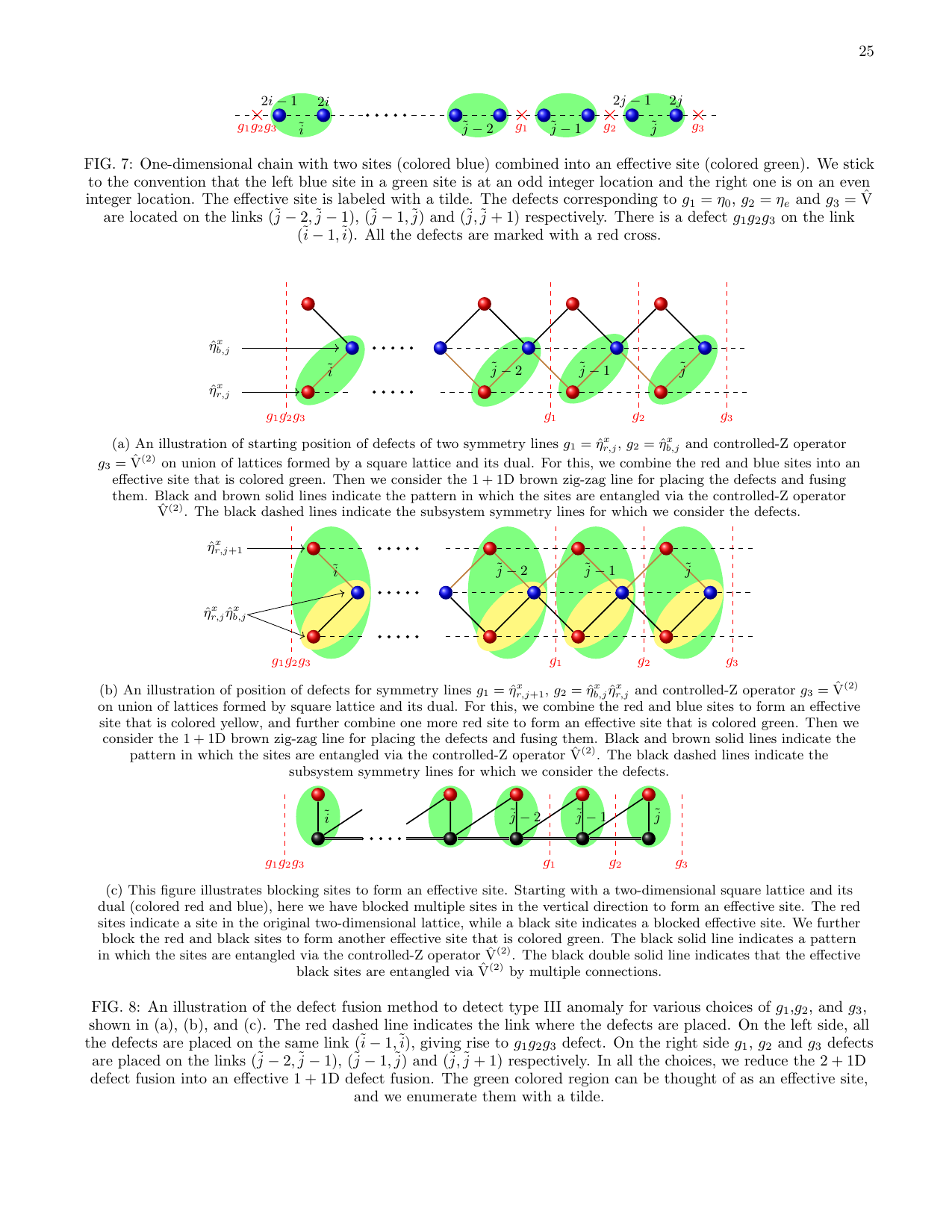}
    \caption{One-dimensional chain with two sites (colored blue) combined into an effective site (colored green). We stick to the convention that the left blue site in a green site is at an odd integer location and the right one is on an even integer location. The effective site is labeled with a tilde. The defects corresponding to $g_1=\eta_0$, $g_2=\eta_e$ and $g_3=\hat{\rm V}$ are located on the links $(\tilde{j}-2,\tilde{j}-1)$, $(\tilde{j}-1,\tilde{j})$ and $(\tilde{j},\tilde{j}+1)$ respectively. There is a defect $g_1g_2g_3$ on the link $(\tilde{i}-1,\tilde{i})$. All the defects are marked with a red cross.}
    \label{fig:1Ddefectfusion}
    \end{figure*}
\begin{figure*}[]

    \begin{subfigure}{2\columnwidth}
         \centering
    \includegraphics[scale=1]{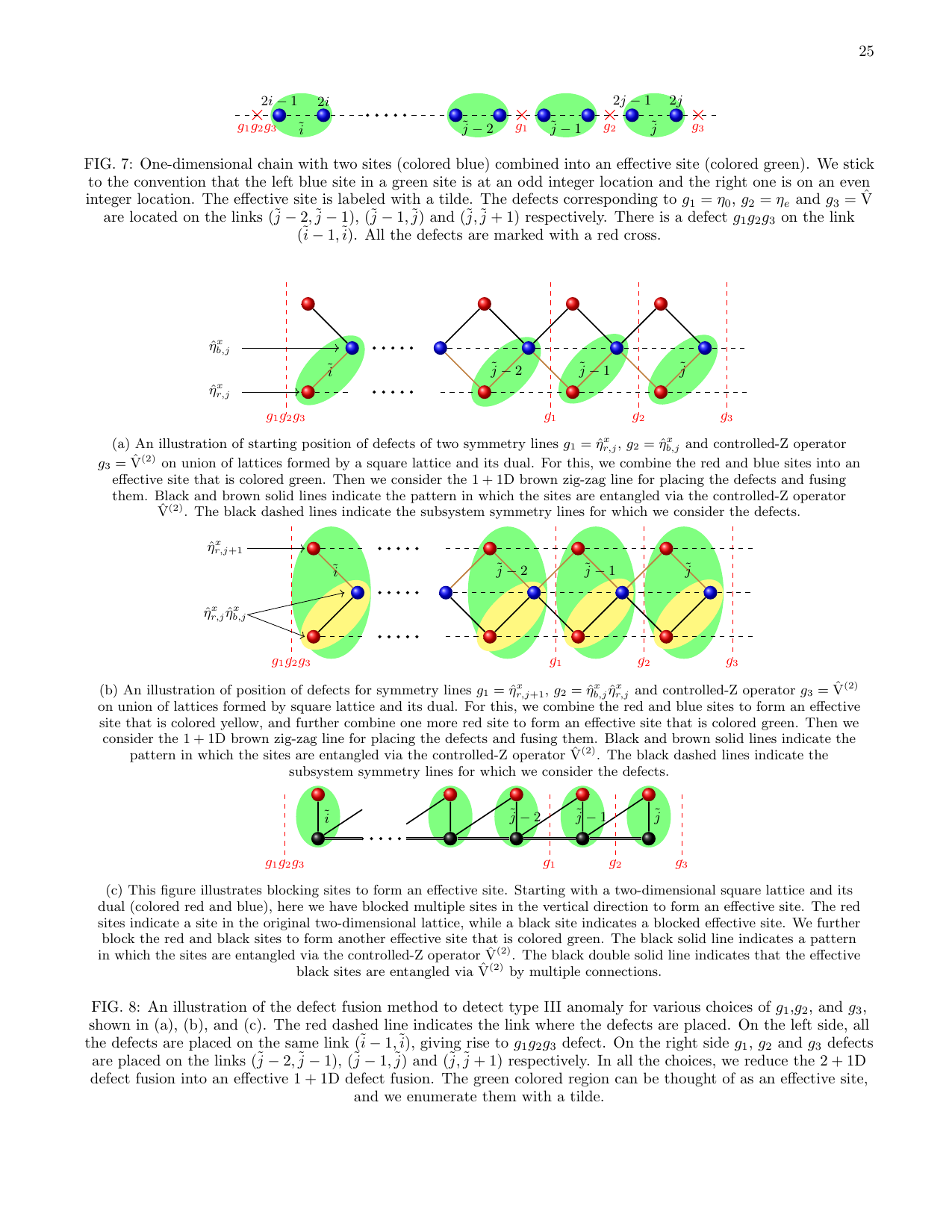}
    \caption{An illustration of starting position of defects of two symmetry lines $g_1=\hat{\eta}^x_{r,j}$, $g_2=\hat{\eta}^x_{b,j}$ and controlled-Z operator $g_3=\hat{\rm V}^{(2)}$ on union of lattices formed by a square lattice and its dual. For this, we combine the red and blue sites into an effective site that is colored green. Then we consider the $1+1$D brown zig-zag line for placing the defects and fusing them. Black and brown solid lines indicate the pattern in which the sites are entangled via the controlled-Z operator $\hat{\rm V}^{(2)}$. The black dashed lines indicate the subsystem symmetry lines for which we consider the defects.}
    \label{fig:twolinesinz2timesz2}
    \end{subfigure}
    \begin{subfigure}[b]{2\columnwidth}
        \centering
    \includegraphics[scale=1]{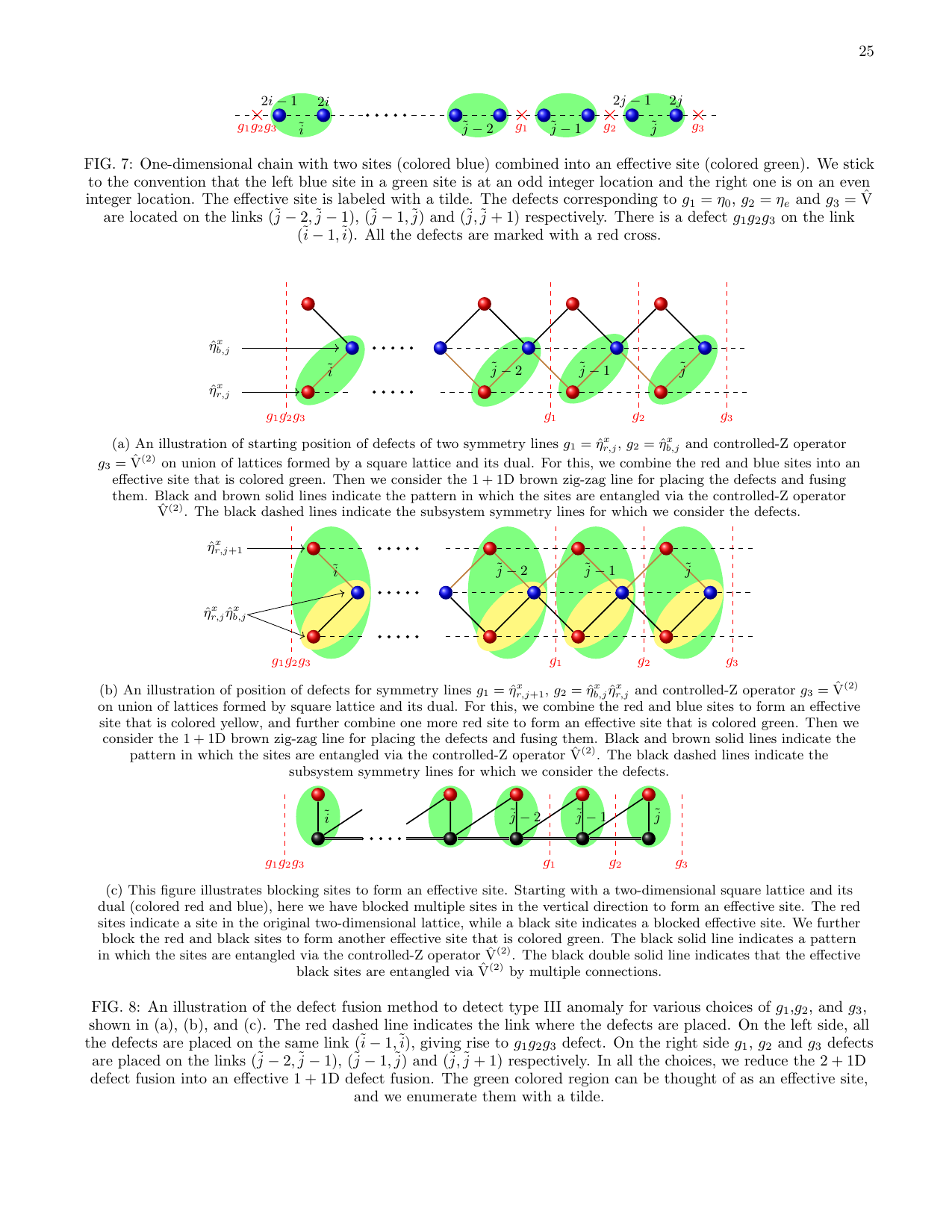}
    \caption{An illustration of position of defects for symmetry lines $g_1=\hat{\eta}^x_{r,j+1}$, $g_2=\hat{\eta}^x_{b,j}\hat{\eta}^x_{r,j}$ and controlled-Z operator $g_3=\hat{\rm V}^{(2)}$ on union of lattices formed by square lattice and its dual. For this, we combine the red and blue sites to form an effective site that is colored yellow, and further combine one more red site to form an effective site that is colored green. Then we consider the $1+1$D brown zig-zag line for placing the defects and fusing them. Black and brown solid lines indicate the pattern in which the sites are entangled via the controlled-Z operator $\hat{\rm V}^{(2)}$. The black dashed lines indicate the subsystem symmetry lines for which we consider the defects.}
    \label{fig:threelinedefectfusionz2xz2}
    \end{subfigure}
    \begin{subfigure}[b]{2\columnwidth}
        \centering
    \includegraphics[scale=1]{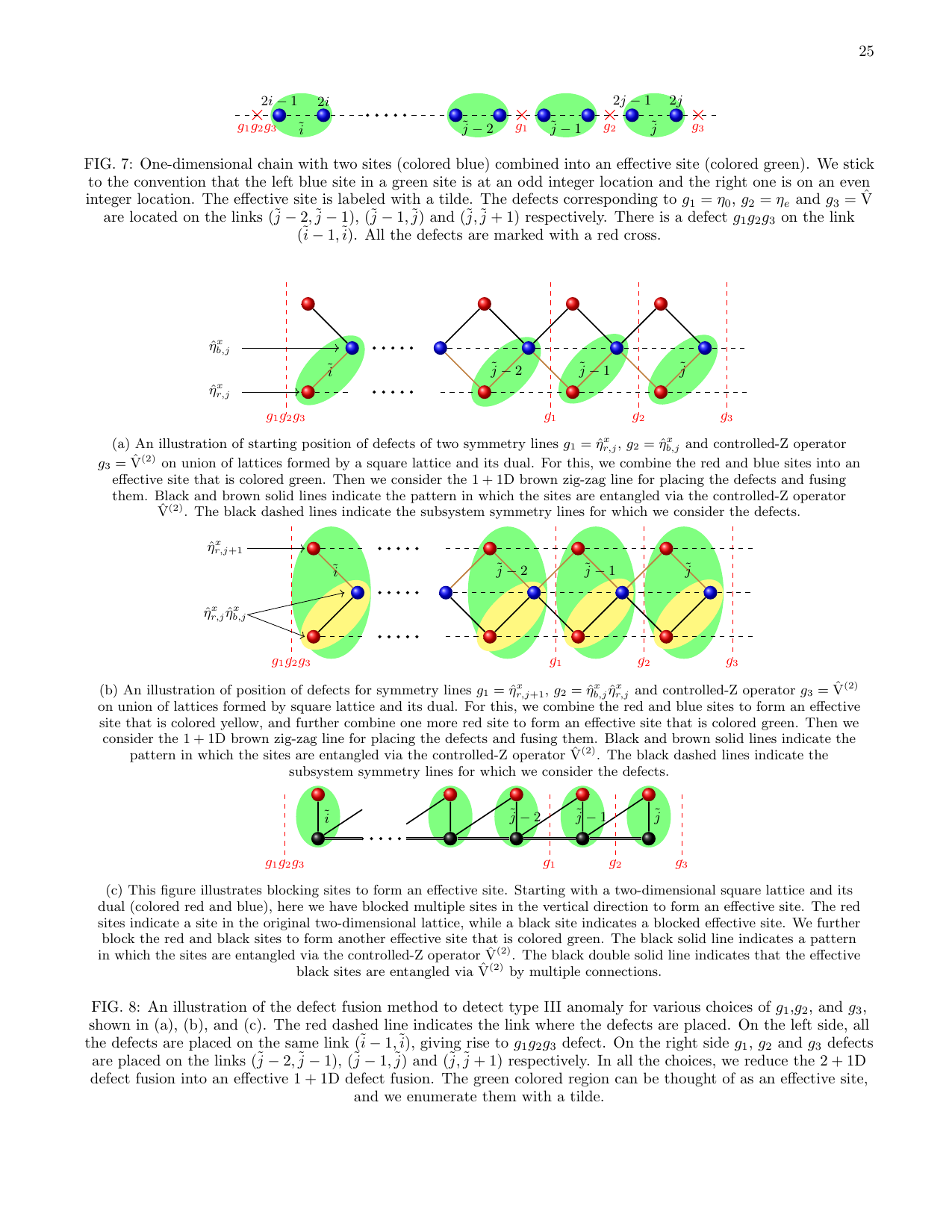}
    \caption{This figure illustrates blocking sites to form an effective site. Starting with a two-dimensional square lattice and its dual (colored red and blue), here we have blocked multiple sites in the vertical direction to form an effective site. The red sites indicate a site in the original two-dimensional lattice, while a black site indicates a blocked effective site. We further block the red and black sites to form another effective site that is colored green. The black solid line indicates a pattern in which the sites are entangled via the controlled-Z operator $\hat{\rm V}^{(2)}$. The black double solid line indicates that the effective black sites are entangled via $\hat{\rm V}^{(2)}$ by multiple connections.}
    \label{fig:multiplelinedefectfusion}
    \end{subfigure}
    \caption{An illustration of the defect fusion method to detect type III anomaly for various choices of $g_1$,$g_2$, and $g_3$, shown in (a), (b), and (c). The red dashed line indicates the link where the defects are placed. On the left side, all the defects are placed on the same link $(\tilde{i}-1,\tilde{i}),$ giving rise to $g_1g_2g_3$ defect. On the right side $g_1$, $g_2$ and $g_3$ defects are placed on the links $(\tilde{j}-2,\tilde{j}-1)$, $(\tilde{j}-1,\tilde{j})$ and $(\tilde{j},\tilde{j}+1)$ respectively. In all the choices, we reduce the $2+1$D defect fusion into an effective $1+1$D defect fusion. The green colored region can be thought of as an effective site, and we enumerate them with a tilde.}
\end{figure*}
\subsubsection{Anomaly from defect fusion method with truncated symmetry on a line segment}
Reference~\cite{Seifnashri:2023dpa} considered defects by truncating the symmetry in a semi-infinite way. 
However, we have considered defects that are obtained by truncating the symmetry to a finite line segment. Hence, we need to check whether fusing the defects as in \eqref{eq:lambdaassociativity} would give the anomaly cocycle as in~\eqref{eq:cocycle}. Here, we prove that this indeed gives rise to the anomaly cocycle.

Let us consider a truncated symmetry operator in the interval $[i,j]$ with reference site $i$. Suppose $U_g$ is a unitary symmetry operator for $g\in G$. We define $U^{[i,j]}_g$ as the truncated symmetry operator in the interval $[i,j]$. Then, according to~\cite{Seifnashri:2023dpa} and~\cite{else2014classifying}
\begin{align}
    U^{[i,j]}_{g_1}U^{[i,j]}_{g_2}=\left(\Omega_L^{i}(g_1,g_2)\Omega_R^{j}(g_1,g_2)\right)^{-1}U_{g_1g_2}^{[i,j]}
    \label{eq:projective symmetryoninterval}
\end{align}
In~\cite{else2014classifying}, $\left(\Omega_L^{i}(g_1,g_2)\Omega_R^{j}(g_1,g_2)\right)^{-1}$ is written as $\Omega_M(g_1,g_2)$ where $M$ is the interval $[i,j]$.  In terms of the unitary operator, the defect Hamiltonian takes the form
\begin{align}
    \mathrm{H}_{g_1,g_2}^{(j-1,j);(j,j+1)}=U_{g_1}^{[i,j-1]}U_{g_2}^{[i,j]}\mathrm{H}(U_{g_2}^{[i,j]})^{-1}(U_{g_1}^{[i,j-1]})^{-1}\, .
    \label{eq:defectHamiltoniandef}
\end{align}
The fusion operators can be explicitly derived from \eqref{eq:defectHamiltoniandef}
\begin{align}
    \lambda^j(g_1,g_2)=U_{g_1g_2}^{[i,j]}(U_{g_2}^{[i,j]})^{-1}(U_{g_1}^{[i,j-1]})^{-1}\, .
    \label{eq:fusionoperator}
\end{align}
Combining \eqref{eq:fusionoperator} and \eqref{eq:projective symmetryoninterval}, we find
\begin{align}\label{eq:omega-ij-lambda-j}
    \Omega_L^{i}(g_1,g_2)\Omega_R^{j}(g_1,g_2)=\lambda^{j}(g_1,g_2)(\lambda^{j}(g_1,1))^{-1}\, .
\end{align}
\begin{widetext}
Using \eqref{eq:lambdaassociativity} and \eqref{eq:cocycle}, we have
\begin{align}
    \omega^j(g_1,g_2,g_3)=\frac{\lambda^j(g_1,g_2g_3)\lambda^{j-1}(g_1,1)\lambda^j(g_2,g_3)(\lambda^j(g_2,1))^{-1}}{\lambda^j(g_1g_2,g_3)(\lambda^j(g_1g_2,1))^{-1}\lambda^j(g_1,g_2)\lambda^{j-1}(g_1,1)}\, ,
\end{align}
which can be rewritten as follows using \eqref{eq:omega-ij-lambda-j}:
\begin{align}
   \omega^j(g_1,g_2,g_3)=\frac{\Omega_L^i(g_1,g_2g_3)\Omega_R^j(g_1,g_2g_3)\lambda^j(g_1,1)\lambda^{j-1}(g_1,1)\Omega_L^i(g_2,g_3)\Omega_R^j(g_2,g_3)}{\Omega_L^i(g_1g_2,g_3)\Omega_R^j(g_1g_2,g_3)\Omega_L^i(g_1,g_2)\Omega_R^j(g_1,g_2)\lambda^j(g_1,1)\lambda^{j-1}(g_1,1)} \,.
   \label{eq:cocycle2}
\end{align}
We assume $U_g^j=\lambda^j(g,1)$ is supported on sites $j$ and $j+1$. On the other hand, we note that 
\begin{align}
    &\lambda^j(g_1,1)\lambda^{j-1}(g_1,1)\Omega_L^i(g_2,g_3)\Omega_R^j(g_2,g_3)\left(\lambda^j(g_1,1)\lambda^{j-1}(g_1,1)\right)^{-1}\nonumber\\
    &\hspace{3cm}=\lambda^j(g_1,1)\lambda^{j-1}(g_1,1)\Omega_R^j(g_2,g_3)\left(\lambda^j(g_1,1)\lambda^{j-1}(g_1,1)\right)^{-1}\Omega_L^i(g_2,g_3)
\end{align}
This follows from the fact that $\Omega_L^i(g_2,g_3)$ is unaffected by operators $\lambda^j(g_1,1)$ and $\lambda^{j-1}(g_1,1)$ as they are supported around site $j$ and $j-1$, and $\Omega_L^i(g_2,g_3)$ is supported around site $i$ that is far away from site $j$. Furthermore,
\begin{align}
   \lambda^j(g_1,1)\lambda^{j-1}(g_1,1)\Omega_R^j(g_2,g_3)\left(\lambda^j(g_1,1)\lambda^{j-1}(g_1,1)\right)^{-1}=U_{g_1}^{[i,j]}\Omega_R^j(g_2,g_3) (U_{g_1}^{[i,j]})^{-1}\, .
\end{align}
Clubbing the above equation with \eqref{eq:cocycle2}, we find
\begin{align}
    &\omega^j(g_1,g_2,g_3)\Omega_L^i(g_1g_2,g_3)\Omega_R^j(g_1g_2,g_3)\Omega_L^i(g_1,g_2)\Omega_R^j(g_1,g_2)\nonumber\\
    &\qquad=\Omega_L^i(g_1,g_2g_3)\Omega_R^j(g_1,g_2g_3)\Omega_L^i(g_2,g_3)U_{g_1}^{[i,j]}\Omega_R^j(g_2,g_3)(U_{g_1}^{[i,j]})^{-1}
\end{align}
We choose $\Omega_L$ such that all the phases are absorbed into $\Omega_R$, i.e., 
\begin{align}
    \Omega_L^i(g_1g_2,g_3)\Omega_L^i(g_1,g_2)=\Omega_L^i(g_1,g_2g_3)\Omega_L^i(g_2,g_3)\, .
\end{align}
Then 
\begin{align}
    \Omega_R^j(g_1g_2,g_3)\Omega_R^j(g_1,g_2)\omega^j(g_1,g_2,g_3)=\Omega_R^j(g_1,g_2g_3)U_{g_1}^{[i,j]}\Omega_R^j(g_2,g_3)(U_{g_1}^{[i,j]})^{-1}\, ,
\end{align}
which is the same as equation (5) in~\cite{else2014classifying}. In other words, the 3-cocycle computed using truncated symmetry operators on a line segment indeed gives the same anomaly as computed using the semi-infinite segment in~\cite{else2014classifying}.
\end{widetext}
\subsubsection{$1+1$D CZX anomaly}
First, let us analyze the $1+1$ dimensional CZX anomaly using this technique. We consider a one-dimensional ring with $2N$ sites. The system has the following symmetries: $\hat{\rm V}=\prod_{i\in\mathbb{Z}_{2N}}{\rm CZ}_{i,i+1}$, $\hat{\eta}_e=\prod_{i\in \mathbb{Z}_{2N}}\hat{X}_{2i}$ and $\hat{\eta}_o=\prod_{i\in \mathbb{Z}_{2N}}\hat{X}_{2i+1}$. 
These form a symmetry group $\mathbb{Z}_2^3$. Now, let us denote the three nontrivial generators of $\mathbb{Z}_2^3$ by $g_1$, $g_2$, and $g_3$. Let us combine the two sites into an effective single site. Sites at $2k-1$ and $2k$ are combined to a single site $\tilde{k}$ for $\tilde{k}=1,...,N$.  Then we can denote the operator $\mathcal{O}$ at site $\tilde{k}$ by
\begin{align}
   \mathcal{O}_{\tilde{k}}^o=\mathcal{O}_{2k-1}\, ,\qquad\mathcal{O}_{\tilde{k}}^e=\mathcal{O}_{2k}\, . 
\end{align}
We set $g_1=\hat{\eta}_o$, $g_2=\hat{\eta}_e$ and $g_3=\hat{\rm V}$. We place the $g_1$, $g_2$ and $g_3$ defects on the links $(\tilde{j}-2,\tilde{j}-1)$, $(\tilde{j}-1,\tilde{j})$ and $(\tilde{j},\tilde{j}+1)$ respectively. We also put the $g_1$, $g_2$, and $g_3$ defects far away from the previously defined defects, all at the same location on the other side. We take its location at the link $(\tilde{i}-1,\tilde{i})$ (see Figure~\ref{fig:1Ddefectfusion} for an illustration). We compute the unitary that fuses the defects:
\begin{align}
\begin{split}
    &\lambda^{\tilde{j}-1}(g_1,g_2)=\hat{X}_{2j-3}\, ,\quad \lambda^{\tilde{j}-1}(g_1,1)=\hat{X}_{2j-3}\, ,\\
    &\lambda^{\tilde{j}}(g_2,g_3)=\hat{Z}_{2i-1}\hat{X}_{2j}\, ,\quad\lambda^{\tilde{j}}(g_1,g_2g_3)=\hat{X}_{2j-1}\hat{Z}_{2j}\, ,\\
    & \lambda^{\tilde{j}}(g_1g_2,g_3)=-\hat{Z}_{2i-1}\hat{X}_{2j-1}\hat{Z}_{2j}\hat{X}_{2j}\, .
\end{split}  
\end{align}
In the above computations, we choose the  truncated symmetry operator on the interval $[\tilde{i},\tilde{j}]$ of $g_2g_3$ to be $\left(\overrightarrow{\prod}_{\tilde{k}=\tilde{i}}^{\tilde{j}-1}CZ_{2k-1,2k}CZ_{2k,2k+1}\hat{X}_{2k}\right)CZ_{2j-1,2j}\hat{X}_{2j}$ and that of $g_1g_2g_3$ to be $\left(\overrightarrow{\prod}_{\tilde{k}=\tilde{i}}^{\tilde{j}-1}CZ_{2k-1,2k}CZ_{2k,2k+1}\hat{X}_{2k-1}\hat{X}_{2k}\right)CZ_{2j-1,2j}\hat{X}_{2j-1}\hat{X}_{2j}$ where the vector arrow on top of the product indicate that the product is taken from left to right with increasing value of $\tilde{k}$. We choose this convention so that we can pull all the $CZ$ operators to the left, and in the infinite lattice limit, when $[\tilde{i},\tilde{j}]$ is taken to $[-\infty,\infty]$, we get the symmetry operator $\hat{\rm V}\hat{\eta}_o\hat{\eta}_e=\hat{\eta}_o\hat{\eta}_e\hat{\rm V}=g_1g_2g_3$.
Substituting the above operators to \eqref{eq:lambdaassociativity}, we find $F^{\tilde{j}}(g_1,g_2,g_3)=-1$. By repeating the calculation for a general element $g_1=\hat{\eta}_o^{i_1}\hat{\eta}_e^{i_2}\hat{\rm V}^{i_3}$, $g_2=\hat{\eta}_o^{j_1}\hat{\eta}_e^{j_2}\hat{\rm V}^{j_3}$ and $g_3=\hat{\eta}_o^{k_1}\hat{\eta}_e^{k_2}\hat{\rm V}^{k_3}$, we find $F^{\tilde{j}}(g_1,g_2,g_3)=(-1)^{i_1j_2k_3}$. Furthermore, we find $\omega^{\tilde{j}}(g_1,g_2,g_3)=(-1)^{i_1j_2k_3}$, which is a non-trivial 3-cocycle, which in turn indicates that there is a type III anomaly between the three symmetries.     
\subsubsection{\texorpdfstring{$\mathbb{Z}_2\times\mathbb{Z}_2$}{Lg} subsystem symmetry}
There are many possible combinations of anomalous symmetries. First, let us list a few examples, then state the general result. In all the examples below, we truncate the symmetry to a cylinder whose horizontal coordinates are denoted with a tilde. 
\begin{enumerate}
    \item $g_1=\hat{\eta}_{r,j}^x$, $g_2=\hat{\eta}_{b,j}^x$, $g_3=\hat{\rm V}^{(2)}$: This case is equivalent to the $1+1$D case that we discussed before, if we consider the one-dimensional zigzag line connecting the sites in the red and blue sublattice (see Figure~\ref{fig:twolinesinz2timesz2}). The product of controlled-Z along this zigzag line is contained in the bigger product of controlled-Z ($\hat{\rm V}^{(2)}$). 
    Then $g_1$, $g_2$, and $g_3$ have a type III anomaly that can also be verified by an explicit calculation using defect fusion. Therefore, $\hat{\eta}_{r,j}^x\hat{\eta}_{b,j}^x\hat{\rm V}^{(2)}$ is an anomalous symmetry with the anomaly originating from CZX anomaly in $1+1$D.
    \item $g_1=\hat{\eta}_{r,j+1}^x$, $g_2=\hat{\eta}_{r,j}^x\hat{\eta}_{b,j}^x$, $g_3=\hat{\rm V}^{(2)}$: In this case, the type III anomaly can be detected by an explicit computation of defect fusion. One can reduce this to a calculation of defect fusion along a one-dimensional line by blocking sites to form an effective site as given in Figure~\ref{fig:threelinedefectfusionz2xz2}. Again, we find $F^{\tilde{j}}(g_1,g_2,g_3)=-1$ and therefore $\omega^{\tilde{j}}(g_1,g_2,g_3)=-1$, indicating that there is a type III anomaly between the three symmetries.  
\item $g_1=\hat{\eta}_{r,j+1}^x$, $g_2=\prod_{l=j_0}^j\hat{\eta}_{r,l}^x\hat{\eta}_{b,l}^x$, $g_3=\hat{\rm V}^{(2)}$: Again we can reduce this to an effective one dimensional problem by blocking sites to form an effective site as given in Figure~\ref{fig:multiplelinedefectfusion}. It is straightforward to verify that $\omega^{\tilde{j}}(g_1,g_2,g_3)=-1$, indicating type III anomaly.
\item $g_1=\hat{\eta}_{r,j+1}^x$, $g_2=\hat{\eta}_{b,j_0-1}^x\prod_{l=j_0}^j\hat{\eta}_{r,l}^x\hat{\eta}_{b,l}^x$, $g_3=\hat{\rm V}^{(2)}$ for $j\neq j_0-1\,\text{mod}\, L_y$: Again we can reduce this to an effective one dimensional problem by blocking sites to form an effective site as given in Figure~\ref{fig:multiplelinedefectfusion}. It is straightforward to verify that $\omega^{\tilde{j}}(g_1,g_2,g_3)=-1$, indicating type III anomaly.
\end{enumerate}
In general, from defect fusion method, it can be argued that $\hat{\rm V}^{(2)}\prod_{k \in \mathcal{K}}\hat{\eta}^x_{r,k}\prod_{\ell \in \mathcal{L}}\hat{\eta}^x_{b,\ell}$ is anomalous when the integer sets $\mathcal{K}$ and $\mathcal{L}$, subsets of $\{1,...,L_y\}$, satisfy that $\emptyset\neq\mathcal{K}^+\cap \mathcal{L}\subsetneq \{1,...,L_y\}$ or $\emptyset\neq\mathcal{K}^-\cap \mathcal{L}\subsetneq \{1,...,L_y\}$, where $\mathcal{K}^+ = \mathcal{K}$  and $\mathcal{K}^- = \{k-1 \text{ mod }L_y\,| \, k \in \mathcal{K}\}$.
\subsubsection{\texorpdfstring{$\mathbb{Z}_2$}{Lg} subsystem symmetry}
As in the $\mathbb{Z}_2\times\mathbb{Z}_2$ case, we consider a few combinations of anomalous symmetries.
\begin{enumerate}
    \item $g_1=\hat{\eta}_{k}^x$, $g_2=\hat{\rm V}_{\mathbb{Z}_2}^{(2)}$: In $\hat{\rm V}_{\mathbb{Z}_2}^{(2)}$, there is a product of controlled-Z operators along the line in the $x$-direction at $y=k$. The product of this controlled-Z with $\hat{\eta}_{k}^x$ is anomalous and originates from the CZX anomaly in $1+1$ dimension. Therefore, $\hat{\rm V}_{\mathbb{Z}_2}^{(2)}\hat{\eta}_k^x$ is also an anomalous symmetry.
    \item $g_1=\hat{\eta}_{k+1}^x$,$g_2=\hat{\eta}_{k}^x$, $g_3=\hat{\rm V}_{\mathbb{Z}_2}^{(2)}$: Here, we block two sites into an effective site as shown in the Figure~\ref{fig:twolinedefectfusionz2}. The brown zig-zag line in \ref{fig:twolinedefectfusionz2} indicates the effective $1+1$D line in which we fuse the defects. We repeat the calculation of defect fusion and find that $\omega^{\tilde{j}}(g_1,g_2,g_3)=-1$, indicating a type III anomaly between the three symmetries.
    \item $g_1=\hat{\eta}_{k}^x$,$g_2=\prod_{i=l}^{k-1}\hat{\eta}_{i}^x$, $g_3=\hat{\rm V}_{\mathbb{Z}_2}^{(2)}$: The calculation proceeds in the same manner as before by blocking sites to form an effective site, as shown in Figure~\ref{fig:multiplelinedefectfusionZ2}. We again find $\omega^{\tilde{j}}(g_1,g_2,g_3)=-1$, which indicates a type III anomaly. 
\end{enumerate}
In general, from the defect fusion method, it can be argued that $\hat{\rm V}^{(2)}_{\mathbb{Z}_2}\prod_{k \in \mathcal{K}}\hat{\eta}^x_{k}$ is anomalous when the integer set $\mathcal{K}$ satisfies $\emptyset\neq\mathcal{K}\subsetneq \{1,...,L_y\}$. 
\begin{figure*}
\begin{subfigure}[b]{2\columnwidth}
    \centering
    \includegraphics[scale=1]{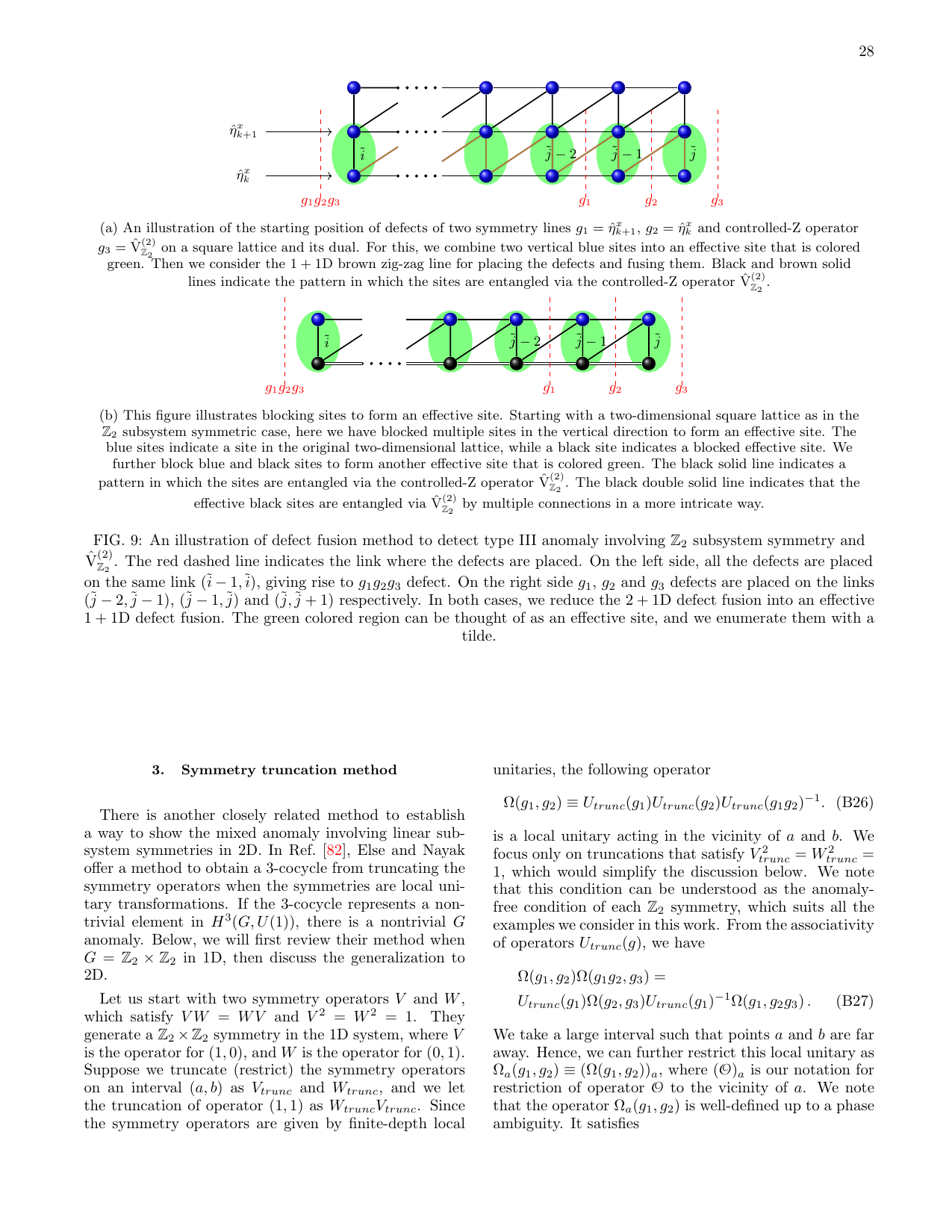}
    \caption{An illustration of the starting position of defects of two symmetry lines 
    $g_1=\hat{\eta}^x_{k+1}$, $g_2=\hat{\eta}^x_{k}$ and controlled-Z operator $g_3=\hat{\rm V}_{\mathbb{Z}_2}^{(2)}$ on a square lattice and its dual. For this, we combine two vertical blue sites into an effective site that is colored green. Then we consider the $1+1$D brown zig-zag line for placing the defects and fusing them. Black and brown solid lines indicate the pattern in which the sites are entangled via the controlled-Z operator $\hat{\rm V}_{\mathbb{Z}_2}^{(2)}$.}
    \label{fig:twolinedefectfusionz2}
    \end{subfigure}
    \begin{subfigure}[b]{2\columnwidth}
    \centering
    \includegraphics[scale=1]{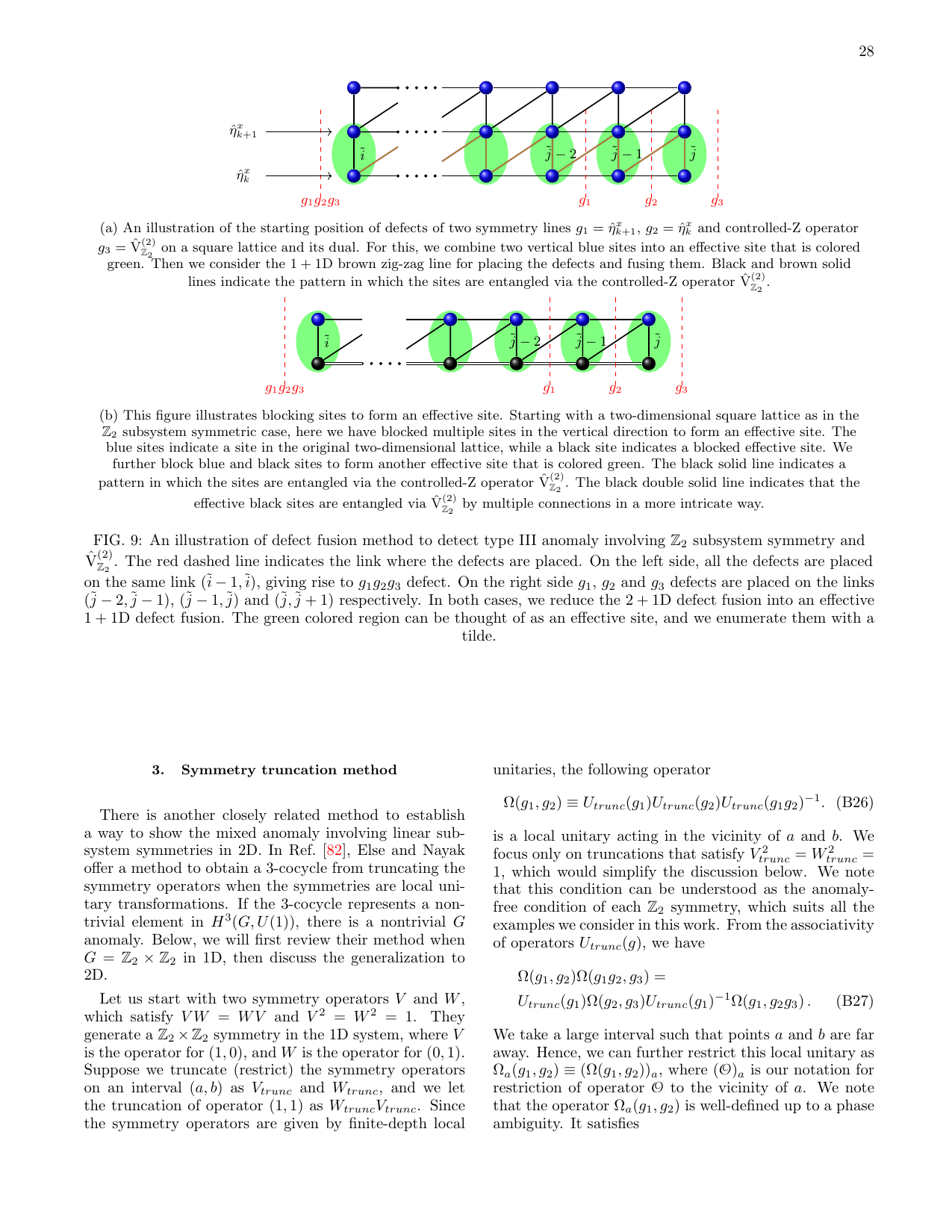}
    \caption{This figure illustrates blocking sites to form an effective site. Starting with a two-dimensional square lattice as in the $\mathbb{Z}_2$ subsystem symmetric case, here we have blocked multiple sites in the vertical direction to form an effective site. The blue sites indicate a site in the original two-dimensional lattice, while a black site indicates a blocked effective site. We further block blue and black sites to form another effective site that is colored green. The black solid line indicates a pattern in which the sites are entangled via the controlled-Z operator $\hat{\rm V}_{\mathbb{Z}_2}^{(2)}$. The black double solid line indicates that the effective black sites are entangled via $\hat{\rm V}_{\mathbb{Z}_2}^{(2)}$ by multiple connections in a more intricate way.}
    \label{fig:multiplelinedefectfusionZ2}
\end{subfigure}
\caption{An illustration of defect fusion method to detect type III anomaly involving $\mathbb{Z}_2$ subsystem symmetry and $\hat{\rm V}^{(2)}_{\mathbb{Z}_2}$. The red dashed line indicates the link where the defects are placed. On the left side, all the defects are placed on the same link $(\tilde{i}-1,\tilde{i})$, giving rise to $g_1g_2g_3$ defect. On the right side $g_1$, $g_2$ and $g_3$ defects are placed on the links $(\tilde{j}-2,\tilde{j}-1)$, $(\tilde{j}-1,\tilde{j})$ and $(\tilde{j},\tilde{j}+1)$ respectively. In both cases, we reduce the $2+1$D defect fusion into an effective $1+1$D defect fusion. The green colored region can be thought of as an effective site, and we enumerate them with a tilde.}
\end{figure*}

\subsection{Symmetry truncation method}
\label{app:mixedAnomaly}
There is another closely related method 
to establish a way to show the mixed anomaly involving linear subsystem symmetries in 2D. In Ref.~\cite{else2014classifying}, Else and Nayak offer a method to obtain a 3-cocycle from truncating the symmetry operators when the symmetries are local unitary transformations. If the 3-cocycle represents a non-trivial element in $H^3(G,U(1))$, there is a nontrivial $G$ anomaly. Below, we will first review their method when $G=\mathbb{Z}_2\times \mathbb{Z}_2$ in 1D, then discuss the generalization to 2D.

Let us start with two symmetry operators $V$ and $W$, which satisfy $VW=WV$ and $V^2=W^2=1$. They generate a $\mathbb{Z}_2\times \mathbb{Z}_2$ symmetry in the 1D system, where $V$ is the operator for $(1,0)$, and $W$ is the operator for $(0,1)$. Suppose we truncate (restrict) the symmetry operators on an interval $(a,b)$ as $V_{trunc}$ and $W_{trunc}$, and we let the truncation of operator $(1,1)$ as $W_{trunc}V_{trunc}$. Since the symmetry operators are given by finite-depth local unitaries, the following operator
\begin{equation}
 \Omega(g_1,g_2)\equiv U_{trunc}(g_1)U_{trunc}(g_2)U_{trunc}(g_1 g_2)^{-1}. 
\end{equation}
is a local unitary acting in the vicinity of $a$ and $b$. We focus only on truncations that satisfy $V_{trunc}^2=W_{trunc}^2=1$, which would simplify the discussion below. We note that this condition can be understood as the anomaly-free condition of each $\mathbb{Z}_2$ symmetry, which suits all the examples we consider in this work. From the associativity of operators $U_{trunc}(g)$, we have 
\begin{align}
    &\Omega(g_1,g_2)\Omega(g_1 g_2,g_3)= \nonumber
    \\
    &  U_{trunc}(g_1)\Omega(g_2,g_3)U_{trunc}(g_1)^{-1}\Omega(g_1,g_2 g_3)\,.
\end{align}
We take a large interval such that points $a$ and $b$ are far away. Hence, we can further restrict this local unitary as $\Omega_a (g_1,g_2) \equiv (\Omega(g_1,g_2))_a$, where $(\mathcal{O})_a$ is our notation for restriction of operator $\mathcal{O}$ to the vicinity of $a$. We note that the operator $\Omega_a (g_1,g_2)$ is well-defined up to a phase ambiguity. It satisfies
\begin{widetext}
\begin{align}
    &\Omega_a(g_1,g_2)\Omega_a(g_1 g_2,g_3)=\omega(g_1,g_2,g_3)U_{trunc}(g_1)\Omega_a(g_2,g_3)U_{trunc}(g_1)^{-1}\Omega_a(g_1,g_2 g_3)\,,
\end{align}
\end{widetext}
where $\omega(g_1,g_2,g_3)$ is a 3-cocycle. 
 
According to the above definitions, we can compute six components of the 3-cocycle when the arguments are generators of the group $\mathbb{Z}_2\times \mathbb{Z}_2$. There are only two non-trivial ones,
\begin{subequations}
\begin{align}
    \omega_1\equiv \omega((1,0),(1,0),(0,1))&=(V_{trunc} B_a V_{trunc} B_a)^{-1},\\ \omega_2\equiv \omega((0,1),(1,0),(0,1))&=(W_{trunc} B_a W_{trunc} B_a)^{-1},
\end{align}
\end{subequations}
where $B_a\equiv (V_{trunc}W_{trunc}V_{trunc}^{-1}W_{trunc}^{-1})_a$.
Furthermore, because of our extra condition on the truncations, the operators $\Omega_a(g_1,g_2)$ are either $B_a$ or trivial. Therefore, the potential coboundary ambiguity of the 3-cocycle is entirely due to the phase ambiguity of the operator $B_a$. Whenever $\omega_1=\omega_2$, we can always redefine the operator $B_a$ by a phase to make both phases trivial. Whenever $\omega_1\neq \omega_2$, we can never make the 3-cocycle trivial by redefining $B_a$.

It can be shown that when $\omega_1\neq \omega_2$, there is a mixed anomaly between $V$ and $W$ symmetries. The idea of the proof is: suppose there is a short-range entangled state $\ket{\psi}$ that is symmetric under both $V$ and $W$, then we can redefine the operator truncation such that $V_{trunc}$ and $W_{trunc}$ both stabilize the state, and still satisfy $V_{trunc}^2=W_{trunc}^2=1$. As a result, we can choose the operator $B_a$ out of the new truncated operators such that it also stabilizes the state $\ket{\psi}$. Under this new truncation, by applying the operators above on the state $\ket{\psi}$, we can show that the phases $\omega_1=\omega_2=1$. In the meantime, it can also be shown that, when redefining the truncated operators by an extra unitary on the endpoints, the phases $\omega_1$ and $\omega_2$ remain invariant. Since the redefinition of $B_a$ can only change $\omega_1$ and $\omega_2$ simultaneously, we have a contradiction. Therefore, there could not be any short-range entangled symmetric state $\ket{\psi}$, i.e., there is a mixed anomaly (see the appendix in Ref.~\cite{zhang2024long} for more details of the proof).

Now let us consider a 2D system, on which there is a symmetry operator $V$ defined as a finite-depth local unitary in the 2D bulk and a symmetry operator $V$ defined as a finite-depth local unitary on a line-like subsystem, which satisfy $VW=WV$ and $V^2=W^2=1$. They generate a $\mathbb{Z}_2$ 0-form symmetry and a  $\mathbb{Z}_2$ line-like symmetry. In the systems considered in this work, the line-like operators form a subsystem $\mathbb{Z}_2$ symmetry, and we are taking one of the symmetry operators. 

Suppose we truncate (restrict) the $W$ symmetry operator on an line with far apart endpoints $a$ and $b$ as $W_{trunc}$, while we truncate the 0-form symmetry operator in a region $R$ as $V_{trunc}$, with boundary $\partial R$ far away from both $a$ and $b$. The argument below also works for a 1-form symmetry. We again focus only on truncations that satisfy $V_{trunc}^2=W_{trunc}^2=1$. Just as in the 1D situation, we can consider the operators $\Omega(g_1,g_2)$ for these two truncated symmetry operators, which factorizes as a product of local operators in the vicinities of $a$ and $b$ respectively. For the part that is restricted to the vicinity of $a$, we can define a projective phase $\omega(g_1, g_2, g_3)$, which satisfies the 3-cocycle condition due to associativity. When this 3-cocycle is non-trivial up to a coboundary given by the phase ambiguity of operators restriction, there is a mixed-anomaly between $V$ and $W$ symmetries.

Since symmetries $V$ and $W$ are drastically different forms, it might not make sense to think of their product $VW$ as a well-defined symmetry. However, we note that the non-trivial components of 3-cocycle, $\omega_{V,V,W}$ and $\omega_{W,V,W}$ can be understood as some braiding and fusion data between the topological defect of $V$ and $W$ symmetries.

\subsection{Anomaly-free symmetries}
\subsubsection{$\hat{\rm V}^{(d)}\hat{\eta}_b\hat{\eta}_r$ is anomaly-free}\label{sec:nonanomalous}

We establish this using the defect Hamiltonian method. We consider a Hamiltonian $\rm H$ symmetric under all the symmetries $\hat{\rm V}^{(d)}$, $\hat{\eta}_b$, and $\hat{\eta}_r$. Consider the truncated semi-infinite symmetry operator of $\hat{\eta}_r$ denoted by $\hat{\eta}_r^{trunc}$. This would introduce a defect to the Hamiltonian and we call the defect Hamiltonian as $\rm H^{\hat{\eta}_r}$. The modified symmetries for this Hamiltonain $\rm H^{\hat{\eta}_r}$ are 
$\tilde{\hat{\eta}}_b=\hat{\eta}_r^{trunc}\tilde{\hat{\eta}}_b(\hat{\eta}_r^{trunc})^{-1}=\hat{\eta}_b$ 
and $\tilde{\hat{\mathrm{V}}}^{(d)}=\hat{\eta}_r^{trunc}\hat{\mathrm{V}}^{(d)}(\hat{\eta}_r^{trunc})^{-1}$. The modified symmetry operator $\tilde{\hat{\mathrm{V}}}^{(d)}\sim\hat{ \mathrm{V}}^{(d)}\prod Z_{v_b}$,  where the product is over an even number of $Z$ operators. Then, $[\tilde{\hat{\eta}}_b,\tilde{\hat{\mathrm{V}}}^{(d)}]=0$, and we find that there is no projective algebra. Similarly, we can argue that if we truncate $\hat{\eta}_b^{trunc}$, the modified symmetries $[\tilde{\hat{\eta}}_r,\tilde{\hat{\mathrm{V}}}^{(d)}]=0$. Now, we argue that in the presence of the $\hat{\rm V}^{(d)}$ defect, i.e., applying $(\hat{\rm V}^{(d)})^{trunc}$ on half-space, the modified symmetries $[\tilde{\hat{\eta}}_b,\tilde{\hat{\eta}}_r]=0$.  Conjugating with $(\hat{\rm V}^{(d)})^{trunc}$ on $\hat{\eta}_r$ or $\hat{\eta}_b$ produces products of $Z_{v_b}$ or products of $Z_{v_r}$ operators. However, we get either an odd number of $Z$ operators or an even number of $Z$ operators on both lattices. Then $[\tilde{\hat{\eta}}_b,\tilde{\hat{\eta}}_r]=0$, and there is no projective algebra. 

As another method, in $2+1$D, we notice that the symmetry under consideration here fits into the criterion discussed in
Ref.~\cite{else2014classifying}, i.e. it contains an on-site shift part ($\hat{\eta}_r \hat{\eta}_b$) and a non-on-site diagonal part ($\hat{\rm V}^{(2)}$). After restricting this symmetry operator $U$ in a region as $U_{trunc}$, an operator supported on the boundary of this region can be defined as
\beq
    \mathcal{N}^{(2)}\equiv U_{trunc}(g_1)U_{trunc}(g_2)U_{trunc}(g_1 g_2)^{-1}. 
\eeq
Further restricting this $\mathcal{N}^{(2)}$ operator on an open line gives rise to $\mathcal{N}^{(1)}$  that can be decomposed into a product of operators supported on the two endpoints of the open line. Another restriction of $\mathcal{N}^{(1)}$ on one of the endpoints is eventually a point-like operator, the charge of which under $U$ is a component of the 4-cocycle  that characterizes the potential anomaly. (We note that the 4-cocyle classifies whether the phase in consideration is the boundary of a (3+1)D nontrivial SPT phase.) 

If we now specialize to the $U=\hat{\mathrm{V}}^{(2)}\hat{\eta}_r\hat{\eta}_b$ symmetry operator in a square region $R$, then $\mathcal{N}^{(2)} = \prod_{v\in \partial R} Z_v$. Another restriction of $\mathcal{N}^{(2)}$ gives rise to $\mathcal{N}^{(1)}$ being just a phase factor. It is not charged under $U$, thus resulting in a trivial 4-cocycle. Therefore, we can conclude that the symmetry is anomaly-free. Such an argument generalizes to the higher-dimension of $U=\hat{\rm V}^{(d)}\hat{\eta}_r\hat{\eta}_b$ and leads to the conclusion that the higher-dimensional versions are also anomaly-free.

Now we write down an explicit Hamiltonian and it is the unique short-range entangled ground state that is symmetric with respect to the product $\hat{\mathrm{V}}\hat{\eta}_r\hat{\eta}_b$ in $2+1$D. Let us consider a bi-partite lattice colored red and blue in Figure~\ref{fig:short-range-entangled-state}. Consider the Hamiltonian
\begin{widetext}
\begin{align}
    \mathrm{H}_{XX-ZZ}&=-\sum_{\substack{v_b=(i+\frac{1}{2},j+\frac{1}{2})\\
    i+j=0\text{ mod }2}}\raisebox{-15pt} {\begin{tikzpicture}[scale=0.6, auto,
    redBall/.style={circle, ball color = red,
    radius=0.1em},blueBall/.style={circle, ball color = blue,
    radius=0.4em}, decoration={markings,
  mark=between positions 0 and 1 step 6pt
  with { \draw [fill] (0,0) circle [radius=0.8pt];}}]
        \fill[green!50] (0.7,0) ellipse (1.7 and 0.5);
        \node[draw, blueBall] (b0) at (0,0) {};
        \node[draw, blueBall, right of=b0] (b1) {};
        \node[] (a0) at (0,-0.5) {$X_{v_b}$};
        \node[] (a1) at (2,-0.5) {$X_{v_b+(1,0)}$};
    \end{tikzpicture}}-\sum_{\substack{v_b=(i+\frac{1}{2},j+\frac{1}{2})\\
    i+j=0\text{ mod }2}}\raisebox{-15pt} {\begin{tikzpicture}[scale=0.6, auto,
    redBall/.style={circle, ball color = red,
    radius=0.1em},blueBall/.style={circle, ball color = blue,
    radius=0.4em}, decoration={markings,
  mark=between positions 0 and 1 step 6pt
  with { \draw [fill] (0,0) circle [radius=0.8pt];}}]
        \fill[green!50] (0.7,0) ellipse (1.7 and 0.5);
        \node[draw, blueBall] (b0) at (0,0) {};
        \node[draw, blueBall, right of=b0] (b1) {};
        \node[] (a0) at (0,-0.5) {$Z_{v_b}$};
        \node[] (a1) at (2,-0.5) {$Z_{v_b+(1,0)}$};
    \end{tikzpicture}}-\sum_{\substack{v_b=(i+\frac{1}{2},j+\frac{1}{2})\\
    i+j=0\text{ mod }2}}\raisebox{-25pt} {\begin{tikzpicture}[scale=0.6, auto,
    redBall/.style={circle, ball color = red,
    radius=0.1em},blueBall/.style={circle, ball color = blue,
    radius=0.4em}, decoration={markings,
  mark=between positions 0 and 1 step 6pt
  with { \draw [fill] (0,0) circle [radius=0.8pt];}}]
        \fill[green!50] (0.7,0) ellipse (1.7 and 0.5);
        \fill[violet!50] (-1,0) ellipse (0.5 and 1.7);
        \fill[violet!50] (3,0) ellipse (0.5 and 1.7);
        \node[draw, blueBall] (b0) at (0.1,0) {};
        \node[draw, blueBall, right of=b0] (b1) {};
        \node[draw, redBall] (r0) at (-1,-1) {};
        \node[draw, redBall] (r1) at (-1,1) {};
        \node[draw, redBall] (r2) at (3,-1) {};
        \node[draw, redBall] (r3) at (3,1) {};
        \node[] (s0) at (-1,-1.5) {$Z_{v_r}$};
        \node[] (s1) at (-1.3,1.5) {$Z_{v_r+(0,1)}$};
        \node[] (s2) at (3.3,-1.5) {$Z_{v_r+(2,0)}$};
        \node[] (s3) at (3.3,1.5) {$Z_{v_r+(2,1)}$};
        \node[] (a0) at (0,-0.5) {$X_{v_b}$};
        \node[] (a1) at (2,-0.5) {$X_{v_b+(1,0)}$};
    \end{tikzpicture}}\nonumber\\
    &\hspace{3cm}-\sum_{\substack{v_r=(i,j)\\
    i+j=0\text{ mod }2}}\raisebox{-25pt} {\begin{tikzpicture}[scale=0.6, auto,
    redBall/.style={circle, ball color = red,
    radius=0.1em},blueBall/.style={circle, ball color = blue,
    radius=0.4em}, decoration={markings,
  mark=between positions 0 and 1 step 6pt
  with { \draw [fill] (0,0) circle [radius=0.8pt];}}]
        \fill[violet!50] (0,-0.7) ellipse (0.5 and 1.7);
        \node[draw, redBall] (r0) at (0,0) {};
        \node[draw, redBall, below of=r0] (r1) {};
        \node[] (s0) at (-0.1,-0.5) {$X_{v_r+(0,1)}$};
        \node[] (s1) at (0,-2.3) {$X_{v_r}$};
    \end{tikzpicture}}-\sum_{\substack{v_r=(i,j)\\
    i+j=0\text{ mod }2}}\raisebox{-25pt} {\begin{tikzpicture}[scale=0.6, auto,
    redBall/.style={circle, ball color = red,
    radius=0.1em},blueBall/.style={circle, ball color = blue,
    radius=0.4em}, decoration={markings,
  mark=between positions 0 and 1 step 6pt
  with { \draw [fill] (0,0) circle [radius=0.8pt];}}]
        \fill[violet!50] (0,-0.7) ellipse (0.5 and 1.7);
        \node[draw, redBall] (r0) at (0,0) {};
        \node[draw, redBall, below of=r0] (r1) {};
        \node[] (s0) at (-0.1,-0.5) {$Z_{v_r+(0,1)}$};
        \node[] (s1) at (0,-2.3) {$Z_{v_r}$};
    \end{tikzpicture}}-\sum_{\substack{v_r=(i,j)\\
    i+j=0\text{ mod }2}}\raisebox{-50pt} {\begin{tikzpicture}[scale=0.6, auto,
    redBall/.style={circle, ball color = red,
    radius=0.1em},blueBall/.style={circle, ball color = blue,
    radius=0.4em}, decoration={markings,
  mark=between positions 0 and 1 step 6pt
  with { \draw [fill] (0,0) circle [radius=0.8pt];}}]
        \fill[violet!50] (0,-0.8) ellipse (0.5 and 1.7);
        \fill[green!50] (0,1) ellipse (1.7 and 0.5);
        \fill[green!50] (0,-3) ellipse (1.7 and 0.5);
        \node[draw, redBall] (r0) at (0,0) {};
        \node[draw, redBall, below of=r0] (r1) {};
        \node[draw, blueBall] (b0) at (-1,1) {};
        \node[draw, blueBall, right of=b0] (r1) {};
        \node[draw, blueBall] (b2) at (-1,-3) {};
        \node[draw, blueBall, right of=b2] (b3) {};
        \node[] (s0) at (-0.1,-0.5) {$X_{v_r+(0,1)}$};
        \node[] (s1) at (0,-2.3) {$X_{v_r}$};
        \node[] (a0) at (-1.2,1.5) {$Z_{v_b+(0,2)}$};
        \node[] (a1) at (1.2,1.5){$Z_{v_b+(1,2)}$};
        \node[] (a2) at (-1,-3.5) {$Z_{v_b}$};
        \node[] (a3) at (1,-3.5) {$Z_{v_b+(1,0)}$};
    \end{tikzpicture}}\,,
    \label{eq:HXXZZ}
\end{align}
where the summed blue vertices are to the left of the pair of blue vertices in the green colored ellipse and the summed red vertices are to the bottom of the pair of red vertices in the violet colored ellipse. This Hamiltonian is symmetric under $\hat{\rm V}^{(2)}$, $\hat{\eta}_r$, and $\hat{\eta}_b$.
Let us denote the ground state of this Hamiltonian by $\ket{\Psi}$. It can be easily verified that $\ket{\Psi}$ is a product of Bell states of the form $\frac{1}{\sqrt{2}}(\ket{00}+\ket{11})$ on a pair of sites contained inside each colored ellipses, and is the unique ground sate. Hence, $\ket{\Psi}$ is symmetric with respect to $\hat{\rm V}^{(2)}\hat{\eta}_r\hat{\eta}_b$. Since we realized a Hamiltonian whose ground state is short-range entangled and respect the symmetry $\hat{\rm V}^{(2)}\hat{\eta}_r\hat{\eta}_b$, we conclude that this symmetry is not anomalous. 
Note that this construction works for other Bell states, by changing the signs of $XX$ and $ZZ$, and can be generalized to higher dimensions.
\subsubsection{$\hat{\rm V}_{\mathbb{Z}_2}^{(d)}\hat{\eta}$ is anomaly-free}\label{sec:nonanomalousZ2}
First, we construct a short-range entangled state in $2+1$D that is symmetric under $\hat{\rm V}_{\mathbb{Z}_2}^{(2)}\hat{\eta}$. Let us consider the product of GHZ state on the green ellipses as shown in Figure~\ref{fig:Z2symmetricstate}. This state can be easily verified to be symmetric under $\hat{\rm V}_{\mathbb{Z}_2}^{(2)}\hat{\eta}$. Since it is a product of GHZ states, it is a short-range entangled state. It is straightforward to write down a Hamiltonian whose unique ground states are the ones given in Figure~\ref{fig:Z2symmetricstate}. This type of construction can also be generalized to higher dimensions where the state $\hat{\rm V}_{\mathbb{Z}_2}^{(d)}\hat{\eta}$. 
\begin{figure}
    \centering
    \includegraphics[scale=1]{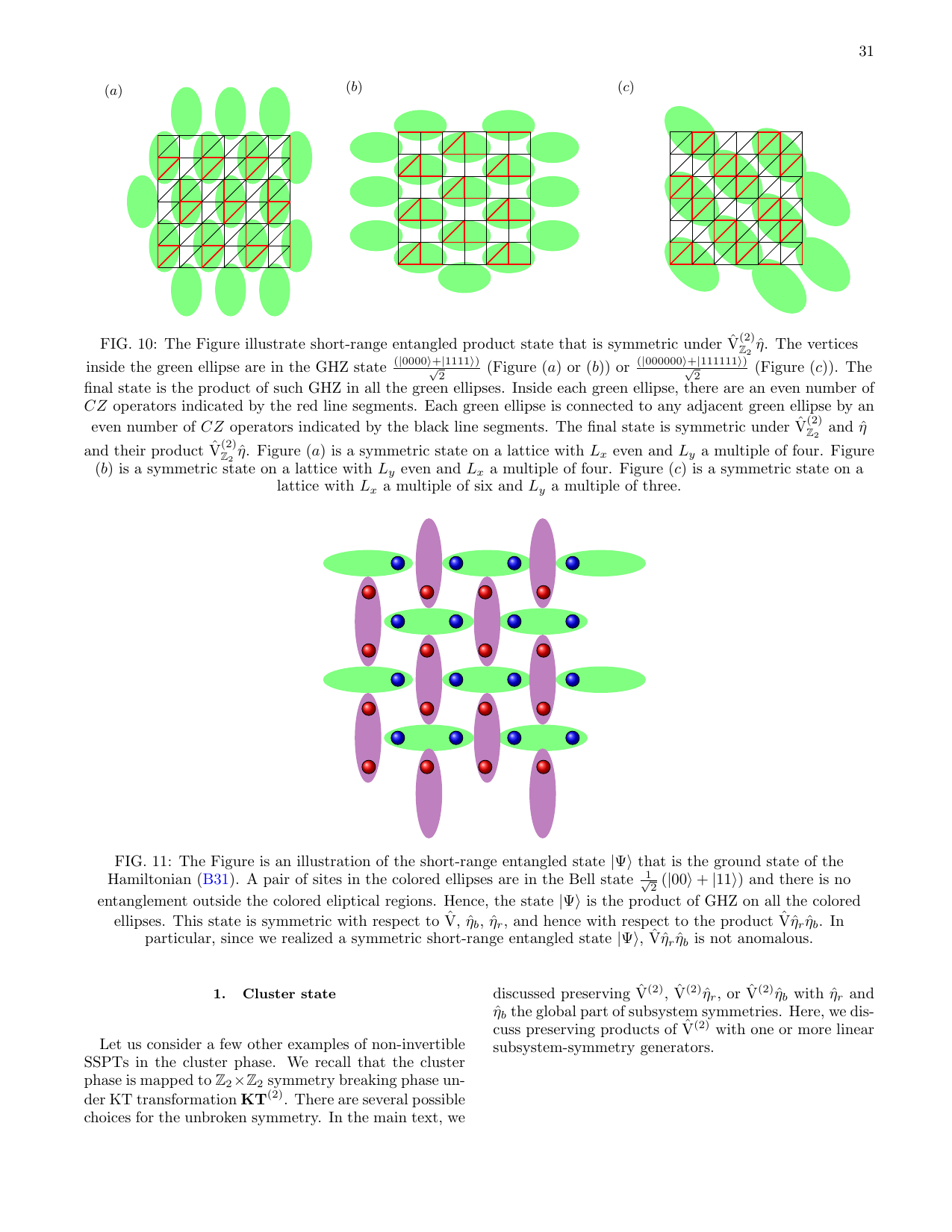}
    \caption{The Figure illustrate short-range entangled product state that is symmetric under $\hat{\rm V}_{\mathbb{Z}_2}^{(2)}\hat{\eta}$. The vertices inside the green ellipse are in the GHZ state $\frac{(\ket{0000}+\ket{1111})}{\sqrt{2}}$ (Figure $(a)$ or $(b)$) or $\frac{(\ket{000000}+\ket{111111})}{\sqrt{2}}$ (Figure $(c)$). The final state is the product of such GHZ in all the green ellipses. Inside each green ellipse, there are an even number of $CZ$ operators indicated by the red line segments. Each green ellipse is connected to any adjacent green ellipse by an even number of $CZ$ operators indicated by the black line segments. The final state is symmetric under $\hat{\rm V}_{\mathbb{Z}_2}^{(2)}$ and $\hat{\eta}$ and their product $\hat{\rm V}_{\mathbb{Z}_2}^{(2)}\hat{\eta}$. Figure $(a)$ is a symmetric state on a lattice with $L_x$ even and $L_y$ a multiple of four. Figure $(b)$ is a symmetric state on a lattice with $L_y$ even and $L_x$ a multiple of four.  Figure $(c)$ is a symmetric state on a lattice with $L_x$ a multiple of six and $L_y$ a multiple of three.  }
    \label{fig:Z2symmetricstate}
\end{figure}
\end{widetext}
\begin{figure}
    \centering
    \includegraphics[scale=1]{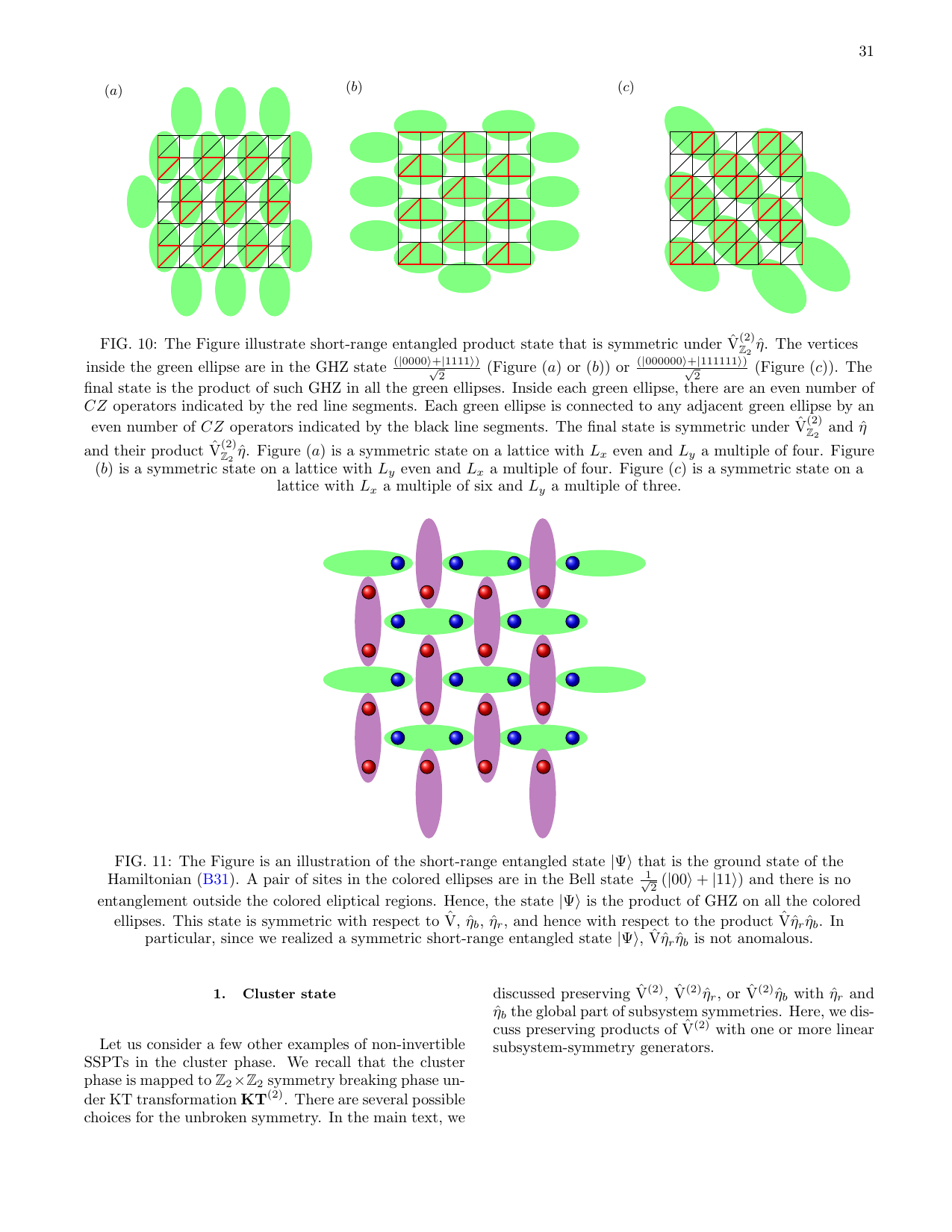}
    \caption{The Figure is an illustration of the short-range entangled state $\ket{\Psi}$ that is the ground state of the Hamiltonian \eqref{eq:HXXZZ}. A pair of sites in the colored ellipses are in the Bell state $\frac{1}{\sqrt{2}}\left(\ket{00}+\ket{11}\right)$ and there is no entanglement outside the colored eliptical regions. Hence, the state $\ket{\Psi}$ is the product of GHZ on all the colored ellipses. This state is symmetric with respect to $\hat{\mathrm{V}}$, $\hat{\eta}_b$, $\hat{\eta
    }_r$, and hence with respect to the product $\hat{\mathrm{V}}\hat{\eta
    }_r\hat{\eta
    }_b$. In particular, since we realized a symmetric short-range entangled state $\ket{\Psi}$, $\hat{\mathrm{V}}\hat{\eta
    }_r\hat{\eta
    }_b$ is not anomalous. }
    \label{fig:short-range-entangled-state}
\end{figure}

\section{Other noninvertible SSPTs in \texorpdfstring{$2+1$}{Lg}D}\label{sec:otherNSSPT}

In Sec.~\ref {sec:noninvertibleZ2xZ2} of the main text, we have discussed three specific noninvertible SSPT phases. 
In this section, we will discuss other noninvertible  $\mathbb{Z}_2\times\mathbb{Z}_2$ SSPTs in $2+1$D. 
\subsection{Cluster state}\label{sec:otherNSSPTinclusterstate}
Let us consider a few other examples of noninvertible SSPTs in the cluster phase. We recall that the cluster phase is mapped to $\mathbb{Z}_2\times\mathbb{Z}_2$ symmetry breaking phase under KT transformation $\mathbf{KT}^{(2)}$. There are several possible choices for the unbroken symmetry. 
In the main text, we discussed preserving $\hat{\rm V}^{(2)}$, $\hat{\rm V}^{(2)}\hat{\eta}_r$, or $\hat{\rm V}^{(2)}\hat{\eta}_b$ with $\hat{\eta}_{r}$ and $\hat{\eta}_b$ the global part of subsystem symmetries. 
Here, we discuss preserving products of $\hat{\rm V}^{(2)}$ with one or more linear subsystem-symmetry generators.

\subsubsection{\texorpdfstring{$\hat{\mathrm{V}}^{(2)}\hat{\eta}^x_{r,k}$}{Lg} preserving phase}\label{sec:Vetaxrk}
We take $L_x$ and $L_y$ to be even. Without any loss of generality, we assume $k\neq 1$. The Hamiltonian for the SSSB phase is given below, with a distinct structure in the $k^{th}$ row:
\begin{widetext}
    \begin{align}
    \begin{split}
     \hat{\mathrm{H}}_{\text{blue}}^{x;k}&=\sum_{v_b}\begin{array}{ccc}
         \hat{Z}_{v_b} &  &\hat{Z}_{v_b}\\
         & & \\
         \boxed{\hat{Z}_{v_b}} &  &\hat{Z}_{v_b}
    \end{array}-\sum_{v_r\neq (-,k),(-,k-1)}\begin{array}{ccc}
         \hat{Z}_{v_r} &  &\hat{Z}_{v_r}\\
         & & \\
         \boxed{\hat{Z}_{v_r}} &  &\hat{Z}_{v_r}
    \end{array}-\sum_{v_r= (-,k)}\begin{array}{ccc}
         \hat{Z}_{v_r} &  &\hat{Z}_{v_r}\\
         & & \\
         \boxed{\hat{Y}_{v_r}} &  &\hat{Y}_{v_r}
         \end{array}-\sum_{v_r= (-,k)}\begin{array}{ccc}
         \boxed{\hat{Y}_{v_r}} &  &\hat{Y}_{v_r}\\
         & & \\
         \hat{Z}_{v_r} &  &\hat{Z}_{v_r}
         \end{array}\\\\
         &\hspace{2cm}-\sum_{v_r= (-,k)}\begin{array}{ccccc}
         &\hat{Z}_{v_r} &  &\hat{Z}_{v_r}&\\
        \hat{Z}_{b} & & & &  \hat{Z}_{b}\\
         & \boxed{\hat{Y}_{v_r}} &  &\hat{Y}_{v_r} & \\
         \hat{Z}_{b} & & & & \hat{Z}_{b}
         \end{array}-\sum_{v_r= (-,k)}\begin{array}{ccccc}
          \hat{Z}_{b} & & & & \hat{Z}_{b}\\
         &\boxed{\hat{Y}_{v_r}} &  &\hat{Y}_{v_r}& \\
        \hat{Z}_{b} & & & &  \hat{Z}_{b}\\
         & \hat{Z}_{v_r} &  &\hat{Z}_{v_r} & \\
         \end{array}\, ,
         \end{split}
\end{align}
where $(-,k)$ denotes any vertex with coordinate $y=k$. The boxed vertices are the ones that are summed over.
The order parameters for this phase can be chosen to be of the form $\{\hat{Z}_{i+\frac{1}{2},\frac{3}{2}},\hat{Z}_{\frac{3}{2},j+\frac{1}{2}}\}$,  $\{\hat{Z}_{i,1},\hat{Z}_{1,j}\}\rvert_{j\neq k}$ and $\hat{Y}_{ (1,k)}\left(1-\begin{array}{ccc}
    \hat{Z}_{v_b} & &  \hat{Z}_{v_b} \\
                  & & \\
     \hat{Z}_{v_b}& &  \hat{Z}_{v_b}            
\end{array}\right)$ for $i=1,...,L_x$, $j=1,...,L_y$. There are in total $2(L_x+L_y-1)$ order parameters, as required. The order parameters satisfy the properties given in Lemma~\ref{lemma:consistentorderparameters}, and $\hat{\rm V}^{(2)}\hat{\eta}^x_{r,k}$ is unbroken. By applying $\mathbf{KT}^{(2)}$, we find the original SSPT Hamiltonian that gives rise to this SSSB Hamiltonian: 
\begin{align}
\begin{split}
\mathrm{H}_{\text{blue}}^{x;k}&=\sum_{v_r}\begin{array}{ccc}
         Z_{v_b} &  &Z_{v_b}\\
         & X_{v_r} & \\
         Z_{v_b} &  & Z_{v_b}
    \end{array}-\sum_{v_b\neq (-,k+\frac{1}{2}),(-,k-\frac{1}{2})}\begin{array}{ccc}
         Z_{v_r} &  &Z_{v_r}\\
         & X_{v_b} & \\
         Z_{v_r} &  & Z_{v_r}
    \end{array}-\sum_{v_b= (-,k+\frac{1}{2})}\begin{array}{ccc}
         Z_{v_r} &  &Z_{v_r}\\
         & X_{v_b} & \\
         Y_{v_r} &  & Y_{v_r}
    \end{array}-\sum_{v_b= (-,k-\frac{1}{2})}\begin{array}{ccc}
         Y_{v_r} &  &Y_{v_r}\\
         & X_{v_b} & \\
         Z_{v_r} &  & Z_{v_r}
    \end{array}\\\\
    &\hspace{2cm}+\sum_{v_b=(-,k+\frac{1}{2})}\begin{array}{ccccc}
         & & & &  \\
         & Z_{v_r} & &  Z_{v_r} & \\
        Z_{v_b} & & X_{v_b} & & Z_{v_b}\\
         & Z_{v_r} & &  Z_{v_r} & \\
         Z_{v_b} & & & & Z_{v_b} \\
    \end{array}+\sum_{v_b=(-,k-\frac{1}{2})}\begin{array}{ccccc}
    Z_{v_b} & & & & Z_{v_b} \\
         & Z_{v_r} & &  Z_{v_r} & \\
        Z_{v_b} & & X_{v_b} & & Z_{v_b}\\
         & Z_{v_r} & &  Z_{v_r} & 
    \end{array}  \, ,
    \end{split}
\end{align}
where the vertices that are summed over are the ones where  a Pauli $X$ is placed. This Hamiltonian has a unique ground state
\begin{align}
    \ket{\text{blue},(x;k)}&=\prod_{v_b=(-,k+ \frac{1}{2})}CZ_{v_b,v_b+(1,0)}CZ_{v_b,v_b+(1,- 1)}\prod_{v_b=(-,k- \frac{1}{2})}CZ_{v_b,v_b+(1,0)}CZ_{v_b,v_b+(1, 1)}\nonumber\\
    &\hspace{3cm}\times\prod_{v_r}\prod_{v_r\in\partial p_r}CZ_{v_r,p_r}\ket{+}^{\tilde{\Delta}_{v_b}}\ket{-}^{\Delta_{v_r}}\ket{-}^{\Delta_{v_b}\setminus \tilde{\Delta}_{v_b}
    }\, ,
\end{align}
where $\tilde{\Delta}_{v_b}\equiv \Delta_{v_b}\setminus \{v_b|v_b=(-,k+\frac{1}{2})\text{ or }v_b=(-,k-\frac{1}{2})\}$. 
$\ket{\text{blue},(x;k)}$ is related to the 2D-cluster state $\ket{\text{2D-cluster}}$ (ground state of \eqref{eq:2dcluster}) by a finite-depth circuit
\begin{align}
    \prod_{v_r}Z_{v_r}\prod_{\substack{ v_b=(-,k+\frac{1}{2})\\
    ,(-,k-\frac{1}{2})}}Z_{v_b}\prod_{v_b=(-,k+ \frac{1}{2})}CZ_{v_b,v_b+(1,0)}CZ_{v_b,v_b+(1,- 1)}\prod_{v_b=(-,k-\frac{1}{2})}CZ_{v_b,v_b+(1,0)}CZ_{v_b,v_b+(1,1)}\, .
\end{align}
We examine the interface modes between $\mathrm
{H}_{\text{2D-cluster}}$ and $\mathrm{H}_{\text{blue}}^{x;k}$ in Appendix~\ref{sec:interface2dclusterxkblue}. We put a line interface between the two Hamiltonians and find that there are four interface modes protected by $\mathbf{D}^{(2)}$ that distinguish between the two phases.
\subsubsection{\texorpdfstring{$\hat{\rm V}^{(2)}\hat{\eta}^x_{r,k}\hat{\eta}^y_{ r,m}$}{Lg} preserving phase}
We take $L_x$ and $L_y$ to be even. Without loss of generality, we assume $k,m\neq 1$. The Hamiltonian for the SSSB phase is
\begin{align}
\begin{split}
    \hat{\rm H}_{\text{red};k,m}^{x,y}&=\sum_{v_b}\begin{array}{ccc}
         \hat{Z}_{v_b} &  &\hat{Z}_{v_b}\\
         & & \\
         \boxed{\hat{Z}_{v_b}} &  &\hat{Z}_{v_b}
    \end{array}-\sum_{\substack{v_r\neq (m,-),(m-1,-),\\
    (-,k),(-,k-1)}}\begin{array}{ccc}
         \hat{Z}_{v_r} &  &\hat{Z}_{v_r}\\
         & & \\
         \boxed{\hat{Z}_{v_r}} &  &\hat{Z}_{v_r}
    \end{array}-\sum_{\substack{v_r= (m-1,-)\\
    v_r\neq (m-1,k),\\
    v_r\neq (m-1,k-1)}}\begin{array}{ccc}
         \hat{Z}_{v_r} &  &\hat{Y}_{v_r}\\
         & & \\
         \boxed{\hat{Z}_{v_r}} &  &\hat{Y}_{v_r}
    \end{array}-\sum_{\substack{v_r= (m,-)\\
    v_r\neq (m,k),\\
    v_r\neq (m,k-1)}}\begin{array}{ccc}
         \hat{Y}_{v_r} &  &\hat{Z}_{v_r}\\
         & & \\
         \boxed{\hat{Y}_{v_r}} &  &\hat{Z}_{v_r}
    \end{array}\\\\
    &-\sum_{\substack{v_r= (-,k)\\
    v_r\neq (m-1,k),\\
    v_r\neq (m,k)}}\begin{array}{ccc}
         \hat{Z}_{v_r} &  &\hat{Z}_{v_r}\\
         & & \\
         \boxed{\hat{Y}_{v_r}} &  &\hat{Y}_{v_r}
    \end{array}-\sum_{\substack{v_r= (-,k-1)\\
    v_r\neq (m-1,k-1),\\
    v_r\neq (m,k-1)}}\begin{array}{ccc}
         \hat{Y}_{v_r} &  &\hat{Y}_{v_r}\\
         & & \\
        \boxed{\hat{Z}_{v_r}} &  &\hat{Z}_{v_r}
    \end{array}-\begin{array}{ccc}
         \hat{Y}_{v_r} &  &\hat{Z}_{v_r}\\
         & & \\
        \boxed{\hat{Z}_{v_r}} &  &\hat{Y}_{v_r}
    \end{array}-\begin{array}{ccc}
         \boxed{\hat{Z}_{v_r}} &  &\hat{Y}_{v_r}\\
         & & \\
        \hat{Y}_{v_r} &  &\hat{Z}_{v_r}
    \end{array}-\begin{array}{ccc}
         \hat{Y}_{v_r} &  &\boxed{\hat{Z}_{v_r}}\\
         & & \\
        \hat{Z}_{v_r} &  &\hat{Y}_{v_r}
    \end{array}\\\\
   &\hspace{3cm}-\begin{array}{ccc}
         \hat{Z}_{v_r} &  &\hat{Y}_{v_r}\\
         & & \\
        \hat{Y}_{v_r} &  &\boxed{\hat{Z}_{v_r}}
    \end{array} +\hat{\rm V}^{(2)}\text{ conjugated terms }\, ,
    \end{split}
\end{align}
where boxes denote the summed over vertices in the terms with summation. Boxes in the terms without summation denote the coordinate $ (m,k)$.

The order parameters for this phase can be chosen to be of the form $\{\hat{Z}_{i+\frac{1}{2},\frac{3}{2}},\hat{Z}_{\frac{3}{2},j+\frac{1}{2}}\}$, $\{\hat{Z}_{1,j},\hat{Z}_{i,1}\}\rvert_{j\neq k, i\neq m}$, $\hat{Y}_{ (1,k)}\left(1-\begin{array}{ccc}
    \hat{Z}_{v_b} & &  \hat{Z}_{v_b} \\
                  & & \\
     \hat{Z}_{v_b}& &  \hat{Z}_{v_b}            
\end{array}\right)$, $\hat{Y}_{ (m,1)}\left(1-\begin{array}{ccc}
    \hat{Z}_{v_b} & &  \hat{Z}_{v_b} \\
                  & & \\
     \hat{Z}_{v_b}& &  \hat{Z}_{v_b}            
\end{array}\right)$ for $i=1,...,L_x$, $j=1,...,L_y$. This gives in total of $2(L_x+L_y-1)$ order parameters. The order parameters satisfy the properties in the Lemma~\ref{lemma:consistentorderparameters}, and $ \hat{\rm V}^{(2)}\hat{\eta}^x_{r,k}\hat{\eta}^y_{r,m}$ is unbroken. 

The original SSPT Hamiltonian that gives rise to this SSSB Hamiltonian is
\begin{align}
\begin{split}
    \mathrm{H}_{\text{red};k,m}^{x,y}&=\sum_{v_r}\begin{array}{ccc}
         Z_{v_b} &  &Z_{v_b}\\
         &X_{v_r} & \\
         Z_{v_b} &  &Z_{v_b}
    \end{array}-\sum_{\substack{v_b\neq (m+\frac{1}{2},-),(m-\frac{1}{2},-),\\
    (-,k+\frac{1}{2}),(-,k-\frac{1}{2})}}\begin{array}{ccc}
         Z_{v_r} &  &Z_{v_r}\\
         & X_{v_b} & \\
         Z_{v_r} &  & Z_{v_r}
    \end{array}-\sum_{\substack{v_b= (m-\frac{1}{2},-)\\
    v_b\neq (m-\frac{1}{2},k+\frac{1}{2}),\\
   v_b\neq (m-\frac{1}{2},k-\frac{1}{2})}}\begin{array}{ccc}
         Z_{v_r} &  &Y_{v_r}\\
         & X_{v_b} & \\
         Z_{v_r} &  &Y_{v_r}
    \end{array}\\\\
    &-\sum_{\substack{v_b= (m+\frac{1}{2},-)\\
    v_b\neq(m+\frac{1}{2},k+\frac{1}{2}),\\
    v_b\neq (m+\frac{1}{2},k-\frac{1}{2})}}\begin{array}{ccc}
         Y_{v_r} &  & Z_{v_r}\\
         & X_{v_b} & \\
         Y_{v_r} &  & Z_{v_r}
    \end{array}-\sum_{\substack{v_b= (-,k+\frac{1}{2})\\
    v_b\neq (m-\frac{1}{2},k+\frac{1}{2}),\\
    v_b\neq (m+\frac{1}{2},k+\frac{1}{2})}}\begin{array}{ccc}
         Z_{v_r} &  & Z_{v_r}\\
         & X_{v_b} & \\
         Y_{v_r} &  & Y_{v_r}
    \end{array}-\sum_{\substack{v_b= (-,k-\frac{1}{2})\\
    v_b\neq (m-\frac{1}{2},k-\frac{1}{2}),\\
    v_b\neq (m+\frac{1}{2},k-\frac{1}{2})}}\begin{array}{ccc}
         Y_{v_r} &  & Y_{v_r}\\
         & X_{v_b} & \\
         Z_{v_r} &  & Z_{v_r}
    \end{array}\\\\
    &-\begin{array}{ccc}
         Y_{v_r} &  & Z_{v_r}\\
         & X_{v_b} & \\
        \boxed{Z_{v_r}} &  & Y_{v_r}
    \end{array}-\begin{array}{ccc}
         \boxed{Z_{v_r}} &  &Y_{v_r}\\
         & X_{v_b}& \\
        Y_{v_r} &  &Z_{v_r}
    \end{array}-\begin{array}{ccc}
         Y_{v_r} &  &\boxed{Z_{v_r}}\\
         &X_{v_b} & \\
        Z_{v_r} &  &Y_{v_r}
    \end{array}-\begin{array}{ccc}
         Z_{v_r} &  &Y_{v_r}\\
         & X_{v_b} & \\
        Y_{v_r} &  &\boxed{Z_{v_r}}
    \end{array}
    +\mathbf{D}^{(2)}\text{ conjugated terms }\, .
    \end{split}
\end{align}
Boxes in the terms without summation denote the coordinate $(m,k)$. We could study an interface between $\mathrm{H}_{\text{2D-cluster}}$ and $\mathrm{H}_{\text{red};k,m}^{x,y}$. If we consider a rectangular interface between the two Hamiltonian with $\mathrm{H}_{\text{red};k,m}^{x,y}$ inside the rectangle and $\mathrm{H}_{\text{2D-cluster}}$ outside it, we expect to find eight interface modes that are protected by $\mathbf{D}^{(2)}$. These interface modes distinguish between the two phases.
\subsubsection{\texorpdfstring{$\hat{\rm V}^{(2)}\hat{\eta}^x_{r,k}\hat{\eta}^y_{ b,m}$}{Lg} preserving phase}
We take $L_x$ and $L_y$ to be even. Without loss of generality, we assume $k,m\neq 1$. The Hamiltonian for the SSSB phase is
\begin{align}
\begin{split}
    \hat{\rm H}_{\text{red},\text{blue};\rm,k,m}^{x,y}&=\sum_{ \substack{v_r\neq (-,k),\\(-,k-1)}}\begin{array}{ccc}
        \hat{Z}_{v_r} & & \hat{Z}_{v_r}  \\
        & & \\
        \boxed{\hat{Z}_{v_r}} & & \hat{Z}_{v_r} 
    \end{array}+\sum_{ \substack{v_b\neq (m+\frac{1}{2},-),\\(m-\frac{1}{2},-)}}\begin{array}{ccc}
        \hat{Z}_{v_b} & & \hat{Z}_{v_b}  \\
        & & \\
        \boxed{\hat{Z}_{v_b}} & & \hat{Z}_{v_b} 
    \end{array}+\sum_{ \substack{v_r= (-,k),\\v_r\neq (m,k),\\
    (m-1,k),(m+1,k)}}\begin{array}{ccc}
        \hat{Z}_{v_r} & & \hat{Z}_{v_r}  \\
        & &\\
        \boxed{\hat{Y}_{v_r}} & & \hat{Y}_{v_r} 
    \end{array}\\
    \vspace{1cm}\\
    &+\sum_{ \substack{v_r= (-,k-1),\\v_r\neq (m,k-1),\\
    (m-1,k-1),(m+1,k-1)}}\begin{array}{ccc}
        \hat{Y}_{v_r} & & \hat{Y}_{v_r}  \\
        & & \\
        \boxed{\hat{Z}_{v_r}} & & \hat{Z}_{v_r} 
    \end{array}+\sum_{ \substack{v_b= (m-\frac{1}{2},-),\\v_b\neq (m-\frac{1}{2},k+\frac{1}{2}),\\
    (m-\frac{1}{2},k-\frac{1}{2}),(m-\frac{1}{2},k-\frac{3}{2})}}\begin{array}{ccc}
        \hat{Z}_{v_b} & & \hat{Y}_{v_b}  \\
        & & \\
        \boxed{\hat{Z}_{v_b}} & & \hat{Y}_{v_b} 
    \end{array}\\
    \vspace{1cm}\\
    &\hspace{2cm}+\sum_{ \substack{v_b= (m+\frac{1}{2},-),\\v_b\neq (m+\frac{1}{2},k+\frac{1}{2}),\\
    (m+\frac{1}{2},k-\frac{1}{2}),(m+\frac{1}{2},k-\frac{3}{2})}}\begin{array}{ccc}
        \hat{Y}_{v_b} & & \hat{Z}_{v_b}  \\
        & & \\
        \boxed{\hat{Y}_{v_b}} & & \hat{Z}_{v_b} 
    \end{array}
    \\
    \vspace{1cm}\\&+\begin{array}{ccc}
         Z_{v_r} &  & Z_{v_r} \\
         &  & \\
        Y_{v_r} & & \boxed{Y_{v_r}}
    \end{array}+\begin{array}{ccc}
         Y_{v_r} &  & \boxed{Y_{v_r}} \\
         &  & \\
        Z_{v_r} & & Z_{v_r}
    \end{array}+\begin{array}{ccc}
         \boxed{Y_{v_b}} &  & Z_{v_b} \\
         &  & \\
        Y_{v_b} & & Z_{v_b}
    \end{array}+\begin{array}{ccc}
         Z_{v_b} &  & \boxed{Y_{v_b}} \\
         &  & \\
        Z_{v_b} & & Y_{v_b}
    \end{array}\nonumber\\
    \vspace{1cm}\\
    &+\begin{array}{cccc}
       Z_{v_r} & \quad & \quad & Z_{v_r} \\
         & \qquad & \quad  & \\
         Y_{v_r} &  &  & \circled{$Y_{v_r}$} 
    \end{array}+ \begin{array}{cccc}
       Y_{v_r} & \qquad  & \quad & \circled{$Y_{v_r}$}\\
         & & &\\
         Z_{v_r} &  &  & Z_{v_r}  
    \end{array}+\begin{array}{cc}
        \circled{$Y_{v_b}$} & Z_{v_b} \\
             &  \\ 
             &  \\
        Y_{v_b}& Z_{v_b}
    \end{array}+\begin{array}{cc}
        Z_{v_b}& \circled{$Y_{v_b}$} \\
               &  \\ 
               &  \\
        Z_{v_b}& Y_{v_b}
    \end{array}+\nonumber\\
    \vspace{1cm}\\
    &+\begin{array}{cccc}
       Z_{v_r} & Z_{v_r} & Z_{v_r} & Z_{v_r} \\
         & & &\\
         Y_{v_r} & Y_{v_r} & \boxed{Y_{v_r}} & Y_{v_r} 
    \end{array}+ \begin{array}{cccc}
       Y_{v_r} & Y_{v_r} & \boxed{Y_{v_r}} & Y_{v_r}\\
         & & &\\
         Z_{v_r} & Z_{v_r} & Z_{v_r} & Z_{v_r}  
    \end{array}+\begin{array}{cc}
        Y_{v_b}& Z_{v_b} \\
        \boxed{Y_{v_b}}& Z_{v_b} \\ 
        Y_{v_b}& Z_{v_b} \\
        Y_{v_b}& Z_{v_b}
    \end{array}+\begin{array}{cc}
        Z_{v_b}& Y_{v_b} \\
        Z_{v_b}& \boxed{Y_{v_b}} \\ 
        Z_{v_b}& Y_{v_b} \\
        Z_{v_b}& Y_{v_b}
    \end{array}+\rm \hat{V}^{(2)}\text{ conjugated terms }\, .
    \end{split}
\end{align}
In the above equation, the box in the terms with summation indicates the vertices that are summed over. Boxes in the terms without summation denote the coordinate $(m+1,k)$ or $(m+\frac{1}{2},k+\frac{1}{2})$ depending on red or blue vertex respectively, and the circles in the terms without summation denote the coordinate $(m+2,k)$ or $(m+\frac{1}{2},k+\frac{3}{2})$ depending on red or blue vertex respectively.

The order parameters for this phase can be chosen to be of the form $\{\hat{Z}_{i+\frac{1}{2},\frac{3}{2}},\hat{Z}_{\frac{3}{2},j+\frac{1}{2}}\}\rvert_{i\neq m}$, $\{\hat{Z}_{1,j},\hat{Z}_{i,1}\}\rvert_{j\neq k}$, $\hat{Y}_{(1,k)}\left(1-\begin{array}{ccc}
    \hat{Z}_{v_b} & &  \hat{Z}_{v_b} \\
                  & & \\
     \hat{Z}_{v_b}& &  \hat{Z}_{v_b}            
\end{array}\right)$, $\hat{Y}_{ (m+\frac{1}{2},\frac{3}{2})}\left(1-\begin{array}{ccc}
    \hat{Z}_{v_r} & &  \hat{Z}_{v_r} \\
                  & & \\
     \hat{Z}_{v_r}& &  \hat{Z}_{v_r}            
\end{array}\right)$ for $i=1,...,L_x$, $j=1,...,L_y$. The order parameters satisfy the properties in Lemma~\ref{lemma:consistentorderparameters}, and hence $ \hat{\rm V}^{(2)}\hat{\eta}^x_{r,k}\hat{\eta}^y_{b,m}$ is unbroken. The original SSPT Hamiltonian can be found by applying the $\rm \mathbf{KT}^{(2)}$ transformation.
\subsection{\texorpdfstring{$\mathbb{Z}_2$}{Lg} SSPT stacked onto cluster state}\label{sec:Z2SSPTstackedtocluster}
Let us consider another $\mathbb{Z}_2\times\mathbb{Z}_2$ SSPT in $2+1$ dimensions that is symmetric under $\rm D^{(2)}$
    \begin{align}
    \mathrm{H}_{\text{2D-}\widetilde{\text{clstr}}}=-\sum_{v_r}\begin{array}{ccc}
       Z_{v_b} & & Z_{v_b}  \\
         & X_{v_r} &  \\
       Z_{v_b} & & Z_{v_b}
    \end{array}+\sum_{v_b}\begin{array}{ccc}
      Z_{v_r} &  & Y_{v_r} \\
       & X_{v_b} &  \\
      Y_{v_r} &  & Z_{v_r}
    \end{array}-\sum_{v_b}\begin{array}{ccccc}
        & & Z_{v_b} & & Z_{v_b} \\
        & Z_{v_r} & & Z_{v_r} & \\
        Z_{v_b} & & X_{v_b} & & Z_{v_b}\\
        & Z_{v_r} & & Z_{v_r} & \\
        Z_{v_b} & & Z_{v_b} & &
    \end{array}\, .
    \label{eq:2dtildeclstr}
\end{align}
\end{widetext}
We note that the last term depends on the first two terms. However, we can still minimize all terms simultaneously. Now we can apply $\mathbf{KT}^{(2)}$ onto this Hamiltonian
\begin{align}
    \hat{\rm H}_{\text{PI-Wen}}&=-\sum_{ v_b}\begin{array}{ccc}
        \hat{Z}_{v_b} & & \hat{Z}_{v_b}  \\
        & &\\
        \hat{Z}_{v_b} & &\hat{Z}_{v_b}
    \end{array}+\sum_{v_r}\begin{array}{ccc}
        \hat{Z}_{v_r} & & \hat{Y}_{v_r}  \\
        & & \\
        \hat{Y}_{ v_r} & &\hat{Z}_{v_r}
    \end{array}\nonumber\\\nonumber\\
    &\qquad\quad+\sum_{v_b}\begin{array}{ccccc}
        & & \hat{Z}_{v_b} & & \hat{Z}_{v_b} \\
        & \hat{Z}_{v_r} & & \hat{Y}_{v_r} & \\
        \hat{Z}_{v_b} & &  & & \hat{Z}_{v_b}\\
        & \hat{Y}_{v_r} & & \hat{Z}_{v_r} & \\
        \hat{Z}_{v_b} & & \hat{Z}_{v_b} & &
    \end{array}\, .
    \label{eq:PI-Wen}
\end{align}
On the blue colored sublattice, the above Hamiltonian is in a $\mathbb{Z}_2$ SSSB phase. The ground state degeneracy in this phase is $2^{L_x+L_y-1}$. On the red sublattice, the Hamiltonian is that of the Wen-plaquette model and hence is in a 1-form symmetry broken phase. The ground state degeneracy depends on whether $L_x$ and $L_y$ are even or odd. For the case $L_x$ and $L_y$ even numbers, the ground state degeneracy is 4. For all other cases, the ground state degeneracy is 2. See~\cite{Wen:2003yv} for more details on the Wen-plaquette model and its ground state degeneracy. 

Again, the dual symmetries after $\mathbf{KT}^{(2)}$ transformation are same as that in the SSSB Hamiltonian~\eqref{eq:2DSSSBtwocopiesPI} case:
\begin{subequations}
   \begin{align}
    &\hat{\eta}^{x}_{r,j}=\prod_{i=1}^{L_x}\hat{X}_{i,j}\, ,\quad \hat{\eta}^{y}_{r,i}=\prod_{j=1}^{L_y}\hat{X}_{i,j}\,,\\
    &\hat{\eta}^{x}_{b,j}=\prod_{i=1}^{L_x}\hat{X}_{i+\frac{1}{2},j+\frac{1}{2}}\,,\quad \hat{\eta}^{y}_{b,i}=\prod_{j=1}^{L_y}\hat{X}_{i+\frac{1}{2},j+\frac{1}{2}}\,, \\
    &\hat{\rm V}^{(2)}=\prod_{v_r}\prod_{v^r\in\partial p^r}CZ_{v^r,p^r}\, .
\end{align} 
\end{subequations}
We repeat the same analysis as before and look at various possible symmetry preserved phases. For simplicity, we restrict our discussion to even by even lattice, i.e., $L_x$ and $L_y$ are even. 

We note that since we start with $\mathbb{Z}_2\times\mathbb{Z}_2$ SSPT~\eqref{eq:2dtildeclstr}, the dual SSB is fixed as far as the $\mathbb{Z}_2\times\mathbb{Z}_2$ symmetries are concerned. All the subsystem symmetries on the blue sublattice are broken, while on the red sublattice, one form symmetries is broken. On the red sublattice, symmetries of the form $\hat{\eta}^x_{r,j}\hat{\eta}^x_{r,j+1}$, $\hat{\eta}^y_{r,i}\hat{\eta}^y_{r,i+1}$ $\forall\, i\in\{1,...,L_x\}\,,\forall\, j\in\{1,...,L_y\}$ and their arbitrary products are preserved while single $\hat{\eta}^x_{r,j}$ and $\hat{\eta}^y_{r,i}$ are broken.
\begin{widetext}
\subsubsection{\texorpdfstring{$\hat{\rm V}$, $\hat{\eta}_{r,j}^x\hat{\eta}_{r,j+1}^x$ $\forall j$}{Lg} and \texorpdfstring{$\hat{\eta}_{r,i}^y\hat{\eta}_{r,i+1}^y$ $\forall i$}{Lg} are preserved}
The Hamiltonian in this phase is described in \eqref{eq:PI-Wen}. The order parameters for this phase are $\{\hat{Z}_{\frac{3}{2},j+\frac{1}{2}},\hat{Z}_{i+\frac{1}{2},\frac{3}{2}}\}$ for $i=1,...,L_x$ and $j=1,...,L_y$ on the blue sublattice and non-local order parameters of the form 
\begin{align}
   \left( \begin{array}{c}
         \vdots  \\
          \hat{Z}_{v_r}      \\
                 \\
          \hat{Y}_{v_r}      \\
                 \\
          \hat{Z}_{v_r}      \\
                 \\
          \hat{Y}_{v_r}      \\
          \vdots
    \end{array}
    +\begin{array}{ccc}
    &\vdots & \\
      & \hat{Z}_{v_r} &  \\
        \hat{Z}_{v_b} & & \hat{Z}_{v_b}\\
        & \hat{Y}_{v_r} & \\
        \hat{Z}_{v_b} & & \hat{Z}_{v_b}\\
        & \hat{Z}_{v_r}  &  \\
        \hat{Z}_{v_b} & & \hat{Z}_{v_b}\\
        & \hat{Y}_{v_r} & \\
        & \vdots & 
    \end{array}\right)\, \text{ and } \left(
    \begin{array}{cccccccccc}
         \hdots & \hat{Z}_{v_r} & & \hat{Y}_{v_r} & & \hat{Z}_{v_r} & & \hat{Y}_{v_r} & & \hdots    
    \end{array}+\begin{array}{cccccccccc}
        & & \hat{Z}_{v_b} & & \hat{Z}_{v_b} & & \hat{Z}_{v_b} & & \hat{Z}_{v_b} & \\
      \hdots & \hat{Z}_{v_r} & & \hat{Y}_{v_r} & & \hat{Z}_{v_r} & & \hat{Y}_{v_r} & & \hdots  \\
         & & \hat{Z}_{v_b} & & \hat{Z}_{v_b} & & \hat{Z}_{v_b} & & \hat{Z}_{v_b} & \\
    \end{array}\right)\, .
\end{align}
It is straightforward to verify that the order parameters satisfy all the conditions stated in Lemma~\ref{lemma:consistentorderparameters}, and that the symmetries $\hat{\rm V}$, $\hat{\eta}_{r,j}^x\hat{\eta}_{r,j+1}^x$ $\forall j$, and $\hat{\eta}_{r,i}^y\hat{\eta}_{r,i+1}^y$ $\forall i$ are preserved. 
The corresponding SSPT is the same as \eqref{eq:2dtildeclstr}. 
\subsubsection{\texorpdfstring{$\hat{\rm V}\hat{\eta}^x_{r,j}$, $\hat{\eta}_{r,j}^x\hat{\eta}_{r,j+1}^x$ $\forall j$}{Lg} and \texorpdfstring{$\hat{\eta}_{r,i}^y\hat{\eta}_{r,i+1}^y$ $\forall i$}{Lg} are preserved }
We take $L_y=4n+2$ and $L_x=4m$. The Hamiltonian in this phase is 
\begin{align}
    \hat{\rm H}_{\text{PI-Wen}}&=\sum_{ v_b}\begin{array}{ccc}
        \hat{Z}_{v_b} & & \hat{Z}_{v_b}  \\
        & &\\
        \hat{Z}_{v_b} & &\hat{Z}_{v_b}
    \end{array}+\sum_{v_r}\begin{array}{ccc}
        \hat{Z}_{v_r} & & \hat{Y}_{v_r}  \\
        & & \\
        \hat{Y}_{ v_r} & &\hat{Z}_{v_r}
    \end{array}+\sum_{v_b}\begin{array}{ccccc}
        & & \hat{Z}_{v_b} & & \hat{Z}_{v_b} \\
        & \hat{Z}_{v_r} & & \hat{Y}_{v_r} & \\
        \hat{Z}_{v_b} & &  & & \hat{Z}_{v_b}\\
        & \hat{Y}_{v_r} & & \hat{Z}_{v_r} & \\
        \hat{Z}_{v_b} & & \hat{Z}_{v_b} & &
    \end{array}\, .
    \label{eq:PI-Wenx}
\end{align}
The order parameters for this phase are $\{\hat{Z}_{\frac{3}{2},j+\frac{1}{2}},\hat{Z}_{i+\frac{1}{2},\frac{3}{2}}\}$ for $i=1,...,L_x$ and $j=1,...,L_y$ on the blue sublattice and non-local order parameters of the form 
\begin{align}
   \left( \begin{array}{c}
         \vdots  \\
          \hat{Z}_{v_r}      \\
                 \\
          \hat{Y}_{v_r}      \\
                 \\
          \hat{Z}_{v_r}      \\
                 \\
          \hat{Y}_{v_r}      \\
          \vdots
    \end{array}
    -\begin{array}{ccc}
    &\vdots & \\
      & \hat{Z}_{v_r} &  \\
        \hat{Z}_{v_b} & & \hat{Z}_{v_b}\\
        & \hat{Y}_{v_r} & \\
        \hat{Z}_{v_b} & & \hat{Z}_{v_b}\\
        & \hat{Z}_{v_r}  &  \\
        \hat{Z}_{v_b} & & \hat{Z}_{v_b}\\
        & \hat{Y}_{v_r} & \\
        & \vdots & 
    \end{array}\right)\, \text{ and } \left(
    \begin{array}{cccccccccc}
         \hdots & \hat{Z}_{v_r} & & \hat{Y}_{v_r} & & \hat{Z}_{v_r} & & \hat{Y}_{v_r} & & \hdots    
    \end{array}+\begin{array}{cccccccccc}
        & & \hat{Z}_{v_b} & & \hat{Z}_{v_b} & & \hat{Z}_{v_b} & & \hat{Z}_{v_b} & \\
      \hdots & \hat{Z}_{v_r} & & \hat{Y}_{v_r} & & \hat{Z}_{v_r} & & \hat{Y}_{v_r} & & \hdots  \\
         & & \hat{Z}_{v_b} & & \hat{Z}_{v_b} & & \hat{Z}_{v_b} & & \hat{Z}_{v_b} & \\
    \end{array}\right)\, .
\end{align}
For the order parameters to be nonzero on the ground space, we need to take $L_y=4n+2$ and $L_x=4m$ for some $n$ and $m$. 
The corresponding SSPT is
\begin{align}
    \mathrm{H}_{\text{2D-}\widetilde{\text{clstr}}}^{x}=\sum_{v_r}\begin{array}{ccc}
       Z_{v_b} & & Z_{v_b}  \\
         & X_{v_r} &  \\
       Z_{v_b} & & Z_{v_b}
    \end{array}+\sum_{v_b}\begin{array}{ccc}
      Z_{v_r} &  & Y_{v_r} \\
       & X_{v_b} &  \\
      Y_{v_r} &  & Z_{v_r}
    \end{array}-\sum_{v_b}\begin{array}{ccccc}
        & & Z_{v_b} & & Z_{v_b} \\
        & Z_{v_r} & & Z_{v_r} & \\
        Z_{v_b} & & X_{v_b} & & Z_{v_b}\\
        & Z_{v_r} & & Z_{v_r} & \\
        Z_{v_b} & & Z_{v_b} & &
    \end{array}\, .
    \label{eq:2dtildeclstrx}
\end{align}
We note that \eqref{eq:2dtildeclstrx} is in a different phase from \eqref{eq:2dtildeclstr}. We analyze the interface modes between these noninvertible SSPTs in Appendix~\ref{sec:interfaceanalysisotherspt}.  
One could also do the above analysis for $L_y=4n$ and $L_x=4m$; writing down a Hamiltonian similar to \eqref{eq:PI-Wenx} by flipping the sign of the plaquette term on the blue sublattice on two adjacent rows. 
\subsubsection{\texorpdfstring{$\hat{\rm V}\hat{\eta}^y_{r,i}$, $\hat{\eta}_{r,i}^y\hat{\eta}_{r,i+1}^y$ $\forall i$}{Lg} and \texorpdfstring{$\hat{\eta}_{r,j}^x\hat{\eta}_{r,j+1}^x$ $\forall j$}{Lg} preserved} 
We take $L_y=4n$ and $L_x=4m+2$. The Hamiltonian in this phase is the same as \eqref{eq:PI-Wenx}.
The order parameters for this phase are $\{\hat{Z}_{\frac{3}{2},j+\frac{1}{2}},\hat{Z}_{i+\frac{1}{2},\frac{3}{2}}\}$ for $i=1,...,L_x$ and $j=1,...,L_y$ on the blue sublattice and non-local order parameters of the form 
\begin{align}
   \left( \begin{array}{c}
         \vdots  \\
          \hat{Z}_{v_r}      \\
                 \\
          \hat{Y}_{v_r}      \\
                 \\
          \hat{Z}_{v_r}      \\
                 \\
          \hat{Y}_{v_r}      \\
          \vdots
    \end{array}
    +\begin{array}{ccc}
    &\vdots & \\
      & \hat{Z}_{v_r} &  \\
        \hat{Z}_{v_b} & & \hat{Z}_{v_b}\\
        & \hat{Y}_{v_r} & \\
        \hat{Z}_{v_b} & & \hat{Z}_{v_b}\\
        & \hat{Z}_{v_r}  &  \\
        \hat{Z}_{v_b} & & \hat{Z}_{v_b}\\
        & \hat{Y}_{v_r} & \\
        & \vdots & 
    \end{array}\right)\, \text{ and } \left(
    \begin{array}{cccccccccc}
         \hdots & \hat{Z}_{v_r} & & \hat{Y}_{v_r} & & \hat{Z}_{v_r} & & \hat{Y}_{v_r} & & \hdots    
    \end{array}-\begin{array}{cccccccccc}
        & & \hat{Z}_{v_b} & & \hat{Z}_{v_b} & & \hat{Z}_{v_b} & & \hat{Z}_{v_b} & \\
      \hdots & \hat{Z}_{v_r} & & \hat{Y}_{v_r} & & \hat{Z}_{v_r} & & \hat{Y}_{v_r} & & \hdots  \\
         & & \hat{Z}_{v_b} & & \hat{Z}_{v_b} & & \hat{Z}_{v_b} & & \hat{Z}_{v_b} & \\
    \end{array}\right)\, .
\end{align}
For the order parameters to be nonzero on the ground space, we need to take $L_y=4n$ and $L_x=4m+2$ for some $n$ and $m$. 
The corresponding SSPT Hamiltonian is the same as \eqref{eq:2dtildeclstrx}. This does not contradict the fact that two different SSB phases should come from two different SPTs, since the SPT Hamiltonians are the same for different system sizes. One could also do the above analysis for $L_y=4n$ and $L_x=4m$; writing down a Hamiltonian similar to \eqref{eq:PI-Wenx} by flipping the sign of the plaquette term on the blue sublattice on two adjacent columns. 
\subsubsection{\texorpdfstring{$\hat{\rm V}\hat{\eta}^x_{r,i}\hat{\eta}^y_{r,j}$, $\hat{\eta}_{r,j}^x\hat{\eta}_{r,j+1}^x$ $\forall j$}{Lg} and \texorpdfstring{$\hat{\eta}_{r,i}^y\hat{\eta}_{r,i+1}^y$ $\forall i$}{Lg} preserved }
We take $L_y=4n+2$ and $L_x=4m+2$ for some $n$ and $m$. The Hamiltonian in this phase is the same as \eqref{eq:PI-Wenx}. 
The order parameters for this phase are $\{\hat{Z}_{\frac{3}{2},j+\frac{1}{2}},\hat{Z}_{i+\frac{1}{2},\frac{3}{2}}\}$ for $i=1,...,L_x$ and $j=1,...,L_y$ on the blue sublattice and non-local order parameters of the form 
\begin{align}
   \left( \begin{array}{c}
         \vdots  \\
          \hat{Z}_r      \\
                 \\
          \hat{Y}_r      \\
                 \\
          \hat{Z}_r      \\
                 \\
          \hat{Y}_r      \\
          \vdots
    \end{array}
    -\begin{array}{ccc}
    &\vdots & \\
      & \hat{Z}_r &  \\
        \hat{Z}_b & & \hat{Z}_b\\
        & \hat{Y}_r & \\
        \hat{Z}_b & & \hat{Z}_b\\
        & \hat{Z}_r  &  \\
        \hat{Z}_b & & \hat{Z}_b\\
        & \hat{Y}_r & \\
        & \vdots & 
    \end{array}\right)\, \text{ and } \left(
    \begin{array}{cccccccccc}
         \hdots & \hat{Z}_r & & \hat{Y}_r & & \hat{Z}_r & & \hat{Y}_r & & \hdots    
    \end{array}-\begin{array}{cccccccccc}
        & & \hat{Z}_b & & \hat{Z}_b & & \hat{Z}_b & & \hat{Z}_b & \\
      \hdots & \hat{Z}_r & & \hat{Y}_r & & \hat{Z}_r & & \hat{Y}_r & & \hdots  \\
         & & \hat{Z}_b & & \hat{Z}_b & & \hat{Z}_b & & \hat{Z}_b & \\
    \end{array}\right)\, .
\end{align}
For the order parameters to be nonzero on the ground space, we need to take $L_y=4n+2$ and $L_x=4m+2$ for some $n$ and $m$. 
The corresponding SSPT Hamiltonian is the same as \eqref{eq:2dtildeclstrx}. This does not contradict the fact that two different SSB phases should come from two different SPTs because the same SPT Hamiltonian~\eqref{eq:PI-Wenx} is for different system sizes in each case.
One could do a similar analysis for all other cases when $L_y$ and $L_x$ are even. 
\end{widetext}
\section{Interface between two distinct noninvertible SSPTs in 2D: corner modes}\label{sec:Interfaceanalysis}
\subsection{Interface between \texorpdfstring{$\rm H_{\text{2D-cluster}}$}{Lg} and \texorpdfstring{$\rm H_{\text{blue}}$}{Lg}}\label{sec:Interface2DclusterHblue}
\subsubsection{Line interface}
Let us consider an interface of two Hamiltonians $\rm H_{\text{2D-cluster}}$ and $\rm H_{\text{blue}}$ on a torus with interface along the line $x=l+\frac{1}{2}$ and $x=L_x-1+\frac{1}{2}$ for some $l\neq L_x-1\in \mathbb{Z}_{L_x}$ such that $L_x-l$ is odd and $L_y$ is even. These two lines divide the torus into two regions. Let us call the region that contains $(L_x,0)$, including the boundary (the two interface lines), to be $A$ and the region that contains $(L_x-1,0)$, including the boundary, to be $B$ as given in the Figure~\ref{fig:lineinterface}. Explicitly,
\red{
\begin{subequations}
    \begin{align}
    A&=\{(x,y)\in (\mathbb{Z}/2,\mathbb{Z}/2)|\, x\leq l\text{ or }x=L_x,L_x+\frac{1}{2}\} \, , \\
    B&=\{(x,y)\in (\mathbb{Z}/2,\mathbb{Z}/2)|\, l+1\leq x\leq L_x-1\} \, .
\end{align}
\label{eq:regionABline}
\end{subequations}}
The interface Hamiltonian is obtained by restricting the terms in the Hamiltonian $\rm H_{\text{2D-cluster}}$ and $\rm H_{\text{blue}}$ onto the respective regions $A$ and $B$. 
\begin{figure*}
    \centering
    \includegraphics[scale=1]{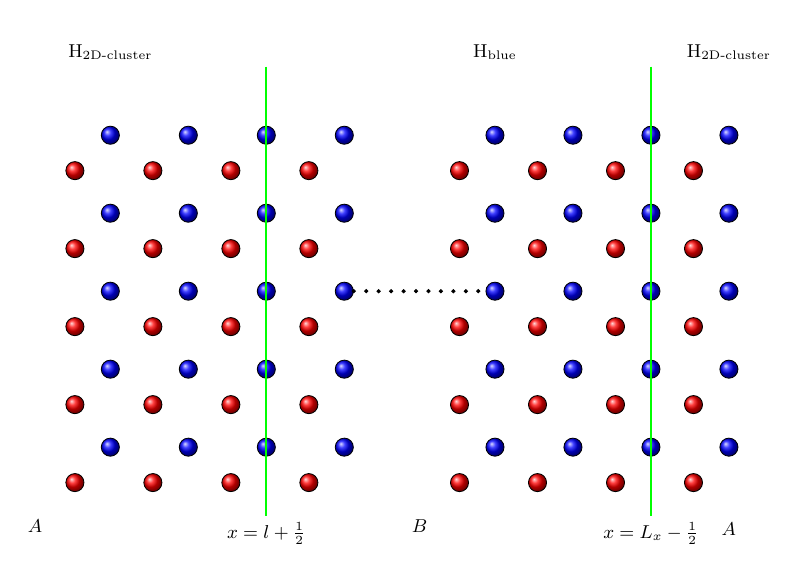}
    \caption{Line interface between $\rm H_{\text{blue}}$ (region $B$) and $\rm H_{\text{2D-cluster}}$ (region $A$). The vertical interface line is colored green and passes through blue vertices at $x=l+\frac{1}{2}$ and $x=L_x-\frac{1}{2}$. The lattice is put on a torus, and the green lines are non-contractable cycles along the $y$ direction. }
    \label{fig:lineinterface}
\end{figure*}
\begin{widetext}
\begin{align}
    \mathrm{H}_{\text{2D-cluster}|\text{blue}}&=-\sum_{v_r\in A}\begin{array}{ccc}
        Z_{v_b} & & Z_{v_b}  \\
        & X_{v_r} & \\
        Z_{v_b} & & Z_{v_b}  
    \end{array}-\sum_{v_b\in A}\begin{array}{ccc}
        Z_{v_r} & & Z_{v_r}  \\
        & X_{v_b} & \\
        Z_{v_r} & & Z_{v_r}  
    \end{array}+\sum_{v_r\in B}\begin{array}{ccc}
        Z_{v_b} & & Z_{v_b}  \\
        & X_{v_r} & \\
        Z_{v_b} & & Z_{v_b}  
    \end{array}\nonumber\\
    &\hspace{3cm}-\sum_{v_b\in B}\begin{array}{ccc}
        Y_{v_r} & & Y_{v_r}  \\
        & X_{v_b} & \\
        Y_{v_r} & & Y_{v_r}  
    \end{array}-\sum_{v_b\in B}\begin{array}{ccccc}
       Z_{v_b} & & & &  Z_{v_b}\\
         & Z_{v_r} & & Z_{v_r} & \\
       & & X_{v_b} & & \\
       & Z_{v_r} & & Z_{v_r} &  \\
       Z_{v_b} & & & &  Z_{v_b}
    \end{array}\, .
    \label{eq:H2Dclusterblue}
\end{align}
\end{widetext}
We note that while restricting $\mathrm{H}_{\text{2D-cluster}}$ and $\mathrm{H}_{\text{blue}}$ to the regions $A$ and $B$, we do not have any terms in the interface Hamiltonian $\mathrm{H}_{\text{2D-cluster}|\text{blue}}$ that is not supported entirely on $A$ or $B$. Hence, $\mathrm{H}_{\text{2D-cluster}|\text{blue}}$ has degenerate ground states. Let $|\Psi\rangle$ be a ground state. Then we verify the following:
\begin{widetext}
\begin{subequations}
    \begin{align}
    &\eta^x_{r,j}\ket{\Psi}=\ket{\Psi}\quad \forall j\in\mathbb{Z}_{L_y}\, ,\qquad \eta^y_{r,i}\ket{\Psi}=\ket{\Psi}\quad \forall i\in\mathbb{Z}_{L_x}\, ,\qquad
    \eta^y_{b,i}\ket{\Psi}=\ket{\Psi}, \text{ for } i\neq L_x-1,l,i\in\mathbb{Z}_{L_x}\quad\, ,\\
    &\qquad\eta^x_{b,j}\ket{\Psi}=\eta^{x(L)}_{b,j}\eta^{x(R)}_{b,j}\ket{\Psi}\text{ where } \eta^{x(R)}_{b,j}=\begin{array}{ccc}
        Y & & Z  \\
        & X_{L_x-\frac{1}{2},j+\frac{1}{2}} & \\
        Y & & Z  
    \end{array}\text{ and    }\eta^{x(L)}_{b,j}=\begin{array}{ccc}
        Z & & Y  \\
        & X_{l+\frac{1}{2},j+\frac{1}{2}} & \\
        Z & & Y  
    \end{array}\, .
    \end{align}
\end{subequations} 
The operators $\eta^{x(L)}_{b,j}$ and $\eta^{x(R)}_{b,j}$ commute with $\mathbf{D}^{(2)}$ on the ground space
\begin{align}
    [\eta^{x(L)}_{b,j},\mathbf{D}^{(2)}]|\Psi\rangle=0\, ,\quad [\eta^{x(R)}_{b,j},\mathbf{D}^{(2)}]|\Psi\rangle=0\, .
\end{align}
Then we can add $\eta^{x(L)}_{b,j}$ and $\eta^{x(R)}_{b,j}$ and their symmetrized version under $\mathbf{D}^{(2)}$ to the Hamiltonian to lift the ground state degeneracy
\begin{align}
    \mathrm{H}_{\text{2D-cluster}|\text{blue}}'&=\mathrm{H}_{\text{2D-cluster}|\text{blue}}+\sum_{v_b=(L_x-\frac{1}{2},-)}\begin{array}{ccc}
        Y & & Z  \\
        & X_{v_b} & \\
        Y & & Z  
    \end{array}\left(1+\begin{array}{ccc}
       Z &   & Z \\
         & X &  \\
    Z^2  &   & Z^2_{v_b}\\
         & X & \\
       Z &   & Z
    \end{array}\right)\nonumber\\
    &\hspace{4cm}+\sum_{v_b=(l+\frac{1}{2},-)}\begin{array}{ccc}
        Z & & Y  \\
        & X_{v_b} & \\
        Z & & Y  
    \end{array}\left(1+\begin{array}{ccc}
       Z &   & Z \\
         & X &  \\
    Z^2_{ v_b}  &   & Z^2\\
         & X & \\
       Z &   & Z
    \end{array}\right)\, .
\end{align}
\end{widetext}
This Hamiltonian is $\mathbb{Z}_2\times\mathbb{Z}_2$ subsystem symmetric~\eqref{eq:subsystemsymmetries2Dclstr} as well as $\mathbf{D}^{(2)}$ symmetric. This is a stabilizer Hamiltonian with a priori no constraint. Hence, it has a unique ground state on a torus. This implies that there are no edge modes at the interface. This fact leads us to consider a different type of interface. 
\subsubsection{Rectangular interface}
We consider a rectangular interface placed on the torus between the Hamiltonians $\rm H_{\text{2D-cluster}}$ and $\rm H_{\text{blue}}$. We choose the interface line to run along the blue sublattice with corners at $(i_0+\frac{1}{2},j_0+\frac{1}{2})$, $(i_0+\frac{1}{2},j_1+\frac{1}{2})$, $(i_1+\frac{1}{2},j_0+\frac{1}{2})$ and $(i_1+\frac{1}{2},j_1+\frac{1}{2})$ as given in Figure~\ref{fig:rectangularinterface}.

Let us take $L_x$, $L_y$, $j_1-j_0$ and $i_1-i_0$ to be even. We consider $\rm H_{\text{blue}}$ inside the rectangular region and $\rm H_{\text{2D-cluster}}$ outside the rectangular region. In this interface Hamiltonian, we do not have any term that is supported both inside and outside of the rectangular region. We keep all the terms supported inside or outside the rectangular region, including the boundary. To define the Hamiltonian explicitly, we define the following sets
\begin{widetext}
\red{
\begin{subequations}
\begin{align}
    A&=\{(x,y)\in(\mathbb{Z}/2,\mathbb{Z}/2)|\, y\leq j_0\}\cup\{(x,y)\in(\mathbb{Z}/2,\mathbb{Z}/2)|\, y> j_1+\frac{1}{2}\}\nonumber\\
    &\qquad \cup\{(x,y)\in(\mathbb{Z}/2,\mathbb{Z}/2)|\, x\leq i_0\}\cup\{(x,y)\in(\mathbb{Z}/2,\mathbb{Z}/2)|\, x> i_1+\frac{1}{2}\}\, , \\
    B&=\{(x,y)\in(\mathbb{Z}/2,\mathbb{Z}/2)|\, i_0+\frac{1}{2}<x<i_1+\frac{1}{2}, j_0+\frac{1}{2}<y<j_1+\frac{1}{2}\}\, .  
\end{align}
\label{eq:regionABrectangular}
\end{subequations}}
The interface Hamiltonian is the same as \eqref{eq:H2Dclusterblue} but now with regions $A$ and $B$ as given above. From the previous analysis, we could add terms in the Hamiltonian along the interface everywhere except at the corners that respect subsystem and noninvertible symmetries and commute with each term in the interface Hamiltonian. The new Hamiltonian is
\begin{align}
    \tilde{\mathrm{H}}_{\text{2D-cluster}|\text{blue}}&=\mathrm{H}_{\text{2D-cluster}|\text{blue}}-\sum_{\substack{v_b=(i+\frac{1}{2},j_1+\frac{1}{2})\\
    i_0<i<i_1}}\begin{array}{ccc}
        Z & & Z  \\
        & X_{v_b} & \\
        Y & & Y  
    \end{array}\left(1+\begin{array}{ccccc}
        Z & & Z_{v_b}^2 & & Z \\
        & X & & X & \\
        Z & & Z^2 & & Z  
    \end{array}\right)\nonumber\\
    &-\sum_{\substack{v_b=(i+\frac{1}{2},j_0+\frac{1}{2})\\
    i_0<i<i_1}}\begin{array}{ccc}
        Y & & Y  \\
        & X_{v_b} & \\
        Z & & Z  
    \end{array}\left(1+\begin{array}{ccccc}
        Z & & Z^2 & & Z \\
        & X & & X & \\
        Z & & Z_{v_b}^2 & & Z  
    \end{array}\right)-\sum_{\substack{v_b=(i_1+\frac{1}{2},j+\frac{1}{2})\\
    j_0<j<j_1}}\begin{array}{ccc}
        Y & & Z  \\
        & X_{v_b} & \\
        Y & & Z  
    \end{array}\left(1+\begin{array}{ccc}
       Z &   & Z \\
         & X &  \\
    Z^2  &   & Z^2_{v_b}\\
         & X & \\
       Z &   & Z
    \end{array}\right)\nonumber\\
    &\hspace{4cm}-\sum_{\substack{v_b=(i_0+\frac{1}{2},j+\frac{1}{2})\\
    j_0<j<j_1}}\begin{array}{ccc}
        Z & & Y  \\
        & X_{v_b} & \\
        Z & & Y  
    \end{array}\left(1+\begin{array}{ccc}
       Z &   & Z \\
         & X &  \\
    Z^2_{ v_b}  &   & Z^2\\
         & X & \\
       Z &   & Z
    \end{array}\right)\, .
    \label{eq:Htilde2Dclusterblue}
\end{align}
Now let us call the ground state of this Interface Hamiltonian $\tilde{\mathrm{H}}_{\text{2D-cluster}|\text{blue}}$ with terms added along the interface except at the corners to be $|\tilde{\Psi}\rangle$.
\begin{align} \eta_r\ket{\tilde{\Psi}}=\ket{\tilde{\Psi}}\, ,\quad \eta_b\ket{\tilde{\Psi}}=\raisebox{-60pt}{\begin{tikzpicture}
    \node at (0,0) {$\color{blue}{X}$};
    \node at (3,0) {$\color{blue}{X}$};
    \node at (0,3) {$\color{blue}{X}$};
    \node at (3,3) {$\color{blue}{X}$};
    \node at (0.5,0.5) {$\color{red}{Y}$};
    \node at (-0.5,0.5) {$\color{red}{Z}$};
    \node at (0.5,-0.5) {$\color{red}{Z}$};
    \node at (-0.5,-0.5) {$\color{red}{Z}$};
    \node at (3.5,0.5) {$\color{red}{Z}$};
    \node at (3.5,-0.5) {$\color{red}{Z}$};
    \node at (2.5,-0.5) {$\color{red}{Z}$};
    \node at (2.5,0.5) {$\color{red}{Y}$};
    \node at (0.5,2.5) {$\color{red}{Y}$};
    \node at (0.5,3.5) {$\color{red}{Z}$};
    \node at (-0.5,2.5) {$\color{red}{Z}$};
    \node at (-0.5,3.5) {$\color{red}{Z}$};
    \node at (2.5,2.5) {$\color{red}{Y}$};
    \node at (2.5,3.5) {$\color{red}{Z}$};
    \node at (3.5,3.5) {$\color{red}{Z}$};
    \node at (3.5,2.5) {$\color{red}{Z}$};
    \draw[thick] (0,0)--(3,0)--(3,3)--(0,3)--(0,0);
\end{tikzpicture}}\ket{\tilde{\Psi}}\equiv\begin{array}{cc}
    \eta_b^{TL} &\eta_{b}^{TR}  \\
    \eta_b^{BL} & \eta_b^{BR}
\end{array}\ket{\tilde{\Psi}}
\end{align}
and 
\begin{align}
    &\eta_{r,j}^x|\tilde{\Psi}\rangle =|\tilde{\Psi}\rangle \quad\forall j ,\qquad \eta_{r,i}^y|\tilde{\Psi}\rangle =|\tilde{\Psi}\rangle\quad\forall i ,\qquad \eta_{b,j}^x|\tilde{\Psi}\rangle =|\tilde{\Psi}\rangle\,\quad \forall j\neq j_0,j_1\, ,\qquad\eta_{b,i}^y|\tilde{\Psi}\rangle =|\tilde{\Psi}\rangle\,\quad \forall i\neq i_0,i_1\, ,\nonumber\\
    &\eta_{b,j_0}^x|\tilde{\Psi}\rangle =\eta_b^{BL}\eta_b^{BR}|\tilde{\Psi}\rangle\, ,\quad \eta_{b,j_1}^x|\tilde{\Psi}\rangle =\eta_b^{TL}\eta_b^{TR}|\tilde{\Psi}\rangle\, ,\quad \eta_{b,i_0}^y|\tilde{\Psi}\rangle =\eta_b^{BL}\eta_b^{TL}|\tilde{\Psi}\rangle\, ,\quad \eta_{b,i_1}^x|\tilde{\Psi}\rangle =\eta_b^{BR}\eta_b^{TR}|\tilde{\Psi}\rangle\, . 
\end{align}
Explicitly
\begin{align}
    \eta_b^{TL}=\begin{array}{ccc}
        \color{red}{Z} & & \color{red}{Z} \\
           & \color{blue}{X}_{i_0+\frac{1}{2},j_1+\frac{1}{2}} &\\
           \color{red}{Z} & & \color{red}{Y} 
    \end{array},\quad \eta_b^{TR}=\begin{array}{ccc}
        \color{red}{Z} & & \color{red}{Z} \\
           & \color{blue}{X}_{i_1+\frac{1}{2},j_1+\frac{1}{2}} &\\
           \color{red}{Y} & & \color{red}{Z} 
    \end{array}\, ,\quad \eta_b^{BL}=\begin{array}{ccc}
        \color{red}{Z} & & \color{red}{Y} \\
           & \color{blue}{X}_{i_0+\frac{1}{2},j_0+\frac{1}{2}} &\\
           \color{red}{Z} & & \color{red}{Z} 
    \end{array}\, ,\quad \eta_b^{BR}=\begin{array}{ccc}
        \color{red}{Y} & & \color{red}{Z} \\
           & \color{blue}{X}_{i_1+\frac{1}{2},j_0+\frac{1}{2}} &\\
           \color{red}{Z} & & \color{red}{Z} 
    \end{array}\, .
    \label{eq:etabluemodes}
\end{align}
\end{widetext}
Let us define
\begin{align}
    &Z^{TL}=Z_{i_0+\frac{1}{2},j_1+\frac{1}{2}}\, ,\quad Z^{TR}=Z_{i_1+\frac{1}{2},j_1+\frac{1}{2}}\, ,\nonumber\\
    &Z^{BL}=Z_{i_0+\frac{1}{2},j_0+\frac{1}{2}}\, ,\quad Z^{BR}=Z_{i_1+\frac{1}{2},j_0+\frac{1}{2}}\, .
    \label{eq:Zops}
\end{align}
The above-defined operators anti-commute with operators defined in \eqref{eq:etabluemodes}
\begin{align}
    \{\eta^{TL},Z^{TL}\}=0\, ,\quad \{\eta^{TR},Z^{TR}\}=0\, ,\nonumber\\
    \{\eta^{BL},Z^{BL}\}=0\, ,\quad \{\eta^{BR},Z^{BR}\}=0\, .
    \label{eq:etaZanticommutation}
\end{align}
Hence, these operators form a basis of operators acting on the ground space.

We find that the corner operators satisfy projective algebra with $\mathbf{D}^{(2)}$ on the ground space:
    \begin{align}
\label{eq:cornermodealgebra1}
    &\mathbf{D}^{(2)}\eta_b^{TL}=-\eta_b^{TL}\mathbf{D}^{(2)}\, ,\quad\mathbf{D}^{(2)}\eta_{b}^{TR}=-\eta_{b}^{TR}\mathbf{D}^{(2)}\, ,\nonumber\\
    &\mathbf{D}^{(2)}\eta_b^{BL}=-\eta_b^{BL}\mathbf{D}^{(2)}\, ,\quad \mathbf{D}^{(2)}\eta_b^{BR}=-\eta_b^{BR}\mathbf{D}^{(2)}\, .
\end{align}
 We also have the following relation 
\begin{align}
\label{eq:D^24Z}
    \mathbf{D}^{(2)}\begin{array}{cc}
       Z^{TL}  & Z^{TR} \\
        Z^{BL} & Z^{BR}
    \end{array}\ket{\tilde{\Psi}}=\begin{array}{cc}
       Z^{TL}  & Z^{TR} \\
        Z^{BL} & Z^{BR}
    \end{array}\mathbf{D}^{(2)}\ket{\tilde{\Psi}}\, .
\end{align}
Let us consider $(\mathbf{D}^{(2)})^2$ on the ground space
\begin{widetext}
\begin{align}
\label{eq:D2^2psitilde}
    \begin{split}
        (\mathbf{D}^{(2)})^2\ket{\tilde{\Psi}}&\sim\prod_{j=1}^{L_y}\frac{(1+\eta^x_{r,j})}{2}\prod_{i=1}^{L_x}\frac{(1+\eta^y_{r,i})}{2}\prod_{j=1}^{L_y}\frac{(1+\eta^x_{b,j})}{2}\prod_{i=1}^{L_x}\frac{(1+\eta^y_{b,i})}{2}\ket{\tilde{\Psi}}\\
    &\sim\frac{(1+\eta^x_{b,j_1})}{2}\frac{(1+\eta^x_{b,j_0})}{2}\frac{(1+\eta^y_{b,i_1})}{2}\frac{(1+\eta^y_{b,i_0})}{2}\ket{\tilde{\Psi}}\\
    &\sim\frac{(1+\eta_b^{TL}\eta_b^{TR})}{2}\frac{(1+\eta_b^{BL}\eta_b^{BR})}{2}\frac{(1+\eta_b^{TR}\eta_b^{BR})}{2}\frac{(1+\eta_b^{TL}\eta_b^{BL})}{2}\ket{\tilde{\Psi}}
    \end{split}
\end{align}
From \eqref{eq:cornermodealgebra1} and \eqref{eq:D^24Z}, we conclude the most general form of $\mathbf{D}^{(2)}$ up to an overall constant is 
\begin{align}
    \mathbf{D}^{(2)}\ket{\tilde{\Psi}}&\sim Z^{TL}Z^{TR}Z^{BL}Z^{BR}\left(\alpha_0+\alpha_1 \eta^{TL}\eta^{TR}+\alpha_2\eta^{TL}\eta^{BL}+\alpha_3\eta^{TL}\eta^{BR}+\alpha_4\eta^{TR}\eta^{BL}+\alpha_5\eta^{TR}\eta^{BR}\right.\nonumber\\
    &\left.\hspace{8cm}+\alpha_6\eta^{BL}\eta^{BR}+\alpha_7\eta^{TL}\eta^{TR}\eta^{BL}\eta^{BR}\right)\ket{\tilde{\Psi}}\, .
\end{align}
Imposing the constraint~\eqref{eq:D2^2psitilde} fix $\alpha_0=\alpha_1=\alpha_2=\alpha_3=\alpha_4=\alpha_5=\alpha_6=\alpha_7$. Then we can equivalently write
\begin{align}
    \begin{split}
        \mathbf{D}^{(2)}\ket{\tilde{\Psi}}&\sim Z^{TL}Z^{TR}Z^{BL}Z^{BR}\frac{(\eta_b^{TL}+\eta_b^{TR})}{2}\frac{(\eta_b^{BL}+\eta_b^{BR})}{2}\frac{(\eta_b^{TR}+\eta_b^{BR})}{2}\frac{(\eta_b^{TL}+\eta_b^{BL})}{2}\ket{\tilde{\Psi}}\\
        \begin{split}
            &=\frac{1}{8}\left(Z^{TR}Z^{TL}Z^{BR}Z^{BL}+\mathrm{D}^{TL}\mathrm{D}^{TR}Z^{BL}Z^{BR}+\mathrm{D}^{TL}\mathrm{D}^{BR}Z^{TR}Z^{BL}+\mathrm{D}^{BR}\mathrm{D}^{TR}Z^{BL}Z^{TL}\right.\\
        &\left.+\mathrm{D}^{TL}\mathrm{D}^{BL}Z^{TR}Z^{BR}+\mathrm{D}^{BL}\mathrm{D}^{TR}Z^{BR}Z^{TL}+\mathrm{D}^{BR}\mathrm{D}^{BL}Z^{TL}Z^{TR}+\mathrm{D}^{TL}\mathrm{D}^{TR}\mathrm{D}^{BL}\mathrm{D}^{BR}\right)\ket{\tilde{\Psi}}
        \end{split}
            \end{split}
\end{align}
where
\begin{align}
    \mathrm{D}^{TL}=Z^{TL}\eta_b^{TL}\, ,\quad \mathrm{D}^{TR}=Z^{TR}\eta_b^{TR}\, ,\quad \mathrm{D}^{BL}=Z^{BL}\eta_b^{BL}\, ,\quad \mathrm{D}^{BR}=Z^{BR}\eta_b^{BR}\, .
\end{align}
We have the projective algebra
\begin{subequations}
    \begin{align}
        &\{\mathrm{D}^{TL}, Z^{TL}\}=0\, ,\quad \{\mathrm{D}^{TL},\eta_b^{TL}\}=0\, ,\quad \{Z^{TL},\eta_b^{TL}\}=0\, ,\\
        &\{\mathrm{D}^{TR}, Z^{TR}\}=0\, ,\quad \{\mathrm{D}^{TR},\eta_b^{TR}\}=0\, ,\quad \{Z^{TR},\eta_b^{TR}\}=0\, ,\\
        &\{\mathrm{D}^{BL}, Z^{BL}\}=0\, ,\quad \{\mathrm{D}^{BL},\eta_b^{BL}\}=0\, ,\quad \{Z^{BL},\eta_b^{BL}\}=0\, ,\\
        &\{\mathrm{D}^{BR}, Z^{BR}\}=0\, ,\quad \{\mathrm{D}^{BR},\eta_b^{BR}\}=0\, ,\quad \{Z^{BR},\eta_b^{BR}\}=0\, . 
    \end{align}
    \label{eq:2Dnoninvertibleprojectivealgebra}
\end{subequations}
This projective algebra indicate that the corner modes cannot be gapped out and they distinguish the two phases represented by the Hamiltonian $\mathrm{H}_{\text{2D-cluster}}$ and $\mathrm{H}_{\text{blue}}$. See Appendix~\ref{sec:stability2dclusterblue} for a rigourous argument for robustness of corner modes with symmetric perturbations to the interface Hamiltonian.
\subsection{Interface between \texorpdfstring{$\rm H_{\text{blue}}$}{Lg} and \texorpdfstring{$\rm H_{\text{red}}$}{Lg}}\label{sec:InterfaceHblueHred}
Let us consider the rectangular interface on a torus with corners at $(i_0+\frac{1}{2},j_0+\frac{1}{2})$, $(i_0+\frac{1}{2},j_1+\frac{1}{2})$, $(i_1+\frac{1}{2},j_0+\frac{1}{2})$ and $(i_1+\frac{1}{2},j_1+\frac{1}{2})$ as before. Let us take $L_x$, $L_y$, $j_1-j_0$ and $i_1-i_0$ to be even. Now, let us define the sets
\begin{subequations}
\red{
\begin{align}
     A&=\{(x,y)\in(\mathbb{Z}/2,\mathbb{Z}/2)|\, y< j_0\}\cup\{(x,y)\in(\mathbb{Z}/2,\mathbb{Z}/2)|\, y> j_1+1\}\nonumber\\
    &\qquad \cup\{(i,j)\in(\mathbb{Z}/2,\mathbb{Z}/2)|\, x< i_0\}\cup\{(x,y)\in(\mathbb{Z}/2,\mathbb{Z}/2)|\, x> i_1+1\}\, , \\
    B&=\{(x,y)\in(\mathbb{Z}/2,\mathbb{Z}/2)|\, i_0+\frac{1}{2}<x<i_1+\frac{1}{2}, j_0+\frac{1}{2}<y<j_1+\frac{1}{2}\}\, .
\end{align}}
\end{subequations}
We define the interface Hamiltonian
\begin{align}
    \mathrm{H}_{\text{blue}|\text{red}}&=\sum_{v_r\in B}\begin{array}{ccc}
        Z_{v_b} & & Z_{v_b}  \\
        & X_{v_r} & \\
        Z_{v_b} & & Z_{v_b}  
    \end{array}+\sum_{v_b\in B}\begin{array}{ccc}
        Y_{v_r} & & Y_{v_r}  \\
        & X_{v_b} & \\
        Y_{v_r} & & Y_{v_r}  
    \end{array}+\sum_{v_b\in B}\begin{array}{ccccc}
       Z_{v_b} & & & &  Z_{v_b}\\
         & Z_{v_r} & & Z_{v_r} & \\
       & & X_{v_b} & & \\
       & Z_{v_r} & & Z_{v_r} &  \\
       Z_{v_b} & & & &  Z_{v_b}
    \end{array}\nonumber\\
    &+\sum_{v_b\in A}\begin{array}{ccc}
        Z_{v_r} & & Z_{v_r}  \\
        & X_{v_b} & \\
        Z_{v_r} & & Z_{v_r}  
    \end{array}-\sum_{v_r\in A}\begin{array}{ccc}
        Y_{v_b} & & Y_{v_b}  \\
        & X_{v_r} & \\
        Y_{v_b} & & Y_{v_b}  
    \end{array}-\sum_{v_r\in A}\begin{array}{ccccc}
       Z_{v_r} & & & &  Z_{v_r}\\
         & Z_{v_b} & & Z_{v_b} & \\
       & & X_{v_r} & & \\
       & Z_{v_b} & & Z_{v_b} &  \\
       Z_{v_r} & & & &  Z_{v_r}
    \end{array}\,.
\end{align}
We note that we applied a finite depth local unitary conjugation (that is the product $\prod_{v_b\in B}Z_{v_b}$) on the Hamiltonian $\mathrm{H}_{\text{blue}}$ to change the sign in the second and third term.
We note that we could add terms in the Hamiltonian along the interface everywhere except at the corners that respect the symmetries and commute with each term in the interface Hamiltonian. The new Hamiltonain is 
\begin{align}
    \tilde{\rm H}_{\text{blure}|\text{red}}&=\mathrm{H}_{\text{blue}|\text{red}}-\sum_{\substack{v_b=(i+\frac{1}{2},j_1+\frac{1}{2})\\
    i_0<i<i_1}}\begin{array}{ccc}
        Z & & Z  \\
        & X_{v_b} & \\
        Y & & Y  
    \end{array}\left(1+\begin{array}{ccccc}
        Z & & Z_{v_b}^2 & & Z \\
        & X & & X & \\
        Z & & Z^2 & & Z  
    \end{array}\right)\nonumber\\
    &-\sum_{\substack{v_b=(i+\frac{1}{2},j_0+\frac{1}{2})\\
    i_0<i<i_1}}\begin{array}{ccc}
        Y & & Y  \\
        & X_{v_b} & \\
        Z & & Z  
    \end{array}\left(1+\begin{array}{ccccc}
        Z & & Z^2 & & Z \\
        & X & & X & \\
        Z & & Z_{v_b}^2 & & Z  
    \end{array}\right)-\sum_{\substack{v_b=(i_1+\frac{1}{2},j+\frac{1}{2})\\
    j_0<j<j_1}}\begin{array}{ccc}
        Y & & Z  \\
        & X_{v_b} & \\
        Y & & Z  
    \end{array}\left(1+\begin{array}{ccc}
       Z &   & Z \\
         & X &  \\
    Z^2  &   & Z^2_{v_b}\\
         & X & \\
       Z &   & Z
    \end{array}\right)\nonumber\\
    &-\sum_{\substack{v_b=(i_0+\frac{1}{2},j+\frac{1}{2})\\
    j_0<j<j_1}}\begin{array}{ccc}
        Z & & Y  \\
        & X_{v_b} & \\
        Z & & Y  
    \end{array}\left(1+\begin{array}{ccc}
       Z &   & Z \\
         & X &  \\
    Z^2_{ v_b}  &   & Z^2\\
         & X & \\
       Z &   & Z
    \end{array}\right)-\sum_{\substack{v_r=(i,j_0)\\
    i_0<i\leq i_1}}\begin{array}{ccc}
        Z & & Z  \\
        & X_{v_r} & \\
        Y & & Y  
    \end{array}\left(1+\begin{array}{ccccc}
        Z & & Z_{v_r}^2 & & Z \\
        & X & & X & \\
        Z & & Z^2 & & Z  
    \end{array}\right)\nonumber\\
    &-\sum_{\substack{v_r=(i,j_1+1)\\
    i_0<i\leq i_1}}\begin{array}{ccc}
        Y & & Y  \\
        & X_{v_r} & \\
        Z & & Z  
    \end{array}\left(1+\begin{array}{ccccc}
        Z & & Z^2 & & Z \\
        & X & & X & \\
        Z & & Z_{v_r}^2 & & Z  
    \end{array}\right)-\sum_{\substack{v_r=(i_0,j)\\
    j_0<j\leq j_1}}\begin{array}{ccc}
        Y & & Z  \\
        & X_{v_r} & \\
        Y & & Z  
    \end{array}\left(1+\begin{array}{ccc}
       Z &   & Z \\
         & X &  \\
    Z^2  &   & Z^2_{v_r}\\
         & X & \\
       Z &   & Z
    \end{array}\right)\nonumber\\
    &-\sum_{\substack{v_r=(i_1+1,j)\\
    j_0<j\leq j_1}}\begin{array}{ccc}
        Z & & Y  \\
        & X_{v_r} & \\
        Z & & Y  
    \end{array}\left(1+\begin{array}{ccc}
       Z &   & Z \\
         & X &  \\
    Z^2_{ v_r}  &   & Z^2\\
         & X & \\
       Z &   & Z
    \end{array}\right).
\end{align} 
The number of stabilizers in this Hamiltonian is $2L_xL_y-8$. So, naively we would expect there would be eight gapless modes contributing to $2^8$ fold degeneracy. However, we can add the following additional terms to gap out four among them.
\begin{align}
   \tilde{\rm H}'_{\text{blue}|\text{red}}&= \tilde{\rm H}_{\text{blue}|\text{red}}-\begin{array}{cccc}
         & Z_{v_r}& & Y_{v_r}  \\
        Y_{v_b} & & \boxed{Z_{v_b}X_{v_b}} & \\
        &X_{v_r} Z_{v_r}& &Z_{v_r} \\
        Y_{v_b} & & Y_{v_{b}}
    \end{array}\left(1+\begin{array}{cccccc}
         &  &  & Z_{v_b} & & Z_{v_b} \\
         Z_{v_r} & & Z_{v_r} & & X_{v_r}  & \\
         & X_{v_b} & & Z_{v_b} & & Z_{v_b}\\
         & & Z_{v_r} & & Z_{v_r} & \\
         & X_{v_b} & & X_{v_b} & & \\
         Z_{v_r} & & & & Z_{v_r}
    \end{array}\right)
    \nonumber\\
    &-\begin{array}{cccc}
        Y_{v_r} & & Z_{v_r} &  \\
         & \boxed{Z_{v_b}X_{v_b}} & & Y_{v_b}\\
         Z_{v_r} & & X_{v_r}Z_{v_r} & \\
         & Y_{v_b} & & Y_{v_b}
    \end{array}\left(1+\begin{array}{cccccc}
        Z_{v_b} & & Z_{v_b} & & &  \\
          & X_{v_r} & & Z_{v_r} & & Z_{v_r}\\
          Z_{v_b} & & Z_{v_b} & & X_{v_b} & \\
          & Z_{v_r} & & Z_{v_r} & & \\
          & & X_{v_b} & & X_{v_b} & \\
          Z_{v_r} & & & & &Z_{v_r}
    \end{array}\right)\nonumber\\
      &-\begin{array}{cccc}
        Y_{v_b} & & Y_{v_b} &  \\
         &  X_{v_r}Z_{v_r} & & Z_{v_r}\\
         Y_{v_b} & & \boxed{Z_{v_b}X_{v_b}} & \\
         & Z_{v_r} & & Y_{v_r}
    \end{array}\left(1+\begin{array}{cccccc}
        Z_{v_r} & & & & Z_{v_r}& \\
         & X_{v_b} & & X_{v_b} & & \\ 
         & & Z_{v_r} & & Z_{v_r} & \\
         & X_{v_b} & & Z_{v_b} & & Z_{v_b}\\
         Z_{v_r} & & Z_{v_r} & & X_{v_r} & \\
         & & & Z_{v_b} & & Z_{v_b}
    \end{array}\right)\nonumber\\
    &-\begin{array}{cccc}
         & Y_{v_b} & & Y_{v_b} \\
        Z_{v_r} & & X_{v_r}Z_{v_r} & \\
        & \boxed{Z_{v_b}X_{v_b}} & & Y_{v_b}\\
        Y_{v_r} & & Z_{v_r} & 
    \end{array}\left(1+\begin{array}{cccccc}
         & Z_{v_r} & & & & Z_{v_r} \\
         & & X_{v_b} & & X_{v_b} & \\
         & Z_{v_r} & & Z_{v_r} & & \\
         Z_{v_b} & & Z_{v_b} & & X_{v_b} & \\
         & X_{v_r} & & Z_{v_r} & & Z_{v_r}\\
         Z_{v_b}& & Z_{v_b} & & & 
    \end{array}\right)\,.
    \label{eq:H'blue|red}
\end{align}
where the boxed vertices in the four lines in the above equation are at $(i_0+\frac{1}{2},j_0+\frac{1}{2})$,$(i_1+\frac{1}{2},j_0+\frac{1}{2})$, $(i_0+\frac{1}{2},j_1+\frac{1}{2})$, and $(i_1+\frac{1}{2},j_1+\frac{1}{2})$ respectively. Suppose $\ket{\Psi}$ is a ground state of the Hamiltonian. Then 
\begin{align}
    \eta_b\ket{\tilde{\Psi}}=\raisebox{-60pt}{\begin{tikzpicture}
    \node at (0,0) {$\color{blue}{X}$};
    \node at (3,0) {$\color{blue}{X}$};
    \node at (0,3) {$\color{blue}{X}$};
    \node at (3,3) {$\color{blue}{X}$};
    \node at (0.5,0.5) {$\color{red}{Y}$};
    \node at (-0.5,0.5) {$\color{red}{Z}$};
    \node at (0.5,-0.5) {$\color{red}{Z}$};
    \node at (-0.5,-0.5) {$\color{red}{Z}$};
    \node at (3.5,0.5) {$\color{red}{Z}$};
    \node at (3.5,-0.5) {$\color{red}{Z}$};
    \node at (2.5,-0.5) {$\color{red}{Z}$};
    \node at (2.5,0.5) {$\color{red}{Y}$};
    \node at (0.5,2.5) {$\color{red}{Y}$};
    \node at (0.5,3.5) {$\color{red}{Z}$};
    \node at (-0.5,2.5) {$\color{red}{Z}$};
    \node at (-0.5,3.5) {$\color{red}{Z}$};
    \node at (2.5,2.5) {$\color{red}{Y}$};
    \node at (2.5,3.5) {$\color{red}{Z}$};
    \node at (3.5,3.5) {$\color{red}{Z}$};
    \node at (3.5,2.5) {$\color{red}{Z}$};
    \draw[thick] (0,0)--(3,0)--(3,3)--(0,3)--(0,0);
\end{tikzpicture}}\ket{\Psi}\,,\qquad \eta_b\ket{\Psi}=\raisebox{-60pt}{\begin{tikzpicture}
    \node at (0,0) {$\color{blue}{Z}$};
    \node at (3,0) {$\color{blue}{Z}$};
    \node at (0,3) {$\color{blue}{Z}$};
    \node at (3,3) {$\color{blue}{Z}$};
    \node at (-0.5,-0.5) {$\color{red}{X}$};
     \node at (-1,0) {$\color{blue}{Y}$};
     \node at (-1,-1) {$\color{blue}{Y}$};
     \node at (0,-1) {$\color{blue}{Y}$};
     \node at (3.5,-0.5) {$\color{red}{X}$};
     \node at (4,0) {$\color{blue}{Y}$};
     \node at (4,-1) {$\color{blue}{Y}$};
     \node at (3,-1) {$\color{blue}{Y}$};
      \node at (-0.5,3.5) {$\color{red}{X}$};
     \node at (0,4) {$\color{blue}{Y}$};
     \node at (-1,3) {$\color{blue}{Y}$};
     \node at (-1,4) {$\color{blue}{Y}$};
     \node at (3.5,3.5) {$\color{red}{X}$};
     \node at (4,4) {$\color{blue}{Y}$};
     \node at (4,3) {$\color{blue}{Y}$};
     \node at (3,4) {$\color{blue}{Y}$};
    \draw[thick] (0,0)--(3,0)--(3,3)--(0,3)--(0,0);
\end{tikzpicture}}\ket{\Psi}\,.
\end{align}
In \eqref{eq:H'blue|red}, we added the product of local operators around each corner with it's $\mathbf{D}^{(2)}$ conjugated terms. Now, let us call the ground state of $\tilde{\rm H}'_{\text{blue}|\text{red}}$ to be $\ket{\tilde{\Psi}}$. We find
\begin{align} \eta_r\ket{\tilde{\Psi}}= \eta_b\ket{\tilde{\Psi}}=\raisebox{-60pt}{\begin{tikzpicture}
    \node at (0,0) {$\color{blue}{X}$};
    \node at (3,0) {$\color{blue}{X}$};
    \node at (0,3) {$\color{blue}{X}$};
    \node at (3,3) {$\color{blue}{X}$};
    \node at (0.5,0.5) {$\color{red}{Y}$};
    \node at (-0.5,0.5) {$\color{red}{Z}$};
    \node at (0.5,-0.5) {$\color{red}{Z}$};
    \node at (-0.5,-0.5) {$\color{red}{Z}$};
    \node at (3.5,0.5) {$\color{red}{Z}$};
    \node at (3.5,-0.5) {$\color{red}{Z}$};
    \node at (2.5,-0.5) {$\color{red}{Z}$};
    \node at (2.5,0.5) {$\color{red}{Y}$};
    \node at (0.5,2.5) {$\color{red}{Y}$};
    \node at (0.5,3.5) {$\color{red}{Z}$};
    \node at (-0.5,2.5) {$\color{red}{Z}$};
    \node at (-0.5,3.5) {$\color{red}{Z}$};
    \node at (2.5,2.5) {$\color{red}{Y}$};
    \node at (2.5,3.5) {$\color{red}{Z}$};
    \node at (3.5,3.5) {$\color{red}{Z}$};
    \node at (3.5,2.5) {$\color{red}{Z}$};
    \draw[thick] (0,0)--(3,0)--(3,3)--(0,3)--(0,0);
\end{tikzpicture}}\equiv\begin{array}{cc}
    \eta_b^{TL} &\eta_{b}^{TR}  \\
    \eta_b^{BL} & \eta_b^{BR}
\end{array}\ket{\tilde{\Psi}}
\end{align}
\begin{align}
    &\eta_{r,j}^x|\tilde{\Psi}\rangle =|\tilde{\Psi}\rangle\quad \forall j\neq j_0,j_1+1 \, ,\quad \eta_{r,i}^y|\tilde{\Psi}\rangle =|\tilde{\Psi}\rangle\quad \forall i\neq i_0,i_1+1\,,\quad \eta_{b,j}^x|\tilde{\Psi}\rangle =|\tilde{\Psi}\rangle\,\quad \forall j\neq j_0,j_1\, ,\quad\eta_{b,i}^y|\tilde{\Psi}\rangle =|\tilde{\Psi}\rangle\,\quad \forall i\neq i_0,i_1\, ,\nonumber\\
    &\eta_{b,j_0}^x|\tilde{\Psi}\rangle =\eta_b^{BL}\eta_b^{BR}|\tilde{\Psi}\rangle\, ,\quad \eta_{b,j_1}^x|\tilde{\Psi}\rangle =\eta_b^{TL}\eta_b^{TR}|\tilde{\Psi}\rangle\, ,\quad \eta_{b,i_0}^y|\tilde{\Psi}\rangle =\eta_b^{BL}\eta_b^{TL}|\tilde{\Psi}\rangle\, ,\quad \eta_{b,i_1}^x|\tilde{\Psi}\rangle =\eta_b^{BR}\eta_b^{TR}|\tilde{\Psi}\rangle\, ,
    \nonumber\\
    &\eta_{r,j_0}^x|\tilde{\Psi}\rangle =\eta_b^{BL}\eta_b^{BR}|\tilde{\Psi}\rangle\, ,\quad \eta_{r,j_1+1}^x|\tilde{\Psi}\rangle =\eta_b^{TL}\eta_b^{TR}|\tilde{\Psi}\rangle\, ,\quad \eta_{r,i_0}^y|\tilde{\Psi}\rangle =\eta_b^{BL}\eta_b^{TL}|\tilde{\Psi}\rangle\, ,\quad \eta_{r,i_1+1}^x|\tilde{\Psi}\rangle =\eta_b^{BR}\eta_b^{TR}|\tilde{\Psi}\rangle\,. 
\end{align}
where $\eta_b^{TL}$, $\eta_b^{TR}$, $\eta_b^{BL}$, and $\eta_b^{BR}$ are defined in \eqref{eq:etabluemodes}.
Now, we could define $Z$ operators as in \eqref{eq:Zops} and the relations \eqref{eq:etaZanticommutation},\eqref{eq:cornermodealgebra1}, and \eqref{eq:D^24Z}  hold in this case as well. Let us consider $(\mathbf{D}^{(2)})^2$ on the ground space
\begin{align}
\label{eq:D2^2psitilde2}
    \begin{split}
        (\mathbf{D}^{(2)})^2\ket{\tilde{\Psi}}&\sim\prod_{j=1}^{L_y}\frac{(1+\eta^x_{r,j})}{2}\prod_{i=1}^{L_x}\frac{(1+\eta^y_{r,i})}{2}\prod_{j=1}^{L_y}\frac{(1+\eta^x_{b,j})}{2}\prod_{i=1}^{L_x}\frac{(1+\eta^y_{b,i})}{2}\ket{\tilde{\Psi}}\\
    &\sim\frac{(1+\eta^x_{b,j_1})}{2}\frac{(1+\eta^x_{b,j_0})}{2}\frac{(1+\eta^y_{b,i_1})}{2}\frac{(1+\eta^y_{b,i_0})}{2}\frac{(1+\eta^x_{r,j_1+1})}{2}\frac{(1+\eta^x_{r,j_0})}{2}\frac{(1+\eta^y_{r,i_1+1})}{2}\frac{(1+\eta^y_{r,i_0})}{2}\ket{\tilde{\Psi}}\\
    &\sim\frac{(1+\eta_b^{TL}\eta_b^{TR})^2}{2}\frac{(1+\eta_b^{BL}\eta_b^{BR})^2}{2}\frac{(1+\eta_b^{TR}\eta_b^{BR})^2}{2}\frac{(1+\eta_b^{TL}\eta_b^{BL})^2}{2}\ket{\tilde{\Psi}}\\
    &\sim\frac{(1+\eta_b^{TL}\eta_b^{TR})}{2}\frac{(1+\eta_b^{BL}\eta_b^{BR})}{2}\frac{(1+\eta_b^{TR}\eta_b^{BR})}{2}\frac{(1+\eta_b^{TL}\eta_b^{BL})}{2}\ket{\tilde{\Psi}}.
    \end{split}
\end{align}
The rest of the analysis is exactly the same as in the previous interface mode analysis between $\mathrm{H}_{\text{2D-cluster}}$ and $\mathrm{H}_{\text{blue}}$. We obtain the projective algebra \eqref{eq:2Dnoninvertibleprojectivealgebra}. This projective algebra indicate that the corner modes cannot be gapped out without breaking the symmetry and they distinguish the two phases represented by the Hamiltonian $\mathrm{H}_{\text{blue}}$ and $\mathrm{H}_{\text{red}}$.
\end{widetext}
\subsection{Interface between \texorpdfstring{$\rm H_{\text{2D-cluster}}$}{Lg} and \texorpdfstring{$\rm H_{\text{blue}}^{x;k}$}{Lg}}\label{sec:interface2dclusterxkblue}
In this case, it is sufficient to consider two line interfaces. We place two line interfaces at $x=l+\frac{1}{2}$ and $x=L_x-\frac{1}{2}$ for some $l\neq L_x-1\in\mathbb{Z}_{L_x}$ such that $L_x-l$ is odd and $L_y$ is even. As before, these two lines divide the torus into regions $A$ and $B$ (including the boundary interface lines) containing $(L_x,0)$ and $(L_x-1,0)$ respectively. The interface Hamiltonian is
\begin{align}
    \mathrm{H}_{\text{2D-cluster}\rvert \text{blue}^{x;k}}=\mathrm{H}_{\text{2D-cluster}}\rvert_A+\mathrm{H}_{\text{blue}}^{x;k}\rvert_B
\end{align}
In the interface Hamiltonian, we remove all the terms that are not supported entirely on $A$ or $B$. Let $|\Psi\rangle$ be a ground state among the degenerate ground states of the interface Hamiltonian. Then, we have the following:
\begin{widetext}
    \begin{subequations}
    \begin{align}
    &\eta^x_{r,j}\ket{\Psi}=\ket{\Psi}\, ,\qquad \eta^y_{r,i}\ket{\Psi}=\ket{\Psi}\, ,\qquad
    \eta^y_{b,i}\ket{\Psi}=\ket{\Psi}, \text{ for } i\neq L_x-1,l\, ,\\
    &\eta^x_{b,j}\ket{\Psi}=\eta^{x(L)}_{b,j}\eta^{x(R)}_{b,j}\ket{\Psi}\text{ where } \eta^{x(R)}_{b,j}=\begin{array}{ccc}
        Z & & Z  \\
        & X_{L_x-\frac{1}{2},j+\frac{1}{2}} & \\
        Z & & Z  
    \end{array}\text{ and    }\eta^{x(L)}_{b,j}=\begin{array}{ccc}
        Z & & Z  \\
        & X_{l+\frac{1}{2},j+\frac{1}{2}} & \\
        Z & & Z  
    \end{array}\, , \forall j\neq k,k-1\in\mathbb{Z}_{L_y}\\
    &\eta^x_{b,k}\ket{\Psi}=\eta^{x(L)}_{b,k}\eta^{x(R)}_{b,k}\ket{\Psi}\text{ where } \eta^{x(R)}_{b,k}=\begin{array}{ccc}
        Z & & Z  \\
        & X_{L_x-\frac{1}{2},k+\frac{1}{2}} & \\
        Y & & Z  
    \end{array}\text{ and    }\eta^{x(L)}_{b,k}=\begin{array}{ccc}
        Z & & Z  \\
        & X_{l+\frac{1}{2},j+\frac{1}{2}} & \\
        Z & & Y  
    \end{array}\, ,\\
    &\eta^x_{b,k-1}\ket{\Psi}=\eta^{x(L)}_{b,k-1}\eta^{x(R)}_{b,k-1}\ket{\Psi}\text{ where } \eta^{x(R)}_{b,k-1}=\begin{array}{ccc}
        Y & & Z  \\
        & X_{L_x-\frac{1}{2},k-\frac{1}{2}} & \\
        Z & & Z  
    \end{array}\text{ and    }\eta^{x(L)}_{b,k-1}=\begin{array}{ccc}
        Z & & Y  \\
        & X_{l+\frac{1}{2},j+\frac{1}{2}} & \\
        Z & & Z  
    \end{array}\, ,
\end{align}
\end{subequations}
We note that $\eta^{x(L)}_{b,j}$ and $\eta^{x(R)}_{b,j}$ for $j\neq k,k-1$ commute with both $\mathbb{Z}_2\times\mathbb{Z}_2$ subsystem symmetries and $\mathbf{D}^{(2)}$ and hence can be added to $\mathrm{H}_{\text{2D-cluster}\rvert \text{blue}^{x;k}}$. \red{Similarly we can add the term $\eta^{x(L)}_{b,k}\eta^{x(L)}_{b,k-1}$ and $\eta^{x(R)}_{b,k}\eta^{x(R)}_{b,k-1}$} to obtain a new Hamitlonian
\begin{align}
    \mathrm{H}_{\text{2D-cluster}\rvert \text{blue}^{x;k}}^{'}&=\mathrm{H}_{\text{2D-cluster}\rvert \text{blue}^{x;k}}-\sum_{j\neq k,k-1}\begin{array}{ccc}
        Z & & Z  \\
        & X_{L_x-\frac{1}{2},j+\frac{1}{2}} & \\
        Z & & Z  
    \end{array}-\sum_{j\neq k,k-1}\begin{array}{ccc}
        Z & & Z  \\
        & X_{l+\frac{1}{2},j+\frac{1}{2}} & \\
        Z & & Z  
    \end{array}\nonumber\\
    &\hspace{2cm}\red{-\begin{array}{ccc}
        Z & & Z \\
         &X_{l+\frac{1}{2},k} &\\
         &X_{l+\frac{1}{2},k-1} & \\
         Z& & Z
    \end{array}-\begin{array}{ccc}
        Z & & Z \\
         &X_{L_x-\frac{1}{2},k} &\\
         &X_{L_x-\frac{1}{2},k-1} & \\
         Z& & Z
    \end{array}\,.}
\end{align}
\end{widetext}
Let us denote a generic ground state of the Hamiltonian $\mathrm{H}_{\text{2D-cluster}\rvert \text{blue}^{x;k}}^{'}$ by $|\tilde{\Psi}\rangle$. Then, we have
\begin{subequations}
    \begin{align}
    &\eta^x_{r,j}\ket{\tilde{\Psi}}=\ket{\tilde{\Psi}}\, ,\qquad \eta^y_{r,i}\ket{\tilde{\Psi}}=\ket{\tilde{\tilde{\Psi}}}\, ,\\
    &\eta^y_{b,i}\ket{\tilde{\Psi}}=\ket{\tilde{\Psi}}, \text{ for } i\neq L_x-1,l\, ,\\
    &\eta^x_{b,j}\ket{\tilde{\Psi}}=\ket{\tilde{\Psi}}\, ,\text{ for } j\neq k,k-1\\
    &\eta^x_{b,k}\ket{\tilde{\Psi}}=\eta^{x(L)}_{b,k\red{-1}}\eta^{x(R)}_{b,k\red{-1}}\ket{\tilde{\Psi}}\, ,\\
    &\eta^x_{b,k-1}\ket{\tilde{\Psi}}=\eta^{x(L)}_{b,k}\eta^{x(R)}_{b,k}\ket{\tilde{\Psi}}\, .
\end{align}
\end{subequations}
\\
Let us define \red{the following operators for $j=k,k-1$}
\begin{subequations}
\begin{align}
    &Z^{R}_{j}=Z_{L_x-\frac{1}{2},j+\frac{1}{2}}\, , \quad Z^{L}_{j}=Z_{l+\frac{1}{2},j+\frac{1}{2}}\\
    &X^{R}_{j}=\eta^{x(R)}_{b,j}\, , \quad X^{L}_{j}=\eta^{x(L)}_{b,j}\,.
\end{align}
\end{subequations}
These localized operators on the left and right interface lines anti-commute
\begin{align}
    \{Z_j^L,X_j^L\}=0\, ,\quad \{Z_j^R,X_j^R\}=0\, .
\end{align}
Hence, they form an operator basis on the ground space.

We note that the following localized operators anti-commute with $\mathbf{D}^{(2)}$ on the ground space of $\mathrm{H}_{\text{2D-cluster}\rvert \text{blue}^{x;k}}^{'}$.
\begin{widetext}
\begin{subequations}
    \begin{align}
    &\mathbf{D}^{(2)}\eta^{x(L)}_{b,k-1}\ket{\tilde{\Psi}}=-\eta^{x(L)}_{b,k-1}\mathbf{D}^{(2)}\ket{\tilde{\Psi}}\, \qquad \mathbf{D}^{(2)}\eta^{x(R)}_{b,k-1}\ket{\tilde{\Psi}}=-\eta^{x(R)}_{b,k-1}\mathbf{D}^{(2)}\ket{\tilde{\Psi}}\\
    &\mathbf{D}^{(2)}\eta^{x(L)}_{b,k}\ket{\tilde{\Psi}}=-\eta^{x(L)}_{b,k}\mathbf{D}^{(2)}\ket{\tilde{\Psi}}\, \qquad \mathbf{D}^{(2)}\eta^{x(R)}_{b,k}\ket{\tilde{\Psi}}=-\eta^{x(R)}_{b,k}\mathbf{D}^{(2)}\ket{\tilde{\Psi}}\, .
\end{align}
\label{eq:D2etaprojective}
\end{subequations}
On the other hand, 
\begin{align}
    \mathbf{D}^{(2)}\begin{array}{cc}
       Z^{L}_k  & Z^{R}_k \\
        Z^{L}_{k-1} & Z^{R}_{k-1}
    \end{array}\ket{\tilde{\Psi}}=\begin{array}{cc}
       Z^{L}_k  & Z^{R}_k \\
        Z^{L}_{k-1} & Z^{R}_{k-1}
    \end{array}\mathbf{D}^{(2)}\ket{\tilde{\Psi}}\, .
    \label{eq:D2onfourZ}
\end{align}
We note that 
\begin{align}
    (\mathbf{D}^{(2)})^2\ket{\tilde{\Psi}}&=\prod_{j=1}^{L_y}\frac{(1+\eta^x_{r,j})}{2}\prod_{i=1}^{L_x}\frac{(1+\eta^y_{r,i})}{2}\prod_{j=1}^{L_y}\frac{(1+\eta^x_{b,j})}{2}\prod_{i=1}^{L_x}\frac{(1+\eta^y_{b,i})}{2}\ket{\tilde{\Psi}}\nonumber\\
    &=\frac{(1+\eta^{x(L)}_{b,k}\eta^{x(R)}_{b,k})}{2}\frac{(1+\eta^{x(L)}_{b,k-1}\eta^{x(R)}_{b,k-1})}{2}\frac{(1+\eta^{x(L)}_{b,k}\eta^{x(L)}_{b,k-1})}{2}\frac{(1+\eta^{x(R)}_{b,k}\eta^{x(R)}_{b,k-1})}{2}\ket{\tilde{\Psi}}\nonumber\\
    &=\frac{(1+X^{L}_{k}X^{R}_{k})}{2}\frac{(1+X^{L}_{k-1}X^{R}_{k-1})}{2}\frac{(1+X^{L}_{k}X^{L}_{k-1})}{2}\frac{(1+X^{R}_{k}X^{R}_{k-1})}{2}\ket{\tilde{\Psi}}\, .
    \label{eq:D2^2ontildepsi}
\end{align}
From the properties~\eqref{eq:D2etaprojective}, \eqref{eq:D2onfourZ}, and \eqref{eq:D2^2ontildepsi} of $\mathbf{D}^{(2)}$, we conclude
\begin{subequations}
   \begin{align}
    \mathbf{D}^{(2)}\ket{\tilde{\Psi}}&\sim Z^{L}_kZ^{R}_k Z^{L}_{k-1}Z^{R}_{k-1}\frac{(X^{L}_{k}+X^{R}_{k})}{2}\frac{(X^{L}_{k-1}+X^{R}_{k-1})}{2}\frac{(X^{L}_{k}+X^{L}_{k-1})}{2}\frac{(X^{R}_{k}+X^{R}_{k-1})}{2}\ket{\tilde{\Psi}}\\
    \begin{split}
            &=\frac{1}{8}\left(Z^{R}_kZ^{L}_kZ^{R}_{k-1}Z^{L}_{k-1}+\mathrm{D}^{L}_k\mathrm{D}^{R}_{k}Z^{L}_{k-1}Z^{R}_{k-1}+\mathrm{D}^{L}_k\mathrm{D}^{R}_{k-1}Z^{R}_{k}Z^{L}_{k-1}+\mathrm{D}^{R}_{k-1}\mathrm{D}^{R}_kZ^{L}_{k-1}Z^{L}_k\right.\\
        &\left.+\mathrm{D}^{L}_k\mathrm{D}^{L}_{k-1}Z^{R}_kZ^{R}_{k-1}+\mathrm{D}^{L}_{k-1}\mathrm{D}^{R}_kZ^{R}_{k-1}Z^{L}_k+\mathrm{D}^{R}_{k-1}\mathrm{D}^{L}_{k-1}Z^{L}_kZ^{R}_k+\mathrm{D}^{L}_k\mathrm{D}^{R}_k\mathrm{D}^{L}_{k-1}\mathrm{D}^{R}_{k-1}\right)\ket{\tilde{\Psi}}
        \end{split}
\end{align} 
\end{subequations}
where
\begin{align}
\mathrm{D}^{L}_k=Z^{L}_k X_k^{L}\, ,\quad \mathrm{D}^{R}_k=Z^{R}_k X_k^{R}\, ,\quad \mathrm{D}^{L}_{k-1}=Z^{L}_{k-1}X_{k-1}^{L}\, ,\quad \mathrm{D}^{R}_{k-1}=Z^{R}_{k-1}X_{k-1}^{R}\, .
\end{align}
\red{We note that on the groundspace, $X_k^L=X_{k-1}^L$ and $X_k^R=X_{k-1}^R$. So effectively }we have the projective algebra
\begin{subequations}
\red{
    \begin{align}
        &\{\mathrm{D}^{L}_k, Z^{L}_k\}=0\, ,\quad \{\mathrm{D}^{L}_k,X^{L}_k\}=0\, ,\quad \{Z^{L}_k,X_{k}^{L}\}=0\, ,\\
        &\{\mathrm{D}^{R}_k, Z^{R}_k\}=0\, ,\quad \{\mathrm{D}^{R}_k,X_k^{R}\}=0\, ,\quad \{Z^{R}_k,X_k^{R}\}=0\, .
    \end{align}}
\end{subequations}
 The corresponding edge modes cannot be gapped out and distinguish between $\rm H_{\text{2D-cluster}}$ and $\rm H_{\text{blue}}^{x;k}$. See Figure~\ref{fig:lineinterfacemode} for an illustration.
\begin{figure*}
    \centering
    \includegraphics[scale=1]{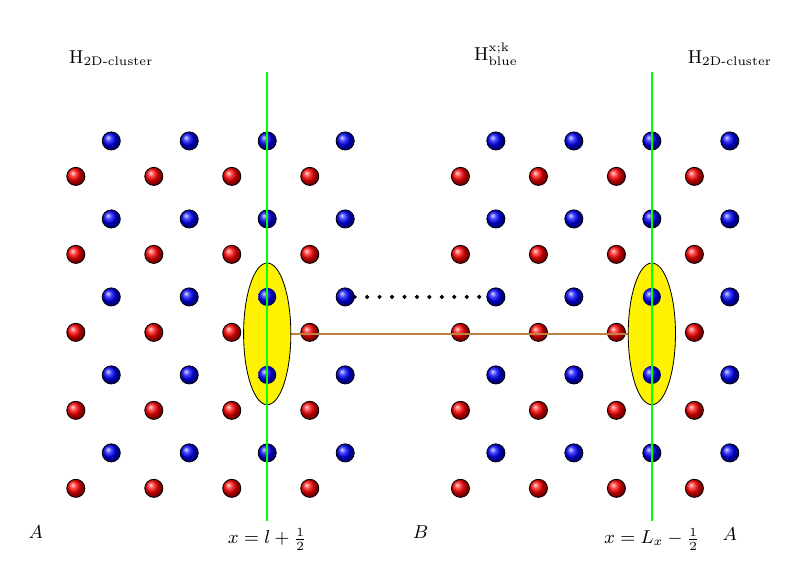}
    \caption{Interface between $\rm H_{\text{2D-cluster}}$ and $\rm H_{\text{blue}}^{x;k}$. The $x=k$ line is colored brown inside region $B$ where $\rm H_{\text{blue}}^{x;k}$ is placed. The green lines indicate the interface lines, and the yellow ellipses are the edge modes that can not be gapped out.}
    \label{fig:lineinterfacemode}
\end{figure*}
\end{widetext}
\section{Interface between two distinct noninvertible SSPTs in 2D: edge modes}\label{sec:interfaceanalysisotherspt}
\subsection{Line interface between \texorpdfstring{$\mathrm{H}_{\text{2D-}\widetilde{\text{clstr}}}$}{Lg} and \texorpdfstring{$ \mathrm{H}_{\text{2D-}\widetilde{\text{clstr}}}^{x}$}{Lg}}
Let us consider the line interface between the two Hamiltonians. We take the interface line to be along $x=l+\frac{1}{2}$ and $x=L_x-\frac{1}{2}$ for some $l\neq L_x-1\in \mathbb{Z}_{L_x}$ such that $L_x$ is even, $L_x-l-1$ to be a multiple of four (this choice is made so that the horizontal non-local order parameter is nonzero) and $L_y=4k+2$. We consider the regions $A$ and $B$ defined in \eqref{eq:regionABline} with $\mathrm{H}_{\text{2D-}\widetilde{\text{clstr}}}^{x}$ defined in \eqref{eq:2dtildeclstrx} in the region $B$ and $\mathrm{H}_{\text{2D-}\widetilde{\text{clstr}}}$ defined in \eqref{eq:2dtildeclstr} in region $A$.
\begin{align}
    \mathrm{H}_{\text{2D-}\widetilde{\text{clstr}}|\text{2D-}\widetilde{\text{clstr}}^x}=\mathrm{H}_{\text{2D-}\widetilde{\text{clstr}}}\rvert_{A}+\mathrm{H}_{\text{2D-}\widetilde{\text{clstr}}}^x\rvert_{B}
\end{align}
We note that there are no terms in the Hamiltonian supported on both regions $A$ and $B$. Let $\ket{\Psi}$ be a ground state. Then we find
\begin{widetext}
\begin{subequations}
    \begin{align}
    &\eta^x_{r,j}\ket{\Psi}=\ket{\Psi}\quad \forall j\in\mathbb{Z}_{L_y}\, ,\qquad \eta^y_{r,i}\ket{\Psi}=\ket{\Psi}\quad \forall i\in\mathbb{Z}_{L_x}\, ,\qquad
    \eta^y_{b,i}\ket{\Psi}=\ket{\Psi}, \text{ for } i\neq L_x-1,l,i\in\mathbb{Z}_{L_x}\quad\, ,\\
    &\qquad\eta^x_{b,j}\ket{\Psi}=\eta^{x(L)}_{b,j}\eta^{x(R)}_{b,j}\ket{\Psi}\text{ where } \eta^{x(L)}_{b,j}=\begin{array}{ccc}
        Y & & Z  \\
        & X_{l+\frac{1}{2},j+\frac{1}{2}} & \\
        Z & & Y  
    \end{array}\text{ and    }\eta^{x(R)}_{b,j}=\begin{array}{ccc}
        Y & & Z  \\
        & X_{L_x-\frac{1}{2},j+\frac{1}{2}} & \\
        Z & & Y  
    \end{array}\, .
    \end{align}
\end{subequations}
\end{widetext}
Let us define 
\begin{subequations}
\begin{align}
    &Z^{R}_{j}=Z_{L_x-\frac{1}{2},j+\frac{1}{2}}\, , \quad Z^{L}_{j}=Z_{l+\frac{1}{2},j+\frac{1}{2}}\\
    &X^{R}_{j}=\eta^{x(R)}_{b,j}\, , \quad X^{L}_{j}=\eta^{x(L)}_{b,j}\,.
\end{align}
\end{subequations}
These localized operators on the left and right interface lines anti-commute
\begin{align}
    \{Z_j^L,X_j^L\}=0\, ,\quad \{Z_j^R,X_j^R\}=0\, .
\end{align}
Hence, they form an operator basis on the ground space.
The operators $X^{L}_{j}$ and $X^{R}_{j}$ anti-commute with $\mathbf{D}^{(2)}$ on the ground space
\begin{align}
    \{X^{L}_{j},\mathbf{D}^{(2)}\}|\Psi\rangle=0\, ,\quad \{X^{R}_{j},\mathbf{D}^{(2)}\}|\Psi\rangle=0\, .
    \label{eq:XjD2anticommuteotherSPT}
\end{align}
This is because the first terms in \eqref{eq:2dtildeclstr} and \eqref{eq:2dtildeclstrx} have opposite signs. On the other hand, 
\begin{align}
    \mathbf{D}^{(2)}\begin{array}{cc}
       Z^{L}_j  & Z^{R}_j \\
        Z^{L}_{j-1} & Z^{R}_{j-1}
    \end{array}\ket{\Psi}=\begin{array}{cc}
       Z^{L}_j  & Z^{R}_j \\
        Z^{L}_{j-1} & Z^{R}_{j-1}
    \end{array}\mathbf{D}^{(2)}\ket{\Psi}\, .
    \label{eq:D2onfourZotherSPT}
\end{align}
\red{We note that although the individual $X_j^L$ and $X_j^R$ anti-commute with $\mathbf{D}^{(2)}$, their products, for example, $X_j^LX_{j-1}^L$ commute with $\mathbf{D}^{(2)}$. Hence, we can add such terms to the interface Hamiltonian to lift the degeneracy}
\begin{widetext}
\red{
\begin{align}
    \mathrm{H}'_{\text{2D-}\widetilde{\text{clstr}}|\text{2D-}\widetilde{\text{clstr}}^x}&=\mathrm{H}_{\text{2D-}\widetilde{\text{clstr}}|\text{2D-}\widetilde{\text{clstr}}^x}-\sum_{\substack{j=1\\
    v_b^*=(l+\frac{1}{2},j+\frac{1}{2})}}^{L_y}\begin{array}{ccc}
        Y_{v_r} &  & Z_{v_r}\\
         & X_{v_b^*} &\\
         X_{v_r} & & X_{v_r}\\
         & X_{v_b}& \\
         Z_{v_r} & & Y_{v_r}
    \end{array}-\sum_{\substack{j=1\\
    v_b^*=(L_x-\frac{1}{2},j+\frac{1}{2})}}^{L_y}\begin{array}{ccc}
        Y_{v_r} &  & Z_{v_r}\\
         & X_{v_b^*} &\\
         X_{v_r} & & X_{v_r}\\
         & X_{v_b}& \\
         Z_{v_r} & & Y_{v_r}
    \end{array}\nonumber\\
    \vspace{1cm}\nonumber\\
    &-\sum_{\substack{j=1\\
    v_b^*=(l+\frac{1}{2},j+\frac{1}{2})}}^{L_y}\begin{array}{ccccc}
        Z_{v_b} & & Z_{v_b} & & \\
         & Z_{v_r} & & Z_{v_r} & \\
        & & Y_{v_b^*} & & Z_{v_b}\\
        & & & & \\
       Z_{v_b} & & Y_{v_b} & & \\
        & Z_{v_r} & & Z_{v_r} & \\
         & & Z_{v_b} & & Z_{v_b}\\
    \end{array}-\sum_{\substack{j=1\\
    v_b^*=(L_x-\frac{1}{2},j+\frac{1}{2})}}^{L_y}\begin{array}{ccccc}
        Z_{v_b} & & Z_{v_b} & & \\
         & Z_{v_r} & & Z_{v_r} & \\
        & & Y_{v_b^*} & & Z_{v_b}\\
        & & & & \\
       Z_{v_b} & & Y_{v_b} & & \\
        & Z_{v_r} & & Z_{v_r} & \\
         & & Z_{v_b} & & Z_{v_b}\\
    \end{array}\,,
\end{align}
where the last two summation are added to ensure that the Hamiltonian is symmetric under $\mathbf{D}^{(2)}$.}
\end{widetext}
\color{black}
\begin{widetext}
   \red{Let $\ket{\tilde{\Psi}}$ be ground state of the above Hamiltonian.} We note that 
\red{\begin{align}
    (\mathbf{D}^{(2)})^2\ket{\tilde{\Psi}}&=\prod_{j=1}^{L_y}\frac{(1+\eta^x_{r,j})}{2}\prod_{i=1}^{L_x}\frac{(1+\eta^y_{r,i})}{2}\prod_{j=1}^{L_y}\frac{(1+\eta^x_{b,j})}{2}\prod_{i=1}^{L_x}\frac{(1+\eta^y_{b,i})}{2}\ket{\tilde{\Psi}}\nonumber\\
    &=\prod_{j=1}^{L_y}\frac{(1+\eta^{x(L)}_{b,j}\eta^{x(R)}_{b,j})}{2}\frac{(1+\prod_{j=1}^{L_y}\eta^{x(L)}_{b,j})}{2}\frac{(1+\prod_{j=1}^{L_y}\eta^{x(R)}_{b,j})}{2}\ket{\tilde{\Psi}}\nonumber\\
    &=\prod_{j=1}^{L_y}\frac{(1+X^{L}_{j}X^{R}_{j})}{2}\frac{(1+\prod_{j=1}^{L_y}X^{L}_{j})}{2}\frac{(1+\prod_{j=1}^{L_y}X^{R}_{j})}{2}\ket{\tilde{\Psi}}\, \nonumber\\
    &=\frac{\left(1+\frac{1}{L_y}\sum_{j=1}^{L_y}X_j^LX_j^R\right)}{2}\ket{\tilde{\Psi}}
    \label{eq:D2^2inappendE}
\end{align}}
From \eqref{eq:XjD2anticommuteotherSPT}, \eqref{eq:D2onfourZotherSPT}, and \red{\eqref{eq:D2^2inappendE}}, we deduce
\red{
\begin{align}
    \mathbf{D}^{(2)}\ket{\tilde{\Psi}}\sim\prod_{j=1}^{L_y}Z_j^{L}Z_j^{R}\frac{\left(1+\frac{1}{L_y}\sum_{j=1}^{L_y}X_j^LX_j^R\right)}{2}\ket{\tilde{\Psi}}\,.
\end{align}}
Let us define
\begin{align}
    \mathrm{D}_j^L=Z_j^LX_j^L\,,\qquad\mathrm{D}_j^R=Z_j^RX_j^R\,.
\end{align}
Then
\red{
\begin{align}
    \mathbf{D}^{(2)}\ket{\tilde{\Psi}}\sim \frac{1}{2}\left(\prod_{j=1}^{L_y}Z_j^LZ_j^R+\frac{1}{L_y}\sum_{j=1}^{L_y}D_j^LD_j^R\prod_{k\neq j}Z_k^LZ_k^R\right)\ket{\tilde{\Psi}}\,.
\end{align}}
 We have the projective algebra
\begin{align}
    \{D_j^L,X_j^L\}=0\,,\quad \{D_j^L,Z_j^L\}=0\,,\quad \{Z_j^L,X_j^L\}=0\,,\nonumber\\
    \{D_j^R,X_j^R\}=0\,,\quad \{D_j^R,Z_j^R\}=0\,,\quad \{Z_j^R,X_j^R\}=0\,.
\end{align}
\red{Only one of the $X_j^L$ is independant on the ground space of $\mathrm{H}'_{\text{2D-}\widetilde{\text{clstr}}|\text{2D-}\widetilde{\text{clstr}}^x}$. Hence, effectively we have one projective representation from both $L$ and $R$.}
The corresponding edge modes can not be gapped out and distinguish between $\mathrm{H}_{\text{2D-}\widetilde{\text{clstr}}}$ and $\mathrm{H}_{\text{2D-}\widetilde{\text{clstr}}}^x$.
\end{widetext}
\subsection{Line interface between $\mathrm{H}^{\mathbb{Z}_2}_{\text{2D-SSPT}}$ and $\mathrm{H}^{(2)}_{\text{odd}}$}\label{sec:H2oddandH2SSPT}
Let us again consider the line interface between the two Hamiltonians. We put the interface Hamiltonian on the torus. We take the interface line to be along $x=l$ and $x=L$ for some $l< L-1$ such that $l$ is even and $L$ and $L-l$ are multiples of four. Let us define the region $A$ and $B$ as
\begin{subequations}
\red{
    \begin{align}
    A&=\{(x,y)\in (\mathbb{Z},\mathbb{Z})|l<x<L-1\}\\
    B&=\{(x,y)\in(\mathbb{Z},\mathbb{Z})|x<l\}\,.
\end{align}}
\end{subequations}
We define the interface Hamiltonian
\begin{align}
    \mathrm{H}_{\mathbb{Z}_2-\text{SSPT}|\text{odd}}^{(2)}=\mathrm{H}_{\text{odd}}^{(2)}\rvert_{A}+\mathrm{H}_{\text{2D-SSPT}}^{\mathbb{Z}_2}\rvert_{B}\,.
\end{align}
Let $\ket{\Psi}$ be a groundstate of the interface Hamiltonian. Then
\begin{widetext}
\begin{subequations}
    \begin{align}
    \eta^y_i\ket{\Psi}&=\ket{\Psi}\qquad \forall i\neq l, L \in\mathbb{Z}_L\,,\\
    \eta^x_j\ket{\Psi}&=\eta_j^{x(L)}\eta_j^{x(R)}\ket{\Psi}\,, \quad \text{where } \eta_j^{x(L)}=\begin{array}{ccc}
        & Z & Y \\
       Z & X_{(l,j)} & Z \\
        Z & Z &
    \end{array} \text{ for } j \text{ even, }\eta_j^{x(L)}=\begin{array}{ccc}
        & Z & Z \\
       Z & X_{(l,j)} & Y \\
        Z & Z &
    \end{array} \text{ for } j \text{ odd,} \nonumber\\
    &\hspace{3cm}\eta_j^{x(R)}=\begin{array}{ccc}
        & Z & Z \\
       Z & X_{(L,j)} & Z \\
        Y & Z &
    \end{array} \text{ for } j \text{ even, and }\eta_j^{x(R)}=\begin{array}{ccc}
        & Z & Z \\
       Y & X_{(L,j)} & Z \\
        Z & Z &
    \end{array} \text{ for } j \text{ odd}\,,\\
    &\eta^{\text{diag}}_k\ket{\Psi}=\eta^{\text{diag}(L)}_k\eta^{\text{diag}(R)}_k\ket{\Psi}\, \text{ where }\eta^{\text{diag}(L)}_k=\begin{array}{ccc}
         &  Z & Y \\
        Z & X_{(l,[l+k]_{L})} & Z\\
        Z & Z &
    \end{array}\text{ for } k \text{ even, }\eta_k^{\text{diag}(L)}=\begin{array}{ccc}
        & Z & Z \\
       Z & X_{(l,[l+k]_{L})} & Y \\
        Z & Z &
    \end{array} \text{ for } k \text{ odd,} \nonumber\\
    &\hspace{3cm}\eta_k^{\text{diag}(R)}=\begin{array}{ccc}
        & Z & Z \\
       Z & X_{(L,[L+k]_{L})} & Z \\
        Y & Z &
    \end{array} \text{ for } k \text{ even, and }\eta_k^{\text{diag}(R)}=\begin{array}{ccc}
        & Z & Z \\
       Y & X_{(L,[L+k]_{L})} & Z \\
        Z & Z &
    \end{array} \text{ for } k \text{ odd}\,.
\end{align}
\end{subequations}
\end{widetext}
Let us define 
\begin{align}
    \eta^{x(L)}_j=\eta^{\text{diag}(L)}_{[l+j]}=X_j^L\,,\quad \eta^{x(R)}_j=\eta^{\text{diag}(R)}_{[L+j]}=X_j^R\,.
\end{align}
The operators $X_j^L$ and $X_j^R$ anticommute with $\mathbf{D}_{\text{DPIM}}^{(2)}$
\begin{align}
    \{\mathbf{D}_{\text{DPIM}}^{(2)},X_j^L\}\ket{\Psi}=0\,,\quad \{\mathbf{D}_{\text{DPIM}}^{(2)},X_j^R\}\ket{\Psi}=0\,.
    \label{eq:D2DPIMXanticommute}
\end{align}
We note that
\begin{widetext}
\begin{align}
    (\mathbf{D}_{\text{DPIM}}^{(2)})^2\ket{\Psi}&\propto\prod_{j=1}^L\frac{(1+\eta^x_j)}{2}\prod_{i=1}^L\frac{(1+\eta^y_i)}{2}\prod_{k=1}^{L}\frac{(1+\eta^{\text{diag}}_k)}{2}\ket{\Psi}\nonumber\\
    &=\prod_{j=1}^L\frac{(1+\eta^{x(L)}_j\eta^{x(R)}_j)}{2}\prod_{k=1}^L\frac{(1+\eta^{\text{diag}(L)}_{k}\eta^{\text{diag}(R)}_{k})}{2}\frac{(1+\prod_j\eta_j^{x(L)})}{2}\frac{(1+\prod_j\eta_j^{x(R)})}{2}\ket{\Psi}\nonumber\\
    &=\prod_j\frac{(1+X_j^{L}X_j^{R})}{2}\prod_k\frac{(1+X_{[l+k]}^{L}X_{[L+k]}^{R})}{2}\frac{(1+\prod_jX_j^{L})}{2}\frac{(1+\prod_jX_j^{R})}{2}\ket{\Psi}\,.
\end{align}
\label{eq:D2DPIM^2}
Let $\mathcal{A}$ be a subset of the set of coordinates of the vertices on the left interface line $\mathcal{L}=\{(l,j)|1\leq j\leq L\}$. Similarly $\mathcal{B}$ be a set of coordinates of the vertices on the right interface line $\mathcal{R}=\{(L,j)|1\leq j\leq L\}$. From \ref{eq:D2DPIMXanticommute} and \ref{eq:D2DPIM^2}, we deduce that on the ground space
\begin{align}
    \mathbf{D}_{\text{DPIM}}^{(2)}\sim \prod_jZ_j^LZ_j^R\left(\sum_{\mathcal{A}\subset\mathcal{L},\mathcal{B}\subset\mathcal{R}}\alpha_{\mathcal{A}\mathcal{B}}\prod_{(l,j)\in \mathcal{A}}X_j^{L}\prod_{(L,k)\in\mathcal{B}}X_k^{R}\right)\,.
    \label{eq:DDPIM2onGS}
\end{align}
Let us define 
\begin{align}
D_j^{L}=Z_j^{L}X_j^{L}\,,\qquad D_j^{R}=Z_j^RX_j^R\,.
\end{align}
We can rewrite \eqref{eq:DDPIM2onGS} as
    \begin{align}
    \mathbf{D}_{\text{DPIM}}^{(2)}\sim\sum_{\mathcal{A}\subset\mathcal{L},\mathcal{B}\subset\mathcal{R}}\alpha_{\mathcal{A}\mathcal{B}}\prod_{(l,j)\in \mathcal{L}\setminus\mathcal{A}}Z_j^{L}\prod_{(L,k)\in \mathcal{R}\setminus\mathcal{B}}Z_k^{R}\prod_{(l,j)\in \mathcal{A}}D_j^{L}\prod_{(L,k)\in \mathcal{B}}D_k^{R}\,.
\end{align}
\end{widetext}
We find the projective representation
\begin{subequations}
    \begin{align}
    \{D_j^{L},X_j^{L}\}=0\,, \quad \{X_j^{L},Z_j^{L}\}=0\,,\quad \{D_j^{L},Z_j^{L}\}=0\,,\\
    \{D_j^{R},X_j^{R}\}=0\,, \quad \{X_j^{R},Z_j^{R}\}=0\,,\quad \{D_j^{R},Z_j^{R}\}=0\,.
\end{align}
\label{eq:projectivealgD^2DPIM}
\end{subequations}
\red{We note that we can add terms to the interface Hamiltonian $\mathrm{H}_{\mathbb{Z}_2-\text{SSPT}|\text{odd}}^{(2)}$ that are symmetric and commuting with the interface Hamiltonian $\mathrm{H}_{\mathbb{Z}_2-\text{SSPT}|\text{odd}}^{(2)}$. These terms are the products of the form $X_j^{L}X_{j+1}^L$ and $X_j^{R}X_{j+1}^R$ for all $j$.
\begin{widetext}
\begin{align}
    \tilde{\mathrm{H}}_{\mathbb{Z}_2-\text{SSPT}|\text{odd}}^{(2)}=\mathrm{H}_{\mathbb{Z}_2-\text{SSPT}|\text{odd}}^{(2)}-\sum_{j=1}^L \left(X_j^LX_{j+1}^L+ X_j^RX_{j+1}^R\right)-\sum_{j=1}^L\mathbf{D}_{\text{DPIM}}^{(2)}\text{conj}\left[X_j^LX_{j+1}^L+X_j^RX_{j+1}^R\right]
\end{align}
where $\mathbf{D}_{\text{DPIM}}^{(2)}\text{conj}\left[\cdot\right]$ is the $\mathbf{D}_{\text{DPIM}}^{(2)}$ symmetric counterpart of the term inside the argument. 
If $\ket{\tilde{\Psi}}$ is a ground state of the Hamiltonian $\tilde{\mathrm{H}}_{\mathbb{Z}_2-\text{SSPT}|\text{odd}}^{(2)}$, then 
\begin{subequations}
\begin{align}
   (\mathbf{D}_{\text{DPIM}}^{(2)})^2\ket{\tilde{\Psi}}&\sim \frac{1}{2}\left(1+\frac{1}{2L}\sum_{j=1}^LX_j^{L}X_j^{R}+\frac{1}{2L}\sum_{j=1}^LX_{[l+k]}^{L}X_{[L+k]}^{R}\right) \ket{\tilde{\Psi}}\,\\
   \mathbf{D}_{\text{DPIM}}^{(2)}\ket{\tilde{\Psi}}&\sim \prod_{j=1}^LZ_j^LZ_j^R\frac{1}{2}\left(1+\frac{1}{2L}\sum_{j=1}^LX_j^{L}X_j^{R}+\frac{1}{2L}\sum_{j=1}^LX_{[l+k]}^{L}X_{[L+k]}^{R}\right) \ket{\tilde{\Psi}}\,.
\end{align}
\end{subequations}
\end{widetext}
We note that on the groundspace of $\tilde{\mathrm{H}}_{\text{SSPT}|\text{odd}}^{(2)}$ only one projective algebra each for left interface and right interface in \eqref{eq:projectivealgD^2DPIM} is independant.
}
Hence, the ground state degeneracy is $2^{2}=4$ and the two Hamiltonians $\mathrm{H}^{\mathbb{Z}_2}_{\text{2D-SSPT}}$ and $\mathrm{H}^{(2)}_{\text{odd}}$ differ as a first-order SSPT protected by noninvertible symmetry.
\section{Interface between two distinct noninvertible SSPTs in 3D: Hinge modes}\label{sec:Interfaceanalysis3D}
We consider a cubic interface placed on the 3-torus between the Hamiltonians $\mathrm{H}_{\text{3D-cluster}}^{\mathcal{G}}$ and $\mathrm{H}_{\text{blue}}^{(3)\mathcal{G}}$.
We choose the interface surface to lie along the planes in the blue sublattice with corners $(i+\frac{1}{2},j+\frac{1}{2},k+\frac{1}{2})$ with $i\in\{i_0,i_1\}$, $j\in\{j_0,j_1\}$ and $k\in\{k_0,k_1\}$. \red{We take $i_0$, $i_1$, $j_0$, $j_1$, $k_0$, $k_1$ $\in 4\mathbb{Z}$}. 
Now, let us define the following regions
\begin{widetext}
\begin{subequations}
\begin{align}
    A&=\{(x,y,z)\in(\mathbb{Z}/2,\mathbb{Z}/2,\mathbb{Z}/2)|x\leq i_0\}\cup \{(x,y,z)\in(\mathbb{Z}/2,\mathbb{Z}/2,\mathbb{Z}/2)|x> i_1+\frac{1}{2}\}\cup \{(x,y,z)\in(\mathbb{Z}/2,\mathbb{Z}/2,\mathbb{Z}/2)|y\leq j_0\}\nonumber\\
    & \cup \{(x,y,z)\in(\mathbb{Z}/2,\mathbb{Z}/2,\mathbb{Z}/2)|y> j_1+\frac{1}{2}\}\cup \{(x,y,z)\in(\mathbb{Z}/2,\mathbb{Z}/2,\mathbb{Z}/2)|z\leq k_0\}\cup \{(x,y,z)\in(\mathbb{Z}/2,\mathbb{Z}/2,\mathbb{Z}/2)|z> k_1+\frac{1}{2}\}\, ,\\
    B&=\{(x,y,z)\in(\mathbb{Z}/2,\mathbb{Z}/2,\mathbb{Z}/2)|i_0+\frac{1}{2}<x<i_1+\frac{1}{2}\, ,j_o+\frac{1}{2}<y<j_1+\frac{1}{2}\, ,k_0+\frac{1}{2}<z<k_1+\frac{1}{2}\}\, .
    \end{align}
    \end{subequations}
    Now, we define the boundary surfaces
    \begin{subequations}
    \begin{align}  
    S_T&=\{(x,y,z)\in(\mathbb{Z}+\frac{1}{2},\mathbb{Z}+\frac{1}{2},\mathbb{Z}+\frac{1}{2})|z=k_1+\frac{1}{2}\,, i_0+\frac{1}{2}<x<i_1+\frac{1}{2}\,, j_0+\frac{1}{2}<y<j_1+\frac{1}{2}\}\, ,\\
    S_B&=\{(x,y,z)\in(\mathbb{Z}+\frac{1}{2},\mathbb{Z}+\frac{1}{2},\mathbb{Z}+\frac{1}{2})|z=k_0+\frac{1}{2}\,, i_0+\frac{1}{2}<x<i_1+\frac{1}{2}\,, j_0+\frac{1}{2}<y<j_1+\frac{1}{2}\}\, ,\\
    S_N&=\{(x,y,z)\in(\mathbb{Z}+\frac{1}{2},\mathbb{Z}+\frac{1}{2},\mathbb{Z}+\frac{1}{2})|y=j_1+\frac{1}{2}\,, i_0+\frac{1}{2}<x<i_1+\frac{1}{2}\,, k_0+\frac{1}{2}<z<k_1+\frac{1}{2}\}\, ,\\
    S_S&=\{(x,y,z)\in(\mathbb{Z}+\frac{1}{2},\mathbb{Z}+\frac{1}{2},\mathbb{Z}+\frac{1}{2})|y=j_0+\frac{1}{2}\,, i_0+\frac{1}{2}<x<i_1+\frac{1}{2}\,, k_0+\frac{1}{2}<z<k_1+\frac{1}{2}\}\, ,\\
    S_R&=\{(x,y,z)\in(\mathbb{Z}+\frac{1}{2},\mathbb{Z}+\frac{1}{2},\mathbb{Z}+\frac{1}{2})|x=i_1+\frac{1}{2}\,, j_0+\frac{1}{2}<y<j_1+\frac{1}{2}\,, k_0+\frac{1}{2}<z<k_1+\frac{1}{2}\}\, ,\\
    S_L&=\{(x,y,z)\in(\mathbb{Z}+\frac{1}{2},\mathbb{Z}+\frac{1}{2},\mathbb{Z}+\frac{1}{2})|x=i_0+\frac{1}{2}\,, j_0+\frac{1}{2}<y<j_1+\frac{1}{2}\,, k_0+\frac{1}{2}<z<k_1+\frac{1}{2}\}\, .
\end{align}
\end{subequations}
Their union is the whole boundary surface
\begin{align}
    S=S_T\cup S_B\cup S_N\cup S_S\cup S_L\cup S_R\,.
\end{align}
Consider the Hamiltonian $\mathrm{H}_{\text{blue}}^{(3)\mathcal{G}}$ inside the cubic interface in region $B$ and $\mathrm{H}_{\text{3D-cluster}}^{\mathcal{G}}$ outside the cubic interface in region $A$. In this interface Hamiltonian, we do not have any term that is supported both inside and outside of the cubic interface. We keep all the terms to have support entirely inside the cubic interface or outside the cubic interface, where inside and outside also include their boundary. 
\begin{align}
    \mathrm{H}_{\text{3D-cluster}|\text{blue}}^{\mathcal{G}}&=-\sum_{v_r\in A\cap \mathcal{G}}X_{v_r}\prod_{v_b}Z_{v_b}^{\sigma_{v_rv_b}}-\sum_{v_b\in A\cap \mathcal{G}}X_{v_b}\prod_{v_r}Z_{v_r}^{\sigma_{v_rv_b}}+\sum_{v_r\in B\cap \mathcal{G}}X_{v_r}\prod_{v_b}Z_{v_b}^{\sigma_{v_rv_b}}\nonumber\\
    &\hspace{3cm}-\sum_{v_b\in B\cap \mathcal{G}}X_{v_b}\prod_{v_r}Y_{v_r}^{\sigma_{v_bv_r}}-\sum_{v_b\in B\cap \mathcal{G}}X_{v_b}\prod_{v_r}Z_{v_r}^{\sigma_{v_bv_r}}\left(\prod_{v'_b}Z_{v'_b}^{\sigma_{v_rv'_b}}\right)\,.
\end{align}
It is a straightforward exercise to see that we can add terms in the interface Hamiltonian along the interfacial surface everywhere except at hinges that respect \red{invertible} and noninvertible symmetries and commute with each term in the Hamiltonian. Explicitly, it is
\begin{align}
    \tilde{\mathrm{H}}_{\text{3D-cluster}|\text{blue}}^{\mathcal{G}}&=\mathrm{H}_{\text{3D-cluster}|\text{blue}}^{\mathcal{G}}-\sum_{v_b\in S\cap \mathcal{G}}X_{v_b}\prod_{v_r\in A\cap\mathcal{G}}Z_{v_r}^{\sigma_{v_rv_b}}\prod_{v_r\in B\cap\mathcal{G}}Y_{v_r}^{\sigma_{v_rv_b}}\nonumber\\
    &+\sum_{v_b\in S\cap \mathcal{G}}X_{v_b}\prod_{v_r\in A\cap\mathcal{G}}Z_{v_r}^{\sigma_{v_rv_b}}\prod_{v_r\in B\cap\mathcal{G}}\left(Z_{v_r}^{\sigma_{v_rv_b}}\prod_{v_b'}Z_{v_b'}^{\sigma_{v_rv_b'}}\right)-\sum_{\substack{v_b=(i_0+\frac{1}{2},j_0+\frac{1}{2},k_0+\frac{1}{2})\\
    v_b=(i_1+\frac{1}{2},j_0+\frac{1}{2},k_1+\frac{1}{2})\\
    v_b=(i_1+\frac{1}{2},j_1+\frac{1}{2},k_0+\frac{1}{2})\\
    v_b=(i_0+\frac{1}{2},j_1+\frac{1}{2},k_1+\frac{1}{2})}}X_{v_b}\prod_{v_r}Z_{v_r}^{\sigma_{v_bv_r}}\, .
    \label{eq:interfaceham3D}
\end{align}
\end{widetext}
Now, let us call the ground state of this interface Hamiltonian \eqref{eq:interfaceham3D}  with terms added along the interface, except at hinges, to be $\ket{\tilde{\Psi}}$. We find that 
\begin{align}
    &\mathcal{P}^{z,r}_{xy}\ket{\tilde{\Psi}}=\ket{\tilde{\Psi}}\, ,\quad \mathcal{P}^{y,r}_{xz}\ket{\tilde{\Psi}}=\ket{\tilde{\Psi}}\, ,\nonumber\\
    &\qquad \mathcal{P}^{x,r}_{yz}\ket{\tilde{\Psi}}=\ket{\tilde{\Psi}}\, ,
    \label{eq:subsystem_planarred}
\end{align}
and 
\begin{align}
    \mathcal{P}_r\ket{\tilde{\Psi}}=\ket{\tilde{\Psi}}\,.
\end{align}
\red{The other invertible symmetries $\eta_a$ and $\eta_{\hat{a}}$ also must satisfy
\begin{align}
    \eta_a\ket{\tilde{\Psi}}=\ket{\tilde{\Psi}}\,,\quad \eta_{\hat{a}}\ket{\tilde{\Psi}}=\ket{\tilde{\Psi}}\,.
\end{align}}
\begin{widetext}
Let us define the following operators along the hinge, with the $x$, $y$, and $z$ directions the same as in Figure~\ref{fig:FCC}:
\begin{align}
    \mathcal{X}_{(x,y,z)}=\begin{cases}
        \raisebox{-25pt}{\begin{tikzpicture}
        \node at (0.1,0.1) {$\color{blue}{X_{v}}$};
        \node at (1,0) {$\color{red}{Y}$};
        \node at (0,1) {$\color{red}{Z}$};
        \node at (-0.5,-0.5) {$\color{red}{Z}$};
        \node at (0.5,0.5) {$\color{red}{Z}$};
        \draw[thick] (0,1)--(1,1)--(0.5,0.5)--(-0.5,0.5)--(0,1);
        \draw[thick] (0.5,0.5)--(0.5,-0.5)--(-0.5,-0.5)--(-0.5,0.5);
        \draw[thick] (0.5,0.5)--(0.5,-0.5)--(1,0)--(1,1); 
        \end{tikzpicture}}\qquad\text{ for }&(x,y,z)=(i_0+\frac{1}{2},j_0+\frac{1}{2},k+\frac{1}{2})\, , k_0<k\leq k_1 \text{ and }\\
        &(x,y,z)=(i_0+\frac{1}{2},j+\frac{1}{2},k_1+\frac{1}{2})\, , j_0\leq j<j_1\text{ and }\\
        &(x,y,z)=(i+\frac{1}{2},j_0+\frac{1}{2},k_1+\frac{1}{2})\, , i_0\leq i<i_1\, ,\\
        \raisebox{-25pt}{\begin{tikzpicture}
        \node at (0.1,0.1) {$\color{blue}{X_{v}}$};
        \node at (1,0) {$\color{red}{Z}$};
        \node at (0,1) {$\color{red}{Z}$};
        \node at (-0.5,-0.5) {$\color{red}{Z}$};
        \node at (0.5,0.5) {$\color{red}{Y}$};
        \draw[thick] (0,1)--(1,1)--(0.5,0.5)--(-0.5,0.5)--(0,1);
        \draw[thick] (0.5,0.5)--(0.5,-0.5)--(-0.5,-0.5)--(-0.5,0.5);
        \draw[thick] (0.5,0.5)--(0.5,-0.5)--(1,0)--(1,1); 
        \end{tikzpicture}}\qquad\text{ for }&(x,y,z)=(i_0+\frac{1}{2},j_1+\frac{1}{2},k+\frac{1}{2})\, , k_0\leq k< k_1 \text{ and }\\
        &(x,y,z)=(i_0+\frac{1}{2},j+\frac{1}{2},k_0+\frac{1}{2})\, , j_0< j\leq j_1\text{ and }\\
        &(x,y,z)=(i+\frac{1}{2},j_1+\frac{1}{2},k_0+\frac{1}{2})\, , i_0\leq i<i_1\, ,\\
        \raisebox{-25pt}{\begin{tikzpicture}
        \node at (0.1,0.1) {$\color{blue}{X_{v}}$};
        \node at (1,0) {$\color{red}{Z}$};
        \node at (0,1) {$\color{red}{Y}$};
        \node at (-0.5,-0.5) {$\color{red}{Z}$};
        \node at (0.5,0.5) {$\color{red}{Z}$};
        \draw[thick] (0,1)--(1,1)--(0.5,0.5)--(-0.5,0.5)--(0,1);
        \draw[thick] (0.5,0.5)--(0.5,-0.5)--(-0.5,-0.5)--(-0.5,0.5);
        \draw[thick] (0.5,0.5)--(0.5,-0.5)--(1,0)--(1,1); 
        \end{tikzpicture}}\qquad\text{ for }&(x,y,z)=(i_1+\frac{1}{2},j_0+\frac{1}{2},k+\frac{1}{2})\, , k_0\leq k< k_1 \text{ and }\\
        &(x,y,z)=(i_1+\frac{1}{2},j+\frac{1}{2},k_0+\frac{1}{2})\, , j_0\leq j< j_1\text{ and }\\
        &(x,y,z)=(i+\frac{1}{2},j_0+\frac{1}{2},k_0+\frac{1}{2})\, , i_0< i\leq i_1\, ,\\
       \raisebox{-25pt}{\begin{tikzpicture}
        \node at (0.1,0.1) {$\color{blue}{X_{v}}$};
        \node at (1,0) {$\color{red}{Z}$};
        \node at (0,1) {$\color{red}{Z}$};
        \node at (-0.5,-0.5) {$\color{red}{Y}$};
        \node at (0.5,0.5) {$\color{red}{Z}$};
        \draw[thick] (0,1)--(1,1)--(0.5,0.5)--(-0.5,0.5)--(0,1);
        \draw[thick] (0.5,0.5)--(0.5,-0.5)--(-0.5,-0.5)--(-0.5,0.5);
        \draw[thick] (0.5,0.5)--(0.5,-0.5)--(1,0)--(1,1); 
        \end{tikzpicture}}\qquad\text{ for }&(x,y,z)=(i_1+\frac{1}{2},j_1+\frac{1}{2},k+\frac{1}{2})\, , k_0<k\leq k_1 \text{ and }\\
        &(x,y,z)=(i_1+\frac{1}{2},j+\frac{1}{2},k_1+\frac{1}{2})\, , j_0< j\leq j_1\text{ and }\\
        &(x,y,z)=(i+\frac{1}{2},j_1+\frac{1}{2},k_1+\frac{1}{2})\, , i_0< i\leq i_1\, 
    \end{cases}
    \label{eq:cubeoperator}
\end{align}
On the blue sublattice, planar subsystem symmetries act on the ground state as
\begin{subequations}
\begin{align}
    \mathcal{P}_{xy}^{z,b}\ket{\tilde{\Psi}}=\begin{cases}
        \ket{\tilde{\Psi}} \text{ for } z<k_0+\frac{1}{2}\, ,\,  k_1+\frac{1}{2}<z\, \text{ and } z\in2\mathbb{Z}+1\\
        \mathcal{X}_{(i_0+\frac{1}{2},j_0+\frac{1}{2},z)}\mathcal{X}_{(i_1+\frac{1}{2},j_0+\frac{1}{2},z)}\mathcal{X}_{(i_1+\frac{1}{2},j_1+\frac{1}{2},z)}\mathcal{X}_{(i_0+\frac{1}{2},j_1+\frac{1}{2},z)}\ket{\tilde{\Psi}} \text{ for } k_0+\frac{1}{2}<z<k_1+\frac{1}{2}\,\text{ and } z\in2\mathbb{Z} \\
        \prod\limits_{\substack{i=i_0+2\\
        i\in 2\mathbb{Z}}}^{i_1}\mathcal{X}_{(i+\frac{1}{2},j_0+\frac{1}{2},k_0+\frac{1}{2})}\prod\limits_{\substack{i=i_0\\
        i\in 2\mathbb{Z}}}^{i_1-2}\mathcal{X}_{(i+\frac{1}{2},j_1+\frac{1}{2},k_0+\frac{1}{2})}\prod\limits_{\substack{j=j_0+2\\
        j\in 2\mathbb{Z}}}^{j_1}\mathcal{X}_{(i_0+\frac{1}{2},j+\frac{1}{2},k_0+\frac{1}{2})}\prod\limits_{\substack{j=j_0\\
        j\in 2\mathbb{Z}}}^{j_1-2}\mathcal{X}_{(i_1+\frac{1}{2},j+\frac{1}{2},k_0+\frac{1}{2})}\ket{\tilde{\Psi}}\\
        \hspace{11cm}\text{for }z=k_0+\frac{1}{2} \\
        \prod\limits_{\substack{i=i_0\\
        i\in2\mathbb{Z} }}^{i_1-2}\mathcal{X}_{(i+\frac{1}{2},j_0+\frac{1}{2},k_1+\frac{1}{2})}\prod\limits_{\substack{i=i_0+2\\
        i\in2\mathbb{Z} }}^{i_1}\mathcal{X}_{(i+\frac{1}{2},j_1+\frac{1}{2},k_1+\frac{1}{2})}\prod\limits_{\substack{j=j_0\\
        j\in2\mathbb{Z} }}^{j_1-2}\mathcal{X}_{(i_0+\frac{1}{2},j+\frac{1}{2},k_1+\frac{1}{2})}\prod\limits_{\substack{j=j_0+2\\
        j\in2\mathbb{Z} }}^{j_1}\mathcal{X}_{(i_1+\frac{1}{2},j+\frac{1}{2},k_1+\frac{1}{2})}\ket{\tilde{\Psi}}\\
        \hspace{11cm}\text{for }z=k_1+\frac{1}{2} \,.
    \end{cases}
\end{align}
\begin{align}
    \mathcal{P}_{xz}^{y,b}\ket{\tilde{\Psi}}=\begin{cases}
        \ket{\tilde{\Psi}} \text{ for } y<j_0+\frac{1}{2}\, ,\,  j_1+\frac{1}{2}<y \text{ and } y\in2\mathbb{Z}+1\\
        \mathcal{X}_{(i_0+\frac{1}{2},j+\frac{1}{2},k_0+\frac{1}{2})}\mathcal{X}_{(i_1+\frac{1}{2},j+\frac{1}{2},k_0+\frac{1}{2})}\mathcal{X}_{(i_1+\frac{1}{2},j+\frac{1}{2},k_1+\frac{1}{2})}\mathcal{X}_{(i_0+\frac{1}{2},j+\frac{1}{2},k_1+\frac{1}{2})}\ket{\tilde{\Psi}} \text{ for } j_0+\frac{1}{2}<y<j_1+\frac{1}{2}\text{ and } y\in2\mathbb{Z} \\
        \prod\limits_{\substack{i=i_0+2\\
       i\in2\mathbb{Z} }}^{i_1}\mathcal{X}_{(i+\frac{1}{2},j_0+\frac{1}{2},k_0+\frac{1}{2})}\prod\limits_{\substack{i=i_0\\
       i\in2\mathbb{Z}
       }}^{i_1-2}\mathcal{X}_{(i+\frac{1}{2},j_0+\frac{1}{2},k_1+\frac{1}{2})}\prod\limits_{\substack{k=k_0+2\\
       k\in2\mathbb{Z}}}^{k_1}\mathcal{X}_{(i_0+\frac{1}{2},j_0+\frac{1}{2},k+\frac{1}{2})}\prod\limits_{\substack{k=k_0\\
       k\in2\mathbb{Z}}}^{k_1-2}\mathcal{X}_{(i_1+\frac{1}{2},j_0+\frac{1}{2},k+\frac{1}{2})}\ket{\tilde{\Psi}}\\
        \hspace{11cm}\text{for }y=j_0+\frac{1}{2} \\
        \prod\limits_{\substack{i=i_0\\
        i\in2\mathbb{Z}}}^{i_1-2}\mathcal{X}_{(i+\frac{1}{2},j_1+\frac{1}{2},k_0+\frac{1}{2})}\prod\limits_{\substack{i=i_0+2\\
        i\in2\mathbb{Z}}}^{i_1}\mathcal{X}_{(i+\frac{1}{2},j_1+\frac{1}{2},k_1+\frac{1}{2})}\prod\limits_{\substack{k=k_0\\
        k\in2\mathbb{Z}}}^{k_1-2}\mathcal{X}_{(i_0+\frac{1}{2},j_1+\frac{1}{2},k+\frac{1}{2})}\prod\limits_{\substack{k=k_0+2\\
        k\in2\mathbb{Z}}}^{k_1}\mathcal{X}_{(i_1+\frac{1}{2},j_1+\frac{1}{2},k+\frac{1}{2})}\ket{\tilde{\Psi}}\\
        \hspace{11cm}\text{for }y=j_1+\frac{1}{2} \,.
    \end{cases}
\end{align}
\begin{align}
    \mathcal{P}_{yz}^{x,b}\ket{\tilde{\Psi}}=\begin{cases}
        \ket{\tilde{\Psi}} \text{ for } x<i_0+\frac{1}{2}\,,\,  i_1+\frac{1}{2}<x \text{ and } x\in2\mathbb{Z}+1\\
        \mathcal{X}_{(x,j_0+\frac{1}{2},k_0+\frac{1}{2})}\mathcal{X}_{(x,j_0+\frac{1}{2},k_1+\frac{1}{2})}\mathcal{X}_{(x,j_1+\frac{1}{2},k_0+\frac{1}{2})}\mathcal{X}_{(x,j_1+\frac{1}{2},k_1+\frac{1}{2})}\ket{\tilde{\Psi}} \text{ for } i_0+\frac{1}{2}<x<i_1+\frac{1}{2}\text{ and } x\in2\mathbb{Z} \\
        \prod\limits_{\substack{k=k_0+2\\
        k\in2\mathbb{Z}}}^{k_1}\mathcal{X}_{(i_0+\frac{1}{2},j_0+\frac{1}{2},k+\frac{1}{2})}\prod\limits_{\substack{k=k_0\\
        k\in2\mathbb{Z}}}^{k_1-2}\mathcal{X}_{(i_0+\frac{1}{2},j_1+\frac{1}{2},k+\frac{1}{2})}\prod\limits_{\substack{j=j_0+2\\
        j\in2\mathbb{Z}}}^{j_1}\mathcal{X}_{(i_0+\frac{1}{2},j+\frac{1}{2},k_0+\frac{1}{2})}\prod\limits_{\substack{j=j_0\\
        j\in2\mathbb{Z}}}^{j_1-2}\mathcal{X}_{(i_0+\frac{1}{2},j+\frac{1}{2},k_1+\frac{1}{2})}\ket{\tilde{\Psi}}\\
        \hspace{11cm}\text{for }x=i_0+\frac{1}{2} \\
        \prod\limits_{\substack{k=k_0\\
        k\in2\mathbb{Z}}}^{k_1-2}\mathcal{X}_{(i_1+\frac{1}{2},j_0+\frac{1}{2},k+\frac{1}{2})}\prod\limits_{\substack{k=k_0+2\\
        k\in2\mathbb{Z}}}^{k_1}\mathcal{X}_{(i_1+\frac{1}{2},j_1+\frac{1}{2},k+\frac{1}{2})}\prod\limits_{\substack{j=j_0\\
        j\in2\mathbb{Z}}}^{j_1-2}\mathcal{X}_{(i_1+\frac{1}{2},j+\frac{1}{2},k_0+\frac{1}{2})}\prod\limits_{\substack{j=j_0+2\\
        j\in2\mathbb{Z}}}^{j_1}\mathcal{X}_{(i_1+\frac{1}{2},j+\frac{1}{2},k_1+\frac{1}{2})}\ket{\tilde{\Psi}}\\
        \hspace{11cm}\text{for }x=i_1+\frac{1}{2} \,.
    \end{cases}
\end{align}
\label{eq:planaractiononPsi}
\end{subequations}
\red{
Let us define the following set
\begin{align}
    \text{Hinge}&=\Big\{(i_0+\frac{1}{2},j_0+\frac{1}{2},k+\frac{1}{2})|k_0+2\leq k\leq k_1\,, k\in2\mathbb{Z}\Big\}\cup\Big\{(i_1+\frac{1}{2},j_0+\frac{1}{2},k+\frac{1}{2})|k_0\leq k\leq k_1-2\,, k\in2\mathbb{Z}\Big\}\nonumber\\
    &\cup\Big\{(i_1+\frac{1}{2},j_1+\frac{1}{2},k+\frac{1}{2})|k_0+2\leq k\leq k_1\,, k\in2\mathbb{Z}\Big\} \cup \Big\{(i_0+\frac{1}{2},j_1+\frac{1}{2},k+\frac{1}{2})|k_0\leq k\leq k_1-2\,, k\in2\mathbb{Z}\Big\}\nonumber\\
    &\cup \Big\{(i+\frac{1}{2},j_0+\frac{1}{2},k_0+\frac{1}{2})|i_0+2\leq k\leq i_1\,, i\in2\mathbb{Z}\Big\}\cup \Big\{(i+\frac{1}{2},j_0+\frac{1}{2},k_1+\frac{1}{2})|i_0\leq k\leq i_1-2\,, i\in2\mathbb{Z}\Big\}\nonumber\\
    &\cup \Big\{(i+\frac{1}{2},j_1+\frac{1}{2},k_1+\frac{1}{2})|i_0+2\leq k\leq i_1\,, i\in2\mathbb{Z}\Big\}\cup \Big\{(i+\frac{1}{2},j_1+\frac{1}{2},k_0+\frac{1}{2})|i_0\leq k\leq i_1-2\,, i\in2\mathbb{Z}\Big\}\nonumber\\
    &\cup\Big\{(i_0+\frac{1}{2},j+\frac{1}{2},k_0+\frac{1}{2})|j_0+2\leq j\leq j_1\,, j\in2\mathbb{Z}\Big\}\cup\Big\{(i_0+\frac{1}{2},j+\frac{1}{2},k_1+\frac{1}{2})|j_0\leq j\leq j_1-2\,, j\in2\mathbb{Z}\Big\}\nonumber\\
    &\cup \Big\{(i_1+\frac{1}{2},j+\frac{1}{2},k_1+\frac{1}{2})|j_0+2\leq j\leq j_1\,, j\in2\mathbb{Z}\Big\}\cup \Big\{(i_1+\frac{1}{2},j+\frac{1}{2},k_0+\frac{1}{2})|j_0\leq j\leq j_1-2\,, j\in2\mathbb{Z}\Big\}\,.
\end{align}}
\color{black}
The global part of the subsystem symmetry acts on the ground state as
    \begin{align}
    \mathcal{P}_b\ket{\tilde{\Psi}}=\prod_{\substack{v\in\text{Hinge}\\
    v=(x,y,z)}}\mathcal{X}_{(x,y,z)}\ket{\tilde{\Psi}},
\end{align}
where the product is over hinge modes appearing on the edges of the cubic interface as shown in Figure~\ref{fig:hingemode}.
\begin{figure}
    \centering
    \includegraphics[scale=1]{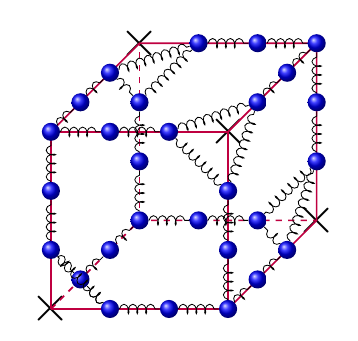}
    \caption{\red{Hinge mode is delocalized on the 12 hinges colored purple. The corners marked with "$\times$" can be gapped out by adding suitable terms in the interface Hamiltonian $\mathrm{H}_{\text{3D-cluster}|\text{blue}}^{\mathcal{G}}$ that commute with symmetry and the Hamiltonian. In addition, we can also add products of two cube operators $\mathcal{X}_{v_b}\mathcal{X}_{v_b'}$ as in \eqref{eq:tildeHG3Dcluster|blue} and their associated $\mathbf{D}^{(3)}_{\rm pln}$ conjugated operators. These products are represented by a curly line connecting $v_b$ and $v_b'$. There are $|\text{Hinge}|-1$ such independant operators. Hence, the Hamiltonian $\tilde{\mathrm{H}}_{\text{3D-cluster}|\text{blue}}^{'\mathcal{G}}$ would have a two fold degeneracy.}}
    \label{fig:hingemode}
\end{figure}
We note that each of these hinge modes satisfies the following relations on the ground space:
\begin{align}
    &\left\{\mathbf{D}^{(3)}_{\rm pln},\raisebox{-25pt}{\begin{tikzpicture}
        \node at (0.1,0.1) {$\color{blue}{X_v}$};
        \node at (1,0) {$\color{red}{Y}$};
        \node at (0,1) {$\color{red}{Z}$};
        \node at (-0.5,-0.5) {$\color{red}{Z}$};
        \node at (0.5,0.5) {$\color{red}{Z}$};
        \draw[thick] (0,1)--(1,1)--(0.5,0.5)--(-0.5,0.5)--(0,1);
        \draw[thick] (0.5,0.5)--(0.5,-0.5)--(-0.5,-0.5)--(-0.5,0.5);
        \draw[thick] (0.5,0.5)--(0.5,-0.5)--(1,0)--(1,1);
    \end{tikzpicture}}\right\}\ket{\tilde{\Psi}}=0\, ,\quad \left\{\mathbf{D}^{(3)}_{\rm pln},\raisebox{-25pt}{\begin{tikzpicture}
        \node at (0.1,0.1) {$\color{blue}{X_v}$};
        \node at (1,0) {$\color{red}{Z}$};
        \node at (0,1) {$\color{red}{Y}$};
        \node at (-0.5,-0.5) {$\color{red}{Z}$};
        \node at (0.5,0.5) {$\color{red}{Z}$};
        \draw[thick] (0,1)--(1,1)--(0.5,0.5)--(-0.5,0.5)--(0,1);
        \draw[thick] (0.5,0.5)--(0.5,-0.5)--(-0.5,-0.5)--(-0.5,0.5);
        \draw[thick] (0.5,0.5)--(0.5,-0.5)--(1,0)--(1,1);
    \end{tikzpicture}}\right\}\ket{\tilde{\Psi}}=0\, ,\nonumber\\
    &\left\{\mathbf{D}^{(3)}_{\rm pln},\raisebox{-25pt}{\begin{tikzpicture}
        \node at (0.1,0.1) {$\color{blue}{X_v}$};
        \node at (1,0) {$\color{red}{Z}$};
        \node at (0,1) {$\color{red}{Z}$};
        \node at (-0.5,-0.5) {$\color{red}{Y}$};
        \node at (0.5,0.5) {$\color{red}{Z}$};
        \draw[thick] (0,1)--(1,1)--(0.5,0.5)--(-0.5,0.5)--(0,1);
        \draw[thick] (0.5,0.5)--(0.5,-0.5)--(-0.5,-0.5)--(-0.5,0.5);
        \draw[thick] (0.5,0.5)--(0.5,-0.5)--(1,0)--(1,1);
    \end{tikzpicture}}\right\}\ket{\tilde{\Psi}}=0\, ,\quad \left\{\mathbf{D}^{(3)}_{\rm pln},\raisebox{-25pt}{\begin{tikzpicture}
        \node at (0.1,0.1) {$\color{blue}{X_v}$};
        \node at (1,0) {$\color{red}{Z}$};
        \node at (0,1) {$\color{red}{Z}$};
        \node at (-0.5,-0.5) {$\color{red}{Z}$};
        \node at (0.5,0.5) {$\color{red}{Y}$};
        \draw[thick] (0,1)--(1,1)--(0.5,0.5)--(-0.5,0.5)--(0,1);
        \draw[thick] (0.5,0.5)--(0.5,-0.5)--(-0.5,-0.5)--(-0.5,0.5);
        \draw[thick] (0.5,0.5)--(0.5,-0.5)--(1,0)--(1,1);
    \end{tikzpicture}}\right\}\ket{\tilde{\Psi}}=0\, .
    \label{eq:D3hatXanticommutation}
\end{align}
Now let us calculate $\mathbf{D}^{(3)}_{\rm pln}\ket{\tilde{\Psi}}$. First, we note using \eqref{eq:D^3plansquared} that
\red{
\begin{subequations}
\begin{align}
(\mathbf{D}^{(3)}_{\rm pln})^2\ket{\tilde{\Psi}}&=\frac{1}{2^{2|\rm E|-|\mathbb{V}|-|\hat{\mathbb{V}}|}}\left(\sum_{a\in\text{Ker}\sigma}\eta_a\right)\left(\sum_{\hat{a}\in\text{Ker}\sigma^T}\eta_{\hat{a}}\right)\ket{\tilde{\Psi}}\\
    &=\frac{2^{\text{dim}\,\text{Ker}\,\sigma}}{2^{2|\rm E|-|\mathbb{V}|-|\hat{\mathbb{V}}|}}\left(\sum_{\hat{a}\in\text{Ker}\sigma^T}\eta_{\hat{a}}\right)\ket{\tilde{\Psi}}\,.
    \label{eq:D3plnsquare}
\end{align}
\end{subequations}}
\color{black}
The second line follows from \eqref{eq:subsystem_planarred}. We also note that
\begin{align}
   \mathbf{D}^{(3)}_{\rm pln}\prod_{\substack{v\in\text{Hinge}\\
    v=(x,y,z)}}Z_{(x,y,z)}\ket{\tilde{\Psi}}= \prod_{\substack{v\in\text{Hinge}\\
    v=(x,y,z)}}Z_{(x,y,z)}\mathbf{D}^{(3)}_{\rm pln}\ket{\tilde{\Psi}}\,.
   \label{eq:D3plnZcommutation}
\end{align}
Let $\mathcal{A}$ and $\mathcal{B}$ be subsets of the total set of vertices on the hinge such that the product $\prod_{\mathcal{A}}\prod_{\mathcal{B}}=\prod_{v\in\text{Hinge}}$ and $\mathcal{A}\cap \mathcal{B}=\emptyset$. From \eqref{eq:D3hatXanticommutation} and \eqref{eq:D3plnZcommutation}, we infer that on the ground space subspace
\begin{align}
   \mathbf{D}^{(3)}_{\rm pln}&\sim \prod_{\substack{v\in\text{Hinge}\\
    v=(x,y,z)}}Z_{(x,y,z)}\left(\sum_{|\mathcal{A}|=\text{even}} \alpha_{\mathcal{A}}\prod_{(x,y,z)\in\mathcal{A}} \mathcal{X}_{(x,y,z)}\right)\,.
\end{align} 
Now applying the constraint~
\eqref{eq:D3plnsquare}, we see that 
\red{
\begin{align}
 \mathbf{D}^{(3)}_{\rm pln}&\sim \prod_{\substack{v\in\text{Hinge}\\
    v=(x,y,z)}}Z_{(x,y,z)}\left(\sum_{\hat{a}\in\text{Ker}\sigma^T}\eta_{\hat{a}}\right)
\end{align}}
\color{black}
is a possible solution that also satisfy the constraints \eqref{eq:D3hatXanticommutation} and \eqref{eq:D3plnZcommutation}.
Let us define
\begin{align}
    \mathrm{D}_{(x,y,z)}=Z_{(x,y,z)}\mathcal{X}_{(x,y,z)}\, .
\end{align}
Then $\mathbf{D}^{(3)}_{\rm pln}$ can be equivalently written as
\begin{align}
    \mathbf{D}^{(3)}_{\rm pln}\sim \sum_{\red{|\mathcal{A}|=\text{even}}}\red{\alpha_{\mathcal{A}}} \prod_{\mathcal{A}} \mathrm{D}_{(x,y,z)}\prod_{\mathcal{B}} Z_{(x,y,z)}
\end{align}
in a schematic form.  
Hence $\mathrm{D}_{(x,y,z)}$ and $Z_{(x,y,z)}$ are localized symmetry operators of $\mathbf{D}^{(3)}_{\rm pln}$ near $(x,y,z)$.
Then we see the projective algebra between localized symmetry operators of $\mathbf{D}^{(3)}_{\rm pln}$ and the localized symmetry operators of planar symmetries~\eqref{eq:planaractiononPsi}
\begin{align}
    \{\mathrm{D}_{(x,y,z)},\mathcal{X}_{(x,y,z)}\}=0\, ,\quad \{\mathrm{D}_{(x,y,z)},Z_{(x,y,z)}\}=0\, ,\quad \{\mathcal{X}_{(x,y,z)},Z_{(x,y,z)}\}=0\, .
    \label{eq:fractionalizedsymopalgebra3D}
\end{align}
\red{
Now let us add terms obtained by taking products of \eqref{eq:cubeoperator} to the Hamiltonian \eqref{eq:interfaceham3D} as described in Figure~\ref{fig:hingemode}. We note that a product of two such cube operators commute with $(\mathbf{D}^{(3)}_{\rm pln})^2$ and adding such terms can further lift the degeneracy. Let us define the following sets
\begin{subequations}
\begin{align}
    l_x&=\{(i+\frac{1}{2},j_0+\frac{1}{2},k_0+\frac{1}{2})|i_0<i<i_1\,, i\in2\mathbb{Z}\}\cup \{(i+\frac{1}{2},j_1+\frac{1}{2},k_1+\frac{1}{2})|i_0<i<i_1\,, i\in2\mathbb{Z}\}\nonumber\\
    &\{(i+\frac{1}{2},j_1+\frac{1}{2},k_0+\frac{1}{2})|i_0\leq i<i_1-2\,, i\in2\mathbb{Z}\}\cup\{(i+\frac{1}{2},j_0+\frac{1}{2},k_1+\frac{1}{2})|i_0\leq i<i_1-2\,, i\in2\mathbb{Z}\}\\
    l_y&=\{(i_0+\frac{1}{2},j+\frac{1}{2},k_0+\frac{1}{2})|j_0<j<j_1\,, j\in2\mathbb{Z}\}\cup \{(i_1+\frac{1}{2},j+\frac{1}{2},k_1+\frac{1}{2})|j_0<j<j_1\,, j\in2\mathbb{Z}\}\nonumber\\
    &\{(i_0+\frac{1}{2},j+\frac{1}{2},k_1+\frac{1}{2})|j_0\leq j<j_1-2\,, i\in2\mathbb{Z}\}\cup\{(i_1+\frac{1}{2},j+\frac{1}{2},k_0+\frac{1}{2})|j_0\leq j<j_1-2\,, j\in2\mathbb{Z}\}\\
    l_z&=\{(i_0+\frac{1}{2},j_0+\frac{1}{2},k+\frac{1}{2})|k_0<k<k_1\,, k\in2\mathbb{Z}\}\cup \{(i_1+\frac{1}{2},j_1+\frac{1}{2},k+\frac{1}{2})|k_0<k<k_1\,, k\in2\mathbb{Z}\}\nonumber\\
    &\{(i_1+\frac{1}{2},j_0+\frac{1}{2},k+\frac{1}{2})|k_0\leq k<k_1-2\,, k\in2\mathbb{Z}\}\cup\{(i_0+\frac{1}{2},j_1+\frac{1}{2},k+\frac{1}{2})|k_0\leq k<k_1-2\,, k\in2\mathbb{Z}\}\\
    \mathcal{C}&=\Bigg\{\left((i_1+\frac{1}{2},j_0+\frac{1}{2},k_1-\frac{3}{2}),(i_1-\frac{3}{2},j_0+\frac{1}{2},k_1+\frac{1}{2})\right),\left((i_1-\frac{3}{2},j_0+\frac{1}{2},k_1+\frac{1}{2}),(i_0+\frac{1}{2},j_0+\frac{5}{2},k_1+\frac{1}{2})\right)\nonumber\\
    &\left((i_0+\frac{1}{2},j_1+\frac{5}{2},k_1+\frac{1}{2}),(i_1+\frac{1}{2},j_0+\frac{1}{2},k_1-\frac{3}{2})\right),\left((i_0+\frac{5}{2},j_0+\frac{1}{2},k_0+\frac{1}{2}),(i_0+\frac{1}{2},j_0+\frac{5}{2},k_0+\frac{1}{2})\right)\nonumber\\
    &\left((i_0+\frac{1}{2},j_0+\frac{5}{2},0_1+\frac{1}{2}),(i_0+\frac{1}{2},j_0+\frac{1}{2},k_0+\frac{5}{2})\right),\left((i_0+\frac{1}{2},j_0+\frac{1}{2},k_0+\frac{5}{2}),(i_0+\frac{5}{2},j_0+\frac{1}{2},k_0+\frac{1}{2})\right)\nonumber\\
    &\left((i_0+\frac{1}{2},j_1-\frac{3}{2},k_1+\frac{1}{2}),(i_0+\frac{5}{2},j_1+\frac{1}{2},k_1+\frac{1}{2})\right),\left((i_0+\frac{5}{2},j_1+\frac{1}{2},k_1+\frac{1}{2}),(i_0+\frac{1}{2},j_0+\frac{5}{2},k_0-\frac{3}{2})\right)\nonumber\\
    &\left((i_0+\frac{1}{2},j_0+\frac{1}{2},k_0-\frac{3}{2}),(i_0+\frac{1}{2},j_1-\frac{3}{2},k_1+\frac{1}{2})\right),\left((i_1+\frac{1}{2},j_1-\frac{3}{2},k_0+\frac{1}{2}),(i_1-\frac{3}{2},j_0+\frac{1}{2},k_0+\frac{1}{2})\right)\nonumber\\
    &\left((i_1-\frac{1}{3},j_0+\frac{1}{2},k_0+\frac{1}{2}),(i_1+\frac{1}{2},j_1+\frac{1}{2},k_0+\frac{5}{2})\right),\left((i_1+\frac{1}{2},j_1+\frac{1}{2},k_0+\frac{5}{2}),(i_1+\frac{1}{2},j_1-\frac{3}{2},k_0+\frac{1}{2})\right)\Bigg\}\,.
\end{align}
\end{subequations}
With the help of these sets, we define the interface Hamiltonian with the added stabilizer terms
\begin{align}
    \tilde{\mathrm{H}}_{\text{3D-cluster}|\text{blue}}^{'\mathcal{G}}&=\tilde{\mathrm{H}}_{\text{3D-cluster}|\text{blue}}^{\mathcal{G}}-\sum_{v_b\in l_x}\mathcal{X}_{v_b}\mathcal{X}_{v_b+(2,0,0)}-\mathbf{D}^{(3)}_{\rm pln}\text{conj}\left[\sum_{v_b\in l_x}\mathcal{X}_{v_b}\mathcal{X}_{v_b+(2,0,0)}\right]\nonumber\\
    &-\sum_{v_b\in l_y}\mathcal{X}_{v_b}\mathcal{X}_{v_b+(0,2,0)}-\mathbf{D}^{(3)}_{\rm pln}\text{conj}\left[\sum_{v_b\in l_y}\mathcal{X}_{v_b}\mathcal{X}_{v_b+(0,2,0)}\right]\nonumber\\
    &-\sum_{v_b\in l_z}\mathcal{X}_{v_b}\mathcal{X}_{v_b+(0,0,2)}-\mathbf{D}^{(3)}_{\rm pln}\text{conj}\left[\sum_{v_b\in l_z}\mathcal{X}_{v_b}\mathcal{X}_{v_b+(0,0,2)}\right]\,\nonumber\\
    &-\sum_{(v_b,v_b')\in \mathcal{C}}\mathcal{X}_{v_b}\mathcal{X}_{v_b'}-\mathbf{D}^{(3)}_{\rm pln}\text{conj}\left[\sum_{(v_b,v_b')\in \mathcal{C}}\mathcal{X}_{v_b}\mathcal{X}_{v_b'}\right]\,,
    \label{eq:tildeHG3Dcluster|blue}
\end{align}
where $\mathbf{D}^{(3)}_{\rm pln}\text{conj}\left[\cdot\right]$ is the $\mathbf{D}^{(3)}_{\rm pln}$ symmetric counterpart of the term inside the argument. We note that we have lifted the degeneracy of $\tilde{\mathrm{H}}_{\text{3D-cluster}|\text{blue}}^{\mathcal{G}}$ that scale with the hinge size to just two fold degeneracy by adding terms on the hinge that are symmetric and commuting with the bulk Hamiltonian. Suppose $\ket{\tilde{\Psi}'}$ is a ground state of the above interface Hamiltonian. It is straightforward to see that 
\begin{subequations}
\begin{align}
    \mathcal{P}^{z,a}_{xy}\ket{\tilde{\Psi}'}&=\ket{\tilde{\Psi}'}\,,\quad\mathcal{P}^{y,a}_{xz}\ket{\tilde{\Psi}'}=\ket{\tilde{\Psi}'}\,,\quad\mathcal{P}^{x,a}_{yz}\ket{\tilde{\Psi}'}=\ket{\tilde{\Psi}'}\,\text{ for }\qquad a=r,b\\
    \eta_{a}\ket{\tilde{\Psi}'}&=\ket{\tilde{\Psi}'}\,\quad \eta_{\hat{a}}\ket{\tilde{\Psi}'}=\ket{\tilde{\Psi}'}\,\quad(\mathbf{D}^{(3)}_{\rm pln})^2\ket{\tilde{\Psi}'}
=\frac{2^{\text{dim}\,\text{Ker}\,\sigma+\text{dim}\,\text{Ker}\,\sigma^T}}{2^{2|\rm E|-|\mathbb{V}|-|\hat{\mathbb{V}}|}}\ket{\tilde{\Psi}'}\,.
\end{align}
\end{subequations}
The extra terms we added in the Hamiltonian have the effect of removing localization of symmetry, so we have perfect projection acting on the ground state. Hence on the ground space of $\tilde{\mathrm{H}}_{\text{3D-cluster}|\text{blue}}^{'\mathcal{G}}$, we have
\begin{align}
    \mathbf{D}^{(3)}_{\rm pln}\sim \prod_{\substack{v\in\text{Hinge}\\
    v=(x,y,z)}}Z_{(x,y,z)}\,.
\end{align}
We note that we still have projective algebra 
\begin{align}
    \{\mathbf{D}^{(3)}_{\rm pln},\mathcal{X}_{(x,y,z)}\}\ket{\tilde{\Psi}'}=0\,\quad \forall (x,y,z)\in\text{Hinge}\,.
\end{align}
This explains the two fold ground state degeneracy for $\tilde{\mathrm{H}}_{\text{3D-cluster}|\text{blue}}^{'\mathcal{G}}$. On the hinge, the effective $\mathbf{D}^{(3)}_{\rm pln}$ is spontaneously broken. This is similar to the two fold degeneracy we obtain when the $\mathbb{Z}_2$ symmetry is spontaneously broken in $1+1$D.}
\end{widetext}
\color{black}
\section{Stability analysis of Interface modes}\label{sec:stability}
In this section, we perform a stability analysis of interface modes between two different SPTs under local and symmetric perturbations. We analyze this in the case of interface between 1) $1+1$D $\mathbb{Z}_2\times\mathbb{Z}_2$ cluster state and trivial SPT 2) $1+1$D $\mathbb{Z}_2\times\mathbb{Z}_2$ cluster state and the noninvertible SPT (odd state) 3) $2+1$D $\mathbb{Z}_2\times\mathbb{Z}_2$ cluster state and higher-order noninvertible SSPT (blue state).
\subsection{Interface between \texorpdfstring{$1+1$}{Lg}D \texorpdfstring{$\mathbb{Z}_2\times\mathbb{Z}_2$}{Lg} cluster SPT and trivial SPT}
Let us consider a ring with $N$ sites and label them from $1,...,N$. We consider trivial SPT on the first $l-1$ sites and the cluster state on the rest of the sites. Explicitly, the Hamiltonian is given by 
\begin{align}
    \mathrm{H}_{\text{trivial}|\text{1D-cluster}}=-\sum_{i=1}^{l-1}X_i-\sum_{i=l+1}^{L-1}Z_{i-1}X_iZ_{i+1}\, .
\end{align}
Let us add a local and $\mathbb{Z}_2\times\mathbb{Z}_2$ symmetric perturbation to this Hamiltonian supported around sites $l$ and $L$
\begin{align}
    \mathrm{H}_{\text{trivial}|\text{1D-cluster}}^{\text{pert}}=\mathrm{H}_{\text{trivial}|\text{1D-cluster}}-\mathcal{O}^{(l)}-\mathcal{O}^{(L)}\, .
\end{align}
We note that $\mathcal{O}^{(l)}$ and $\mathcal{O}^{(L)}$ could be sum of many terms supported near the site $l$ and $L$. The only restriction on their support is that there should be at least one odd and one even site that is not contained in the union of their support and lying between $l$ and $L$.
We claim that the Hamiltonian $\mathrm{H}_{\text{trivial}|\text{1D-cluster}}^{\text{pert}}$ still has four-fold degeneracy. We prove this by exhibiting two pairs of operators that commute with the Hamiltonian and satisfy a projective algebra. Suppose $2k$ is an even site that is not contained in the support of $\mathcal{O}^{(l)}$ and $\mathcal{O}^{(L)}$ and $l<2k<L$, then one can consider a string of Pauli-$X$ on odd sites followed by Pauli-$Z$ at $2k$ as shown in Figure~\ref{fig:stringquantumnocluster}. This string would commute with $\mathrm{H}_{\text{trivial}|\text{1D-cluster}}^{\text{pert}}$ since $\mathcal{O}^{(l)}$ is a symmetric perturbation, and on the support of $\mathcal{O}^{(l)}$, the string operator and the symmetry generator are same. This string operator would anti-commute with the symmetry generator on even sites. This gives a pair of operators (the string operator and the symmetry generator) that commute with $\mathrm{H}_{\text{trivial}|\text{1D-cluster}}^{\text{pert}}$ and satisfy the projective algebra. Similarly, one could repeat the same for a string operator starting with $Z$ on the odd sites, followed by Pauli-$X$ on even sites. Again, this string would anti-commute with the symmetry generator on odd sites. Hence, we have two pairs of operators commuting with $\mathrm{H}_{\text{trivial}|\text{1D-cluster}}^{\text{pert}}$ that satisfy the projective algebra. This implies that there should be at least four degenerate ground states for $\mathrm{H}_{\text{trivial}|\text{1D-cluster}}^{\text{pert}}$.
\begin{figure*}[hbt!]
    \centering
    \includegraphics[scale=1]{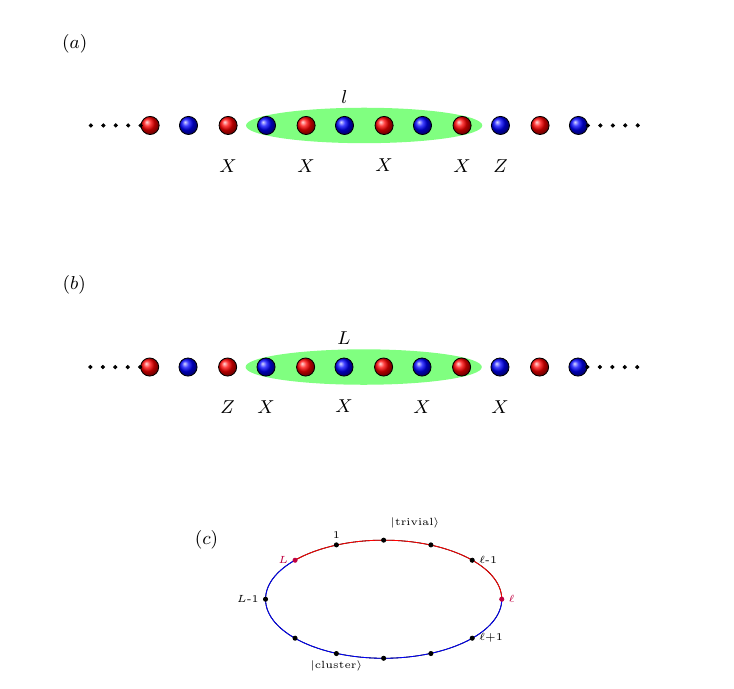}
    \caption{The figures illustrate the way to construct string operators. In Figure (a), the green colored region denotes the support of $\mathcal{O}^{(l)}$ around site $l$ colored blue. The string operator commute with $\mathrm{H}_{\text{trivial}|\text{1D-cluster}}^{\text{pert}}$ and anti-commute with $\prod_{v_b}X_{v_b}$. Similarly, in Figure (b), the green colored region denotes the support of $\mathcal{O}^{(L)}$ around site $L$ colored blue. The string operator commute with $\mathrm{H}_{\text{trivial}|\text{1D-cluster}}^{\text{pert}}$ and anti-commute with $\prod_{v_r}X_{v_r}$. Figure (c) shows the interface between the trivial state (on red arc) and the cluster state (on blue arc) with edge modes around site $l$ and $L$. }
    \label{fig:stringquantumnocluster}
\end{figure*}
\subsection{Interface between \texorpdfstring{$\rm H_{\text{1D-cluster}}$}{Lg} and \texorpdfstring{$\rm H_{\text{odd}}$}{Lg}}
Let us consider the interface Hamiltonian between $\mathrm{H}_{\text{1D-cluster}}$ and $\mathrm{H}_{\text{odd}}$. 
\begin{widetext}
\begin{align}
    \mathrm{H}_{\text{1D-cluster}|\text{odd}}=-\sum_{i=1}^{l-1}Z_{i-1}X_iZ_{i+1}+\sum_{i=\frac{l}{2}}^{\frac{L}{2}-1}Z_{2i}X_{2i+1}Z_{2i+2}-\sum_{i=\frac{l}{2}+1}^{\frac{L}{2}-1}Y_{2i-1}X_{2i}Y_{2i+1}+\sum_{i=\frac{l}{2}+1}^{\frac{L}{2}-1}Z_{2i-2}Z_{2i-1}X_{2i}Z_{2i+1}Z_{2i+2}
    \label{eq:1Dinterfaceham}
\end{align}
\end{widetext}
This Hamiltonian has a four-fold ground state degeneracy coming from edge modes located at sites $l$ and $L$. We consider adding symmetric (under $\mathbb{Z}_2\times\mathbb{Z}_2$ 0-form and $\mathrm{D}^{(1)}$) and local perturbations to this interface Hamiltonian. 
\begin{align}
    \mathrm{H}^{\text{pert}}_{\text{1D-cluster}|\text{odd}}=\mathrm{H}_{\text{1D-cluster}|\text{odd}}-\mathcal{O}^{(l)}-\mathcal{O}^{(L)}\,.
\end{align}
We assume the support of this perturbation is contained inside a green region and located around sites $l$ and $L$ as in Figure~\ref{fig:edge_region}. The two green regions do not overlap. The perturbations in general need not commute with $\mathrm{H}_{\text{1D-cluster}|\text{odd}}$. 
\begin{figure*}[h!]
    \centering
    \includegraphics[scale=1]{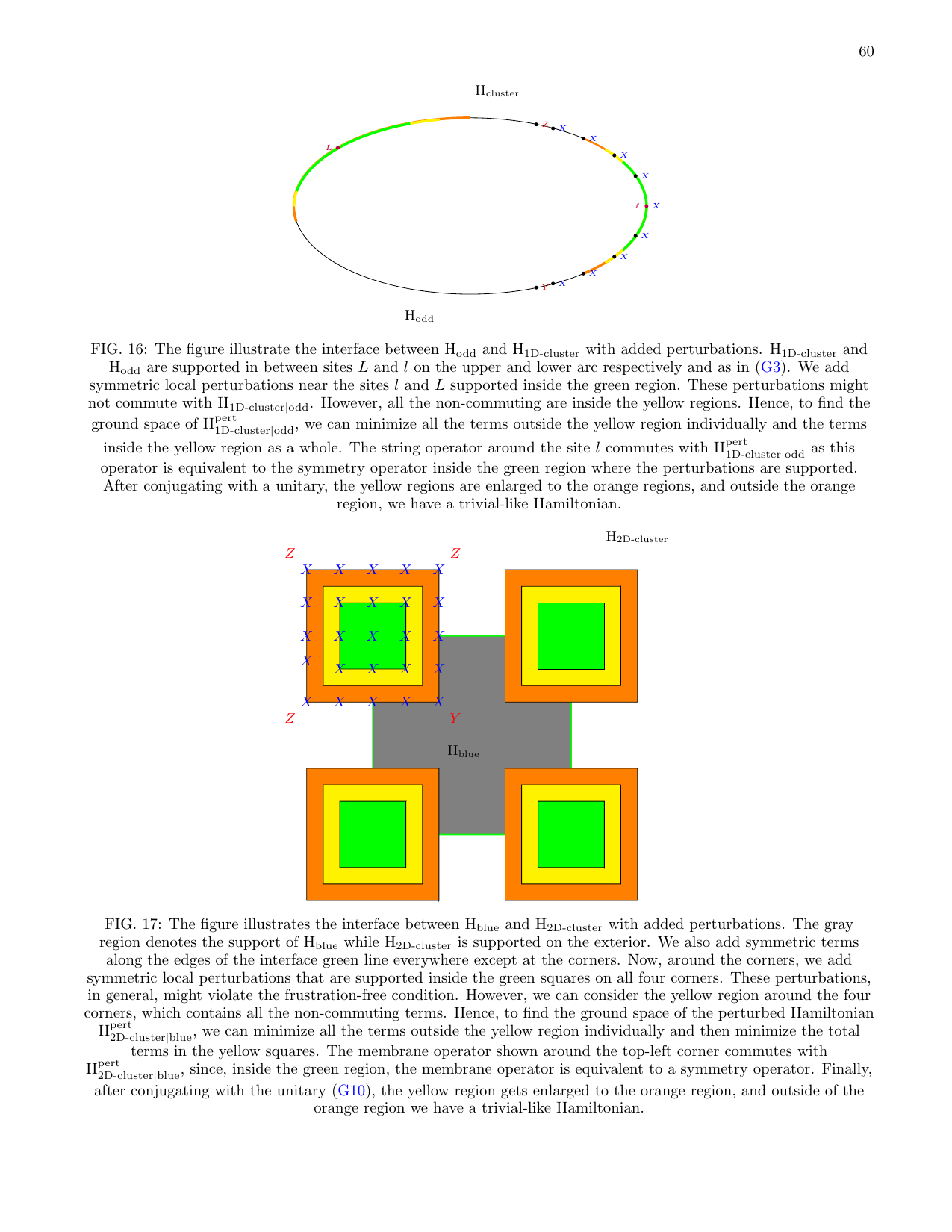}
    \caption{The figure illustrate the interface between $\mathrm{H}_{\text{odd}}$ and $\mathrm{H}_{\text{1D-cluster}}$ with added perturbations. $\mathrm{H}_{\text{1D-cluster}}$ and $\mathrm{H}_{\text{odd}}$ are supported in between sites $L$ and $l$ on the upper and lower arc respectively and as in \eqref{eq:1Dinterfaceham}. We add symmetric local perturbations near the sites $l$ and $L$ supported inside the green region. These perturbations might not commute with $\mathrm{H}_{\text{1D-cluster}|\text{odd}}$. However, all the non-commuting are inside the yellow regions. Hence, to find the ground space of $\mathrm{H}_{\text{1D-cluster}|\text{odd}}^{\text{pert}}$, we can minimize all the terms outside the yellow region individually and the terms inside the yellow region as a whole. The string operator around the site $l$ commutes with $\mathrm{H}_{\text{1D-cluster}|\text{odd}}^{\text{pert}}$ as this operator is equivalent to the symmetry operator inside the green region where the perturbations are supported. After conjugating with a unitary, the yellow regions are enlarged to the orange regions, and outside the orange region, we have a trivial-like Hamiltonian.}
    \label{fig:edge_region}
\end{figure*}

We observe that we can choose a bigger region colored yellow for which terms outside the yellow region commute with terms inside. Other words, all the non-commuting terms are contained inside the yellow region. This is because we consider 1) local Hamiltonian, 2) terms in $\mathrm{H}_{\text{1D-cluster}|\text{odd}}$ commute with each other, and 3) the perturbations are contained in the green region. The ground state is obtained by minimizing each individual term outside the yellow region and minimizing the Hamiltonian restricted to the yellow region.

We can find a string operator (see Figure~\ref{fig:edge_region}) that commutes with $\mathrm{H}_{\text{1D-cluster}|\text{odd}}^{\text{pert}}$ and anti-commutes with $\mathrm{D}^{(1)}$ in the ground space of $\mathrm{H}_{\text{1D-cluster}|\text{odd}}^{\text{pert}}$. The string operator satisfies
\begin{widetext}
\begin{align}
    \begin{array}{ccccccccc}
        \color{red}{Z} & \color{blue}{X} &  & \color{blue}{X} & ... & \color{blue}{X} & & \color{blue}{X}& \color{red}{Y} \end{array}\mathrm{D}^{(1)}\ket{\Psi}&=\mathrm{D}^{(1)}\begin{array}{ccccccccc}
        \color{red}{Z} & \color{blue}{X} &  & \color{blue}{X} & ... & \color{blue}{X} & & \color{blue}{X}& \color{red}{Y} \end{array}\times\left(\begin{array}{ccc}
         \color{blue}{Z}& \color{red}{X} & \color{blue}{Z}
    \end{array}\right)\ket{\Psi}\nonumber\\
    &=-\mathrm{D}^{(1)}\begin{array}{ccccccccc}
        \color{red}{Z} & \color{blue}{X} &  & \color{blue}{X} & ... & \color{blue}{X} & & \color{blue}{X}& \color{red}{Y} \end{array}\ket{\Psi}\, ,
        \label{eq:stringopanticommute}
\end{align}
\end{widetext}
where $\ket{\Psi}$ is a ground state of $\mathrm{H}^{\text{pert}}_{\text{1D-cluster}|\text{odd}}$. For the second equality, we used the fact that the value of the cluster-like term in the bracket is $-1$ in the ground space. This is true even for the perturbed Hamiltonian ground space, provided the support of the perturbations is contained inside the green regions, as we mentioned before, and the term inside the bracket on the R.H.S of the first equality is outside the yellow region. The above anti-commutation implies that the ground space of $\mathrm{H}^{\text{pert}}_{\text{1D-cluster}|\text{odd}}$ should be at least two-fold degenerate.

We now conjugate the Hamiltonian $\mathrm{H}^{\text{pert}}_{\text{1D-cluster}|\text{odd}}$ to disentangle most of the region into a trivial-like Hamiltonian (unique ground state). By this unitary conjugation, the yellow region extends to the orange region as in Figure~\ref{fig:edge_region}. Outside the orange region, we have the trivial-like Hamiltonian. Explicitly, the unitary is 
\begin{align}
    \mathcal{U}^{(1)}_{\text{disent}}=\prod_{i=1}^LCZ_{i,i+1}\prod_{i=\frac{l}{2}}^{\frac{L}{2}-1}CZ_{2i,2i+2}
\end{align}
After conjugating with $\mathcal{U}^{(1)}_{\text{disent}}$, $\mathrm{H}_{\text{1D-cluster}|\text{odd}}$ is transformed to
\begin{widetext}
\begin{align}
   \mathcal{U}^{(1)}_{\text{disent}} \mathrm{H}_{\text{1D-cluster}|\text{odd}}(\mathcal{U}^{(1)}_{\text{disent}})^{\dagger}&=-\sum_{i=1}^{l-1}X_i+\sum_{i=\frac{l}{2}}^{\frac{L}{2}-1}X_{2i+1}+\sum_{i=\frac{l}{2}+1}^{\frac{L}{2}-1}X_{2i}\left(1+X_{2i-1}X_{2i+1}\right)\, .
\end{align}
\end{widetext}
We note that the ground state of this Hamiltonian is obtained by setting $X_i=1$ for $i=1,...,l-1$ and $X_i=-1$ for $i=\frac{l}{2}+1,...,\frac{L}{2}-1$. The sites at $l$ and $L$ are completely decoupled and contribute to the four-fold degeneracy of the ground states. The conjugation by $\mathcal{U}^{(1)}_{\text{disent}}$ does not change the spectrum of the Hamiltonian. We know that a trivial-like Hamiltonian does not contribute to the ground-state degeneracy. Hence, the ground-state degeneracy should come from the two orange regions. We claim that diagonalizing the Hamiltonian in each orange region should give at least two-fold degeneracy. Suppose that one of the orange regions gives a unique ground state, then let us consider the perturbations that are associated with that orange region. Now, we consider the same (but mirror reflected) perturbation on the other interface edge. For this Hamiltonian, there is a unique ground state. However, this is inconsistent with \eqref{eq:stringopanticommute} and the fact that the Hamiltonian should have at least two-fold degeneracy. Hence, both orange regions should give at least two-fold degeneracy. This implies that $\mathrm{H}^{\text{pert}}_{\text{1D-cluster}|\text{odd}}$ should be at least four-fold degenerate.
\subsection{Interface between \texorpdfstring{$\rm H_{\text{2D-cluster}}$}{Lg} and \texorpdfstring{$\rm H_{\text{blue}}$}{Lg}}\label{sec:stability2dclusterblue}
Let us consider a rectangular interface between $\mathrm{H}_{\text{2D-cluster}}$ and $\mathrm{H}_{\text{blue}}$ placed on a torus. Here, we do a symmetric perturbation of the interface Hamiltonian and study the stability of the degenerate ground space. As before, we choose the interface line to run along the blue sublattice with corners at $(i_0+\frac{1}{2},j_0+\frac{1}{2})$, $(i_0+\frac{1}{2},j_1+\frac{1}{2})$, $(i_1+\frac{1}{2},j_0+\frac{1}{2})$ and $(i_1+\frac{1}{2},j_1+\frac{1}{2})$ as given in Figure~\ref{fig:rectangularinterface}. The interface Hamiltonian is given in \eqref{eq:Htilde2Dclusterblue}.
Now we add local and symmetric (both subsystem symmetric as well as $\mathrm{D}^{(2)}$ symmetric) perturbations to the Hamiltonian
\begin{align}
    \mathrm{H}_{\text{2D-cluster}|\text{blue}}^{\text{pert}}&=\tilde{\mathrm{H}}_{\text{2D-cluster}|\text{blue}}-\mathcal{O}^{TL}-\mathcal{O}^{TR}\nonumber\\
    &\qquad-\mathcal{O}^{BL}-\mathcal{O}^{BR}\, ,
\end{align}
where the local perturbations are supported near the four corners. We assume the support of the four perturbations does not overlap with any other.  We take a square region around each of the corners where the support of the perturbations is contained and color it green (see Figure~\ref{fig:corner-region}). This leaves a cross-shaped region inside the rectangular interface, as in Figure~\ref{fig:corner-region}, that does not intersect with the support of any of the perturbations. 

\begin{figure*}[h!]
\centering
    \includegraphics[scale=1]{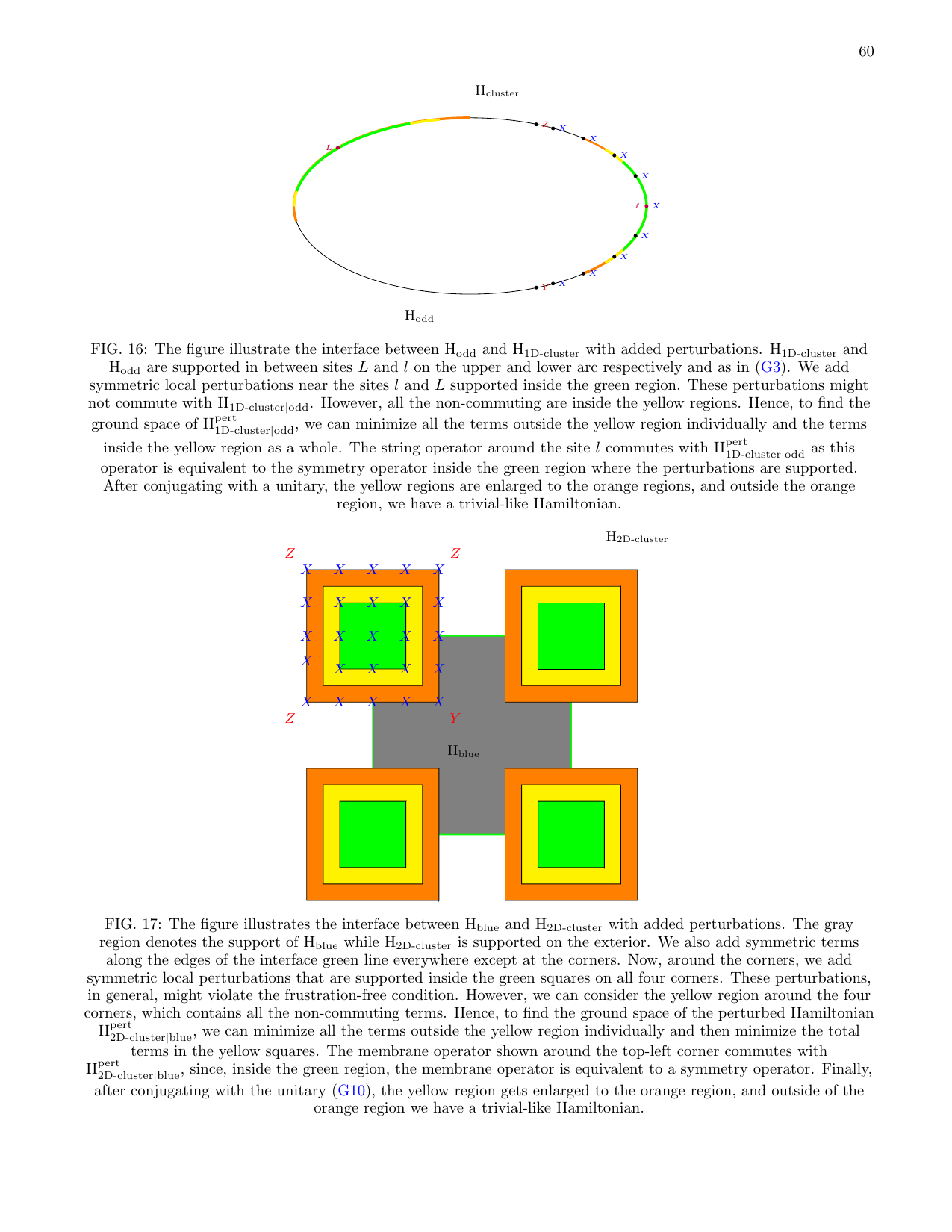}
    \caption{The figure illustrates the interface between $\mathrm{H}_{\text{blue}}$ and $\mathrm{H}_{\text{2D-cluster}}$ with added perturbations. The gray region denotes the support of $\mathrm{H}_{\text{blue}}$ while $\mathrm{H}_{\text{2D-cluster}}$ is supported on the exterior. We also add symmetric terms along the edges of the interface green line everywhere except at the corners. Now, around the corners, we add symmetric local perturbations that are supported inside the green squares on all four corners. These perturbations, in general, might violate the frustration-free condition. However, we can consider the yellow region around the four corners, which contains all the non-commuting terms. Hence, to find the ground space of the perturbed Hamiltonian $\mathrm{H}^{\text{pert}}_{\text{2D-cluster}|\text{blue}}$, we can minimize all the terms outside the yellow region individually and then minimize the total terms in the yellow squares. The membrane operator shown around the top-left corner commutes with $\mathrm{H}^{\text{pert}}_{\text{2D-cluster}|\text{blue}}$, since, inside the green region, the membrane operator is equivalent to a symmetry operator. Finally, after conjugating with the unitary~\eqref{eq:unitarytoconj}, the yellow region gets enlarged to the orange region, and outside of the orange region we have a trivial-like Hamiltonian.}
    \label{fig:corner-region}
\end{figure*}
We observe that one can choose a bigger square region near the four corners colored yellow for which the terms outside the yellow region commute with the terms inside the yellow region. This is because we consider 1) local Hamiltonian, 2) terms in $\tilde{\rm H}_{\text{2D-cluster}|\text{blue}}$ commute with each other, and 3) the perturbations are contained in the green region. This way, all the frustration is inside the four yellow regions. The ground state is obtained by simultaneously minimizing each individual Hamiltonian term outside the yellow squares and minimizing the Hamiltonian restricted to the four yellow squares. 

We can find a membrane operator that commutes with $\mathrm{H}^{\text{pert}}_{\text{2D-cluster}|\text{blue}}$ and anti-commute with $\mathrm{D}^{(2)}$ on the ground space of $\mathrm{H}^{\text{pert}}_{\text{2D-cluster}|\text{blue}}$. Schematically, the membrane operator is given in Figure~\ref{fig:corner-region}. This membrane operator satisfies the following
\begin{widetext}
 \begin{align}
\tikzfig{membrane_operator}\, ,
\label{eq:membrane_operator}
     \end{align}
 \end{widetext}   
 where $\ket{\Psi}$ is a ground state of $\mathrm{H}^{\text{pert}}_{\text{2D-cluster}|\text{blue}}$. For the second equality, we used the fact that the value of the cluster-like term in the bracket is $-1$ on the ground space. This is true even for the perturbed Hamiltonian ground space, provided the support of the perturbations is contained inside a green region, as we mentioned before. Since $\mathrm{D}^{(2)}$ also commute with $\mathrm{H}^{\text{pert}}_{\text{2D-cluster}|\text{blue}}$, we find that the ground space should be at least two-fold degenerate.
 
We note that we could apply a unitary that could disentangle the region outside the orange squares at the four corners to a trivial-like Hamiltonian (unique ground state). After conjugating with the unitary on the Hamiltonian terms contained in the yellow region, the new terms are contained inside the orange region. Otherwise, the orange region is given by the unitary conjugation of the yellow region. Explicitly, the unitary is 
\begin{widetext}
\begin{align}
    \mathcal{U}^{(2)}_{\text{disent}}=\prod_{v_r}\prod_{v_b
    \in
    \partial (p_b=v_r)}CZ_{v_r,v_b}\prod_{\substack{v_b=(i+\frac{1}{2},j+\frac{1}{2})\\
    i_0\leq i<i_1, j_0\leq j<j_1}}CZ_{v_b,v_b+(1,1)}\prod_{\substack{v_b=(i+\frac{1}{2},j+\frac{1}{2})\\
    i_0< i\leq i_1, j_0\leq j< j_1}}CZ_{v_b,v_b+(-1,1)}\,.
\label{eq:unitarytoconj}
\end{align}
\end{widetext}
After conjugating with $\mathcal{U}^{(2)}_{\text{disent}}$, $\tilde{\mathrm{H}}_{\text{2D-cluster}|\text{blue}}$ is transformed to
\begin{widetext}
\begin{align}
    \mathcal{U}_{\text{disent}}^{(2)}\tilde{\mathrm{H}}_{\text{2D-cluster}|\text{blue}}(\mathcal{U}_{\text{disent}}^{(2)})^{\dagger}&=-\sum_{v_r\in A}X_{v_r}-\sum_{v_b\in A}X_{v_b}+\sum_{v_r\in B}X_{v_r}+\sum_{v_b\in B}X_{v_b}+\sum_{v_b\in B}\begin{array}{ccc}
        X_{v_r} & & X_{v_r}\\
         & X_{v_b} & \\
        X_{v_r} & & X_{v_r}
    \end{array}\nonumber\\
    &+ \sum_{\substack{v_b=(i+\frac{1}{2},j_0+\frac{1}{2})\\
    i_0<i<i_1}}\left(X_{v_b}+\begin{array}{ccc}
        & X_{v_b} & \\
       X_{v_r} & & X_{v_r}
    \end{array}\right)+\sum_{\substack{v_b=(i+\frac{1}{2},j_1+\frac{1}{2})\\
    i_0<i<i_1}}\left(X_{v_b}+\begin{array}{ccc}
     X_{v_r} & &X_{v_r} \\  
      & X_{v_b} &
      \end{array}\right)\nonumber\\
      &+\sum_{\substack{v_b=(i_0+\frac{1}{2},j+\frac{1}{2})\\
    j_0<j<j_1}}\left(X_{v_b}+\begin{array}{ccc}
      X_{v_r} & \\
      & X_{v_b}\\
     X_{v_r} &
    \end{array}\right)+\sum_{\substack{v_b=(i_1+\frac{1}{2},j+\frac{1}{2})\\
    j_0<j<j_1}}\left(X_{v_b}+\begin{array}{ccc}
      & X_{v_r}\\
     X_{v_b} & \\
      & X_{v_r}
    \end{array}\right)\, .
\end{align}
\end{widetext}
where regions $A$ and $B$ are as defined in \eqref{eq:regionABrectangular}.
Conjugation by this unitary should not change the spectrum of the Hamiltonian. Since a trivial-like Hamiltonian has a unique ground state, the two-fold degeneracy should come from any four orange square regions. We claim that all four orange square regions give at least two-fold degeneracy. Suppose not, there is one orange square region that has a unique ground state, then we can consider the same perturbation coming from this orange square region around all four corners. For such a perturbed Hamiltonian, there would be a unique ground state. However, this contradicts the fact that the membrane operator anti-commutes with  $\mathrm{D}^{(2)}$. Hence, all four orange square regions should give at least two-fold ground state degeneracy, and in total, the Hamiltonian $\mathrm{H}^{\text{pert}}_{\text{2D-cluster}|\text{blue}}$ should have at least 16-fold degeneracy.

\red{Now consider adding symmetric local perturbations throughout the fattened green region shown in Figure.~\ref{fig:2Dinterface-pert}. We may consider $k$-local symmetric perturbations. Suppose the side length of the interface $i_1-i_0\sim j_0-j_1\sim L$, then we need to consider $k<<L$. We can then construct the membrane operator that anticommutes with $\mathrm{D}^{(2)}$ within the ground-state subspace, as in~\eqref{eq:membrane_operator}. This membrane operator commutes with all such symmetric local perturbations because, on the support of each local perturbation, it coincides with the subsystem symmetry generators. Consequently, the ground state necessarily remains at least two-fold degenerate, independent of any crystalline symmetries that may or may not be imposed.}
\begin{figure*}[h!]
    \centering   \includegraphics[scale=1]{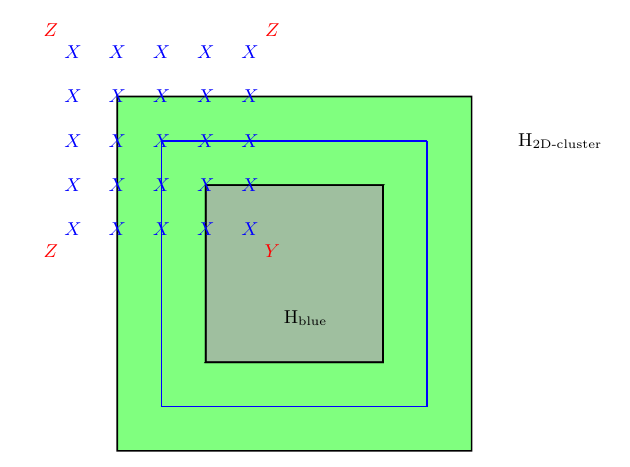}
    \caption{The figure illustrate an interface between $\mathrm{H}_{\text{blue}}$ and $\mathrm{H}_{\text{2D-cluster}}$ with symmetric local perturbations (symmetric under both subsystem symmetries and the noninvertible symmetry $\mathrm{D}^{(2)}$) along a fattened green region along the interface. A membrane operator that anti-commute with $\mathrm{D}^{(2)}$ on the ground space of the interface Hamiltonian with added perturbations is also shown. This membrane operator commutes with the added symmetric local perturbations. If we consider $k$-local Hamiltonian(each term in the Hamiltonian has support on at most $k$ sites) then we take the length of the sides of the interface to be $L$ such that $k<<L$.}
\label{fig:2Dinterface-pert}
\end{figure*}
\subsection{Interface between \texorpdfstring{$\mathrm{H}_{\text{3D-cluster}}^{\mathcal{G}}$}{Lg} and \texorpdfstring{$\mathrm{H}_{\text{blue}}^{(3)\mathcal{G}}$}{Lg}}\label{sec:stabilityhingemode}
\red{Let us consider a cubic interface between $\mathrm{H}_{\text{3D-cluster}}^{\mathcal{G}}$ and $\mathrm{H}_{\text{blue}}^{(3)\mathcal{G}}$ placed on a 3-torus. We add a symmetric perturbation of the interface Hamiltonian to study the stability of degenerate ground space. Let us consider the same setup of interface as in Appendix~\ref{sec:Interfaceanalysis3D}. We call the Hamiltonian $\tilde{\mathrm{H}}^{'\mathcal{G}}_{\text{3D-cluster}|\text{blue}}$ the free Hamiltonian. We add the symmetric perturbation $\rm V$ (symmetric with respect to all the planar subsystem symmetries and the noninvertible symmetry). We assume that the perturbation is supported inside a green region around the hinge as shown in Figure~\ref{fig:3Dhingemodeperturbation}. We note that for this choice of region, the perturbation need not be small. Now we argue that even after the perturbation the ground state degeneracy is not lifted.}

\red{The ground states of $\tilde{\mathrm{H}}^{'\mathcal{G}}_{\text{3D-cluster}|\text{blue}}$ can be labeled by the eigenvalues of the effective $\mathbf{D}^{(3)}_{\rm pln}$ operator on the ground space. We recall that the operator $\mathbf{D}^{(3)}_{\rm pln}$ anti-commute with $\mathcal{X}_{(x,y,z)}$ as in \eqref{eq:fractionalizedsymopalgebra3D}. Suppose that we denote the ground state where the product $\prod\limits_{v\in\text{Hinge}}Z_{v}$ take the value $+1$ to be $\ket{\Omega^{(0)}}$, then the other ground state is obtained by $\ket{\Omega^{(1)}}\equiv \prod\limits_{v=(x,y,z)\in\text{Hinge}}\mathcal{X}_{(x,y,z)}^{i_{(x,y,z)}}\ket{\Omega^{(0)}}$ such that $\sum\limits_{v=(x,y,z)\in\text{Hinge}}\,i_{(x,y,z)}=1$ mod $2$ where $i_{(x,y,z)}$ could be $0$ or $1$. Hence, the ground states can be represented by $\ket{\red{\Omega^{(i)}}}\equiv \prod\limits_{v=(x,y,z)\in\text{Hinge}}\mathcal{X}_{(x,y,z)}^{i_{(x,y,z)}}\ket{\Omega^{(0)}}$.}

\red{We note that we could multiply the stabilizers of the free Hamiltonian with $\mathcal{X}_{(x,y,z)}$ to obtain a membrane/volume operator that commutes with the perturbation $\rm V$. Let us denote the membrane / volume operator that contains the site $(x,y,z)$ on the hinge by $\mathfrak{X}_{(x,y,z)}$ (see figure \ref{fig:3Dhingemodeperturbation} for an illustration of membrane/volume operator). Then,
\begin{subequations}
\begin{align}    \mathcal{X}_{(x,y,z)}\ket{\Omega^{(i)}}&=\mathfrak{X}_{(x,y,z)}\ket{\Omega^{(i)}}\,,\\
[\mathrm{V},\mathfrak{X}_{(x,y,z)}]&=0\,.
\end{align}
\end{subequations}
We note that
\begin{align}
    \{\mathbf{D}^{(3)}_{\rm pln},\mathfrak{X}_{(x,y,z)}\}\ket{\Omega^{(i)}}=\ket{\Omega^{(i)}}\,.
\end{align}
Since $\tilde{\mathrm{H}}^{'\mathcal{G}}_{\text{3D-cluster}|\text{blue}}$ and $\rm V$ commutes with both $\mathbf{D}^{(3)}_{\rm pln}$ and $\mathfrak{X}_{(x,y,z)}$,  we still have at least two fold ground state degeneracy even after adding the perturbation $\rm V$.}
\begin{figure*}[h!]
    \centering
    \includegraphics[scale=1]{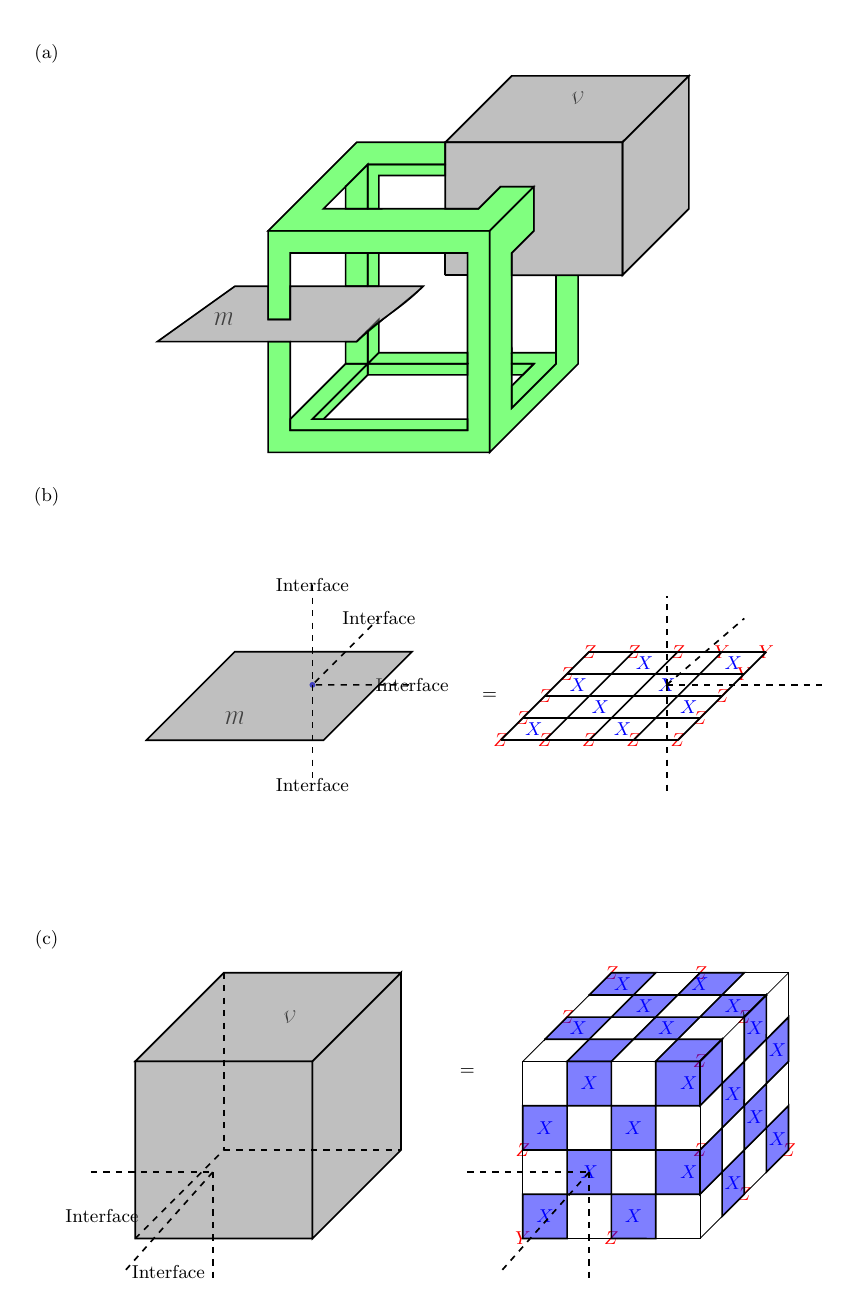}
    \caption{The green region in figure (a) indicate the support of the perturbation $\lambda\rm V$ near the 12 edges. The membrane and cube colored gray represent the membrane operator $\mathcal{M}$ and volume operator $\mathcal{V}$ that is denoted combinedly as $\mathfrak{X}_{(x,y,z)}$ for $(x,y,z)$ located near the centre of an edge and one of the corners respectively. Figure (b) illustrate the membrane operator projected to the plane. Figure (c) illustrate the volume operator. The dashed lines are used to indicate the interface surface cutting through the membrane operator and the volume operator.}
    \label{fig:3Dhingemodeperturbation}
\end{figure*}
\end{document}